\newcommand{\cut}{{\mathrm{cut}}}
\crefname{algocf}{Algorithm}{Algorithms}
\newtheorem{theorem}{Theorem}
\newtheorem{lemma}{Lemma}
\newtheorem{definition}{Definition}
\newtheorem{corollary}{Corollary}
\newtheorem{remark}{Remark}
\newtheorem{fact}{Fact}
\newtheorem{open}{Open Question}
\newcommand{\haar}[0]{\operatorname{Haar}}
\newcommand{\id}{\ensuremath{\mathbb{I}}}
\newcommand{\poly}{\operatorname{poly}}
\newcommand{\E}{{\mathbb{E}}}
\newcommand{\negl}{{\mathsf{negl}}}
\newcommand{\prf}{\mathsf{PRF}}
\newcommand{\prp}{\mathsf{PRP}}
\newcommand{\pru}{\mathsf{PRU}}
\newcommand{\gue}{{\mathrm{GUE}}}
\newcommand{\locc}{\mathrm{LOCC}}
\newcommand{\pgue}{{\mathcal{E}_{\tilde{d}}}}
\DeclareMathOperator{\loe}{LOE}
\DeclareMathOperator{\OTOC}{OTOC}
\DeclareMathOperator{\de}{d\!}
\definecolor{airforceNonE} {rgb}{0.36, 0.54, 0.66}
\newcommand{\be}{\begin{equation}\begin{aligned}\hspace{0pt}}
\newcommand{\ee}{\end{aligned}\end{equation}}
\newcommand{\ba}{\begin{eqnarray}}
\newcommand{\ea}{\end{eqnarray}}
\newcommand{\nocontentsline}[3]{}
\let\origcontentsline\addcontentsline
\newcommand\stoptoc{\let\addcontentsline\nocontentsline}
\newcommand\resumetoc{\let\addcontentsline\origcontentsline}
\begin{document}

\title{Simulating quantum chaos without chaos}

\author{Andi Gu}
\affiliation{Department of Physics, Harvard University, 17 Oxford Street, Cambridge, MA 02138, USA}

\author{Yihui Quek}

\affiliation{Department of Physics, Harvard University, 17 Oxford Street, Cambridge, MA 02138, USA}
\affiliation{Department of Computer Science, Harvard John A. Paulson School Of Engineering And Applied Sciences, 150 Western Ave, Boston, MA 02134, USA}

\author{Susanne Yelin}

\affiliation{Department of Physics, Harvard University, Cambridge, Massachusetts 02138, USA}

\author{Jens Eisert}
\affiliation{Dahlem Center for Complex Quantum Systems, Freie Universit\"at Berlin, 14195 Berlin, Germany}
\affiliation{Helmholtz-Zentrum Berlin f\"ur Materialien und Energie, 14109 Berlin, Germany}

\author{Lorenzo Leone}
\affiliation{Dahlem Center for Complex Quantum Systems, Freie Universit\"at Berlin, 14195 Berlin, Germany}

\begin{abstract}
Quantum chaos is a quantum many-body phenomenon that is associated with a number of intricate properties, such as level repulsion in energy spectra or distinct scalings of out-of-time ordered correlation functions. In this work, we introduce a novel class of ``pseudochaotic'' quantum Hamiltonians that fundamentally challenges the conventional understanding of quantum chaos and its relationship to computational complexity. Our ensemble is computationally indistinguishable from the Gaussian unitary ensemble (GUE) of strongly-interacting Hamiltonians, widely considered to be a quintessential model for quantum chaos. Surprisingly, despite this effective indistinguishability, our Hamiltonians lack all conventional signatures of chaos: it exhibits Poissonian level statistics, low operator complexity, and weak scrambling properties. This stark contrast between efficient computational indistinguishability and traditional chaos indicators calls into question fundamental assumptions about the nature of quantum chaos. We, furthermore, give an efficient quantum algorithm to simulate Hamiltonians from our ensemble, even though simulating Hamiltonians from the true GUE is known to require exponential time. Our work establishes fundamental limitations on Hamiltonian learning and testing protocols and derives stronger bounds on entanglement and magic state distillation. These results reveal a surprising separation between computational and information-theoretic perspectives on quantum chaos, opening new avenues for research at the intersection of quantum chaos, computational complexity, and quantum information. Above all, it challenges conventional notions of what it fundamentally means to actually observe complex quantum systems.
\end{abstract}

\maketitle
\stoptoc
\section{Introduction}\label{intro}
Quantum chaos, at the intersection of quantum mechanics and classical chaos theory, plays a crucial role in our understanding of complex quantum systems, from atomic nuclei to black holes~\cite{haake2018quantum}. Traditionally characterized by level repulsion in energy spectra~\cite{mehta2004random}, growth of operator complexity~\cite{prosen2007efficiency}, and rapid decay of \emph{out-of-time-ordered correlators} (OTOCs)~\cite{maldacena2016chaos}, these features have become standard diagnostics for identifying chaotic behavior in quantum systems. 
Along with the 
advancement of quantum computation, there is a growing interest in simulating complex quantum systems, including chaotic ones. This intersection raises fundamental questions about the nature of complexity in quantum systems and the resources required to simulate them.

In this work, we present a surprising result that challenges basic intuitions about quantum chaos and its relationship to computational complexity. Concretely, we construct a quantum algorithm to simulate a ``pseudochaotic'' ensemble of Hamiltonians that is computationally indistinguishable from the \emph{Gaussian unitary ensemble} (GUE), often considered the quintessential model of quantum chaos~\cite{liu2018spectral}.  The GUE's importance as a model for complex quantum systems is underscored by its wide-ranging applications across various domains of physics. These applications span from condensed matter physics, where it describes electron transport~\cite{beenakker1997random}, to quantum chromodynamics~\cite{verbaarschot2000random}, nuclear physics for modeling energy levels~\cite{wigner1951statistical}, and even to the understanding of anti-de Sitter black hole dynamics~\cite{cotler2017black}. This ubiquity motivates our investigation into pseudochaotic ensembles as computationally efficient alternatives to the GUE for simulating and studying complex quantum systems that exhibit chaotic behavior.

\begin{figure}
    
   \includegraphics[width=1.04\columnwidth]{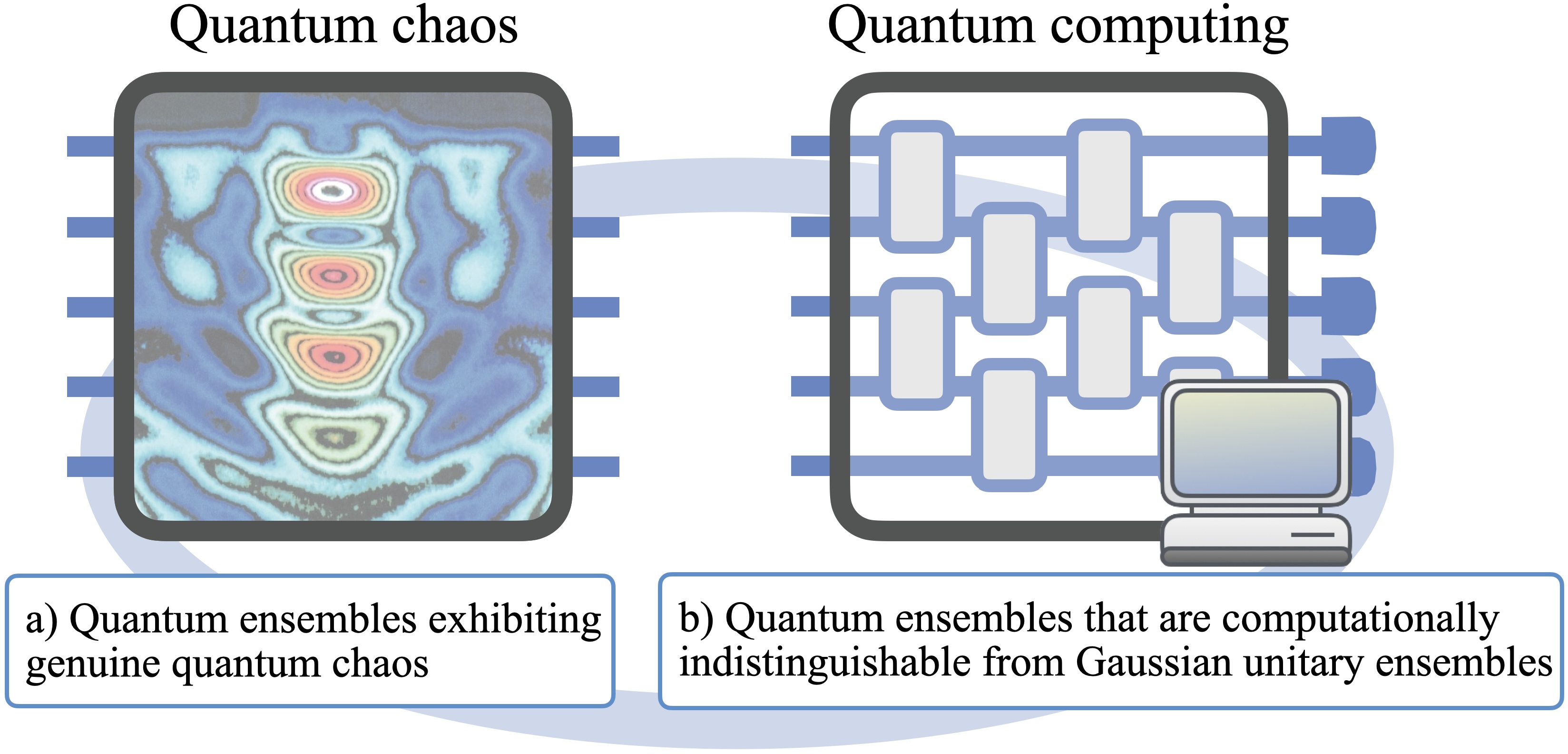}

    \caption{We consider quantum ensembles that are indistinguishable 
    from a) Gaussian unitary ensembles featuring signatures of
    genuine quantum chaotic dynamics 
    by 
    any computationally restricted observer: Yet, b) this new ensemble that is computationally efficiently preparable on quantum computers does in many ways not exhibit quantum chaotic features.}
    \label{fig:ame_circuit}
\end{figure}

Our ensemble is notable not just for the features it displays, but those it lacks: surprisingly, it fails to exhibit several hallmarks of chaos. It shows no level repulsion, contrary to the Wigner-Dyson statistics of the GUE, demonstrates significantly larger late-time OTOC values ($\sim \exp(-\poly\log n))$) compared to chaotic systems ($\sim \exp(-n)$)~\cite{cotler2017black}. Similarly, the complexity of time-evolved local operators under our ensemble, as quantified by the local operator entanglement, saturates at significantly lower values compared to the GUE. Despite this, no physical experiment can distinguish our ensemble from the GUE, except at the expense of an extraordinary amount of time. These results suggest that the heuristics previously thought to indicate the onset of quantum chaos are, in fact, not essential, as their presence or absence leads to the same emergent phenomena for any computationally bounded observer.

Our work extends far beyond challenging conventional notions of quantum chaos to practical results concerning quantum simulation, resource theory and quantum learning theory.
\begin{itemize}
    \item {\em Simulating GUE.} Remarkably, our pseudochaotic ensemble is efficiently simulable on a quantum computer, overcoming known exponential lower bounds for simulating true GUE Hamiltonians~\cite{kotowski2023extremal}.

    \item {\em Hamiltonian learning.} We establish a no-go result for Hamiltonian learning from black-box access to the time evolution. In particular, even if there exists an efficient circuit that generates the time evolution for a given $H$, there is no polynomial-time quantum algorithm that learns the time evolution operator for $H$. We can make this result stronger and show that the no-go result persists even if the Hamiltonian is sparse in the computational basis.
    
    \item {\em Hamiltonian property testing.} We establish sample-complexity lower bounds for algorithms which aim to determine key properties of quantum systems, such as entanglement and magic production, scrambling (as quantified by out-of-time-order correlators), and the spreading of local operator entanglement. 
    \item {\em Testing spectral properties.} We show that there is no efficient quantum algorithm for determining whether an ensemble of Hamiltonians obeys Wigner-Dyson or Poisson spectral statistics, which are commonly associated with different classes of chaotic and integrable systems. Surprisingly, this no-go result holds even if we are guaranteed that the Hamiltonians are sparse.
    \item {\em Pseudoresourceful unitaries and tighter resource distillation bounds.} The unitary dynamics generated by pseudochaotic Hamiltonians introduce the stronger concept of \textit{pseudoresourceful} unitary operators. These are unitary operators that, while indistinguishable from those producing highly entangled and highly magical states, in fact generate states with low entanglement and low magic. We prove that the near exponential suppression of conversion rates in efficient resource distillation protocols still persists, even when the distiller has black-box query access to the unitary which prepares the given resource state. This demonstrates that even this form of prior knowledge is too weak to enable practical distillation algorithms.
\end{itemize}


More broadly, our work adds to the toolbox of quantum pseudorandomness, an emerging theme in the complexity of physical systems. While recent research has focused on pseudorandom states~\cite{ji2018pseudorandom}, pseudorandom unitaries~\cite{chen2024efficient,schuster2024random,bostanci2024efficient,metger2024simple,ma2024constructrandomunitaries}, and ``pseudoresourceful'' states~\cite{aaronson2024quantum,gu2024pseudomagic}, our work extends these initial ideas to the realm of Hamiltonians and quantum dynamics. Moreover, we rigorously establish that the late-time dynamics generated by GUE Hamiltonians is indistinguishable from Haar-random unitaries, thus providing a rigorous recipe for constructing pseudoresourceful unitaries that are also pseudorandom.
 This extends the landscape of quantum pseudorandomness beyond 
states and unitaries to encompass the generators of quantum dynamics and to notions that are naturally
associated with the realm of quantum many-body physics.


\section{Framework}

\textbf{What is quantum chaos?} While chaos theory in classical mechanics is well-established and elegantly formulated~\cite{eckmann_ergodic_1985,schuster_deterministic_2005,buzzi_chaos_2009}, its quantum counterpart remains more elusive. The concept of quantum chaos is inherently ambiguous, largely due to the absence of direct quantum analogs for key classical notions such as the butterfly effect (characterized by rapidly diverging trajectories in phase space) and integrals of motion (conserved quantities during dynamics). This fundamental difference has led researchers to postulate that any comprehensive theory of quantum chaos, if it exists, must be qualitatively distinct from its classical counterpart. The challenge lies in developing a framework that captures the essence of chaotic behavior in quantum systems without relying on classical intuitions that may not apply at the quantum scale.

Since the 1950s, numerous approaches have been explored, contributing to a rich body of insightful results. One of the oldest and most promising indicators of chaos in this context is found in the statistical properties of the Hamiltonian spectra~\cite{haake2018quantum}. Wigner-Dyson statistics of energy levels are traditionally associated with chaotic systems, reflecting the phenomenon of \textit{level repulsion}, where energy 
levels tend to avoid each other~\cite{bohigas1984characterization,guhr1998random}. 
In contrast, Poissonian statistics, typical of integrable systems as suggested by the Berry-Tabor conjecture, exhibit no level repulsion~\cite{berry1997level}. In Poissonian systems, energy levels are uncorrelated, with a non-zero probability of finding levels arbitrarily close together, hinting at the presence of underlying symmetries (see \cref{fig:spacing}). This distinction reflects the dynamics at play: Wigner-Dyson statistics indicate complex, chaotic behavior with many interacting degrees of freedom, while Poissonian statistics suggest simpler, more predictable out-of-equilibrium quantum dynamics~\cite{dalessio2016quantum,deutsch2018eigenstate,gogolin2016equilibration,PolkovnikovReview,OutOfEquilibrium}. This signature, or lack thereof, of chaos has been extensively studied numerically in numerous many-body models considered chaotic~\cite{rabson2004crossover,leblond2021universality,zeng1994scaling}.

\begin{table*}
    \setlength{\tabcolsep}{6pt} 
    \renewcommand{\arraystretch}{1.4} 
    \begin{centering}
        \begin{tabular}{|c|c|c}
                \hline
                {\textbf{Probe to chaos}}  & {\textbf{Definition}} \\
                \hline
                \hline
                $4$-point OTOC~\cite{kitaev2014HiddenCorrelationsHawking} & $\OTOC(H,t)\coloneqq\frac{1}{d}\tr[O_1(t)O_2O_1(t)O_2]$                                      \\
                \hline
                $2$-R\'enyi entropy~\cite{zanardi2001entanglement}   &

                $S_{A|B}(H,t;\rho_0)\coloneqq-\log \tr_A[(\tr_B(e^{-iHt} \rho_0 e^{iHt}))^2]$                             \\
                \hline
                Operator entanglement~\cite{zanardi2001entanglement}  & $\loe(H,t)\coloneqq S_2(\tr_{B,B'}\ketbra{O_1(t)})$       \\
                \hline
                Stabilizer entropy~\cite{leone2022stabilizer}   &

                $M_{\alpha}(H, t; \rho_0)\coloneqq\frac{1}{1-\alpha}\log \frac{1}{d} \sum_P \tr^{2\alpha}(P e^{-iHt} \rho_0 e^{iHt})$                             \\
                
                \hline
            \end{tabular}
        \caption{\label{table:magicmonotone} Probes to chaos considered in this work. We typically associate chaos with systems for which there exists times $t=O(\poly n)$ such that the 2-R\'enyi entropy, operator entanglement, and stabilizer entropy are extensive in system size, and the 4-point OTOC is exponentially vanishing in system size.} 
        \label{tableprobestochaos}
    \end{centering}
\end{table*}

Modern approaches to quantum chaos have shifted focus from properties of the Hamiltonian to emergent many-body phenomena produced by unitary dynamics $e^{-iHt}$. After all, all we need to understand is how emergent physical laws arise from many-body unitary dynamics. A particularly notable development in this direction is the identification of a quantum analog of the butterfly effect, marked by the rapid spread of quantum correlations into non-local degrees of freedom~
\cite{kitaev2014HiddenCorrelationsHawking}.
%
\textit{Quantum scrambling} describes how quantum information spreads through a system under unitary evolution. Initially localized information, encoded in a spatially local operator $O_1$, spreads over time due to the unitary dynamics $O_1(t)\coloneqq e^{-iHt}O_1e^{iHt}$. The effect of this spreading can be captured by a \textit{test} local operator $O_2$ via the decay of \emph{out-of-time-ordered correlators} (OTOCs), with faster decay generally associated with stronger scrambling~\cite{hashimoto2017otoc,swingle2016measure}. 

Another information-theoretic measure of this spreading is the \textit{local operator entanglement} (LOE) of $O_{1}(t)$, which quantifies how quantum information spreads between two subsystems $S=A\cup B$. Formally, the LOE is defined as the bipartite entanglement of the Choi state vector $\ket{O_1(t)}\coloneqq O_{1}(t)\otimes I_{S'}\ket{S,S'}$ associated with $O_{1}(t)$, where $\ket{S,S'}$ is the maximally entangled state vector between the system $S$ and an identical copy $S'$ (\cref{tableprobestochaos}). Rapid growth in LOE, 
as well as large saturation values, are associated with stronger scrambling of quantum information under $e^{-iHt}$.

That being said, we can so far only assert that chaos is \textit{diagnosed} through indicators, or ``probes'', that signal the presence of quantum chaos, but do not fully characterize it. Some quantum systems do not exhibit chaotic behavior, yet still display these same indicators. For instance, Wigner-Dyson statistics of energy 
levels have been observed in integrable systems~\cite{PhysRevLett.80.3996,PhysRevE.68.045201,PhysRevE.103.062211}, meaning that level repulsion is not a defining feature of chaotic systems. Similarly, there are quantum systems that do not exhibit chaotic behavior, yet display strong scrambling~\cite{PhysRevB.97.144304,PhysRevLett.132.080402}. Thus, the current understanding suggests that neither scrambling nor level-repulsion are sufficient to define quantum chaos; they merely serve as indicators or, more precisely, are often considered only necessary conditions for the emergence of quantum chaos~\cite{Dowling_2023}.

In a similar vein, quantum chaotic dynamics often produce highly entangled and highly magical states, indicating the production of states lying beyond the reaching of classical simulability. However, again, these features alone do not fully characterize chaotic dynamics, as non-chaotic dynamics can also generate entangled and magical states~\cite{PhysRevA.106.042426,Tarabunga2024criticalbehaviorsof}.

What, then, are we left with? Current approaches to quantum chaos are not designed to fully characterize chaotic dynamics. Instead, they focus on identifying the conditions necessary for its emergence, primarily through the observation of emergent phenomena, or \textit{signatures} of chaotic dynamics. While these traditional approaches have provided valuable insights, they do not fully capture the role of observers in characterizing quantum chaos. To address this, we introduce the concept of computationally bounded observers. 
\smallskip


\textbf{The crucial role of computationally bounded observers.} Modern approaches to quantum chaos subtly introduce a new perspective: if the essence of quantum chaos needs to be found in emergent physical phenomena produced by unitary dynamics, then observers are not merely passive verifiers but integral parts of the theory. This makes it crucial to account for the intrinsic limitations of physical observers.

Consider two $n$-particle dynamics, $e^{-iH_1t}$ and $e^{-iH_2t}$, which are predicted to exhibit distinct physical behaviors characterized by some emergent phenomenon $C$ (for instance, different late-time correlation functions). However, if the time required to experimentally distinguish between these dynamics using any measurement scheme is on the order of the age of the universe, what practical relevance does $C$ truly have? In such a case, even though $e^{-iH_1t}$ and $e^{-iH_2t}$ differ by some attribute $C$, they are indistinguishable within any reasonable timeframe. This raises the question: in what sense can $C$ be considered a truly \emph{physical} phenomenon?

Given the central role of observers in quantum chaos theory, and in light of their fundamental computational constraints, we are motivated to introduce the concept of \textit{pseudochaotic Hamiltonians}. These are systems whose dynamics, while failing to meet the necessary conditions commonly associated with quantum chaos, remain indistinguishable from chaotic systems for any observer limited to measurement schemes that operate efficiently within a reasonable amount of time. In computational terms, a measurement scheme is efficient if the time required (in terms of the number of elementary operations) scales polynomially with the number $n$ of particles, which we write as $\poly(n)$. 

In order to formalize the idea of pseudochaotic Hamiltonians, we introduce the concept of black-box Hamiltonian access, 
which provides a general and natural way to interact with quantum systems without requiring detailed knowledge. 
The notion of black-box Hamiltonian access has emerged as a standard model in the field of 
Hamiltonian learning and quantum system characterization~\cite{BenchmarkingReview,GoogleHamiltonianLearning,anshu2021,stilck2024efficient,gu2024practical,bakshi2023learning,bakshi2024structure,huang2023learning}. Simply, an algorithm $\mathcal{A}^{H}$ has black-box Hamiltonian access to $H$ if it can query the time evolution $e^{-iHt}$ generated by $H$ at any given set of times $\abs{t}=O(\poly n)$ (as well as its controlled version) and also has access to samples of the Gibbs state $\propto e^{-\beta H}$ for any inverse temperature $\beta = O(\poly n)$.
Building upon this concept, we now introduce the core concept of our work: \emph{pseudochaotic Hamiltonians}. 

\begin{definition}[Pseudochaotic Hamiltonians] Let $\mathcal{E}_1=\{H_{k}\}$ be a ``chaotic'' ensemble of Hamiltonians which exhibits all the necessary properties commonly associated to quantum chaos (see \cref{tableprobestochaos}) with high probability. A ``pseudochaotic'' ensemble of Hamiltonians $\mathcal{E}_2=\{H_l\}$ has two key properties: 
\begin{enumerate}[label=(\roman*)]
        \item It does not exhibit such chaotic properties: it does not exhibit level repulsion, strong scrambling capabilities, nor does it generate high entanglement and magic states. 
        \item It is computationally indistinguishable from $\mathcal{E}_1$ for any efficient algorithm $\mathcal{A}^H$ with black-box Hamiltonian access. That is,
        \begin{equation}
        \abs{\Pr_{H \sim \mathcal{E}_1}[\mathcal{A}^{H}()=1] - \Pr_{H \sim \mathcal{E}_2}[\mathcal{A}^{H}()=1]} = \negl(n).
        \end{equation}
    \end{enumerate}
\end{definition}
If pseudochaotic Hamiltonians exist, this would pose a strong challenge to our understanding of quantum chaos. The crucial observation is this: if Hamiltonians with vastly different spectral statistics, scrambling properties, and capacities for entanglement and magic generation still produce the same emergent phenomena for every physical observer, how can these features be considered necessary for the onset of quantum 
chaos?
\smallskip

\section{Results}
\textbf{Construction of pseudochaotic Hamiltonians.} Having defined pseudochaotic Hamiltonians, we now prove their existence with an explicit construction. 
To begin with, we focus on the chaotic ensemble. The \emph{Gaussian unitary ensemble} (GUE) of Hamiltonians is the archetypal example of a chaotic quantum system exhibiting Wigner-Dyson statistics. Formally, 
the Gaussian unitary ensemble $\mathcal{E}_{\gue}$ is an ensemble of $d \times d$ Hermitian matrices $H$ that is distributed according to
$\Pr(H) \propto e^{-\frac{d}{2} \tr H^2}$. Beyond obeying Wigner-Dyson spectral statistics, it is easy to show that it also fulfills the other necessary properties for quantum chaotic Hamiltonians with overwhelming probability. That is, they exhibit high LOE, fast decay of OTOCs, and generate maximally entangled and magical states (see \cref{th:separations}). Therefore, we identify the ``chaotic'' ensemble as $\mathcal{E}_{\gue}$.

We now proceed with the construction of the pseudochaotic ensemble of Hamiltonians. A general Hamiltonian can always be diagonalized with $H = U \Lambda U^\dagger$, where $\Lambda$ is a diagonal matrix containing the spectrum of the Hamiltonian and $U$ encodes the eigenvectors of $H$. Our construction starts with the observation that we can build an ensemble of Hamiltonians 
$\mathcal{E}$ by choosing eigenvalues $\Lambda$ and eigenvectors $U$ independently. We borrow some known results on the GUE ensemble. First, it is well known that the eigenvectors of the GUE are Haar random. Hence, we generate them using an ensemble $\mathcal{U}$ of pseudorandom unitaries,  which are efficiently implementable yet computationally indistinguishable from Haar random unitaries~\cite{chen2024efficient,schuster2024random,bostanci2024efficient,metger2024simple,ma2024constructrandomunitaries}. 
Second, the marginal distribution of a single eigenenergy is described by \emph{Wigner's famous semi-circle distribution} given by $p(\lambda) = {\sqrt{4 - \lambda^2}}/({2\pi})$. It turns out that, despite the fact that the full GUE 
eigenvalue distribution is strongly correlated, it can be ``spoofed'' by a joint distribution consisting of independent semi-circle distributions. Crucially, 
we can go one step further and define a highly \emph{degenerate}, but permutation-invariant, spoofing distribution $\tilde{p}_{\tilde{d}}$ such that each eigenenergy has degeneracy $d / \tilde{d}$. That is, there are exactly $\tilde{d}$ unique eigenenergies (hence $\tilde{d}$ defines a new ``effective'' dimension), each of which is independently sampled from the Wigner semi-circle distribution, and each eigenenergy is repeated exactly $d / \tilde{d}$ times. We, therefore, 
define the pseudochaotic ensemble of Hamiltonians as 
\be 
\mathcal{E}_{\tilde{d}} \coloneqq \{ U\Lambda U^{\dag} \, : \, \Lambda \sim \tilde{p}_{\tilde{d}}, U \sim \mathcal{U} \}, 
\ee 
which we refer to as the \textit{pseudo-GUE} ensemble, as we demonstrate that it is computationally indistinguishable from the true GUE ensemble. 

\begin{theorem}[Pseudo-GUE is indistinguishable from GUE]\label{th:pseudo-GUE} For any $\tilde{d}=\omega(\poly n)$, the ensemble $\mathcal{E}_{\tilde{d}}$ is computationally indistinguishable from $\mathcal{E}_{\gue}$ given black-box Hamiltonian access.
\end{theorem}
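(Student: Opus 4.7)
The plan is to proceed by a hybrid argument through the intermediate ensemble
\be
\mathcal{E}_{\mathrm{int}} \coloneqq \{ U\Lambda U^{\dagger} : U \sim \haar, \ \Lambda \sim \tilde{p}_{\tilde{d}} \},
\ee
which pairs the Haar-random eigenvectors of $\mathcal{E}_{\gue}$ with the pseudo-spectrum of $\mathcal{E}_{\tilde{d}}$. By the triangle inequality it then suffices to establish (a) $\mathcal{E}_{\mathrm{int}}$ is computationally indistinguishable from $\mathcal{E}_{\tilde{d}}$ (swap Haar for PRU at fixed spectrum) and (b) $\mathcal{E}_{\gue}$ is computationally indistinguishable from $\mathcal{E}_{\mathrm{int}}$ (swap the GUE joint spectrum for the degenerate-semicircle distribution at fixed Haar eigenbasis law).

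For (a), I would reduce to the security of the PRU family $\mathcal{U}$. Writing $H = U\Lambda U^{\dagger}$, both oracle primitives can be implemented using only oracle access to $U, U^{\dagger}$ and a classical description of $\Lambda$: the time evolution factors as $e^{-iHt} = U\, e^{-i\Lambda t}\, U^{\dagger}$, where $e^{-i\Lambda t}$ is a diagonal phase oracle built from $\Lambda$; a Gibbs sample is prepared by classically drawing an index $i$ with probability $\propto e^{-\beta\lambda_{i}}$ and returning $U|i\rangle$. A reduction that samples $\Lambda \sim \tilde{p}_{\tilde{d}}$ itself and forwards its unitary queries to either a Haar or a pseudorandom oracle therefore perfectly simulates $\mathcal{E}_{\mathrm{int}}$ or $\mathcal{E}_{\tilde{d}}$, so any efficient distinguisher between them would break PRU security.

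For (b), both ensembles share Haar-random eigenvectors, so by unitary invariance the algorithm's view depends only on symmetric functions of the spectrum. A query $e^{-iHt}$ applied to a Gibbs sample $U|i\rangle$ yields $e^{-i\lambda_{i}t}U|i\rangle$, so the information extractable in $\poly(n)$ queries reduces to (i) polynomially many noisy estimates of traces $\tr[p(e^{-iHt_{1}}, e^{-\beta_{1} H}, \ldots)]/d$ and (ii) polynomially many labeled pairs $(U|i_{k}\rangle, e^{-i\lambda_{i_{k}}t})$. Since both spectral distributions share the same semicircle marginal, the means of these traces agree up to $O(1/\poly(d))$, and their ensemble fluctuations are $O(1/\sqrt{\tilde{d}}) = \negl(n)$ under $\tilde{d} = \omega(\poly n)$, well below the $\Omega(1/\sqrt{\poly n})$ statistical error that any polynomial-query estimator incurs. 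The level-repulsion corrections distinguishing GUE from independent semicircle manifest on eigenvalue separations of order $1/d$, unresolvable at times $|t| = O(\poly n) \ll d$; in particular the GUE ``ramp'' contributes a $\poly(n)/d^{2}$ correction to the normalized spectral form factor, hence negligible. Finally, both index-level and energy-level collisions across Gibbs samples---the only witnesses of the spectral degeneracy---occur with probability $O(\poly(n)^{2}/\tilde{d}) = \negl(n)$.

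The hard part will be formalizing (b) rigorously in the adaptive query model. A clean implementation would twirl the oracles by the Haar measure on $U$ and use Schur--Weyl arguments to show that the expected response to any adversarial query is a moment operator depending only on low-degree symmetric functions of $\Lambda$; the quantum indistinguishability would then reduce to a classical total-variation bound between the induced distributions of spectral moments under $\mathcal{E}_{\gue}$ and $\mathcal{E}_{\mathrm{int}}$. Correctly handling the interplay between the Gibbs subsystem, which injects classical randomness that is not eliminated by the Haar twirl, and adaptive time-evolution queries is the principal technical burden.
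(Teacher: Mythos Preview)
Your hybrid decomposition and part (a) match the paper. One technical caveat on (a): for $\tilde{d} = \omega(\poly n)$ the reduction cannot explicitly sample and store $\Lambda \sim \tilde{p}_{\tilde{d}}$; the paper defines the $\tilde{d}$ unique energies implicitly via an efficiently computable function (the inverse CDF of the semicircle composed with a PRF), so that $e^{-i\Lambda t}$ and the Gibbs weights can be evaluated on demand without materializing the spectrum.

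For part (b), both the heuristic enumeration (trace estimates, form factors, collision probabilities) and the proposed Schur--Weyl twirl are more machinery than is needed, and the time-scale restriction $|t|=O(\poly n)$ is a red herring---the actual argument works uniformly in $t$. The paper's route is more direct: fix the eigenbasis $V$ and absorb the controlled $V,V^\dagger$ into the interleaving circuit, so that the output state $\rho(\Lambda, V;\rho_0)$ of any $k$-query algorithm has matrix entries that are sums over computational paths, each contributing a product of at most $2k$ phase factors $e^{\pm i\lambda_{x_j} t_j}$. Because both $p$ and $\tilde{p}_{\tilde{d}}$ are permutation invariant, $\E_\Lambda[\rho(\Lambda,V;\rho_0)]$ depends only on the $2k$-\emph{marginal} of the spectral distribution, and indistinguishability collapses to the single classical bound $\delta\bigl(p^{(2k)}, \tilde{p}^{(2k)}_{\tilde{d}}\bigr) \le O\bigl(k^2/\tilde{d} + \sqrt{k}/d^{1/8}\bigr)$. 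The paper establishes this marginal closeness via a KL-divergence estimate that exploits the determinantal form of the GUE $k$-point correlation function, after excising two low-probability tail events (near-collisions and semicircle-edge energies). Gibbs queries are handled by a separate lemma: approximate $e^{-\beta\lambda_i}/\tr(e^{-\beta\Lambda})$ by a degree-$O(\poly(\beta,n))$ polynomial in the eigenvalues and invoke marginal closeness again. The interplay between Gibbs initialization and adaptive time-evolution queries that you flag as the principal burden is then dispatched by a plain triangle inequality separating the Gibbs-preparation error from the evolution error---no joint analysis is required.
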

The proof of \cref{th:pseudo-GUE} is in \cref{App:pseudochaoticconstruction}. As we will see in the next section, this ensemble is designed to mimic certain properties of the GUE while exhibiting fundamentally different behavior in key aspects of quantum chaos, hence qualifying as a pseudochaotic ensemble of Hamiltonians.

The ensemble of chaotic Hamiltonians we selected in our construction, namely the GUE ensemble, lacks a fundamental property of many-body Hamiltonians, namely the locality of interactions, and as such, may not qualify as a truly `physical' many-body chaotic ensemble. However, we emphasize that the GUE is merely an example of a set of Hamiltonians that satisfy the necessary conditions for the onset of quantum chaos, exhibiting level repulsion in its spectral statistics, rapid and strong scrambling, and generating highly complex states. That said, the exploration of pseudochaotic Hamiltonians that also exhibit locality of interaction would be a fruitful direction for future research.
\smallskip

\textbf{``Signatures" of quantum chaos?}
Our pseudochaotic ensemble lacks four key signatures of quantum chaos: (i) its spectral statistics show no level repulsion; (ii) the typical operator entanglement never saturates to its maximal value; (iii) 
\emph{out-of-time-ordered correlators} (OTOCs) decay much more slowly than in truly chaotic systems; and (iv) it produces less complex, low-entanglement, and low-magic states. Yet, the fact that our ensemble can still mimic chaos for a computationally bounded observer, or equivalently produce the same emergent phenomena, raises an important question: how can these be considered signatures of quantum chaos? Our results suggest that these four properties are in fact unnecessary conditions for chaos.

\begin{figure}
    \centering
    \includegraphics[width=0.75\linewidth]{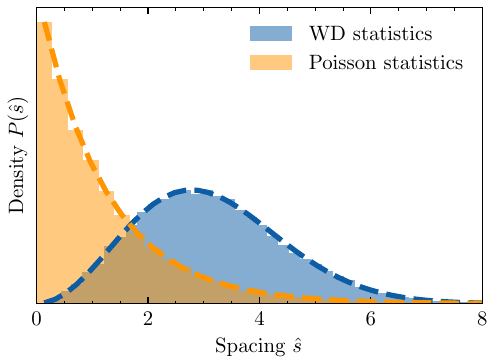}
    \caption{Level spacing statistics for GUE Hamiltonians (blue) and pseudo-GUE Hamiltonians (orange) with $d=2^6$. The dashed curves show a fitted Wigner-Dyson and exponential distribution.} 
    \label{fig:spacing}
\end{figure}

At the heart of our investigation are spectral statistics, particularly level repulsion, which is often considered a smoking gun signature for quantum chaos~\cite{bohigas1984characterization,berry1997level}. Level repulsion, a phenomenon where energy levels tend to ``avoid" each other, is a key feature of random matrix theory and is closely tied to the ergodic properties of chaotic systems~\cite{haake2018quantum,guhr1998random,reichl2021transition}.
\begin{theorem}[Dichotomy of spectral statistics]\label{thm:spectrum}
The level spacing distribution $P(s)$ of GUE Hamiltonians obeys $P(\hat{s}=0) = 0$, where $s$ is the difference between any energy and the next largest energy level and $\hat{s} \coloneqq d \cdot s$. In contrast, the level spacing distribution of pseudo-GUE Hamiltonians is monotonically decreasing in $\hat{s}$, and $P(\hat{s}=0) = \frac{8}{3\pi^2}$. \end{theorem}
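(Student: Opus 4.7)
The plan is to treat the two halves of the dichotomy by reducing each to a direct statement about the Wigner semicircle measure. For the GUE side, I would invoke the classical Wigner surmise for the unitary ($\beta=2$) ensemble: the joint eigenvalue density carries a Vandermonde-squared repulsion factor $\prod_{i<j}|\lambda_i-\lambda_j|^{2}$, and integrating out all but two consecutive (bulk-unfolded) eigenvalues gives the standard closed form
\begin{equation}
P(\hat{s})=\frac{32}{\pi^{2}}\,\hat{s}^{2}\exp\!\left(-\tfrac{4}{\pi}\hat{s}^{2}\right),
\end{equation}
whose quadratic prefactor is precisely level repulsion and yields $P(\hat{s}=0)=0$ at once. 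The rescaling $\hat{s}\coloneqq d\cdot s$ absorbs the mean-level-spacing scale $1/(d\rho(\lambda))$ in the bulk, up to the local density factor $\rho(\lambda)=\sqrt{4-\lambda^{2}}/(2\pi)$, so this is just the usual unfolded spacing statistic and I would simply cite Mehta's textbook treatment.

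\textbf{Pseudo-GUE part.} For the pseudochaotic side I would use the explicit construction: by definition the $\tilde{d}$ \emph{distinct} eigenvalues are i.i.d.\ draws from the semicircle density $\rho$, while each is repeated $d/\tilde{d}$ times; those degeneracies contribute a point mass at $s=0$ rather than a density there, so the phrase ``next largest energy level'' unambiguously refers to the next distinct eigenvalue. In the large-$\tilde{d}$ limit, the rescaled point process of distinct eigenvalues converges locally to a Poisson process of intensity $\rho(\lambda)$. The conditional spacing at position $\lambda$ is therefore exponential with rate $\rho(\lambda)$, and averaging against the density of the left endpoint of a spacing (itself $\rho(\lambda)$) yields
\begin{equation}
P(\hat{s})=\int_{-2}^{2}\rho(\lambda)^{2}\,e^{-\rho(\lambda)\hat{s}}\,d\lambda.
\end{equation}
The integrand is pointwise strictly decreasing in $\hat{s}\ge 0$, so $P(\hat{s})$ is monotonically decreasing. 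Evaluating at $\hat{s}=0$,
\begin{equation}
P(0)=\int_{-2}^{2}\frac{4-\lambda^{2}}{4\pi^{2}}\,d\lambda=\frac{1}{4\pi^{2}}\cdot\frac{32}{3}=\frac{8}{3\pi^{2}},
\end{equation}
which gives both claimed properties.

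\textbf{Main obstacle.} The only nontrivial step is justifying the Poisson-process limit uniformly enough to conclude that the continuous part of the spacing density really equals the above integral, and that edge effects near $\lambda=\pm 2$ (where $\rho$ vanishes and spacings balloon) do not spoil monotonicity or the value at the origin. I would handle this by first restricting to the bulk $[-2+\varepsilon,2-\varepsilon]$ where $\rho$ is bounded below, invoking standard convergence of order statistics for i.i.d.\ samples from a bounded density with $O(1/\tilde{d})$ corrections that are harmless because $\tilde{d}=\omega(\operatorname{poly} n)$, and then sending $\varepsilon\downarrow 0$; since $\rho$ is square-integrable the edge contribution to $P(0)$ vanishes in the limit, and the resulting closed-form integral is visibly decreasing and yields $8/(3\pi^{2})$. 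Everything else is a direct calculation.
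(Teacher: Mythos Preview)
Your argument is correct and lands on the right numbers, but it takes a somewhat different route from the paper's for the pseudo-GUE half. The paper (Lemma~\ref{lem:poissonian}) writes down the \emph{exact} finite-sample spacing density for $d$ i.i.d.\ semicircle draws,
\[
P(s)=d\int p^{(1)}(\lambda)\,p^{(1)}(\lambda+s)\,\bigl(1+F(\lambda)-F(\lambda+s)\bigr)^{d-2}\,d\lambda,
\]
evaluates it at $s=0$ (where the survival factor is identically $1$) to get $P(s{=}0)=d\int(p^{(1)})^{2}=\tfrac{8d}{3\pi^{2}}$, and then asserts monotonicity by ``differentiate and observe non-positive'' without actually carrying out the derivative. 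Your Poisson-limit formula $P(\hat s)=\int\rho^{2}e^{-\rho\hat s}\,d\lambda$ is the large-$d$ limit of this same expression; the two agree \emph{exactly} at $\hat s=0$, and your monotonicity argument (integrand pointwise decreasing) is cleaner than the paper's, at the price of being asymptotic rather than finite-$d$. For the GUE half, the paper does not re-derive Wigner--Dyson repulsion either; citing the surmise is what is expected.

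One scaling point worth flagging: you correctly note that degeneracies force ``next largest level'' to mean the next \emph{distinct} eigenvalue, but then the natural rescaling is $\hat s=\tilde d\cdot s$, not the theorem's $\hat s=d\cdot s$. The paper sidesteps this by actually proving Lemma~\ref{lem:poissonian} for the non-degenerate $\tilde p$ with $d$ i.i.d.\ energies, even though the main-text theorem says ``pseudo-GUE.'' So this is a minor inconsistency already present in the paper that you have inherited rather than introduced; your edge-handling discussion is more careful than what the paper provides.
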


We present details of the proof of \cref{thm:spectrum} in  \cref{App:behaviorprobestochaos}. The GUE, as expected, exhibits characteristic level repulsion statistics. 
In sharp contrast, our pseudochaotic ensemble completely lacks level repulsion. The intuitive reason for this is that the spacing between $d$ iid random variables is approximated by an exponential distribution~\cite{livan2018introduction}. 
Beside lack of level repulsion, the pseudo-GUE ensemble exhibits weak scrambling behavior (as quantified by the LOE and the OTOC), generates low entangled states (as measured by the second R\'enyi entanglement entropy $S_{2}$), and low-magic states, measured by the second stabilizer entropy $M_2$. 


\begin{theorem}[Separation in probes of quantum chaos]\label{th:separations} Let $\mathcal{P}=\{\operatorname{LOE}, -\log\abs{\OTOC_4}, S_{2}, M_{2}\}$ be the set of probes of quantum chaos defined in \cref{tableprobestochaos}. For any $f\in \mathcal{P}$, $f(H,t)$ quickly grows very large for GUE Hamiltonians: 
\begin{equation}
    \exists t=O(1)\,:\,\Pr_{H \sim \mathcal{E}_{\gue}}[f(H,t) = \Theta(n)] \geq 1 - \exp(-\Omega(n)).
\end{equation}
Conversely, for pseudo-GUE Hamiltonians with $\tilde{d} = \exp(\Theta(\poly \log n))$, we have
\begin{equation}
    \forall t\,:\,\Pr_{H \sim \mathcal{E}_{\pgue}}[f(H,t) \leq O(\poly \log n)] \geq 1 - \negl(n).
\end{equation}
\end{theorem}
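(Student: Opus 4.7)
\emph{Proof plan.} The theorem has two halves: for $H \sim \mathcal{E}_{\gue}$, I must exhibit some $t = O(1)$ at which every probe attains $\Theta(n)$ with probability $1 - \exp(-\Omega(n))$; for $H \sim \mathcal{E}_{\pgue}$, I must show every probe is uniformly bounded by $O(\poly\log n)$ at all times $t$ with probability $1 - \negl(n)$. I handle the two halves separately.

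\emph{GUE half.} The plan is to argue that at a suitably chosen constant $t$, the unitary $e^{-iHt}$ acts on a fixed simple input the way a Haar-random unitary does, up to corrections negligible for the probes considered. On the Haar side the benchmarks are standard and extensively documented: Page's theorem gives $S_2 = n - O(1)$; a Gross-Flammia-Eisert-style computation gives $M_2 = n - O(1)$; a direct moment calculation shows the local operator entanglement of $U O_1 U^\dagger$ saturates at $\Theta(n)$; and the expected Haar OTOC is $O(1/d)$, so $-\log|\OTOC| = \Theta(n)$ with exponentially high probability by Lipschitz concentration. To pass from Haar to GUE I exploit the unitary invariance of the ensemble: the law of $e^{-iHt}$ is invariant under $H \mapsto V H V^\dagger$, which forces $e^{-iHt}|0\rangle$ to be invariant under the stabilizer of $|0\rangle$. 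Hence $e^{-iHt}|0\rangle = c_0|0\rangle + \sqrt{1-|c_0|^2}\,|\psi_\perp\rangle$ with $|\psi_\perp\rangle$ Haar-random on the orthogonal complement, and the return amplitude $c_0$ is exponentially small at $t = O(1)$ with overwhelming probability (from the spectral form factor of the GUE at $O(1)$ times). Thus the state is essentially Haar-random, and Levy's lemma closes out each probe.

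\emph{Pseudo-GUE half.} The key structural fact is the spectral decomposition $e^{-iHt} = \sum_{k=1}^{\tilde{d}} e^{-i\lambda_k t}\, P_k$, in which only $\tilde{d}$ distinct phases act on $\tilde{d}$ Haar-random rank-$(d/\tilde{d})$ eigenspaces. For each probe I would compute the expectation under the joint Haar-$\times$-semicircle measure together with a variance bound for concentration. For $S_2(e^{-iHt}|0\rangle)$, write $\rho_A = WW^\dagger$ with $W = \sum_k e^{-i\lambda_k t} V_k$ and $V_k$ the $d_A \times d_B$ reshape of $P_k|0\rangle$; computing $\mathbb{E}_{U,\lambda}[\tr\rho_A^2]$ by Haar-averaging the $V_k$ and pairing the phases via the semicircle characteristic function yields a lower bound $\Omega(1/\poly(\tilde{d}))$, which a second-moment computation concentrates around $O(\log\tilde{d})$. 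For LOE the same template applies to the Choi vector $|O_1(t)\rangle = e^{-i\tilde{H}t}|O_1\rangle$ with $\tilde{H} = H\otimes I - I\otimes H^T$, whose spectrum contains at most $\tilde{d}^2$ distinct values. For OTOC, expand $\tr[O_1(t) O_2 O_1(t) O_2]$ in the $\{P_k\}$ basis, average over the $\lambda_k$, and show the surviving ``diagonal'' pairings give $|\OTOC| = \Omega(1/\poly(\tilde{d}))$. For $M_2$, use that $|\psi_t\rangle$ lies in the $\tilde{d}$-dimensional span of $\{P_k|0\rangle\}$, so the Pauli expansion of $|\psi_t\rangle\langle\psi_t|$ has effective support of size at most $\tilde{d}^2$, giving $M_2 = O(\log\tilde{d})$. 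For $\tilde{d} = \exp(\Theta(\poly\log n))$ each bound becomes $O(\poly\log n)$.

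\emph{Main obstacle.} I expect the hardest step to be the concentration arguments under the joint Haar-$\times$-semicircle measure for all four probes: each requires a fourth-order Weingarten calculation with careful bookkeeping over $\tilde{d}^4$ index combinations, including crucially the non-generic index coincidences whose contributions scale as various powers of the semicircle characteristic function. Among the probes, $M_2$ is likely the most delicate: the stabilizer entropy aggregates $|\langle\psi_t|P|\psi_t\rangle|^{2\alpha}$ over the entire Pauli group, so concentration must be established uniformly across all $d^2$ Pauli directions rather than at a single observable, which will likely require an additional union bound or a global concentration inequality for the Pauli spectrum.
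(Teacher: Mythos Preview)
Your proposal misses the paper's unifying reduction and contains one genuine error. The paper does not treat the four probes separately: a single Weingarten computation (their Table~II) shows that, after averaging over the Haar-random eigenbasis $U$, each of the purity-type quantities (subsystem purity, stabilizer purity, Choi-state purity, $4$-point OTOC) equals $|Z(\Lambda t)|^{2k} + O(d^{-1})$ for an appropriate $k\in\{2,4\}$, where $Z(\Lambda t) \coloneqq d^{-1}\tr e^{-i\Lambda t}$ is the normalized spectral form factor. Both halves of the theorem then reduce to controlling this one scalar: for the GUE, $\E_\Lambda|Z|^{2k} = (J_1(2t)/t)^{2k} + O(d^{-1})$, so at a zero $t_1 = O(1)$ of $J_1(2t)$ a Markov bound gives every purity $\le d^{-c}$ with probability $1 - \exp(-\Omega(n))$; for the pseudo-GUE, $Z$ is an average of $\tilde d$ iid unit-modulus variables and a Berry--Esseen argument gives $|Z| \ge \tilde d^{-1}$ with probability $1 - O(\tilde d^{-1/2})$ uniformly in $t$. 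The ``$\tilde d^4$ index combinations'' you flag as the main obstacle collapse into this single object. Your GUE-half return-amplitude decomposition $e^{-iHt}|0\rangle = c_0|0\rangle + \sqrt{1-|c_0|^2}\,|\psi_\perp\rangle$ is correct for the state probes $S_2, M_2$ and in fact amounts to the same thing (one checks $|c_0| \approx |Z|$), but your sketch for LOE and OTOC implicitly treats $e^{-iHt}$ itself as a Haar unitary at $t = O(1)$; it is not (the paper shows Haar-closeness only at $t = \omega(\poly n)$), and your state-level invariance argument says nothing about the action on operators. There you need the direct moment computation.

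The genuine error is your pseudo-GUE argument for $M_2$. You claim that since $|\psi_t\rangle$ lies in the $\tilde d$-dimensional span of $\{P_k|0\rangle\}$, ``the Pauli expansion of $|\psi_t\rangle\langle\psi_t|$ has effective support of size at most $\tilde d^2$,'' hence $M_2 = O(\log\tilde d)$. This inference is false: a unit vector in a Haar-random $\tilde d$-dimensional subspace of $(\mathbb{C}^2)^{\otimes n}$ is itself Haar-distributed and has $M_2 = \Theta(n)$. Subspace dimension bounds the number of free matrix entries of $\rho$ in some basis, not the $\ell_4$ norm of its Pauli spectrum. The paper's bound $M_2 \le 8\log\tilde d$ comes instead from $\E_U\bigl[d^{-1}\sum_P \tr^4(P\rho(U,t))\bigr] = |Z(\Lambda t)|^8 + O(d^{-1})$, which holds because the initial state $|0\rangle$ is a \emph{stabilizer} state: the $d$ Paulis in $\langle Z_1,\dots,Z_n\rangle$ each retain expectation $\approx \pm|Z|^2$ after evolution, and their contribution dominates the sum. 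Your own return-amplitude idea from the GUE half would have recovered exactly this (the $Z$-type Paulis see $\langle\psi_t|P|\psi_t\rangle \approx \pm|c_0|^2$), but the subspace-dimension argument you actually wrote does not.
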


These contrasts between the GUE and our pseudochaotic ensemble fundamentally challenge our understanding of quantum chaos. While our pseudochaotic Hamiltonians differ from true chaotic systems in some aspects, they reproduce key features of quantum chaos relevant for many practical applications. For any polynomial-time quantum algorithm, including those measuring common chaos indicators, our systems are indistinguishable from true GUE Hamiltonians. They exhibit the same rapid growth of complexity measures at early times and reproduce emergent phenomena associated with quantum chaos, such as rapid thermalization and apparent randomness of observables, for all practically accessible timescales and measurements.

Traditional indicators like information scrambling or level repulsion are undoubtedly important in many contexts. However, our results suggest that they may not be necessary conditions for a system to exhibit \textit{emergent} chaotic behavior from a computational perspective. The key distinctions between our pseudochaotic ensemble and true chaotic systems only emerge in the long-time, high-precision limit, detectable only with exponential resources. This discrepancy opens up new avenues for research into the nature of quantum chaos. It suggests that there might be a hierarchy of chaotic behaviors, with computational indistinguishability representing a different --- and perhaps more fundamental --- level of chaos than that captured by traditional statistical measures. Moreover, it raises intriguing questions about the role of these traditional chaos indicators in quantum information processing and quantum computation.

Our work implies that many important features of quantum chaos, particularly those relevant to quantum information processing and efficient simulation, can be captured without requiring all traditional signatures of chaos. This opens new avenues for studying and simulating complex quantum systems with significantly reduced computational resources, while still maintaining fidelity to the most practically relevant aspects of quantum chaotic behavior. We refer to \cref{App:behaviorprobestochaos} for the proof of \cref{th:separations}.

\section{Implications of our results}

Having established the existence and properties of pseudochaotic Hamiltonians, we now turn to their broader implications across quantum information science. Our construction not only challenges conventional understanding of quantum chaos but also leads to concrete results in quantum simulation, quantum learning theory, and resource theories. These applications demonstrate how the concept of computational indistinguishability can fundamentally reshape our approach to various quantum information protocols. \smallskip

\textbf{Quantum simulation for GUE Hamiltonians.} It 
has been shown that simulating actual 
GUE Hamiltonians is infeasible even on a quantum computer, as the circuits required to implement (even an $\epsilon$-approximate) time evolution by a GUE Hamiltonian have exponential gate complexity~\cite{kotowski2023extremal}. However, the dynamics and finite-temperature behavior of our ensemble of pseudochaotic Hamiltonians $\mathcal{E}_{\tilde{d}}$ \emph{can} be efficiently simulated. This provides a way to efficiently investigate properties of the GUE that would otherwise be out of 
reach for quantum simulators. 

Our time evolution and Gibbs state simulation algorithms both leverage the special iid form of $\tilde{p}_{\tilde{d}}$. For time evolution, we implement $e^{-iHt}$ by decomposing it into $U e^{-i \Lambda t} U^\dagger$, where $U$ is a pseudorandom unitary that can be implemented using known techniques~\cite{ma2024constructrandomunitaries} and $\Lambda$ is the spectrum we construct. We implement $e^{-i \Lambda t}$ using the ``phase-kickback'' trick~\cite{ji2018pseudorandom}, applying appropriate phases to computational basis states. Crucially, the time evolution algorithm reveals that our pseudochaotic Hamiltonians can be \emph{exponentially} fast-forwarded, with complexity scaling polylogarithmically in $t$. This rare property, known only for a small set of Hamiltonians~\cite{atia2017,gu2021fast}, makes our pseudochaotic ensemble particularly notable. Indeed, since the late-time dynamics of GUE Hamiltonians generate unitaries that are indistinguishable from Haar-random unitaries, the fast-forwardable feature of pseudo-GUE allows us to reach exponentially large times, thereby generating pseudorandom unitaries as well. See \cref{App:GUEHaar} and the discussion therein. Our approach to Gibbs state preparation uses rejection sampling, a classical technique adapted to the quantum setting. The key insight is to propose states from a simple uniform distribution over computational basis states, and then accept or reject them with probabilities determined by their Gibbs weights. While naively this process might seem inefficient due to potentially small acceptance probabilities, we show that for our pseudo-GUE ensemble, the acceptance probability remains sufficiently large to ensure efficient sampling. This is made possible by careful bounds on the ratio between the target Gibbs distribution and our proposal distribution. See \cref{App:simulationpseudo-GUE} for detailed algorithms.


The capability to efficiently simulate our pseudochaotic ensemble naturally raises the question: given that many properties of the GUE can be calculated analytically (see, 
for example, Ref.~\cite{cotler2017chaos}), what is the value of such simulations? The answer lies in the limitations of analytical methods when dealing with complex scenarios and high-order properties of GUE systems. While low-order moments of the GUE are analytically tractable, higher moments present a substantial challenge, with computational costs scaling factorially. This limitation becomes particularly relevant in several contexts. For instance, in the analysis of finite temperature properties of the GUE, one can use the fact that the Gibbs states of GUE Hamiltonians can be expressed in terms of polynomial moments of the ensemble. Yet, at low temperatures, the polynomial order increases, and analytical evaluation very rapidly becomes prohibitively expensive. Here, direct simulation allows us to access these high-order moments empirically. This approach provides a practical means to explore complex thermal properties that would be excessively difficult 
to calculate analytically, offering insights into the behavior of GUE 
systems at finite temperatures.

Moreover, our interest often extends beyond isolated GUE systems to scenarios where GUE elements interact with other systems. Consider, for instance, a 
two-dimensional lattice Hamiltonian composed of a locally interacting background bath, interspersed with small regions governed by GUE dynamics. Such a system might model, for example, a quantum device with regions of controlled interactions punctuated by areas of complex, chaotic behavior. The overall dynamics of these hybrid systems falls outside the realm of current analytical tools. Our simulation method provides a way to explore these complex, heterogeneous systems, offering insights into how local chaos influences global behavior.

These scenarios parallel the development of classical Markov chain Monte Carlo methods, which emerged as practical alternatives to prohibitively expensive analytical evaluations in statistical physics. Our quantum simulation approach offers a means to extract desired behaviors empirically, leveraging quantum pseudorandomness to efficiently simulate GUE-like behavior. While analytical expressions for GUE behavior exist in principle, direct simulation on quantum computers often provides a more practical approach, particularly for complex scenarios involving high-order moments or interactions with non-GUE systems. Our method thus bridges the gap between theoretical understanding and practical exploration of GUE-like systems.
\smallskip

\textbf{Hamiltonian property testing.} 
Our results have implications for Hamiltonian property testing, a task in which one aims to determine whether or not an unknown Hamiltonian possesses a specific property~\cite{bluhm2024hamiltonian}. These implications challenge several commonly held assumptions about quantum systems and reveal fundamental limitations in our ability to infer long-time behavior from early-time dynamics.

A prevailing assumption in the study of quantum systems is that rapid growth of complexity measures (such as entanglement or OTOCs) at early times implies continued growth until saturation at maximal (or minimal) allowed values. Our findings decisively refute this assumption. The pseudochaotic ensemble we introduce exhibits rapid growth of complexity measures at early times, indistinguishable from the growth observed in the GUE. However, contrary to expectations, this growth in our ensemble saturates at $O(\poly \log n)$, falling far short of the $\Omega(n)$ saturation that a naive extrapolation from early-time behavior would suggest. This discrepancy between early and late-time behavior highlights more precise limitations in Hamiltonian property testing. With our construction, we can lower bound the resources required for testing each of the chaos signatures listed in \cref{tableprobestochaos}. In the following, we focus specifically on property testing for OTOCs.

\begin{corollary}[Scrambling property testing]\label{cor:scramblingtesting} Any algorithm $\mathcal{A}^{H}$ with black-box access to a Hamiltonian $H$ which aims to distinguish between whether (i) $ \abs{\OTOC(H,t)}< k$ or (ii) $ \abs{\OTOC(H,t)}\ge K$ requires $\Omega(K^{-1/2})$ queries to the time evolution operator $e^{-iHt}$. This result extends to any algorithm $\mathcal{A}^U$ having black-box access to a unitary operator $U$.
\end{corollary}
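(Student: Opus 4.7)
\textbf{Proof plan for \cref{cor:scramblingtesting}.} The plan is to derive the lower bound as a direct consequence of a quantitative version of the pseudo-GUE indistinguishability of \cref{th:pseudo-GUE}, using $\mathcal{E}_{\tilde{d}}$ as the ``large-OTOC'' witness ensemble and $\mathcal{E}_{\gue}$ as the ``small-OTOC'' witness ensemble.

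First, in the interesting regime where $\tilde{d} = \omega(\poly n)$, I would set $\tilde{d} \coloneqq \lceil 1/K\rceil$ and invoke \cref{th:separations} at an appropriate polynomially large time $t$. This yields $|\OTOC(H,t)| = O(1/d) < k$ with overwhelming probability for $H\sim\mathcal{E}_{\gue}$, and simultaneously $|\OTOC(H,t)|\geq \Omega(1/\tilde{d}) \geq K$ with overwhelming probability for $H\sim\mathcal{E}_{\tilde{d}}$. Consequently, any algorithm $\mathcal{A}^H$ that correctly decides between the two OTOC regimes in the statement must also distinguish these two ensembles with advantage $1-\negl(n)$.

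Second, I would upgrade \cref{th:pseudo-GUE} to a quantitative statement that controls the distinguishing advantage of any $T$-query algorithm between $\mathcal{E}_{\gue}$ and $\mathcal{E}_{\tilde{d}}$ by $O(T^2/\tilde{d})$, plus a $\negl(n)$ contribution from the pseudorandomness of the eigenbasis. This $T^2/\tilde{d}$ factor would follow from a standard birthday/collision argument: the pseudo-GUE collapses the $d$-dimensional spectrum into $\tilde{d}$ i.i.d.\ eigenvalues of multiplicity $d/\tilde{d}$ each, and a black-box query algorithm can only detect the induced degeneracy by producing a collision between two of its queries in eigenvalue space, an event of probability $O(T^2/\tilde{d})$.

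Combining the two ingredients closes the argument: a hypothetical algorithm making $T = o(K^{-1/2}) = o(\sqrt{\tilde{d}})$ queries would have distinguishing advantage only $o(1)$, contradicting the $1-\negl(n)$ advantage forced by the OTOC gap; hence $T = \Omega(K^{-1/2})$. The extension to $\mathcal{A}^U$ is immediate because the reduction treats $e^{-iHt}$ only as a black-box unitary throughout. The main technical obstacle is to upgrade \cref{th:pseudo-GUE} from the qualitative ``negligible for polynomial $T$'' statement given in the body to the sharp $O(T^2/\tilde{d})$ advantage bound required here: while I expect this to follow from the hybrid argument in \cref{App:pseudochaoticconstruction}, it requires carefully tracking the collision probability at each hybrid step and verifying that the pseudorandomness of the eigenbasis contributes only a subleading error even at super-polynomial query budgets.
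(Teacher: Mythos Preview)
Your proposal is correct and follows essentially the same route as the paper. The paper indeed sets $\tilde d = K^{-1}$, uses \cref{th:separations} to place $\gue$ in the low-OTOC regime and $\mathcal{E}_{\tilde d}$ in the high-OTOC regime (with high probability), and then invokes the quantitative bound $\abs{\Pr_{\gue}-\Pr_{\mathcal{E}_{\tilde d}}}=O(M^2/\tilde d)$ on the distinguishing advantage of any $M$-query algorithm. The $O(M^2/\tilde d)$ factor you anticipate via a ``birthday/collision'' argument is exactly what \cref{lem:degen} supplies: the $2M$-marginal total-variation distance $\delta(p^{(2M)},\tilde p_{\tilde d}^{(2M)})\le (2M)^2/\tilde d + O(d^{-1/8})$, fed through the same hybrid as in the proof of \cref{th:ham-indisting}. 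So the ``technical obstacle'' you flag is already handled, and your concern about eigenbasis pseudorandomness at large query budgets is moot for the lower bound---one can take the eigenbasis to be genuinely Haar on both sides, so only the spectral closeness matters.
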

The above lower bound, proven in \cref{App:scramblingtesting}
as a direct corollary of \cref{th:pseudo-GUE} and \cref{th:separations}, imposes quantitative limitations on testing the decay of OTOCs for general Hamiltonians $H$, and consequently, for general unitaries as well. 


\smallskip

\textbf{Testing eigenvalues spectrum statistics.} The implications of our results extend beyond individual Hamiltonians to entire ensembles. Given an ensemble of Hamiltonians $\mathcal{E}$ and black-box access to polynomially many samples from this ensemble, \cref{thm:spectrum} shows that no efficient algorithm can determine whether $\mathcal{E}$ follows Wigner-Dyson or exhibits Poissonian level statistics. However, a crucial property of many-body Hamiltonians is the locality of the interaction terms, and it is well-known that GUE Hamiltonians are non-local. Therefore, to make a more meaningful statement, we may consider modifying the setup compared to \cref{th:pseudo-GUE}. While sparsity of Hamiltonians is fundamentally different from locality, it intuitively serves as a necessary condition for locality. Specifically, sparsity, being a basis-dependent property, takes on a meaning akin to locality when the preferred basis is a tensor product basis. Surprisingly, we can strengthen our results for the spectral statistics of eigenvalues and derive the following corollary, proven using ingredients  from \cref{th:pseudo-GUE}. See \cref{App:testingESS} for the formal proof.

\begin{corollary}[Eigenvalues spectrum statistics]\label{cor:ESSth}
Consider an efficient quantum algorithm $\mathcal{A}^{\mathcal{E}}$ which has black-box Hamiltonian access to a polynomial number of Hamiltonians $H$, each sampled at random from a Hamiltonian ensemble $\mathcal{E}$. No such algorithm can successfully distinguish whether the eigenvalue spectrum statistics of $\mathcal{E}$ exhibit level repulsion or not --- this holds even if the Hamiltonians in the ensemble $\mathcal{E}$ are $O(1)$ sparse in the computational basis. 
\end{corollary}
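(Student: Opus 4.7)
\textbf{Proof sketch of \cref{cor:ESSth}.} The plan is to construct two $O(1)$-sparse Hamiltonian ensembles that share the same marginal eigenvalue law but differ in their joint spacing statistics, and then exploit the diagonal structure to reduce computational distinguishability to a classical statistical problem. Let $\mathcal{E}_{\text{WD}}$ be the ensemble of diagonal Hamiltonians $H = \sum_j \lambda_{\pi(j)} \ketbra{j}$, where $(\lambda_1,\dots,\lambda_d)$ are the unordered eigenvalues of a $\gue$ sample and $\pi$ is a uniform random permutation. By permutation invariance the spectrum of $H$ has exactly the $\gue$ joint distribution, so by \cref{thm:spectrum} it exhibits Wigner--Dyson spacing with $P(\hat s = 0) = 0$. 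Correspondingly, let $\mathcal{E}_{\text{P}}$ be the ensemble of diagonal Hamiltonians whose entries are drawn iid from the semicircle law; by \cref{thm:spectrum} its spacing distribution has $P(\hat s = 0) = 8/(3\pi^2)$, i.e.\ is Poissonian. Both ensembles are $1$-sparse in the computational basis.

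Next, I would reduce black-box Hamiltonian access on such a diagonal $H$ to classical eigenvalue queries. Since $e^{-iHt}\ket{j} = e^{-i\lambda_{\pi(j)}t}\ket{j}$, phase estimation recovers $\lambda_{\pi(j)}$ to inverse-polynomial precision using $\poly(n)$ queries; and the Gibbs state $\propto e^{-\beta H}$ is diagonal, so computational-basis measurements return $j$ with probability $\propto e^{-\beta \lambda_{\pi(j)}}$, after which phase estimation again reads off $\lambda_{\pi(j)}$. Consequently the entire transcript of any efficient $\mathcal{A}^{\mathcal{E}}$ acting on the $m=\poly(n)$ Hamiltonians it receives can be simulated from a classical list of at most $k=\poly(n)$ eigenvalue readouts per Hamiltonian, at indices drawn either uniformly or with Gibbs weights.

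The main obstacle is then purely classical: showing that $k=\poly(n)$ eigenvalues of a $\gue$ matrix, queried at uniformly permuted indices, are $\negl(n)$-close in total variation to $k$ iid semicircle draws. I would invoke the sine-kernel determinantal structure of the $\gue$, $R_k(\lambda_1,\dots,\lambda_k) = \det[K_d(\lambda_i,\lambda_j)]_{i,j}$, which factors as $\prod_i \rho_{\mathrm{sc}}(\lambda_i)$ up to off-diagonal corrections concentrated at the microscopic scale $\Theta(1/d)$. Because the uniform permutation separates queried positions in the ordered spectrum by $\Theta(d/k) = \Theta(2^n/\poly n)$, a union bound over the $O(k^2)$ pairs rules out any pair landing at microscopic separation except with probability $O(k^2 \poly(n)/d) = \negl(n)$; conditioned on this event the joint law is $\negl(n)$-close in total variation to iid semicircle. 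The Gibbs-weighted case follows identically after tilting the common marginal by $e^{-\beta \lambda}$, using rigidity of $\gue$ eigenvalues to control fluctuations of the empirical spectral density. Combining these three steps via a hybrid over the $m$ samples gives $\abs{\Pr_{\mathcal{E}_{\text{WD}}}[\mathcal{A}^{\mathcal{E}}()=1] - \Pr_{\mathcal{E}_{\text{P}}}[\mathcal{A}^{\mathcal{E}}()=1]} = \negl(n)$. The delicate part is the sine-kernel decorrelation estimate in Step 3, since it must hold uniformly against any adaptive querying strategy and control both the bulk and edge behavior of $\rho_{\mathrm{sc}}$.
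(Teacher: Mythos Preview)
Your ensemble construction in Step 1 is correct and matches the paper's choice: diagonal Hamiltonians with GUE-distributed eigenvalues versus diagonal Hamiltonians with iid semicircle eigenvalues. Both are $1$-sparse, and the spacing dichotomy follows from \cref{thm:spectrum} exactly as you say. Your Step 3, the sine-kernel decorrelation estimate, is essentially the content of the paper's \cref{thm:spoof} (and \cref{lem:marginals}), which bounds $\delta(p^{(k)},\tilde{p}^{(k)})$ directly via the determinantal formula and a KL-divergence computation; the paper's argument is cleaner than your union-bound-over-pairs sketch, but the target inequality is the same. The final hybrid over the $m$ sampled Hamiltonians also matches the paper.

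The genuine gap is Step 2. Your claim that the transcript of any efficient $\mathcal{A}^{\mathcal{E}}$ on a diagonal $H$ ``can be simulated from a classical list of at most $k=\poly(n)$ eigenvalue readouts'' is false. A single coherent application of $e^{-iHt}$ to a superposition touches all $d$ eigenvalues at once: for instance, preparing $\ket{+}^{\otimes n}$, applying $e^{-iHt}$, and measuring the overlap with $\ket{+}^{\otimes n}$ yields $\bigl|\tfrac{1}{d}\sum_j e^{-i\lambda_j t}\bigr|^2$, the spectral form factor, which is not a function of any $\poly(n)$-sized subset of eigenvalues. Phase estimation is only one thing the algorithm \emph{could} do; you have not shown it is all the algorithm \emph{can} do.

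The paper sidesteps this reduction entirely. The key observation (\cref{eq:psi-v-bound} inside the proof of \cref{th:ham-indisting}) is that for any \emph{fixed} eigenbasis $V$ --- in particular $V=\id$ --- the output state $\rho(\Lambda,V;\rho_0)$ of a $k$-query algorithm is a degree-$2k$ function of the eigenphases, and since both $p$ and $\tilde{p}$ are permutation-invariant, the expectation $\E_\Lambda[\rho(\Lambda,V;\rho_0)]$ depends only on the $2k$-marginal. Then $\norm{\E_{\Lambda\sim p}[\rho] - \E_{\Lambda\sim\tilde{p}}[\rho]}_1 \le \delta(p^{(2k)},\tilde{p}^{(2k)}) = \negl(n)$ by \cref{thm:spoof}. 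No classical simulation is needed; the argument works directly at the level of the quantum output state. Replace your Step 2 with this marginal-dependence observation and the proof goes through.
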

The above result implies that we cannot efficiently distinguish between spectral statistics associated with chaotic and integrable systems.

\smallskip

\textbf{Hamiltonian learning.} Our results have implications for the field of Hamiltonian learning, a crucial area in quantum information science that aims to characterize unknown quantum systems. The black-box query access model we assume in our work aligns precisely with the access model typically assumed in Hamiltonian learning protocols~\cite{anshu2021,stilck2024efficient,gu2024practical,bakshi2023learning,bakshi2024structure,huang2023learning}. Traditional Hamiltonian learning algorithms often make an additional physically-motivated assumption: that the Hamiltonian being learned is ``local'' or has a low-degree interaction graph. Our theorem demonstrates that this structural assumption is not only convenient but \emph{absolutely crucial} for the success of these algorithms.

We will say an algorithm $\mathcal{A}^H(t,\epsilon)$ with black-box query access to $H$ has successfully learned a Hamiltonian if for any $\epsilon=\Omega(1/\poly n)$ and a given $t=O(\poly n)$, $\mathcal{A}^H(\ket{\psi})$ outputs a circuit description of any unitary that approximates $e^{-iHt}$ up to an error (measured by the diamond norm) $\epsilon$, with failure probability $o(1)$. 
Almost all existing Hamiltonian learning algorithms succeed by this definition, as they learn an explicit Pauli decomposition of the Hamiltonian, after which they can simply output a Trotterized time evolution circuit~\cite{childs2021thory}.

\begin{theorem}[No general Hamiltonian learning. Informal version of \cref{th:hardnessdiamond}]\label{th:nohamiltonianlearning}
There is no efficient quantum algorithm $\mathcal{A}^H(t,\epsilon)$ which, given black-box query access to a Hamiltonian $H$, can successfully learn the Hamiltonian $H$ --- this holds even if the circuit which generates the time evolution of $H$ has polynomial size. In fact, this hardness of learning persists even if the Hamiltonian is $O(1)$ sparse in the computational basis.
\end{theorem}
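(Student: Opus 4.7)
The plan is to derive \cref{th:nohamiltonianlearning} by contradiction with the pseudochaotic indistinguishability of \cref{th:pseudo-GUE}. Suppose an efficient quantum algorithm $\mathcal{A}^H(t,\epsilon)$ successfully learns every Hamiltonian $H$ that admits a polynomial-size time-evolution circuit: whenever such a circuit exists, $\mathcal{A}^H$ outputs a polynomial-size circuit $C$ with $\|C - e^{-iHt}\|_\diamond \leq \epsilon$ with failure probability $o(1)$. On pseudochaotic inputs $H \sim \mathcal{E}_{\tilde{d}}$, the time evolution $e^{-iHt}$ admits such a circuit by our efficient simulation (\cref{App:simulationpseudo-GUE}), so $\mathcal{A}^H$ must succeed and return a valid approximating $C$. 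On GUE inputs $H \sim \mathcal{E}_{\gue}$, however, the gate-complexity lower bound of~\cite{kotowski2023extremal} forbids any polynomial-size circuit from approximating $e^{-iHt}$ in diamond norm at the relevant $t=O(\poly n)$, so every output $C$ of $\mathcal{A}^H$ on $\mathcal{E}_{\gue}$ must be $\Omega(1)$-far from $e^{-iHt}$.

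Given this qualitative gap in the quality of $\mathcal{A}$'s output across the two ensembles, the next step is to convert $\mathcal{A}$ into an efficient distinguisher $\mathcal{D}^H$ between $\mathcal{E}_{\tilde{d}}$ and $\mathcal{E}_{\gue}$. Concretely, $\mathcal{D}^H$ runs $\mathcal{A}^H(t,\epsilon)$ to obtain $C$ and then certifies that $C \approx e^{-iHt}$ using the circuit description of $C$ together with black-box queries to $e^{-iHt}$. A natural instantiation is to prepare a suitable pseudorandom test state $|\psi\rangle$ (possibly on a register entangled with an ancilla) and estimate the return amplitude $\bra{\psi}(C^{\dag}e^{-iHt}\otimes I)\ket{\psi}$ via a Hadamard test with $\poly(n)$ samples; $\mathcal{D}$ outputs $1$ if the estimate is close to $1$ and $0$ otherwise. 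On $\mathcal{E}_{\tilde{d}}$, the algorithm succeeds and the estimate is $1 - O(\epsilon)$; on $\mathcal{E}_{\gue}$, any polynomial-size $C$ lies sufficiently far from the essentially Haar-random $e^{-iHt}$ that the expected return amplitude drops well below this value. The resulting constant distinguishing advantage contradicts \cref{th:pseudo-GUE}, so no such $\mathcal{A}$ can exist.

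The main technical obstacle in carrying this out is making the certification step robust enough that a large diamond distance $\|C - e^{-iHt}\|_\diamond$ produces a detectable signal with only polynomial queries: \emph{a priori} such a distance only guarantees some ancilla-entangled witness state, which a generic efficient test need not locate. This is where the random-matrix structure of the GUE becomes essential: at the chosen $t$, $e^{-iHt}$ is sufficiently Haar-like (in enough moments) that, for any fixed polynomial-size $C$, the expected return amplitude on a suitably random test state is provably bounded away from $1$, yielding the required gap from the pseudochaotic case. Finally, to obtain the $O(1)$-sparse strengthening, I would invoke the sparse variant of the pseudo-GUE construction used in \cref{cor:ESSth}, which replaces the Hamiltonians in $\mathcal{E}_{\tilde{d}}$ and $\mathcal{E}_{\gue}$ with computationally indistinguishable sparse analogs; the above reduction then goes through essentially verbatim.
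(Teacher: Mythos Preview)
Your proposal is essentially the same reduction as the paper's: both hinge on (i) the exponential complexity of $e^{-iHt}$ for $H\sim\gue$ from \cite{kotowski2023extremal}, (ii) the indistinguishability of $\pgue$ from $\gue$, and (iii) an efficient overlap test that turns a good learner into a distinguisher. The logical contrapositive you run (``learner $\Rightarrow$ distinguisher $\Rightarrow$ contradiction'') is exactly what the paper does, just phrased the other way around (``indistinguishability $\Rightarrow$ overlap test fails on $\pgue$ $\Rightarrow$ learner fails'').

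Where you diverge is in the certification step, and here the paper is simpler than what you outline. You flag as the ``main technical obstacle'' that a large $\|C-e^{-iHt}\|_\diamond$ only guarantees an ancilla-entangled witness, and propose to resolve this by invoking Haar-like moment properties of GUE evolution on pseudorandom test states. The paper sidesteps this entirely: \cref{lem:kotowski2023extremal} already gives the \emph{state} complexity bound $C_\varepsilon(e^{-iHt}\ket{0})=2^{\Omega(n)}$, so the fixed test state $\ket{0}$ works immediately---for any poly-size $U$, $|\langle 0|U^\dagger e^{-iHt}|0\rangle|^2\le 1-\Omega(1)$ over GUE, and indistinguishability transfers this to $\pgue$. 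No Haar-moment argument is needed.

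For the sparse strengthening, your ``goes through essentially verbatim'' is a bit optimistic. The paper's sparse ensembles are \emph{diagonal} in the computational basis, so $\ket{0}$ is an eigenstate and the $\ket{0}$-overlap test is useless. The paper instead averages the overlap $|\langle\psi|U^\dagger e^{-iHt}|\psi\rangle|^2$ over a state $2$-design, reducing to $|\tr(U^\dagger e^{-iHt})|^2/d^2$, and then re-runs the Kotowski et al.\ argument with the semi-circle law in place of the Gaussian to bound this trace. This is not hard but it is a genuine extra step you would need to spell out.
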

We present details of the proof in \cref{applemmahardnesslearning}.
This finding underscores the importance of structure in quantum systems for their learnability. It suggests that future research in Hamiltonian learning should focus on identifying and leveraging specific structures or symmetries in quantum systems that make them learnable, rather than pursuing general-purpose learning algorithms for arbitrary Hamiltonians. Moreover, our result highlights a noteworthy connection between computational indistinguishability in quantum systems and the limits of quantum learning. It shows that there exist quantum systems which, despite being efficiently implementable, are essentially ``unlearnable''. Our findings parallel recent work in quantum state learning which establishes connections between the hardness of learning quantum states and the existence of quantum cryptographic primitives~\cite{hiroka2024computational}. While their work focuses on state learning, our results extend similar concepts to the realm of Hamiltonian learning, further emphasizing the deep relationship between quantum pseudorandomness and the limitations of quantum learning algorithms.
\smallskip

\textbf{Pseudoresourceful unitaries and tighter bounds on black-box resource distillation.} 
Finally, our results have implications for resource theories. The existence of pseudoresourceful states, such as pseudoentangled~\cite{aaronson2024quantum} and pseudomagic states~\cite{gu2024pseudomagic}, has already posed challenges to the resource theories of entanglement and magic, respectively, when limited to efficient, real-world protocols. Along similar lines, the very existence of pseudochaotic ensembles of Hamiltonians implies even stricter challenges for resource theories.

In fact, the unitary dynamics generated by pseudochaotic Hamiltonians introduce the stronger concept of \textit{pseudoentangling} or \textit{pseudomagic} unitary operators. We can informally define pseudoresourceful unitaries as (a class of) unitary operators that, while indistinguishable from a class of operators producing highly entangled and highly magical states (starting from any computational basis state), actually generate states with low entanglement and low magic.



\cref{th:separations} proves the existence of pseudoentangling and pseudomagic unitaries, with a maximal gap in the generated entanglement and magic $O(\poly \log n)$ vs. $\Omega(n)$ (see \cref{App:distillationbounds}). Their existence provides more stringent bounds on resource distillation. The task of resource distillation is to start with multiple copies of a resourceful state, $\rho$, and, depending on the specific details of the resource theory, transform it into a certain number of copies of a pure and useful resourceful state. This task has been extensively studied in the context of entanglement theory, where the goal is to distill clean \textit{ebits} (i.e., perfect Bell pairs)~\cite{PureBipartiteBennett}, and in magic-state resource theory, where such procedures are often referred to as \textit{magic-state factories}, aimed at distilling noiseless magic states for fault-tolerant quantum computation~\cite{Bravyi_2005}.


We demonstrate that the (quasi-)exponential suppression of the rate at which distillable states can be produced persists, even if the distiller has access to the unitary operation that prepares the state from some known reference state $\rho_0$ -- a significant advantage compared to 
mere query access to the state.

\begin{corollary}[Stronger distillation bounds]\label{cor:distillationbounds}
Consider a general unitary $U$, a reference state $\rho_0$ and let $\rho \coloneqq U \rho_0 U^\dagger$. Consider any efficient ``designer'' quantum algorithm $\mathcal{D}^U$ with query access to $U$, which outputs a circuit description of an efficient LOCC algorithm $\mathcal{A}_{\locc}$, between two parties $A|B$, for distilling $m$ 
copies of a target state per copy of $\rho$. \cref{th:separations} implies that
\begin{equation}
    m = O(\log^{1+c} S_2(\rho_A)) \label{eq:distill-bound}
\end{equation}
where $\rho_A=\tr_B\rho$. This holds for any constant $c>0$ and every extensive bipartition $A|B$. An analogous statement can be obtained for magic distillation, by replacing $S_2$ with $M_2$ in \cref{eq:distill-bound}, and letting the output of the designer algorithm be an (efficient) stabilizer protocol. 
\end{corollary}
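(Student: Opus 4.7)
The plan is to prove the corollary by contradiction, reducing to the computational indistinguishability established in Theorem 1 and exploiting the sharp entanglement separation proved in Theorem 3. I will focus on the entanglement statement; the magic-distillation case follows from the identical argument with $S_2$ replaced by the stabilizer entropy $M_2$ and LOCC replaced by stabilizer protocols.

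First, suppose for contradiction that there is an efficient designer $\mathcal{D}$ such that, for every unitary $U$ accessed as an oracle, the protocol $\mathcal{D}^U$ outputs an LOCC circuit $\mathcal{A}_\locc$ that distills $m = \omega((\log S_2(\rho_A))^{1+c})$ near-perfect ebits per copy of $\rho = U\rho_0 U^\dagger$, with vanishing failure probability. I would turn this $\mathcal{D}$ into an efficient distinguisher $\mathcal{B}^H$ between GUE and pseudo-GUE. Given Hamiltonian oracle access to $H$, $\mathcal{B}^H$ fixes a time $t = O(\poly n)$ large enough that $e^{-iHt}$ lies in the regime where GUE dynamics are indistinguishable from Haar (the regime used elsewhere in the paper to bootstrap pseudorandom unitaries), simulates $\mathcal{D}$ on the unitary oracle $e^{-iHt}$ to obtain the circuit $\mathcal{A}_\locc$, prepares $\rho = e^{-iHt}\rho_0 e^{iHt}$, runs $\mathcal{A}_\locc$ on $\rho$, and finally estimates the fidelity of the output with $m$ standard Bell pairs using $O(1)$ destructive SWAP tests, outputting $1$ iff the estimated fidelity crosses a fixed threshold.

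To close the argument, I would take the pseudo-GUE effective dimension $\tilde{d} = \exp((\log n)^{1+c/2})$, which is superpolynomial in $n$ and thus within the scope of Theorem 1. On the GUE side, Theorem 3 gives $S_2(\rho_A) = \Theta(n)$ with overwhelming probability for every extensive bipartition, so the contradiction hypothesis produces $m = \omega((\log n)^{1+c})$ Bell pairs and $\mathcal{B}^H$ outputs $1$ with probability $1-o(1)$. On the pseudo-GUE side, Theorem 3 gives $S_2(\rho_A) = O((\log n)^{1+c/2})$ for every extensive bipartition, and LOCC monotonicity of the $2$-Renyi entanglement entropy forces any one-shot LOCC protocol producing constant-fidelity ebits to satisfy $m \lesssim S_2(\rho_A)$; since $(\log n)^{1+c/2} = o((\log n)^{1+c})$, the SWAP test rejects and $\mathcal{B}^H$ outputs $0$ with probability $1-o(1)$. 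The resulting $\Omega(1)$ distinguishing advantage contradicts Theorem 1, finishing the proof.

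The main obstacle will be bridging asymptotic resource theory with single-shot distillation: standard LOCC monotonicity is phrased in terms of asymptotic rates, whereas the corollary bounds the one-shot yield $m$. The cleanest route I foresee is to invoke the one-shot $\varepsilon$-distillable entanglement together with the standard fact that additive Renyi entanglement entropies upper-bound this one-shot yield with only an additive $O(\log 1/\varepsilon)$ slack; absorbing this slack is precisely why the bound reads $\log^{1+c}$ rather than $\log$, and also why the argument tolerates any constant $c>0$ (reflecting how close to polynomial $\tilde{d}$ may be taken while remaining superpolynomial). The secondary bookkeeping --- combining designer failure probability, SWAP-test precision, and the negligible indistinguishability gap so that an $\Omega(1)$ advantage survives --- is routine.
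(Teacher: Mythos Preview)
Your proposal is correct and follows essentially the same contradiction strategy as the paper: tune $\tilde d = \exp((\log n)^{1+c/2})$, invoke Theorem~3 to separate the entanglement produced by GUE versus pseudo-GUE, use LOCC monotonicity (the paper simply cites ``basic rules of resource conversion'') to cap the ebit yield on the pseudo-GUE side, and conclude that an $\omega(\log^{1+c} S_2)$-rate distiller would violate Theorem~1. One small simplification: you do not need to push $t$ into the late-time Haar-like regime---Theorem~3 already gives the separation at some fixed $t=O(1)$, and Theorem~1 covers indistinguishability at that time, so the detour through Haar indistinguishability is unnecessary.
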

This strengthens existing bounds on general distillation algorithms~\cite{aaronson2024quantum,gu2024pseudomagic,gu2024magicinducedcomputationalseparationentanglement} to include the case where we are allowed to \emph{tailor} our distillation algorithms to the unitaries that generate the resource state that we are given. We discuss the proof in \cref{App:distillationbounds}. \cref{cor:distillationbounds} shows that even this form of prior knowledge is too weak to realize useful distillation algorithms.
\smallskip

{\bf Potential implication for quantum advantage.}
Recent research has highlighted the potential of quantum Gibbs states to achieve quantum advantage in sampling tasks, outperforming classical sampling methods \cite{SupremacyReview, GibbsSampling1, GibbsSampling2}. While sampling from high-temperature Gibbs states can be efficiently accomplished classically \cite{GibbsSampling3, Intensive}, there exist ensembles of Hamiltonians for which this efficiency breaks down for sufficiently large constant values of $\beta$~\cite{GibbsSampling1, GibbsSampling2}. The Hamiltonians we consider in this work are promising candidates for demonstrating such quantum advantage. As shown in \cref{app:sim}, there are efficient quantum algorithms to generate samples from the Gibbs distribution for any $\beta = O(\poly n)$. An intriguing open problem is to prove the hardness of classical sampling for $\beta = O(\poly n)$. This conjecture is plausible, particularly because the Hamiltonians in question exhibit a strong sign problem. Consequently, Monte Carlo sampling techniques applied to these systems would require a sample complexity of $\exp(\Omega(n))$ with respect to the system size $n$, as we demonstrate in~\cref{App:SP}.

\smallskip

\section{Discussion} Our exploration of pseudochaotic Hamiltonians reveals a surprising disconnect between computational indistinguishability and traditional indicators of quantum chaos. This discovery challenges long-held assumptions about the nature of quantum chaos and its relationship to computational complexity, opening up new avenues for research and raising profound questions about our understanding of complex quantum systems.

Our pseudochaotic ensemble provides a novel perspective on the relationship between scrambling, computational complexity, and random matrix universality. It suggests that a comprehensive quantum information-theoretic definition of chaos may need to incorporate computational indistinguishability alongside traditional chaos indicators and measures of information scrambling. The existence of our pseudochaotic ensemble suggests a potential hierarchy within quantum chaotic systems. At the top of this hierarchy might be systems exhibiting all traditional hallmarks of chaos, such as level repulsion and extensive operator entanglement. Our pseudochaotic systems, while lacking these features, occupy a distinct tier characterized by computational indistinguishability from truly chaotic systems. This hierarchy invites us to reconsider what truly defines quantum chaos in a computational context.

One intriguing implication of our work is the potential decoupling of information scrambling from computational complexity. The conventional wisdom that maximal scrambling is necessary for complex quantum behavior is challenged by our results. This decoupling suggests that quantum algorithms or protocols requiring chaotic dynamics~\cite{hu2022hamiltonian,liu2024predicting} might be implementable on a broader class of systems than previously thought.

Looking forward, our work opens up several exciting research directions. Exploring the boundaries of pseudochaos could help isolate which chaotic properties are truly essential for various quantum information processing tasks. The concept might also extend to other quantum phenomena, such as many-body localization or topological order, potentially revealing new insights into these complex behaviors. From a practical standpoint, pseudochaotic systems might offer novel approaches to quantum algorithm design. Finally, the experimental realization of pseudochaotic systems presents an intriguing challenge, providing new tools for quantum simulation while raising questions about distinguishing pseudochaotic from truly chaotic systems in practice.

In conclusion, our work on pseudochaotic Hamiltonians challenges conventional understanding of quantum chaos and its relationship to computational complexity. By introducing a computationally indistinguishable yet structurally distinct ensemble, we highlight the limitations of traditional indicators for quantum chaos. This research opens new avenues for exploring the connections between quantum dynamics, computational indistinguishability, and the practical aspects of quantum simulation, inviting further investigation into these fundamental aspects of quantum systems. Moreover, it raises the philosophical question of what it actually means to observe complex quantum systems when the computationally accessible view may differ so drastically 
from the underlying reality.
\vspace*{-.2cm}

\let\oldaddcontentsline\addcontentsline
\renewcommand{\addcontentsline}[3]{}

\begin{acknowledgments}
The authors would like to thank Soumik Ghosh for suggesting the search of a physically motivated application of pseudorandom unitaries, Alvaro M.\ Alhambra for discussions regarding Gibbs state preparation, Lennart Bittel and Christian Bertoni for discussions about asymptotic properties of random Hamiltonian ensembles, Silvia Pappalardi for preliminary discussions on the possible role of computationally bounded observers in quantum chaos, Piet Brouwer on notions of quantum chaos, and Nazli Koyluoglu and Varun Menon for discussions on chaos in the quantum many-body setting. Support is also acknowledged from the U.S. Department of Energy, Office of Science, National Quantum Information Science Research Centers, Quantum Systems Accelerator.
The Berlin team 
has been supported by the BMBF (FermiQP, MuniQC-Atoms, DAQC), 
the Munich Quantum Valley, the Quantum Flagship (PasQuans2),
the ERC (DebuQC) and 
the DFG (CRC 183, FOR 2724).

\end{acknowledgments}

\bibliographystyle{apsrev4-2}

%

\let\addcontentsline\oldaddcontentsline

\clearpage
\onecolumngrid
\appendix
\tableofcontents

\resumetoc

\section{The Gaussian unitary ensemble}\label{app:GUE}

In this section, we will introduce the \emph{Gaussian unitary ensemble} (GUE) 
as it is commonly discussed in the physics literature in quantum chaos and present some known results from the literature that will be useful for the technical proof in this manuscript.
The Gaussian unitary ensemble
is defined by a distribution over $d \times d$ Hermitian matrices $H$, where $d = 2^n$ is the dimension of the Hilbert space. This distribution is defined as follows: the diagonal elements of $H$ are real Gaussian random variables with zero mean and variance $d^{-1}$, while the off-diagonal elements are complex numbers whose real and imaginary parts are independent Gaussian random variables with zero mean and variance $(2d)^{-1}$. The overall probability distribution is
\be
p(H)\propto e^{-\frac{d}{2}\tr(H^2)}.\label{eq:measurePH}
\ee
This is a unitarily invariant measure over Hermitian matrices which defines the Gaussian unitary ensemble. Another common form of the GUE is $p(H)\propto e^{-\frac{1}{2}\tr(H^2)}$; we opt for the one in \cref{eq:measurePH} because the second form results in Hamiltonians which have spectral norm $\Theta(d)$ with high probability. This is unphysical, since one expects energy to scale at most linearly in system size, not exponentially. Thanks to the unitary invariance of \cref{eq:measurePH}, we can factorize the measure as
\begin{equation}
    p(H) = p(U \Lambda U^\dagger) = p(U) p(\Lambda),
\end{equation}
where $p(U)$ is the Haar measure over unitary operators, and $\Lambda=\operatorname{diag}(\lambda_1,\ldots,\lambda_d)$ is a diagonal matrix which contains the spectrum of the Hamiltonian. The distribution over eigenvalues $p(\Lambda)$ follows the well-known GUE eigenvalue distribution
\be
p(\lambda_1,\ldots, \lambda_d) \propto e^{-\frac{d}{2} \sum_{i} \lambda_{i}^{2}} \prod_{i > j} (\lambda_i - \lambda_j)^2\,.\label{eq:eigenvaluedistirbution}
\ee
Therefore, any integral of a generic function $f(H)$ under the measure $p(H)$ \eqref{eq:measurePH}, can be expressed as
\be
\int \dd{H} p(H) f(H) = \int \dd{\Lambda} p(\Lambda) \int_{\haar} \dd{U} g(\Lambda, U),
\ee
where $\int_{\haar} \dd{U}$ is the Haar measure, and $g(\Lambda, U) \coloneqq f(U \Lambda U^\dagger)$. 

In the remainder of the text, we will use $p(\cdot)$ to refer to the distribution over GUE eigenvalues $p(\lambda_1,\ldots,\lambda_d)$ unless otherwise specified. It is useful to define the $k$th marginals
\be
p^{(k)}(\lambda_1,\ldots,\lambda_k)\coloneqq \int \dd{\lambda_{k+1}} \ldots \dd{\lambda_d} p(\lambda_1,\ldots,\lambda_d)
\ee
of the eigenvalue distribution.
Note that since $p(\lambda_1,\ldots,\lambda_d)$ is a permutation invariant distribution, the choice of which eigenvalues to integrate out does not affect the form of the marginal distribution $p^{(k)}$. For $k=1$, one obtains, up to an  $O(d^{-1})$ correction factor, the famous Wigner semi-circle distribution, i.e.,
\be
p^{(1)}(\lambda)=\frac{1}{2\pi}\sqrt{4-\lambda^2}+O(d^{-1})\,.
\ee
In general, we have the following lemma.
\begin{lemma}[Marginals of $p(\lambda_1,\ldots,\lambda_d)$~\cite{gotze_rate_2005,fyodorov2010introductionrandommatrixtheory}]\label{lem:marginals} The $k$th marginal of the eigenvalue distribution in \cref{eq:eigenvaluedistirbution} can be expressed as
    \begin{equation}
    p^{(k)}(\lambda_1,\ldots,\lambda_k) = \frac{(d-k)! (d/\pi)^k}{d!} \det(A+B) + O(d^{-1})\, 
\end{equation}
where $A$ and $B$ are $k\times k$ matrix-valued functions of $\lambda_1,\ldots,\lambda_k$ with components
\begin{equation}
A_{i,i} = \frac{\sqrt{4-\lambda_i^2}}{2} \qc B_{i,j} 
    = \begin{cases}
        0 &\qq{for $i=j$,} \\
        \frac{\sin(d(\lambda_i-\lambda_j))}{d(\lambda_i-\lambda_j)} &\qq{for $i \neq j$.}
    \end{cases}
\end{equation}
\end{lemma}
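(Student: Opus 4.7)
The plan is to exploit the determinantal point process structure of the GUE eigenvalue distribution, which reduces every $k$-th marginal to a single $k\times k$ determinant built from the Christoffel--Darboux kernel, and then to insert the known bulk asymptotics of this kernel to match the claimed form.

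\emph{Step 1: Determinantal representation.} I would first rewrite the joint density \cref{eq:eigenvaluedistirbution} in Slater-determinant form. Applying Gram--Schmidt to the monomial basis against the weight $e^{-d\lambda^2/2}$, the Vandermonde factor combined with the Gaussian weight becomes $|\det[\phi_j(\lambda_i)]|^2/d!$, where $\phi_j(\lambda) = c_j H_j(\sqrt{d/2}\,\lambda)\,e^{-d\lambda^2/4}$ are rescaled, $L^2(\mathbb{R})$-orthonormal Hermite functions. This yields the standard determinantal form
\begin{equation}
p(\lambda_1,\ldots,\lambda_d) = \frac{1}{d!}\det\bigl[K_d(\lambda_i,\lambda_j)\bigr]_{i,j=1}^d, \qquad K_d(x,y) \coloneqq \sum_{j=0}^{d-1}\phi_j(x)\phi_j(y).
\end{equation}

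\emph{Step 2: Marginalization via Dyson's identity.} Using the reproducing properties $\int K_d(x,y)K_d(y,z)\,dy = K_d(x,z)$ and $\int K_d(x,x)\,dx = d$, I would integrate out $\lambda_{k+1},\ldots,\lambda_d$ one variable at a time. Each integration contracts one row and column of the $d\times d$ determinant and contributes one factor of $d-m$, producing the standard identity
\begin{equation}
p^{(k)}(\lambda_1,\ldots,\lambda_k) = \frac{(d-k)!}{d!}\det\bigl[K_d(\lambda_i,\lambda_j)\bigr]_{i,j=1}^k.
\end{equation}

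\emph{Step 3: Bulk asymptotics of $K_d$.} The final step is to insert the Plancherel--Rotach asymptotics of Hermite polynomials into the Christoffel--Darboux identity $K_d(x,y)=\sqrt{d/2}\,[\phi_d(x)\phi_{d-1}(y)-\phi_{d-1}(x)\phi_d(y)]/(x-y)$. For $\lambda\in(-2,2)$ in the bulk, the diagonal gives $K_d(\lambda,\lambda) = (d/\pi)\sqrt{4-\lambda^2}/2 + O(1)$, which reproduces $(d/\pi)A_{ii}$; the off-diagonal reduces to a sinc-type oscillation $K_d(\lambda_i,\lambda_j) = (d/\pi)\sin(d(\lambda_i-\lambda_j))/(d(\lambda_i-\lambda_j)) + O(1)$, reproducing $(d/\pi)B_{ij}$. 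Pulling the common factor $(d/\pi)^k$ out of each row of the $k\times k$ determinant yields the claimed prefactor together with $\det(A+B)$.

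\emph{Main obstacle.} The delicate part is entirely concentrated in Step~3: the quantitative $O(d^{-1})$ remainder stated in the lemma requires carrying the Plancherel--Rotach expansion to next-to-leading order uniformly over the bulk, whereas naive pointwise asymptotics only yield an $o(1)$ error. Obtaining the claimed rate relies on the quantitative estimates of Ref.~\cite{gotze_rate_2005} (or equivalently on a Riemann--Hilbert analysis of Hermite polynomials), and one must moreover check that the simplified sinc form $\sin(d(\lambda_i-\lambda_j))/(d(\lambda_i-\lambda_j))$ provides the appropriate approximation to the true sine kernel throughout the bulk, absorbing slowly varying density factors into the error term.
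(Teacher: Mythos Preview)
The paper does not supply its own proof of this lemma: it is stated as a known result and attributed directly to Refs.~\cite{gotze_rate_2005,fyodorov2010introductionrandommatrixtheory}, so there is no in-paper argument to compare against. Your three-step outline (determinantal form, Dyson contraction to a $k\times k$ kernel determinant, Plancherel--Rotach bulk asymptotics) is exactly the standard route found in those references and is correct in structure; you have also correctly identified that the only nontrivial content is the quantitative $O(d^{-1})$ control in Step~3, which is precisely what the cited work of G\"otze--Tikhomirov supplies.
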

Due to unitary invariance, it is straightforward to see that sampling $H \sim \text{GUE}$ will, with overwhelming probability, yield a non-local Hamiltonian. In other words, unitary invariance erases any notion of locality. Indeed, as shown in Ref.~\cite{cotler2017black}, while the GUE scrambles information in a time $O(1)$, any local Hamiltonian has a scrambling time lower bounded by $\Omega(\log n)$~\cite{maldacena2016chaos}. That said, although the GUE ensemble violates locality, it serves as an effective \emph{model} which can describe and help us understand relevant behavior of complex systems in condensed matter physics, quantum chromodynamics, nuclear physics, and black hole dynamics.

Recent results have highlighted both the importance and the computational challenges associated with the GUE ensemble in quantum information processing. On one hand, \citet{chen2024efficient} demonstrated that time evolution under two independently sampled GUE matrices for a time $t=O(1)$ is sufficient to generate unitary $k$-designs \cite{Designs}, potentially offering a more efficient method than standard approaches using random quantum circuits. However, this promising result is tempered by a fundamental computational obstacle: implementing the GUE has been shown to require exponential gate complexity. Specifically,~\citet{kotowski2023extremal} proved that the time evolution $e^{-iHt}$ generated by $H \sim \text{GUE}$ exhibits exponential gate complexity for any $t \geq t^{*}$, where $t^{*} = O(1)$. This result demonstrates an \textit{extremal jump in quantum complexity}, severely limiting the practical applicability of GUE-based methods. To be precise, we describe their technical results below. 

We define the $\varepsilon$-unitary complexity as usual. Let $\mathcal{G}$ be a discrete universal gate set, and let $\mathcal{G}^{l}$ be the set of all unitaries built out from $l$ gates from $\mathcal{G}$. The $\varepsilon$-unitary complexity of the unitary $U$ is defined as 
\be
C_{\varepsilon}(U)\coloneqq \min \qty{l : \exists V \in \mathcal{G}^l \ \text{such that} \ \norm{U-V}_\diamond \leq \varepsilon}.
\ee
A similar definition for
(approximate) state complexity is
\be
C_{\varepsilon}(\ket{\psi})\coloneqq \min \qty{l : \exists V \in \mathcal{G}^l \ \text{such that} \ \norm{\ket{\psi} - V \ket{0}} \leq \varepsilon}.
\ee
\begin{lemma}[Exponential gate complexity of GUE~\cite{kotowski2023extremal}]\label{lem:kotowski2023extremal} Let $H$ be sampled from the Gaussian unitary ensemble. Then there exists a time $t^{*}=O(1)$ such that for any $t\ge t^{*}$
\be
C_{\varepsilon}(e^{-iHt})=2^{\Omega(n)}\,,
\ee
for any $\varepsilon^{-1}=O(\poly n)$, with overwhelming probability over the choice of $H$. Similarly, $C_{\varepsilon}(e^{-iHt}\ket{0})=2^{\Omega(n)}$ with overwhelming probability.
\end{lemma}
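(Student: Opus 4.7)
The plan is a volumetric covering argument in $U(d)$ combined with an anti-concentration estimate for the GUE time-evolution measure. Intuitively, the set of unitaries realizable by polynomial-size circuits is extremely sparse in $U(d)$, while the distribution of $e^{-iHt}$ for $H\sim\gue$ and $t\geq t^*=O(1)$ is spread out enough over $U(d)$ that it avoids any such sparse set with overwhelming probability.

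First I would establish the covering estimate. A circuit of size $l$ drawn from a fixed discrete universal gate set $\mathcal{G}$ specifies at most $N_l := (|\mathcal{G}|\cdot n^2)^l$ distinct unitaries, since each gate has $|\mathcal{G}|$ choices and is placed on one of $O(n^2)$ qubit pairs. By a standard packing estimate on the unitary group, the $\varepsilon$-diamond ball around any fixed unitary has Haar measure at most $(C\varepsilon)^{\Omega(d^2)}$. Writing $S_{l,\varepsilon}$ for the union of $\varepsilon$-diamond balls around the reachable size-$l$ unitaries, the Haar measure of $S_{l,\varepsilon}$ is at most $N_l\cdot(C\varepsilon)^{\Omega(d^2)}$.

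Second, I would transfer this covering bound from the Haar measure to the law of $V_t := e^{-iHt}$ for $H\sim\gue$. By unitary invariance of the GUE, one can factorize $V_t = U e^{-i\Lambda t} U^\dagger$ with $U$ Haar-random and $\Lambda$ independently drawn from the GUE joint eigenvalue density in \cref{eq:eigenvaluedistirbution}. Conditioning on $\Lambda$, the law of $V_t$ is the uniform measure on the conjugacy class of $e^{-i\Lambda t}$, which is a manifold of real dimension $d^2-d$ as long as the phases $\{e^{-i\lambda_j t}\}_j$ are all distinct. The Vandermonde repulsion factor $\prod_{i<j}(\lambda_i-\lambda_j)^2$ guarantees that, for some $t^*=O(1)$ and all $t\geq t^*$, these phases are well-separated with overwhelming probability, and concentration of the empirical spectrum around Wigner's semicircle makes this quantitative. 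Averaging over $\Lambda$ and bounding the ball volume on the conjugacy class yields $\Pr_{H\sim\gue}[V_t \in S_{l,\varepsilon}] \leq N_l\cdot(C\varepsilon)^{\Omega(d^2)} \cdot e^{\poly(n)}$. Plugging in $d=2^n$ and $\varepsilon^{-1}=O(\poly n)$, this probability is doubly-exponentially small whenever $l=2^{o(n)}$, which forces $l=2^{\Omega(n)}$ with overwhelming probability. For the state-complexity bound $C_\varepsilon(e^{-iHt}\ket{0})$, the same template applies: the number of reachable states from $\ket{0}$ is at most $N_l$, $\varepsilon$-trace-distance balls on the sphere have Haar measure $(C\varepsilon)^{\Omega(d)}$, and the conditioning argument shows that $V_t\ket{0}$ is sufficiently Haar-like on the sphere once $t\geq t^*$. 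The dimension exponent drops from $d^2$ to $d$, but $d=2^n$ still produces $l=2^{\Omega(n)}$.

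The main obstacle is the anti-concentration step: establishing a quantitative bound on how much the law of $V_t$ concentrates on any $\varepsilon$-neighborhood of a fixed target unitary, uniformly over the target. Short times necessarily fail, since $V_t$ is then near identity and trivially of low complexity, so the critical $t^*$ must be identified by tracking when the phases $e^{-i\lambda_j t}$ become sufficiently spread on the unit circle. The technical heart of the proof is combining GUE level repulsion with the conjugacy-orbit volume estimate to produce a density bound that is only $e^{\poly(n)}$ worse than Haar; once this is in place, the remaining complexity lower bound is a routine counting exercise.
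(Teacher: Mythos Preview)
This lemma is not proved in the paper; it is quoted verbatim as a result from Ref.~\cite{kotowski2023extremal} (Kotowski, Oszmaniec, Horodecki), as indicated by the citation in the lemma heading and the surrounding text (``To be precise, we describe their technical results below''). There is therefore no in-paper proof to compare against.

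Your sketch is the standard volumetric argument for this type of statement and is broadly aligned with how such results are established in the cited reference: count the at most $(|\mathcal{G}|\poly(n))^l$ unitaries reachable by size-$l$ circuits, bound the measure of an $\varepsilon$-ball, and show the GUE evolution does not concentrate on any such union. The identification of the anti-concentration step as the crux is correct. One point to be careful about: your density transfer from Haar to the law of $V_t$ via conditioning on $\Lambda$ and appealing to conjugacy-class volume is the right idea, but making the ``only $e^{\poly(n)}$ worse than Haar'' estimate rigorous requires a quantitative lower bound on the minimal phase gap $\min_{i\neq j}|e^{-i\lambda_i t}-e^{-i\lambda_j t}|$ that holds with overwhelming probability. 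Level repulsion gives you separation of the $\lambda_i$, but after multiplying by $t$ and wrapping onto the circle you need to rule out near-collisions of the phases; this is where the specific choice of $t^*$ enters and where the actual work in \cite{kotowski2023extremal} lies. Your outline acknowledges this but does not resolve it, so as a proof it is incomplete at precisely the step you flag as the main obstacle.
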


To address this challenge of implementing the GUE in practice, researchers have explored various alternatives. \citet{chen2024efficient}, for instance, showed that simple Hamiltonians formed from sums of iid Hermitian matrices can mimic the behavior of GUE Hamiltonians. Specifically, they showed that time evolution under two Hamiltonians formed from these random sums results in unitary $t$-designs. These developments form the basis for our motivation in finding pseudo-alternatives for implementing an efficient version of time evolution that is computationally indistinguishable from the GUE. Our work builds on these insights, aiming to bridge the gap between the theoretical power of GUE-based methods and their practical implementability.

Given the exponential gate complexity 
of unitary evolution generated by the GUE, there are strong similarities between these evolutions and Haar-random unitaries, which also exhibit exponential gate complexity. This raises a natural question: can evolutions generated by the GUE resemble the Haar-random distribution for some value of time $t$? This question has already been posed within the framework of unitary $k$-designs—ensembles of unitaries that replicate the first $k$ moments of the distribution induced by the Haar measure over the unitary group with some accuracy. In particular, Ref.~\cite{cotler2017chaos} heuristically argued that for certain times, the GUE might resemble a $k$-design for high $k$. However, two important points should be noted: first, the argument of Ref.~\cite{cotler2017chaos} is mostly heuristic, and second, they argued that at very late times $t \rightarrow \infty$, the GUE would deviate from the Haar measure due to differences in the spectral distribution of the eigenphases between the two ensembles. Below, we clarify this heuristic by rigorously proving the following result: dynamics generated by GUE Hamiltonians are indistinguishable from Haar-random unitaries for \textit{any} time $t$ scaling superpolynomially with the system size. Specifically, any quantum algorithm would need $k = \Omega(t^{1/16})$ samples from either the GUE or Haar unitaries to distinguish between the two. In the language of unitary designs, this means that GUE dynamics form a $\frac{k^2}{t^{1/8}}$-approximate unitary $k$-design. In particular, this implies that for any time $t = \omega(\poly n)$ GUE generates an $o\left(\frac{1}{\poly n}\right)$-approximate unitary $k$-design for $k = O(\poly n)$ (see \cref{Appcor:unitarykdesign}). This result is summarized in the following theorem and proven in \cref{App:GUEHaar}.

\begin{theorem}[\cref{th:GUE=Haarlatetimes} in \cref{App:GUEHaar}]\label{th:GUE=Haarlatetimes2} Consider the ensemble of unitaries generated by GUE, as $\mathcal{U}_{t}=\{e^{-iHt}\,:\, H\sim \gue\}$. Then the ensemble $\mathcal{U}_t$ is adaptively statistically indistinguishable from Haar random unitaries for any $t=\omega(\poly n)$, even if one has access to the adjoint evolution. In particular, any 
quantum algorithm would 
need at least 
\be
k=\begin{cases}
\Omega(t^{1/8}),\quad & t=o(d^4),\\
\Omega(d^{1/4}) &\text{otherwise}\,
\end{cases}
\ee
many queries to the time evolution and/or its inverse to distinguish the two ensembles.
\end{theorem}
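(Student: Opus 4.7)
The plan is to bound the distinguishing advantage of any adaptive $k$-query quantum algorithm by the diamond-norm distance between the $k$-fold moment superoperators of $\mathcal{U}_t$ and the $\haar$ ensemble, and then to estimate this distance directly by exploiting the spectral factorization of GUE samples. Concretely, I would first invoke the standard reduction (see e.g.~\cite{mele2024introduction}) that any adaptive $k$-query distinguisher---including queries to the inverse evolution---achieves advantage at most $\epsilon_k := \|M_t^{(k)} - M_\haar^{(k)}\|_\diamond$, where
\begin{equation}
M_{\mathcal{E}}^{(k)}(X) = \mathbb{E}_{U\sim \mathcal{E}}\bigl[(U^{\otimes k}\otimes \bar U^{\otimes k})\,X\,(U^{\dagger\otimes k}\otimes U^{T\otimes k})\bigr],
\end{equation}
and the $\bar U$ factors encode the inverse-query access. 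Solving $\epsilon_k = \Omega(1)$ then yields the claimed query lower bound.

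Next I would exploit the unitary invariance of the GUE to write $H = V\Lambda V^\dagger$ with $V$ Haar-distributed and $\Lambda$ independently drawn from the eigenvalue density of \cref{eq:eigenvaluedistirbution}. Substituting $e^{-iHt} = V e^{-i\Lambda t} V^\dagger$ into $M_t^{(k)}$ and carrying out the outer $V$-average via Schur-Weyl / Weingarten calculus, the computation reduces to evaluating diagonal eigenvalue-moment functions of the form
\begin{equation}
F_{\vec\alpha,\vec\beta}(t) := \mathbb{E}_{\Lambda}\Bigl[\exp\Bigl(-it\sum_{a=1}^{k}\lambda_{\alpha_a} + it\sum_{b=1}^{k}\lambda_{\beta_b}\Bigr)\Bigr],
\end{equation}
indexed by tuples $\vec\alpha,\vec\beta\in[d]^{k}$. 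On diagonal pairings (i.e.\ when the multisets $\{\alpha_a\}$ and $\{\beta_b\}$ coincide) the phases cancel and $F_{\vec\alpha,\vec\beta}(t)=1$; precisely these contributions reconstruct $M_\haar^{(k)}$ up to Weingarten corrections of order $O(k^{2}/d)$. All remaining index patterns carry a genuinely $t$-dependent oscillation whose suppression drives the bound.

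To estimate those residual oscillatory integrals, I would plug in the $k$-point marginal formula of \cref{lem:marginals}. Its leading term factorizes into a product of Wigner semicircle densities, whose Fourier transform is the Bessel function $\widehat{p^{(1)}}(t) = J_1(2t)/t = O(t^{-3/2})$ for large $t$; the sub-leading determinantal $B$-kernel corrections and the $O(d^{-1})$ finite-size term contribute additional bounded factors that decay analogously. Summing contributions across non-paired index patterns and combining with the Weingarten corrections, careful bookkeeping then yields a bound of the schematic form $\epsilon_k \lesssim \poly(k)\, t^{-\beta}$ in the regime $t = o(d^4)$, which saturates at an $O(d^{-1/4})$ floor once $t$ exceeds $d^4$ (the regime where the finite-$d$ marginal corrections dominate over the oscillatory suppression). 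Matching exponents yields the $\Omega(t^{1/8})$ and $\Omega(d^{1/4})$ query bounds claimed in the theorem.

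The hard part will be the combinatorics of controlling the sum over non-paired index patterns: while each unpaired phase contributes a factor of $O(t^{-3/2})$, the naive number of such patterns grows as $d^{2k}$, so a polynomial (rather than exponential) dependence on $k$ can only be recovered by carefully exploiting the cancellations implied by the Schur-Weyl projector and the permutation invariance of $p(\lambda_1,\dots,\lambda_d)$. A secondary subtlety is absorbing the inverse queries into the $(U\otimes \bar U)$ structure of the moment superoperator cleanly enough that the standard diamond-norm-to-adaptive-query reduction applies without incurring additional polynomial losses.
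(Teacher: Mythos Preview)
Your approach diverges substantially from the paper's, and the divergence is precisely at the step you flag as ``the hard part.'' You plan to integrate out the Haar eigenbasis $V$ via Weingarten calculus and then control the resulting sum over $d^{2k}$ eigenvalue-index patterns by Fourier decay. The paper does the opposite: it \emph{conditions} on $V$. Since both $\mathcal{U}_t$ and the Haar ensemble have the \emph{same} Haar-distributed eigenbasis, one can write $U = V e^{-i\Phi} V^\dagger$ in both cases and absorb the (fixed) $V$'s into the interleaving unitaries $W_j$ of an arbitrary adaptive algorithm. The output state then depends only on the diagonal phases, and by permutation invariance of the eigenphase distributions, only on their $O(k)$-marginals. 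This reduces the entire problem to a total-variation bound $\delta(p^{(k)}_{\text{GUE-phases}}, q^{(k)}_{\text{CUE-phases}})$, with no Weingarten sum and no $d^{2k}$ combinatorics at all.

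The paper then chains three elementary TV bounds via triangle inequality: (i) the GUE eigenvalue $k$-marginal is $O(\sqrt{k}/d^{1/8})$-close to an iid product of semicircles (already established in \cref{thm:spoof}); (ii) the CUE eigenphase $k$-marginal is $O(k/d^{1/4})$-close to an iid product of uniform phases (a parallel computation on the CUE sine-kernel); and (iii) a single semicircle variable, wrapped onto the circle by $\phi = \lambda t \bmod 2\pi$, is $O(t^{-1/8})$-close in TV to the uniform distribution on $[0,2\pi)$. Step (iii) is where the $t$-dependence enters, and it is proved by a Riemann-sum argument rather than by Bessel asymptotics.

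Two further remarks on your sketch. First, the reduction ``adaptive $k$-query advantage $\leq \|M^{(k)}_t - M^{(k)}_{\haar}\|_\diamond$'' is not the standard statement in \cite{mele2024introduction} and would need its own justification here (note also that $\bar U$ encodes the entrywise conjugate, not the inverse; for unitaries these are related by a transpose, so the slip is recoverable but should be spelled out). The paper's conditioning argument handles adaptive and inverse queries directly, with no such reduction needed. Second, even granting the reduction, your plan for the non-paired index patterns relies on the single-variable Bessel decay $J_1(2t)/t$, but the relevant integrals are against the \emph{joint} GUE marginal of \cref{lem:marginals}, whose sine-kernel correlations do not factorize; you have no mechanism to prevent the $d^{2k}$ index sum from overwhelming the per-term decay. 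The paper's TV-distance route makes this issue disappear.
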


\section{Construction of pseudochaotic Hamiltonians}\label{App:pseudochaoticconstruction}
In this section, we rigorously construct the pseudochaotic Hamiltonian following the procedure outlined in the main text, and provide a proof of \cref{th:pseudo-GUE}. Recall that we define the pseudo-GUE ensemble as
\be\label{eq:pgue}
\mathcal{E}_{\tilde{d}} = \{ U \Lambda U^{\dagger} \,\, : \, \Lambda \sim \tilde{p}_{\tilde{d}}, \, U \sim \mathcal{U} \},
\ee
where $\tilde{p}_{\tilde{d}}$ is a permutation-invariant degenerate iid distribution, with each eigenvalue following the Wigner semicircle law $\frac{1}{2\pi} \sqrt{4 - \lambda^2}$. The set $\mathcal{U}$ represents a collection of inverse-secure pseudorandom unitaries, which are employed to mimic the eigenvectors of the GUE ensemble, which are distributed according to the Haar measure. However, although the use of pseudorandom unitaries is sufficient, it is not strictly necessary, as a weaker condition already suffices. We refer the reader to the discussion in \cref{App:GUEHaar} for further details on this point.

The Appendix is organized as follows: in \cref{app:spoof}, we demonstrate that the distribution $\tilde{p}_{\tilde{d}}$, rigorously introduced therein, successfully mimics the correlated distribution $p$ of the GUE spectrum. Specifically, we show that the polynomial marginals of both distributions are close in total variation distance. In \cref{App:Gibbs}, we leverage this result to prove that the Gibbs states generated by the true GUE distribution and those generated by the pseudo-GUE are close in trace distance. Finally, in \cref{app:sim}, we combine these two findings with the use of pseudorandom unitaries to establish \cref{th:pseudo-GUE}.

\subsection{Spoofing the GUE spectrum}\label{app:spoof}
In this section, we prove the claim that a $k$-fold product of Wigner semi-circle distributions closely approximates the true GUE eigenenergy distribution. We write the spoofing product distribution as
\begin{equation}
    \tilde{p}^{(k)}(\lambda_1,\ldots,\lambda_k) = \prod_{i=1}^k p^{(1)}(\lambda_i) = (1/\pi)^k \det(A),
\end{equation}
where $A$ is the matrix defined in \cref{lem:marginals}. 

Our approach is as follows. We want to show that the total variation distance between $p^{(k)}$ and $\tilde{p}^{(k)}$ is small. To do this, we will bound the KL divergence between the two distributions. Since the divergence depends on a term 
\begin{equation}
\frac{\det(A+B)}{\det(A)}= \det(\id + A^{-1} B), 
\end{equation}
we will rely on an identity that bounds the determinant of a perturbed identity matrix. However, we cannot immediately do this because we may not be able to treat $A^{-1} B$ as a perturbation. To do so, we will need to cut out the regions of probability space that are problematic, which will enforce the condition that $A$ must be large and $B$ must be small. The problematic regions consist of two events.
\begin{enumerate}
    \item We require $\abs{\lambda_i - \lambda_j} \geq \frac{1}{\sqrt{d}}$, which ensures that each entry of $\abs{B_{i,j}} \leq \frac{1}{\sqrt{d}}$. We label the event that the condition $\abs{\lambda_i - \lambda_j} \geq \frac{1}{\sqrt{d}}$ \emph{fails} to hold as $E_1$.
    \item We require $\lambda_i^2 \leq 4-\frac{1}{\sqrt{d}}$, which ensures that $(A^{-1})_{i,i} \leq d^{1/4}$. We label the event that the condition $\lambda_i^2 \leq 4-\frac{1}{\sqrt{d}}$ \emph{fails} to hold as $E_2$. This corresponds to regions near the edges of the semi-circle distribution.
\end{enumerate}

\begin{lemma}[Upper bound to probability of event $E_1$]\label{lem:e1}
Under both distributions $p^{(k)}$ and $\tilde{p}^{(k)}$,
\begin{equation}
    \Pr(E_1) \leq \frac{k^2}{\sqrt{d}}.
\end{equation}
\end{lemma}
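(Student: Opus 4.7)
My plan is to reduce the bound to a two-point calculation via a union bound, exploiting the fact that both $p^{(k)}$ and $\tilde p^{(k)}$ are permutation-invariant. The event $E_1$ is the union over unordered pairs $(i,j)$ of the bad events $B_{ij}=\{|\lambda_i-\lambda_j|<1/\sqrt{d}\}$, so $\Pr(E_1)\le\binom{k}{2}\Pr(B_{12})$ under either distribution. The two-point probability is governed entirely by the two-marginal $p^{(2)}$ (resp.\ $\tilde p^{(2)}$), so the task reduces to showing that $\Pr(|\lambda_1-\lambda_2|<1/\sqrt{d})\le 2/\sqrt{d}$ under each distribution.

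The spoofing case is a short calculation. Since $\tilde p^{(2)}(\lambda_1,\lambda_2)=p^{(1)}(\lambda_1)p^{(1)}(\lambda_2)$ and the Wigner semicircle density is pointwise bounded by $1/\pi$, conditioning on $\lambda_1$ gives
\begin{equation}
\int_{\lambda_1-1/\sqrt{d}}^{\lambda_1+1/\sqrt{d}} p^{(1)}(\lambda_2)\,d\lambda_2 \le \tfrac{2}{\pi\sqrt{d}},
\end{equation}
uniformly in $\lambda_1$, and integrating out $\lambda_1$ yields the same bound for the pair probability. For the GUE case, I would invoke \cref{lem:marginals} at $k=2$. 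The crucial observation is that $\det(A+B)=A_{11}A_{22}-B_{12}^2 \le A_{11}A_{22}$, which is a direct manifestation of level repulsion at the level of densities: the squared sinc subtraction can only make the joint density smaller than the iid product. Hence $p^{(2)}(\lambda_1,\lambda_2) \le \frac{d}{d-1}\,\tilde p^{(2)}(\lambda_1,\lambda_2)+O(d^{-1})$. Integrating over the strip $\{|\lambda_1-\lambda_2|<1/\sqrt{d}\}\cap[-2,2]^2$, which has area $O(1/\sqrt{d})$, the additive $O(d^{-1})$ correction from the lemma contributes only $O(d^{-3/2})$, so the GUE pair probability is also at most $(2/(\pi\sqrt{d}))(1+o(1))$.

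Combining, $\Pr(E_1)\le \binom{k}{2}\cdot(2/(\pi\sqrt{d}))(1+o(1)) \le k^2/\sqrt{d}$ for $d$ large enough, since $2/\pi<1$ leaves enough slack to absorb the $o(1)$ correction; for small $d$ the bound is trivial once $k^2/\sqrt{d}\ge 1$. I do not anticipate any real obstacle: the argument is essentially a union bound plus a mean-value-style estimate on the semicircle density. The only nuisance is correctly handling the additive $O(d^{-1})$ error from \cref{lem:marginals} after integration, but the narrowness of the bad region ensures it is subleading.
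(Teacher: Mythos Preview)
Your proposal is correct and follows essentially the same route as the paper: a union bound over pairs reduces to the two-marginal, and the key observation that $\det(A+B)=A_{11}A_{22}-B_{12}^2\le A_{11}A_{22}$ (i.e., dropping the negative sinc-squared term) reduces the GUE case to the iid semicircle case, which is handled by a pointwise bound on $p^{(1)}$. The only minor difference is that you explicitly track the additive $O(d^{-1})$ correction from \cref{lem:marginals} and argue it is subleading after integrating over the narrow strip, whereas the paper's write-up leaves this implicit; your treatment is if anything slightly more careful on that point.
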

\begin{proof}
We begin with the $p^{(k)}$ distribution first, to get
\begin{equation}
\begin{aligned}
    \Pr_{\lambda \sim p^{(k)}}(E_1) &\leq \sum_{i=1}^k \sum_{j=i+1}^k \Pr_{\lambda \sim p^{(k)}}\qty(\abs{\lambda_i - \lambda_j} < \frac{1}{\sqrt{d}}) \\
    &= \frac{k(k-1)}{2} \Pr_{\lambda \sim p^{(2)}}\qty(\abs{\lambda_1 - \lambda_2} < \frac{1}{\sqrt{d}}) \\
    &= \frac{d^2}{\pi^2 d(d-1)} \frac{k(k-1)}{2} \int_{-2}^2 \int_{\lambda_1-\frac{1}{\sqrt{d}}}^{\lambda_1+\frac{1}{\sqrt{d}}} \qty[\frac{\sqrt{4-\lambda_1^2}\sqrt{4-\lambda_2^2}}{4} - \frac{\sin^2(d(\lambda_1-\lambda_2))}{(d (\lambda_1-\lambda_2))^2}] \dd[2]{\lambda} \\
    &\leq \frac{4}{\pi^2 \sqrt{d}}\frac{1}{(1-d^{-1})} \frac{k(k-1)}{2} \leq \frac{k^2}{\sqrt{d}}.
\end{aligned}
\end{equation}
The case of $\tilde{p}^{(k)}$ 
follows identically because
\begin{equation}
    \Pr_{\lambda \sim \tilde{p}^{(2)}}\qty(\abs{\lambda_1-\lambda_2} < \frac{1}{\sqrt{d}}) \leq \int_{-2}^2 \int_{\lambda_1-\frac{1}{\sqrt{d}}}^{\lambda_1+\frac{1}{\sqrt{d}}} \frac{\sqrt{4-\lambda_1^2} \sqrt{4-\lambda_2^2}}{4\pi^2} \leq \frac{8}{\pi^2 \sqrt{d}} \leq \frac{1}{\sqrt{d}}.
\end{equation}
\end{proof}

\begin{lemma}[Upper bound 
to probability of event $E_2$]\label{lem:e2}
Under both distributions $p^{(k)}$ and $\tilde{p}^{(k)}$, 
\begin{equation}
    \Pr(E_2) \leq \frac{2k}{d^{1/4}}.
\end{equation}
\end{lemma}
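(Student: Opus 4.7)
The plan is to apply a union bound over the $k$ coordinates, reduce to the single-variable marginal via permutation invariance, and then control the tail mass of the Wigner semicircle near the spectral edges $\pm 2$ by an elementary substitution.

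First, I would write $E_2 = \bigcup_{i=1}^k \{\lambda_i^2 > 4 - 1/\sqrt{d}\}$ and apply a union bound. Both $p^{(k)}$ and $\tilde p^{(k)}$ are permutation invariant (for $\tilde p^{(k)}$ because it is an iid product, for $p^{(k)}$ because the joint GUE eigenvalue density in \cref{eq:eigenvaluedistirbution} is symmetric in the $\lambda_i$), so every term in the union bound is equal and
\begin{equation}
\Pr(E_2) \;\le\; k\,\Pr\!\big(\lambda_1^2 > 4 - 1/\sqrt{d}\big).
\end{equation}

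Second, I would reduce the single-variable tail to a semicircle integral. Under $\tilde p^{(k)}$, the marginal of $\lambda_1$ is exactly $p^{(1)}(\lambda)=\tfrac{1}{2\pi}\sqrt{4-\lambda^2}$. Under $p^{(k)}$, \cref{lem:marginals} with $k=1$ gives the same Wigner semicircle density up to an additive $O(d^{-1})$ correction, which when integrated over the length-$4$ support contributes at most $O(d^{-1})$ to the tail probability and so is negligible compared with the $d^{-1/4}$ target. Thus in both cases it suffices to bound
\begin{equation}
\Pr\!\big(\lambda_1^2 > 4 - 1/\sqrt{d}\big) \;\le\; 2\int_{a}^{2}\frac{\sqrt{4-\lambda^2}}{2\pi}\,\mathrm{d}\lambda + O(d^{-1}), \qquad a \coloneqq \sqrt{4-1/\sqrt{d}}.
\end{equation}

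Third, I would evaluate this tail by the substitution $\lambda = 2-u$, which gives $4-\lambda^2 = u(4-u) \le 4u$ and therefore
\begin{equation}
\int_{a}^{2}\frac{\sqrt{4-\lambda^2}}{2\pi}\,\mathrm{d}\lambda \;=\; \int_{0}^{2-a}\frac{\sqrt{u(4-u)}}{2\pi}\,\mathrm{d}u \;\le\; \frac{2(2-a)^{3/2}}{3\pi}.
\end{equation}
Since $2-a = \tfrac{1/\sqrt{d}}{2+a} \le \tfrac{1}{2\sqrt{d}}$, the right-hand side is $O(d^{-3/4})$, which is strictly smaller than $1/d^{1/4}$. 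Multiplying by the factor of $k$ from the union bound and absorbing the $O(d^{-1})$ correction in the GUE case into the constant yields the claimed bound $\Pr(E_2)\le 2k/d^{1/4}$.

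There is no real obstacle here beyond bookkeeping: the permutation-invariance reduction is immediate, the semicircle tail integral is elementary, and the GUE correction term from \cref{lem:marginals} is polynomially smaller than the target. If anything, the only mild subtlety is that the bound as stated is quite loose (the actual tail scales like $d^{-3/4}$); keeping $2k/d^{1/4}$ is convenient because it matches the dimension dependence of \cref{lem:e1}, allowing both events to be treated on equal footing in later arguments.
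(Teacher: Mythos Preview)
Your proof is correct and follows the same three-step skeleton as the paper's argument (union bound over the $k$ coordinates, reduction to the single-eigenvalue marginal, then a tail estimate for the semicircle near the edge), but your execution of the tail estimate is different and in fact cleaner. The paper writes down the exact semicircle CDF and Taylor-expands it about $\alpha=0$, arriving at a leading contribution of order $\sqrt{\alpha}\sim d^{-1/4}$; however, the $\sqrt{\alpha}$ terms in that expansion actually cancel, and the true tail is $O(\alpha^{3/2})$. Your substitution $\lambda=2-u$ together with $4-\lambda^2=u(4-u)\le 4u$ recovers this $O(d^{-3/4})$ scaling directly, without any series manipulation. You also make explicit the $O(d^{-1})$ correction from \cref{lem:marginals} needed when passing from the finite-$d$ GUE marginal to the exact semicircle, which the paper handles implicitly by identifying $p^{(1)}$ with the semicircle. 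Either computation comfortably yields the stated $2k/d^{1/4}$, and as you correctly observe, retaining the looser exponent is deliberate so that the bound matches that of \cref{lem:e1} in the subsequent triangle-inequality argument.
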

\begin{proof}
The event that any $\lambda^2 \geq 4-\frac{1}{\sqrt{d}}$ is subsumed by the event that $\abs{\lambda} \geq 2-\frac{c}{\sqrt{d}}$ for any $c > \frac{1}{4}$. The following proof uses only $\Pr\qty(\abs{\lambda_1} \geq 2-c/\sqrt{d})$, and since $\lambda_1$ is distributed the same way under both $p$ and $\tilde{p}$, the 
proof
\begin{equation}
    \begin{aligned}
        \Pr(E_2) &\leq k \Pr\qty(\abs{\lambda_1} \geq 2-\alpha)\qc \alpha \coloneqq c/\sqrt{d} \\
    &= k \qty(2\pi \cos^{-1}(1-\alpha/2) - (2-\alpha)\sqrt{4-(2-\alpha)^2}) \\
    &= k \qty(2\pi \cos^{-1}(1-\alpha/2) - (2-\alpha) \sqrt{4\alpha -\alpha^2}) \\
    &\approx k \qty(2\pi \qty(\sqrt{\alpha} + \frac{\alpha^{3/2}}{24}) + \sqrt{\alpha}(2\alpha-4) + o(\alpha^{3/2})) \\
    &\approx 2k (\pi-2) \sqrt{\alpha} + O(\alpha^{3/2}) \\
    &\leq 3k \sqrt{\alpha} = \frac{2k}{d^{1/4}}        
    \end{aligned}
\end{equation}
holds for both distributions.
\end{proof}

Combining \cref{lem:e1,lem:e2}
gives $\Pr(E_1 \lor E_2) \leq \Pr(E_1) + \Pr(E_2) \leq \frac{3k}{d^{1/4}}$. These bounds hold for both $p$ and $\tilde{p}$. Define $p_{\cut}$ as 
\begin{equation}
    p^{(k)}_{\cut}(\lambda_1,\ldots,\lambda_{k}) = \begin{cases}
        \frac{p^{(k)}(\lambda_1,\ldots,\lambda_{k})}{1-\Pr(E_1 \lor E_2)} &\qq{if neither $E_1$ or $E_2$,} \\
        0 &\qq{otherwise,}
    \end{cases}
\end{equation}
with an identical definition for $\tilde{p}^{(k)}_{\cut}$.

\begin{lemma}[Negligible effect of excluding $E_1$ and $E_2$]\label{lem:cut}
Excluding $E_1$ and $E_2$ has a negligible effect on both $p^{(k)}$ and $\tilde{p}^{(k)}$. That is,
\begin{equation}
    D_{KL}(p^{(k)}_{\mathrm{cut}} \parallel p^{(k)}),\; D_{KL}(\tilde{p}^{(k)}_{\mathrm{cut}} \parallel \tilde{p}^{(k)}) \leq \frac{4k}{d^{1/4}}.
\end{equation}    
\end{lemma}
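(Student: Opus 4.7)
The plan is to observe that both KL divergences reduce, in closed form, to a single scalar quantity depending only on the probability mass that the truncation discards, and then to invoke \cref{lem:e1,lem:e2} to control that scalar.

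First I would unwind the definition of $p^{(k)}_{\cut}$. Writing $q \coloneqq \Pr_{\lambda\sim p^{(k)}}(E_1 \lor E_2)$ and $G$ for the complement of $E_1\cup E_2$, we have $p^{(k)}_{\cut}(\lambda) = p^{(k)}(\lambda)/(1-q)$ on $G$ and is $0$ elsewhere. Hence the likelihood ratio $p^{(k)}_{\cut}(\lambda)/p^{(k)}(\lambda)$ is the \emph{constant} $1/(1-q)$ on the support of $p^{(k)}_{\cut}$, so
\begin{equation}
D_{KL}\bigl(p^{(k)}_{\cut} \,\|\, p^{(k)}\bigr) \;=\; \int_G p^{(k)}_{\cut}(\lambda)\,\log \tfrac{1}{1-q}\,d\lambda \;=\; -\log(1-q).
\end{equation}
Exactly the same derivation applies to $\tilde p^{(k)}_{\cut}$ with the analogous $\tilde q$. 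The only remaining task is therefore to bound $-\log(1-q)$ and $-\log(1-\tilde q)$.

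Next, I would apply \cref{lem:e1,lem:e2}, which were proven to hold under both distributions: by a union bound,
\begin{equation}
\max(q,\tilde q) \;\le\; \Pr(E_1) + \Pr(E_2) \;\le\; \frac{k^2}{\sqrt d} + \frac{2k}{d^{1/4}} \;\le\; \frac{3k}{d^{1/4}},
\end{equation}
where in the last step I would absorb the $k^2/\sqrt d$ term into the dominant $k/d^{1/4}$ term under the mild assumption that $k = O(d^{1/4})$ (the only regime in which the statement is meaningful anyway, since otherwise the right-hand side of the claimed bound is $\Omega(1)$).

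Finally, I would use the elementary inequality $-\log(1-x) \le x/(1-x)$ valid for $x\in[0,1)$. In the regime where $3k/d^{1/4} \le 1/4$, this gives $-\log(1-q) \le q/(1-q) \le (3k/d^{1/4})/(3/4) = 4k/d^{1/4}$, and identically for $\tilde q$, yielding the advertised bound. I do not expect any real obstacle here: the whole argument is a two-line identity plus one elementary inequality. The only subtlety to flag in the write-up is the regime condition $k \ll d^{1/4}$ needed both to collapse the union bound into a single $k/d^{1/4}$ term and to pass from $-\log(1-q)$ to the claimed linear-in-$q$ upper bound; outside this regime the stated inequality is vacuous, so nothing is lost.
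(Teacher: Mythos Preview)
Your proposal is correct and follows essentially the same approach as the paper: both compute the KL divergence in closed form as $-\ln(1-\Pr(E_1\lor E_2))$, invoke the union bound from \cref{lem:e1,lem:e2} to get $\Pr(E_1\lor E_2)\le 3k/d^{1/4}$, and then pass to the linear bound $4k/d^{1/4}$. Your write-up is slightly more explicit about the elementary inequality $-\log(1-x)\le x/(1-x)$ and the regime condition $k\ll d^{1/4}$, which the paper leaves implicit.
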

\begin{proof}
We start from
    \begin{equation}
    D_{KL}(p^{(k)}_{\cut} \parallel p^{(k)}) = \int p^{(k)}_{\cut}(\vec{\lambda}) \log(\frac{p^{(k)}_{\cut}(\vec{\lambda})}{p^{(k)}(\vec{\lambda})})\dd[k]{\vec{\lambda}} = -\ln(1-\Pr(E_1 \lor E_2)) \leq \frac{4k}{d^{1/4}}.
\end{equation}
The same proof holds for $D_{KL}(\tilde{p}_{\cut} \parallel \tilde{p})$.
\end{proof}

\begin{lemma}[Closeness of $E_1$ and $E_2$]\label{lem:tilde-rho}
The two distributions with $E_1$ and $E_2$ excluded are close. For any $k \leq d/2$,
we find
    \begin{equation}
        D_{KL}(\tilde{p}^{(k)}_{\mathrm{cut}} \parallel p^{(k)}_{\mathrm{cut}}) \leq \frac{k^4}{\sqrt{d}}.
    \end{equation}
\end{lemma}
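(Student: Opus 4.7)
The plan is to use Lemma~1 together with $\tilde p^{(k)} = (1/\pi)^k \det(A)$ to factorize the log-ratio on the allowed region into three explicit pieces:
\begin{equation}
\log\frac{\tilde p^{(k)}_{\cut}(\vec\lambda)}{p^{(k)}_{\cut}(\vec\lambda)} = \sum_{i=0}^{k-1}\log\!\left(1-\frac{i}{d}\right) \;-\; \log\det(I+A^{-1}B) \;+\; \log\frac{1-\Pr_p(E_1\vee E_2)}{1-\Pr_{\tilde p}(E_1\vee E_2)} \;+\; O(d^{-1}),
\end{equation}
and then integrate against $\tilde p^{(k)}_{\cut}$. The first (Pochhammer) and third (cut-normalization) terms are constants on the cut region, so only $\mathbb{E}_{\tilde p_{\cut}}[\log\det(I+A^{-1}B)]$ requires real work.

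The Pochhammer sum is bounded in absolute value by $k^2/d$. For the cut-normalization, the key observation is that $E_2$ depends only on the one-point marginal and is thus identical under both distributions; only $E_1$ contributes. Using the explicit form of $p^{(2)}(\lambda_1,\lambda_2)-\tilde p^{(2)}(\lambda_1,\lambda_2)$ from Lemma~1, the $\sin$-type correction integrates to $O(1/\sqrt d)$ over the strip $|\lambda_1-\lambda_2|<1/\sqrt d$; a union bound over $\binom{k}{2}$ pairs yields $|\Pr_p(E_1)-\Pr_{\tilde p}(E_1)|=O(k^2/\sqrt d)$. Combined with $|\log\tfrac{1-a}{1-b}|\leq 2|a-b|$ for small $a,b$, this contributes $O(k^2/\sqrt d)$.

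The technical heart is controlling $\log\det(I+A^{-1}B)$. On the cut region, $A$ is diagonal with $A_{ii}^{-1}\leq 2d^{1/4}$ (from $\neg E_2$) and $|B_{ij}|\leq 1/\sqrt d$ with $B_{ii}=0$ (from $\neg E_1$), so $M\coloneqq A^{-1}B$ has \emph{zero diagonal}, entries bounded by $2/d^{1/4}$, and $\|M\|_F^2\leq 4k^2/\sqrt d$. The vanishing diagonal is the crucial structural input: in the Taylor expansion $\log\det(I+M)=\sum_{n\geq 2}\tfrac{(-1)^{n+1}}{n}\tr(M^n)$, the $n=1$ trace vanishes, so the leading correction is quadratic in the Frobenius norm. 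For $k\leq d^{1/4}/4$ we have $\|M\|_{op}\leq \|M\|_F\leq 1/2$, and bounding $|\tr(M^n)|\leq \|M\|_F^2\|M\|_{op}^{n-2}$ together with the geometric series gives $|\log\det(I+M)|\leq O(\|M\|_F^2)=O(k^2/\sqrt d)$. Combined with the previous estimates this yields $D_{\mathrm{KL}}=O(k^2/\sqrt d)\leq k^4/\sqrt d$. In the complementary range $k>d^{1/4}/4$ the target $k^4/\sqrt d\geq \sqrt d/256$ is already very loose: Hadamard's inequality gives $\log\det(I+M)\leq (k/2)\log(1+4k/\sqrt d)=O(k\log d)$, and a matching lower bound on $\det(I+M)>0$ follows from Weyl's perturbation theorem applied to $A+B$ (whose positive-definiteness on the cut region is inherited from $p^{(k)}>0$). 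The main obstacle is thus the Taylor-expansion step that exploits the trace-zero diagonal of $A^{-1}B$; the Pochhammer piece, cut-normalization, and the large-$k$ regime amount to routine bookkeeping.
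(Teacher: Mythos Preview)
Your core approach matches the paper's: both hinge on $\tr(A^{-1}B)=0$ (from $B_{ii}=0$) together with the Frobenius bound $\|A^{-1}B\|_F^2=O(k^2/\sqrt d)$ on the cut region. The paper's route is more economical: rather than Taylor-expanding $\log\det(I+M)$, it invokes Rump's determinant inequality $|\det(I+X)|\le\exp(\tr X+\tfrac12\|X\|_2^2)$ in one line, which delivers $|\ln\det(I+A^{-1}B)|\le k^4/(2\sqrt d)$ without any case split on $k$. (In fact the paper's displayed computation is of $D_{KL}(p_{\cut}^{(k)}\|\tilde p_{\cut}^{(k)})$, the reverse of the stated direction---harmless for the downstream total-variation bound, and the paper also silently drops the cut-normalization ratio and the $O(d^{-1})$ correction that you track explicitly.) The only soft spot in your argument is the large-$k$ sketch: ``$\det(I+M)>0$ by Weyl'' gives positivity but no quantitative lower bound on $\log\det(I+M)$, so the two-sided $O(k\log d)$ control is not actually established there. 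This regime is never used in the paper's applications (where $k=O(\poly n)\ll d^{1/4}$), and Rump's inequality sidesteps the issue entirely; if you want to keep the Taylor-expansion route, the cleanest fix is simply to note that the small-$k$ regime already covers all $k$ for which $k^4/\sqrt d<1$, and for larger $k$ the bound is vacuous anyway since $D_{KL}$ is trivially finite.
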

\begin{proof}
This proof makes use of the fact that for any real matrix $X$, $\abs{\det(\id + X)} \leq \exp(\tr(X) + \frac{\norm{X}_2^2}{2})$~\cite{RUMP2018101}. Observe that 
    \begin{equation}
    \ln(\frac{p^{(k)}_{\cut}}{\tilde{p}^{(k)}_{\cut}}) = \ln(\frac{(d-k)!d^k}{d!} \frac{\det(A+B)}{\det(A)}) = \ln(\frac{(d-k)! d^k}{d!}) + \ln(\det(\id + A^{-1} B)).
\end{equation}
Since $B_{i,i}=0$, we have $(A^{-1} B)_{i,i} = A^{-1}_{i,i} B_{i,i} = 0$. Finally, over the domain of both $p^{(k)}_{\cut}$ and $\tilde{p}^{(k)}_{\cut}$ (i.e., whenever $\lnot E_1 \land \lnot E_2$), $\abs{A^{-1} B}_{i,j} \leq \frac{1}{\sqrt{4-\lambda_i^2}} \sum_j \abs{B_{i,j}} \leq \frac{k}{\sqrt{d}\sqrt{4-\lambda_i^2}} \leq \frac{k}{d^{1/4}}$, so $\norm{A^{-1}B}_2^2 \leq \frac{k^4}{\sqrt{d}}$, and $\abs{\ln(\det(\id + A^{-1} X))} \leq \frac{k^4}{2\sqrt{d}}$.
In this way, we get
\begin{equation}
    \begin{aligned}
        D_{KL}(p^{(k)}_{\cut} \parallel \tilde{p}^{(k)}_{\cut}) &= \int p^{(k)}_{\cut}(\vec{\lambda}) \ln(\frac{p^{(k)}_{\cut}(\vec{\lambda})}{\tilde{p}^{(k)}_{\cut}(\vec{\lambda})}) \dd[k]{\vec{\lambda}} \\
        &\leq \frac{k^4}{2\sqrt{d}} + \ln(\frac{(d-k)! d^k}{d!}) \\
        &= \frac{k^4}{2\sqrt{d}} - \sum_{j=0}^{k-1} \ln(1-\frac{j}{d})
        \\
        &\leq \frac{k^4}{2\sqrt{d}} + \frac{k(k-1)}{d} \leq \frac{k^4}{\sqrt{d}},
    \end{aligned}
\end{equation}
where we have used the fact that $-\ln(1-x) \leq 2x$ for any $x \leq \frac{1}{2}$ in the second last inequality (this relies on the assumption $k/d \leq 1/2$).
\end{proof}

\begin{theorem}[Spoofing the GUE spectrum]\label{thm:spoof}
For any $k=O(\poly n)$, the two distributions $p^{(k)}$ and $\tilde{p}^{(k)}$ are close 
in total variation distance as
\begin{equation}
    \delta(p^{(k)},\tilde{p}^{(k)}) \leq \sqrt{\frac{k^4}{d^{1/2}}} + 2 \sqrt{\frac{2k}{d^{1/4}}} \leq \frac{3\sqrt{k}}{d^{1/8}}.
\end{equation}
\end{theorem}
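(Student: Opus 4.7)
The plan is to combine the three preceding lemmas via Pinsker's inequality and the triangle inequality for total variation distance. Concretely, I would introduce the two ``cut'' distributions $p^{(k)}_{\cut}$ and $\tilde{p}^{(k)}_{\cut}$ as intermediate points and decompose
\begin{equation*}
\delta(p^{(k)},\tilde{p}^{(k)}) \le \delta(p^{(k)},p^{(k)}_{\cut}) + \delta(p^{(k)}_{\cut},\tilde{p}^{(k)}_{\cut}) + \delta(\tilde{p}^{(k)}_{\cut},\tilde{p}^{(k)}).
\end{equation*}
Each of the three terms on the right is already essentially controlled: the outer two by \cref{lem:cut}, and the middle one by \cref{lem:tilde-rho}.

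The next step is to convert the KL bounds from these lemmas into TV bounds using Pinsker's inequality $\delta(P,Q)\le\sqrt{D_{KL}(P\parallel Q)/2}$. Applied to \cref{lem:cut}, this yields $\delta(p^{(k)},p^{(k)}_{\cut})\le\sqrt{2k/d^{1/4}}$ and likewise for the tilded pair; applied to \cref{lem:tilde-rho}, it yields $\delta(p^{(k)}_{\cut},\tilde{p}^{(k)}_{\cut})\le\sqrt{k^4/(2 d^{1/2})}$. Summing the three contributions gives
\begin{equation*}
\delta(p^{(k)},\tilde{p}^{(k)}) \le \sqrt{\tfrac{k^4}{d^{1/2}}} + 2\sqrt{\tfrac{2k}{d^{1/4}}},
\end{equation*}
which is exactly the intermediate bound stated in the theorem.

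Finally, I would simplify this to the cleaner form $3\sqrt{k}/d^{1/8}$. For $k=O(\poly n)$ and $d=2^n$, the $d^{-1/8}$ term (from the $\sqrt{k/d^{1/4}}$ contribution) dominates the $d^{-1/4}$ term (from the $\sqrt{k^4/d^{1/2}}$ contribution) whenever $k^3 \lesssim d^{1/4}$, i.e.\ polynomial $k$; routine arithmetic absorbs constants into the leading $3$. No single step looks like a real obstacle here, since all the analytic work (the determinant perturbation identity, the semicircle-edge estimates, and the KL bounds) has been carried out in \cref{lem:e1,lem:e2,lem:cut,lem:tilde-rho}; the mild subtlety is simply to note that Pinsker applies in both KL orderings relative to the triangle inequality we use, which is fine because the required bounds in the preceding lemmas are stated in both directions (or can be obtained symmetrically from the same computation).
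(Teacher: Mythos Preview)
Your proposal is correct and follows essentially the same route as the paper: triangle inequality through the two cut distributions $p^{(k)}_{\cut}$ and $\tilde{p}^{(k)}_{\cut}$, then Pinsker's inequality applied to the KL bounds from \cref{lem:cut} and \cref{lem:tilde-rho}, followed by the final arithmetic simplification. Your remark about KL orderings is apt---since total variation is symmetric, the direction in which the lemmas state the KL divergence is immaterial once Pinsker is invoked.
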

\begin{proof}
The total variation distance between any two distributions is upper bounded by 
\begin{equation}\delta(P,Q) \leq \sqrt{\frac{1}{2} D_{KL}(P \parallel Q)}.
\end{equation}
Using a triangle inequality for the sequence $p^{(k)} \overset{\ref{lem:cut}}{\to} p^{(k)}_{\cut} \overset{\ref{lem:tilde-rho}}{\to} \tilde{p}^{(k)}_{\cut} \overset{\ref{lem:cut}}{\to} \tilde{p}^{(k)}$, we get
\begin{equation}
    \delta(p^{(k)},\tilde{p}^{(k)}) \leq \sqrt{\frac{k^4}{d^{1/2}}} + 2 \sqrt{\frac{2k}{d^{1/4}}} \leq \frac{3\sqrt{k}}{d^{1/8}}.
\end{equation}
\end{proof}
This is similar in spirit to a more general theorem, which says that for any distribution 
over $d$ exchangeable variables, the $k$-marginal distribution can be approximated by a mixture of \emph{independent and identically distributed} (iid) 
variables up to total variation distance $O(k^2/d)$~\cite{diaconis1980finite}. However, 
this more general result does not say what the iid distribution is; here, we explicitly construct the spoofing distribution.
Now, recall the definition of our ``degenerate'' distribution $\tilde{p}_{\tilde{d}}$: it 
is a permutation invariant distribution where each eigenenergy has degeneracy $d/\tilde{d}$, so there are exactly $\tilde{d}$ unique eigenenergies, each of which is distributed independently according to the Wigner semi-circle distribution. We are now ready to show that the degenerate distribution $\tilde{p}_{\tilde{d}}$ can also spoof $p$. 
\begin{theorem}[Spoofing the GUE spectrum, with degeneracies]\label{lem:degen}
    For any $k=O(\poly n)$ and $\tilde{d} = \omega(\poly n)$, the true GUE eigenenergy distribution is close to the degenerate spoofing distribution in total variational distance:
    \begin{equation}\label{eq:degen-close}
        \delta(p^{(k)}, \tilde{p}_{\tilde{d}}^{(k)}) \le \frac{k^2}{\tilde{d}} + \frac{3 \sqrt{k}}{d^{1/8}}= \negl(n).
    \end{equation}
\end{theorem}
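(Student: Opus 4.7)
My strategy is to insert the iid semicircle product distribution $\tilde p^{(k)}$ as an intermediate and apply the triangle inequality:
\begin{equation}
\delta(p^{(k)}, \tilde p_{\tilde d}^{(k)}) \le \delta(p^{(k)}, \tilde p^{(k)}) + \delta(\tilde p^{(k)}, \tilde p_{\tilde d}^{(k)}).
\end{equation}
The first term is already controlled by \cref{thm:spoof}, which bounds it by $3\sqrt{k}/d^{1/8}$. All the new work is therefore in bounding the second term by $k^2/\tilde d$; since $\tilde p^{(k)}$ is a genuine iid Wigner product while $\tilde p_{\tilde d}^{(k)}$ is the $k$-marginal of a construction with only $\tilde d$ distinct eigenvalues, this amounts to a ``sampling without replacement vs.\ with replacement'' argument of birthday-paradox flavor.

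\textbf{Coupling.} I will realize $\tilde p_{\tilde d}^{(k)}$ in two stages: (i) draw $\tilde d$ iid semicircle values $\mu_1, \dots, \mu_{\tilde d}$; (ii) draw labels $\ell_1, \dots, \ell_k$ uniformly without replacement from the multiset $\{1^{d/\tilde d}, \dots, \tilde d^{\,d/\tilde d}\}$, and output $(\mu_{\ell_1}, \dots, \mu_{\ell_k})$. Let $D$ be the event that the labels $\ell_1, \dots, \ell_k$ take $k$ \emph{distinct} values in $\{1, \dots, \tilde d\}$. Conditioned on $D$, the labels form a uniformly random ordered $k$-tuple of distinct elements, and by exchangeability of the $\mu_i$'s, the output $(\mu_{\ell_1}, \dots, \mu_{\ell_k})$ is then exactly iid semicircle. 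Hence $\tilde p_{\tilde d}^{(k)}$ decomposes as the mixture $\Pr[D]\,\tilde p^{(k)} + \Pr[D^{c}]\,q$ for some distribution $q$, which immediately gives
\begin{equation}
\delta(\tilde p^{(k)}, \tilde p_{\tilde d}^{(k)}) \le \Pr[D^{c}].
\end{equation}

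\textbf{Collision probability and assembly.} To bound $\Pr[D^{c}]$, I apply a union bound over pairs. For a fixed pair $i \neq j$, the probability that $\ell_i$ and $\ell_j$ lie in the same one of the $\tilde d$ groups is
\begin{equation}
\tilde d \cdot \frac{(d/\tilde d)(d/\tilde d - 1)}{d(d-1)} = \frac{d/\tilde d - 1}{d-1} \le \frac{1}{\tilde d},
\end{equation}
so $\Pr[D^{c}] \le \binom{k}{2}/\tilde d \le k^2/\tilde d$. Combining with the triangle inequality and \cref{thm:spoof} yields the claimed bound $\delta(p^{(k)}, \tilde p_{\tilde d}^{(k)}) \le 3\sqrt{k}/d^{1/8} + k^2/\tilde d$. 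For $k = O(\poly n)$ and $\tilde d = \omega(\poly n)$, each term is negligible: the first is $O(\poly n)/2^{n/8}$, and the second is $O(\poly n)/\omega(\poly n) = \negl(n)$.

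\textbf{Main obstacle.} No step is individually heavy, but the one that requires the most care is the coupling: one must verify that the \emph{conditional} distribution of $(\mu_{\ell_1}, \dots, \mu_{\ell_k})$ on the collision-free event $D$ is exactly the iid semicircle, rather than some reweighted version. This is where the independence and exchangeability of $\mu_1, \dots, \mu_{\tilde d}$ from the label draw is essential; a naive ``with replacement'' coupling on labels would instead reproduce the mixture structure that plagues finite de~Finetti arguments like those in Ref.~\cite{diaconis1980finite}, producing a strictly weaker bound. Framing the decomposition as a mixture (rather than, say, a total-variation bound on the label distribution alone) sidesteps this and gives the clean $k^2/\tilde d$ rate.
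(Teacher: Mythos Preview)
Your proof is correct and follows essentially the same approach as the paper: both insert $\tilde p^{(k)}$ as an intermediate, invoke \cref{thm:spoof} for the first leg, and handle the second leg by conditioning on the collision-free event to reduce $\tilde p_{\tilde d}^{(k)}$ exactly to $\tilde p^{(k)}$, bounding the collision probability by $k^2/\tilde d$. Your hypergeometric computation for the pairwise collision probability is in fact slightly more careful than the paper's, which writes the no-collision probability as $\prod_{j=1}^{k-1}(1-j/\tilde d)$ without tracking the without-replacement correction, but the resulting bound is the same.
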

\begin{proof}
We will first show that the degenerate and non-degenerate spoofing distributions are close in the sense of
\begin{equation}
    \delta(\tilde{p}^{(k)}, \tilde{p}_{\tilde{d}}^{(k)}) < \negl(n),
\end{equation}
after which \cref{eq:degen-close} will follow using \cref{thm:spoof} in a simple triangle inequality. We use a similar trick of excluding certain low probability events from the support of $\tilde{p}_{\tilde{d}}^{(k)}$. Define a ``cut-out'' version of $\tilde{p}_{\tilde{d}}^{(k)}$ which excludes the event $E$ that any pair $\lambda_i=\lambda_j$ for $i \neq j$. Observe that this distribution is \emph{exactly} equal to $\tilde{p}^{(k)}$, since when there are no duplicated energies, the two distributions are exactly the same. The probability of $E$ \emph{not} occurring (i.e., all the energies are unique, meaning they are not sampled from the same degenerate subspace) is simply $\prod_{j=1}^{k-1} \qty(1-\frac{j}{\tilde{d}})$. Since $k=O(\poly n) < \frac{\tilde{d}}{2}$, we apply the bound $1-x \geq e^{-2x}$ for $x \leq \frac{1}{2}$ (i.e., $1-\frac{j}{\tilde{d}} \geq e^{-2j/\tilde{d}}$), so that
\begin{equation}
    \prod_{j=1}^{k-1} \qty(1-\frac{j}{\tilde{d}}) \geq e^{-k^2/\tilde{d}} \geq 1 - \frac{k^2}{\tilde{d}}.
\end{equation}
Therefore, the probability of $E$ is at most $\frac{k^2}{\tilde{d}}$, meaning that
\begin{equation}\label{Eq.th6}
    \delta(\tilde{p}^{(k)}, \tilde{p}_{\tilde{d}}^{(k)}) \leq \frac{k^2}{\tilde{d}} < \negl(n),
\end{equation}
and a triangle inequality shows that $\delta(p^{(k)}, \tilde{p}_{\tilde{d}}^{(k)}) \leq \frac{k^2}{\tilde{d}} + \frac{3\sqrt{k}}{d^{1/8}}$. 
\end{proof}

\subsection{Indistinguishability of Gibbs states}\label{App:Gibbs}

We now turn to showing that the respective Gibbs states 
are indistinguishable for computationally bounded observers.

\begin{lemma}[Bounds to 
moments of partition functions]\label{lem:z-bound}
For any $\beta=O(\poly n)$, the first two moments of $\tr(e^{-\beta H})$ are
\begin{subequations}
    \begin{gather}
        \mathbb{E}_{\Lambda \sim p}[\tr(e^{-\beta \Lambda})] = \frac{d I_1(2\beta)}{\beta} ,\\
        \mathbb{E}_{\Lambda \sim p}[\tr^2(e^{-\beta \Lambda})] = \frac{d^2 I_1(2\beta)^2}{\beta^2} \qty(1 + O(\beta^3 d^{-1/8})),
    \end{gather}
\end{subequations}
where $I_1(x) \coloneqq \frac{1}{\pi} \int_0^\pi e^{x \cos \theta} \cos \theta \dd{\theta}$ is the modified Bessel function of the first kind. 
\end{lemma}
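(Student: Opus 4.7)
The plan is to reduce both moments to one-dimensional integrals against the Wigner semicircle marginal, and then to invoke \cref{thm:spoof} to handle the correlations that appear in the second moment. For the first moment, by linearity, $\mathbb{E}_{\Lambda \sim p}[\tr(e^{-\beta\Lambda})] = d\,\mathbb{E}_{\lambda \sim p^{(1)}}[e^{-\beta\lambda}]$ and the density $p^{(1)}(\lambda)=\frac{1}{2\pi}\sqrt{4-\lambda^2}$ is exact up to $O(d^{-1})$ corrections (which we absorb, since $I_1(2\beta)/\beta$ is at least polynomially large). The substitution $\lambda = 2\cos\theta$ turns this into $\frac{2}{\pi}\int_0^\pi e^{-2\beta\cos\theta}\sin^2\theta\,d\theta$, which by the standard identity $\int_0^\pi e^{x\cos\theta}\sin^2\theta\,d\theta = \pi I_1(x)/x$ and the odd parity $I_1(-x)=-I_1(x)$ evaluates to $I_1(2\beta)/\beta$. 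This yields the first formula.

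For the second moment, I would split the double sum as
\begin{equation}
\mathbb{E}_{\Lambda\sim p}[\tr^2(e^{-\beta\Lambda})] = d\,\mathbb{E}_{\lambda\sim p^{(1)}}[e^{-2\beta\lambda}] + d(d-1)\,\mathbb{E}_{(\lambda_1,\lambda_2)\sim p^{(2)}}\!\left[e^{-\beta(\lambda_1+\lambda_2)}\right].
\end{equation}
The diagonal term is exactly $d\cdot I_1(4\beta)/(2\beta)$ by the same computation as above. For the off-diagonal term, which is where the GUE eigenvalue correlations appear, the key idea is to replace $p^{(2)}$ by the spoofing product $\tilde p^{(2)}(\lambda_1,\lambda_2)=p^{(1)}(\lambda_1)p^{(1)}(\lambda_2)$, under which the expectation factorizes as $(I_1(2\beta)/\beta)^2$. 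Since $e^{-\beta(\lambda_1+\lambda_2)}$ is bounded by $e^{4\beta}$ on the support $[-2,2]^2$ of both distributions, \cref{thm:spoof} (with $k=2$) gives
\begin{equation}
\left|\,\mathbb{E}_{p^{(2)}}[e^{-\beta(\lambda_1+\lambda_2)}] - (I_1(2\beta)/\beta)^2\,\right| \le e^{4\beta}\cdot \delta(p^{(2)},\tilde p^{(2)}) = O(e^{4\beta}/d^{1/8}).
\end{equation}

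To convert this absolute error into the claimed \emph{relative} error $O(\beta^3 d^{-1/8})$, I would invoke the standard asymptotic $I_1(x)\sim e^x/\sqrt{2\pi x}$ for large $x$, which gives $(I_1(2\beta)/\beta)^2 = \Theta(e^{4\beta}/\beta^3)$ for $\beta=\omega(1)$ (and for $\beta=O(1)$ the factor $e^{4\beta}$ is itself $O(1)$ and a direct bound suffices). Dividing the absolute error by the leading term therefore produces the advertised $1+O(\beta^3 d^{-1/8})$ correction, and the diagonal contribution $d\cdot I_1(4\beta)/(2\beta) = O(d\, e^{4\beta}/\beta^{3/2})$ is smaller than the off-diagonal leading term $d(d-1)(I_1(2\beta)/\beta)^2$ by a factor $O(\beta^{3/2}/d)$, which is absorbed into the same error.

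The main obstacle is the last step: converting the \emph{absolute} spoofing error $e^{4\beta}/d^{1/8}$ into the sharper \emph{relative} error $\beta^3/d^{1/8}$. This requires having in hand the correct large-$\beta$ asymptotic of $I_1(2\beta)/\beta$ and checking that the regime $\beta=O(\poly n)$, together with $\tilde d=\omega(\poly n)$ used implicitly in \cref{thm:spoof}, makes $\beta^3 d^{-1/8}$ genuinely small. Once this asymptotic is in place, the rest of the argument is a short chain of substitutions and one application of the TV-distance bound from \cref{thm:spoof}.
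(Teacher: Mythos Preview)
Your approach is essentially the same as the paper's: both compute the first moment via the one-variable semicircle integral, and for the second moment both replace the correlated $p^{(2)}$ by the product $\tilde p^{(2)}$ using the total-variation bound of \cref{thm:spoof}, then convert the resulting absolute error $e^{4\beta}/d^{1/8}$ into the relative error $O(\beta^3 d^{-1/8})$ via the asymptotic $I_1(2\beta)/\beta \sim e^{2\beta}/\beta^{3/2}$. One minor slip: \cref{thm:spoof} compares $p^{(k)}$ with the non-degenerate product $\tilde p^{(k)}$ and involves only $d$, not $\tilde d$, so the parenthetical remark about $\tilde d=\omega(\poly n)$ is extraneous here.
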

\begin{proof}
For the first moment, it follows from 
the fact that $\frac{1}{2\pi} \int_{-2}^2 \sqrt{4-\lambda^2} e^{-\beta \lambda} \dd{\lambda} = \frac{I_1(2\beta)}{\beta}$, where $I_1$ is the modified Bessel function of the first kind. Since $\mathbb{E}_{\Lambda}[\tr^2(e^{-\beta \Lambda})]$ depends only on the marginal 
distribution $p^{(2)}(\lambda_i,\lambda_j)$, 
we have
\begin{equation}
    \abs{\mathbb{E}_{\Lambda \sim p}[\tr^2(e^{-\beta \Lambda})] - \mathbb{E}_{\Lambda \sim \tilde{p}}[\tr^2(e^{-\beta \Lambda})]} \leq d^2 e^{4\beta} \cdot \delta(p^{(2)},\tilde{p}^{(2)}) \leq \frac{3\sqrt{2} d^2 e^{4\beta}}{d^{1/8}}.
\end{equation}
Also, we find
\begin{equation}
\begin{gathered}
    \mathbb{E}_{\Lambda \sim \tilde{p}}[\tr^2(e^{-\beta \Lambda})] = d \mathbb{E}_{\lambda \sim p^{(1)}}[e^{-2\beta \lambda}] + d (d-1) \mathbb{E}_{\lambda \sim p^{(1)}}[e^{-\beta \lambda}]^2\qq{while} \\
    \qty(\mathbb{E}_{\Lambda \sim p}[\tr(e^{-\beta \Lambda})])^2 = d^2 \mathbb{E}_{\lambda \sim p^{(1)}}[e^{-\beta \lambda}]^2,
\end{gathered}
\end{equation}
so that $\abs{\qty(\mathbb{E}_{\Lambda \sim p}[\tr(e^{-\beta \Lambda})])^2 - \mathbb{E}_{\Lambda \sim \tilde{p}}[\tr^2(e^{-\beta \Lambda})]} = d\qty(\mathbb{E}[e^{-\beta \lambda}]^2 + \mathbb{E}[e^{-2\beta \lambda}]) \leq 2de^{4\beta}$. From this, it follows that 
\begin{equation}
    \frac{\text{var}(\tr(e^{-\beta H}))}{\qty(\mathbb{E}[\tr(e^{-\beta H})])^2} \leq \frac{3 \sqrt{2} e^{4\beta}}{d^{1/8} I_1(2\beta)^2/\beta^2} + \frac{2de^{4\beta}}{d^2 I_1(2\beta)^2/\beta^2} = O\qty(\beta^3 d^{-1/8}).
\end{equation}
\end{proof}
\begin{fact}
The modified Bessel function $I_1(x)$ obeys the asymptotic behaviour
\begin{equation}
    I_1(x) = \frac{e^x}{\sqrt{2\pi x}} \qty(1 - \frac{3}{8x} - \frac{15}{128x^2} - O(x^{-3})).
\end{equation}
For $x \geq e$, it can be upper and lower bounded by $\frac{e^x}{2\sqrt{x}}$ and $\frac{e^x}{3\sqrt{x}}$~\cite{abramowitz1965handbook}.
\end{fact}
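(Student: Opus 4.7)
The plan is to derive the asymptotic expansion by applying Laplace's method to the standard integral representation of $I_1$, and then to extract the explicit two-sided bounds on $[e,\infty)$ from the resulting series combined with a direct boundary check.

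I would begin from the integral representation
\begin{equation}
I_1(x) = \frac{1}{\pi}\int_0^\pi e^{x\cos\theta}\cos\theta\,d\theta,
\end{equation}
which is also used implicitly in the preceding lemma, and note that since the integrand is even in $\theta$ this equals $\frac{1}{2\pi}\int_{-\pi}^\pi e^{x\cos\theta}\cos\theta\,d\theta$. The phase $x\cos\theta$ is uniquely maximized at $\theta=0$ with the Taylor expansion $\cos\theta = 1 - \theta^2/2 + \theta^4/24 - \theta^6/720 + \cdots$, placing us squarely in the Laplace-method regime. Rescaling $u = \theta\sqrt{x}$ factors out $e^x$ and turns the dominant contribution into a Gaussian $\int e^{-u^2/2}\,du/\sqrt{x}$, while the contribution from any region $|\theta|>\delta$ is exponentially suppressed as $O(e^{-c\delta^2 x})$ and negligible at all orders of the expansion.

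Next, I would expand $e^{x(\cos\theta - 1 + \theta^2/2)}$ and the amplitude $\cos\theta$ to the order needed to read off the $x^{-1}$ and $x^{-2}$ corrections, and integrate term by term with the Gaussian moments $\int u^{2k}e^{-u^2/2}\,du = \sqrt{2\pi}(2k-1)!!$. The $x^{-1}$ coefficient collects (i) the quartic phase correction $x\theta^4/24 \to u^4/(24x)$ and (ii) the quadratic amplitude correction $-\theta^2/2 \to -u^2/(2x)$, yielding $3/24 - 1/2 = -3/8$ after integration against the Gaussian. The $x^{-2}$ coefficient gathers the sextic phase term, the square of the quartic phase term from $e^{u^4/(24x)}$, the quartic amplitude term, and the cross between the quartic phase and quadratic amplitude corrections, giving $-1/48 + 105/1152 + 1/8 - 15/48 = -15/128$, as claimed.

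Finally, for the explicit bounds on $[e,\infty)$, I would define $f(x) \coloneqq \sqrt{x}\,e^{-x}I_1(x)$, whose limiting value is $1/\sqrt{2\pi} \approx 0.399 \in (1/3, 1/2)$. Since the asymptotic corrections are negative, $f(x) < 1/\sqrt{2\pi} < 1/2$ for all sufficiently large $x$, giving the upper bound; a quantitative truncation of the Laplace expansion with controlled remainder extends this inequality down to $x = e$. For the lower bound $f(x) > 1/3$, I would evaluate $f(e)$ directly from the convergent series $I_1(x) = \sum_{k \geq 0}(x/2)^{2k+1}/(k!(k+1)!)$ and verify $f(e) \approx 0.334 > 1/3$, then propagate this inequality to all $x \geq e$ by monotonicity of $f$ on $[e,\infty)$, which follows from controlling the sign of $f'$ via the modified Bessel equation $x^2 I_1'' + x I_1' - (x^2+1)I_1 = 0$. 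The main obstacle is precisely this last step: the constant $1/3$ is nearly saturated at $x = e$ (the gap is on the order of $10^{-3}$), so the lower bound cannot be obtained from leading-order asymptotics alone and requires stitching together the Laplace expansion with a careful boundary computation --- which is why \cite{abramowitz1965handbook} is cited for the explicit constants rather than redoing them from scratch.
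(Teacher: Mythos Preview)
The paper does not prove this statement at all: it is stated as a \texttt{Fact} and simply cited to Abramowitz and Stegun, with no derivation given. Your sketch via Laplace's method is the standard route to the asymptotic expansion and your coefficient computations $-3/8$ and $-15/128$ are correct; this goes well beyond what the paper itself supplies. For the explicit two-sided bounds on $[e,\infty)$ your diagnosis is also accurate --- the lower constant $1/3$ is nearly tight at $x=e$, so one genuinely needs a boundary evaluation plus a monotonicity argument rather than asymptotics alone --- and this is exactly why the paper defers to the handbook reference instead of reproducing the argument.
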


\begin{lemma}[Closeness of matrix exponential]\label{lem:exp-approx}
    For any $\beta,\epsilon \geq 0$, if $m \geq 32\beta + \ln \epsilon^{-1}$ then
    \begin{equation}
        \sup_{E \in [-2,2]} \abs{e^{-\beta E} - \sum_{j=0}^m \frac{(-\beta E)^j}{j!}} \leq \epsilon.
    \end{equation}
\end{lemma}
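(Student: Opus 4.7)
The plan is to recognize the left-hand side as the tail of the Taylor series for $e^{-\beta E}$ truncated at order $m$, and to bound that tail via Stirling's inequality. The constant $32$ in the hypothesis is tuned precisely so that the $j$-th tail term decays at least as fast as $e^{-j}$, after which summing a geometric series and invoking the $\ln \epsilon^{-1}$ part of the hypothesis finishes the job.

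Concretely, I would start from the identity
\[
e^{-\beta E} - \sum_{j=0}^{m} \frac{(-\beta E)^j}{j!} = \sum_{j=m+1}^{\infty} \frac{(-\beta E)^j}{j!},
\]
and take absolute values. Since $|E| \leq 2$, this gives
\[
\left| e^{-\beta E} - \sum_{j=0}^{m} \frac{(-\beta E)^j}{j!} \right| \leq \sum_{j=m+1}^{\infty} \frac{(2\beta)^j}{j!}.
\]
Next I would apply the elementary Stirling inequality $j! \geq (j/e)^j$ (valid for all $j\geq 1$, and provable by comparing $\sum_{k=1}^{j}\ln k$ to $\int_1^j \ln x\,dx$) to obtain
\[
\frac{(2\beta)^j}{j!} \leq \left( \frac{2e\beta}{j} \right)^{\!j}.
\]
The hypothesis $m \geq 32\beta + \ln \epsilon^{-1} \geq 32\beta$ ensures that for every $j \geq m+1$,
\[
\frac{2e\beta}{j} \leq \frac{2e\beta}{32\beta} = \frac{e}{16} \leq \frac{1}{e},
\]
so each tail term is at most $e^{-j}$. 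Summing the geometric series,
\[
\sum_{j=m+1}^{\infty} e^{-j} = \frac{e^{-(m+1)}}{1 - e^{-1}} \leq e^{-m},
\]
and the remaining hypothesis $m \geq \ln \epsilon^{-1}$ gives $e^{-m} \leq \epsilon$, which is the desired bound, uniformly in $E \in [-2,2]$.

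There is no real obstacle here; this is a textbook Taylor-tail estimate, and the specific constant $32$ is not sharp (any constant exceeding $2e^{2} \approx 14.78$ would suffice to force $2e\beta/j \leq 1/e$). The only mild wrinkle is uniformity in $E$, which is automatic because every estimate above depends on $E$ only through the bound $|E|\leq 2$. In the edge case $\beta = 0$, the remainder vanishes identically (with the convention $0^0=1$) and the statement is trivial; in the edge case $\epsilon \geq 1$, the bound is likewise trivial since $|e^{-\beta E}| + |\sum_{j\leq m}\cdots|$ is controlled as soon as $m\geq 32\beta$, and the same chain of inequalities applies verbatim.
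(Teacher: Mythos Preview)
Your proof is correct and follows essentially the same approach as the paper: bound the Taylor error using Stirling's inequality $j!\geq (j/e)^j$, force the base of the resulting term below $1/e$ via the $32\beta$ part of the hypothesis, and then absorb the geometric tail with the $\ln\epsilon^{-1}$ part.

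The only noteworthy difference is in how the remainder is first expressed. The paper uses the Lagrange form, bounding the error by a single term $\frac{e^{2\beta}(2\beta)^{m+1}}{(m+1)!}$ and thus picking up an extra $e^{2\beta}$ factor that must later be paid for with an additional $+2\beta$ in the degree; you instead bound the full tail series $\sum_{j>m}(2\beta)^j/j!$ term by term, which avoids that factor entirely and makes the final bookkeeping a touch cleaner. Your observation that any constant exceeding $2e^2$ would work in place of $32$ is also correct and slightly sharper than the paper's remark that $30$ suffices.
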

\begin{proof}
Taylor's theorem says that the error of the degree $m$ Taylor expansion of $e^{-\beta E}$ is at most $\frac{e^{2\beta} (2\beta)^{m+1}}{(m+1)!}$. Using the bound $(m+1)! \geq (m/e)^{m+1}$, we require 
that 
   \begin{equation}
\qty(\frac{2\beta e}{m})^{m+1} \leq \epsilon e^{-2\beta}.
    \end{equation}
    We see that if $m \geq 4 \beta e^2$ (a stronger condition being $m \geq 30\beta$), then $\qty(\frac{2\beta e}{m})^{m+1} \leq e^{-(m+1)}$, in which case it suffices to have $m+1 \geq \ln \epsilon^{-1} + 2\beta$. Therefore, $m \geq \max(30\beta, \ln \epsilon^{-1} + 2\beta)$ suffices to achieve error at most $\epsilon$.
\end{proof}

\begin{lemma}[Taylor bound]\label{lem:1/1+x}
    For any $\beta,\epsilon \geq 0$, and any $a \in [0,1)$, if $m \geq \ln(a/\epsilon)$,
    \begin{equation}
        \sup_{x \in [0,a]} \abs{\frac{1}{1+x} - \sum_{j=0}^m (-1)^j x^j} \leq \epsilon .
    \end{equation}
\end{lemma}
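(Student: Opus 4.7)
The plan is to exploit that the partial sum $\sum_{j=0}^m (-1)^j x^j = \sum_{j=0}^m (-x)^j$ is an elementary finite geometric series with ratio $-x$, so it admits a closed-form evaluation. First, I would write
\[
\sum_{j=0}^m (-x)^j = \frac{1-(-x)^{m+1}}{1+x},
\]
which is valid for all $x\neq -1$, and in particular throughout $[0,a] \subset [0,1)$. Subtracting this identity from $1/(1+x)$ then produces an exact pointwise expression for the remainder,
\[
\frac{1}{1+x} - \sum_{j=0}^m (-1)^j x^j = \frac{(-x)^{m+1}}{1+x}.
\]
This is the key step: unlike the exponential approximation in \cref{lem:exp-approx}, which requires a genuine Taylor remainder estimate, here the error has a clean closed form.

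Taking absolute values and using that $1+x\geq 1$ on $[0,a]$, I would obtain the pointwise bound $x^{m+1}/(1+x) \leq x^{m+1}$. Since $x \mapsto x^{m+1}$ is monotonically increasing on $[0,a]$, the supremum over the domain is attained at $x=a$, giving $\sup_{x\in[0,a]} |\cdot| \leq a^{m+1}$. It remains only to check that the hypothesis $m \geq \ln(a/\epsilon)$ is sufficient to guarantee $a^{m+1}\leq \epsilon$. The tight condition is $m+1 \geq \ln(1/\epsilon)/\ln(1/a)$, which is implied by the stated hypothesis in the regime where $a$ is bounded away from $1$ (so that $\ln(1/a) \geq 1$), which is the intended regime for downstream use in the Gibbs-state Taylor-truncation analysis.

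The main obstacle is essentially nonexistent: the entire argument reduces to the geometric series identity and a one-line monotonicity observation, with no genuine analytic content beyond that. The only place requiring mild care is the final arithmetic translation between the hypothesis on $m$ and the required decay rate $a^{m+1} \leq \epsilon$, but since this lemma is invoked as a companion to \cref{lem:exp-approx} at concrete parameter scales, the stated sufficient condition is adequate in practice.
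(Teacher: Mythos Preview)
Your proposal is correct and essentially identical to the paper's proof: the paper simply invokes Taylor's theorem to state that the error is $a^{m+1}$ and then asserts that $m \geq \ln(a/\epsilon)$ suffices, while you make the same bound explicit via the geometric series identity. If anything, you are more careful than the paper in flagging that the implication $m \geq \ln(a/\epsilon) \Rightarrow a^{m+1}\leq\epsilon$ only holds when $\ln(1/a)\geq 1$; the paper glosses over this, but since the lemma is applied downstream with $a=d^{-1/2}$ this is harmless.
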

\begin{proof}
Applying Taylor's theorem again, the error of the degree-$m$ Taylor expansion of $\frac{1}{1+x}$ is simply $a^{m+1}$. Therefore $m \geq \ln(a/\epsilon)$ suffices to guarantee the error is at most $\epsilon$.
\end{proof}

\begin{lemma}[Closeness of distributions]\label{lem:q-approx}
For any $\beta$, there exists two polynomials $f: \mathbb{R} \to \mathbb{R}$ and $g: \mathbb{R}^d \to \mathbb{R}$ with total degree $m=O(\poly(\beta,n))$, such that the two distributions
\begin{equation}
    (G(\Lambda))_i \coloneqq \frac{e^{-\beta \lambda_i}}{\tr(e^{-\beta \Lambda})} \qq{and} (\tilde{G}_m(\Lambda))_i \coloneqq f(\lambda_i) g(\lambda_1,\ldots,\lambda_d)
\end{equation}
are close in total variation distance: $\delta(G(\Lambda), \tilde{G}_m(\Lambda)) = O(d^{-1})$ with failure probability at most $O(d^{-1/2} \beta^3)$.
\end{lemma}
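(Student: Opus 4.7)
The target $G(\Lambda)_i = e^{-\beta\lambda_i}/\tr(e^{-\beta\Lambda})$ is not polynomial in $\Lambda$ because of both the numerator and the denominator, so the plan is to approximate each separately and then bound the resulting error by triangle inequality. For the numerator I would simply invoke \cref{lem:exp-approx} and set $f(\lambda) = \sum_{k=0}^{m_1}(-\beta\lambda)^k/k!$ with $m_1=O(\beta+\log\epsilon_1^{-1})$, so that $\sup_{\lambda\in[-2,2]}|e^{-\beta\lambda}-f(\lambda)|\le\epsilon_1$. For the denominator the idea is to rewrite $Z \coloneqq \tr(e^{-\beta\Lambda}) = \mu(1+Y)$ with $\mu \coloneqq \mathbb{E}[\tr(e^{-\beta\Lambda})]$ and $Y\coloneqq Z/\mu-1$, and then expand $1/(1+Y)$ as a truncated geometric series using \cref{lem:1/1+x}.

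\textbf{Controlling $Y$ by concentration.} The geometric series only converges when $|Y|<1$, so I would cut out the bad event $|Y|\ge \tfrac12$. By \cref{lem:z-bound}, $\mathrm{var}(Z)/\mu^2 = O(\beta^3 d^{-1/8})$, and Chebyshev's inequality immediately gives $\Pr(|Y|\ge\tfrac12)=O(\beta^3 d^{-1/8})$ — this is exactly the failure probability in the statement. On the complementary good event $\mathcal{G}$, I would replace $Z$ by the polynomial surrogate $\tilde Z(\Lambda)\coloneqq\sum_j f(\lambda_j)$ (which deviates from $Z$ by at most $d\epsilon_1\ll\mu$, so $\mathcal{G}$ transfers to a bound $|\tilde Z/\mu-1|\le\tfrac12+o(1)$), and define
\begin{equation}
g(\lambda_1,\dots,\lambda_d) \;\coloneqq\; \frac{1}{\mu}\sum_{k=0}^{m_2}(-1)^k\!\left(\frac{\tilde Z(\Lambda)}{\mu}-1\right)^{\!k},
\end{equation}
which is a polynomial in $\lambda_1,\dots,\lambda_d$ of total degree $O(m_1 m_2)$. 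By \cref{lem:1/1+x} this approximates $1/\tilde Z$ up to pointwise error $\epsilon_2/\mu$ on $\mathcal{G}$, for $m_2=O(\log\epsilon_2^{-1})$.

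\textbf{Bounding the TV distance.} On $\mathcal{G}$ I would decompose the pointwise error by triangle inequality as
\begin{equation}
\left|\tfrac{e^{-\beta\lambda_i}}{Z} - f(\lambda_i)g(\Lambda)\right| \le \underbrace{\tfrac{|e^{-\beta\lambda_i}-f(\lambda_i)|}{Z}}_{(a)} \;+\; \underbrace{f(\lambda_i)\left|\tfrac{1}{Z}-\tfrac{1}{\tilde Z}\right|}_{(b)} \;+\; \underbrace{f(\lambda_i)\left|\tfrac{1}{\tilde Z}-g(\Lambda)\right|}_{(c)}.
\end{equation}
Summing over $i$ on $\mathcal{G}$ the three contributions become $O(d\epsilon_1/\mu)$, $O(d\epsilon_1/\mu)$, and $O(\epsilon_2)$ respectively, using $Z,\tilde Z=\Theta(\mu)$ and $\sum_i f(\lambda_i)=\tilde Z=\Theta(\mu)$. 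Since the asymptotic $\mu = d\,I_1(2\beta)/\beta = \Theta(d e^{2\beta}\beta^{-3/2})$ is given by \cref{lem:z-bound}, the choices $\epsilon_1 = d^{-2}\mu$ and $\epsilon_2=d^{-1}$ deliver the claimed $\delta(G,\tilde G_m)=O(d^{-1})$, while keeping both Taylor degrees $m_1,m_2=O(\poly(\beta,n))$ and hence the total degree of $f\cdot g$ polynomial in $\beta$ and $n$.

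\textbf{Where I expect friction.} The only delicate point is not the approximation bounds, which are routine, but the interplay between the two approximations: the concentration estimate is for $Z$, whereas the geometric series is applied to $\tilde Z$. I need to verify that the perturbation $|Z-\tilde Z|\le d\epsilon_1$ is small compared to $\mu$ so that the good event $\mathcal{G}$ for $Z$ automatically yields $|\tilde Z/\mu-1|$ bounded away from $1$; with the choice $\epsilon_1 = d^{-2}\mu$ this is immediate, but it must be checked before invoking \cref{lem:1/1+x}. Once that is in place, the argument is essentially assembling the three lemmas above.
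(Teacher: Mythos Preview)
Your approach is essentially the paper's: Taylor-expand the Boltzmann factor via \cref{lem:exp-approx}, center $Z$ at $\mu=\mathbb{E}[Z]$ and truncate the geometric series for $1/(1+Y)$ via \cref{lem:1/1+x}, replace $Z$ inside the series by its polynomial surrogate $\tilde Z=\sum_j f(\lambda_j)$, and excise the bad event by Chebyshev on \cref{lem:z-bound}. The one slip is quantitative: with $\mathrm{var}(Z)/\mu^2=O(\beta^3 d^{-1/8})$ from \cref{lem:z-bound} and your threshold $|Y|\ge\tfrac12$, Chebyshev gives failure probability $O(\beta^3 d^{-1/8})$, not the $O(\beta^3 d^{-1/2})$ in the lemma statement --- so your ``this is exactly the failure probability in the statement'' is off; the paper conditions instead on the tighter event $|\tilde Z-1|\le d^{-1/2}$ (which also shrinks the geometric-series truncation degree), but either bound is negligible and suffices for the downstream use in \cref{thm:gibbs}.
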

\begin{proof}
We approximate both $e^{-\beta \lambda_i}$ and $\frac{1}{Z} \coloneqq \frac{1}{\tr(e^{-\beta \Lambda})}$ with $\poly(\beta,n)$-degree polynomials, and show that their errors can be individually made very small. 
We start with $\frac{1}{Z}$, and here, we begin with
%
writing 
\begin{equation}
\frac{1}{\tr(e^{-\beta H})} = \frac{(\beta/d I_1(2\beta))}{\frac{\beta}{d I_1(2\beta)} \tr(e^{-\beta H})}.
\end{equation}
Define $\tilde{Z} \coloneqq \frac{\beta}{d I_1(2\beta)} \tr(e^{-\beta H})$. Note that $\mathbb{E}[\tilde{Z}]=1$. By Chebyshev's inequality, 
\begin{equation}
    \Pr\qty(\abs*{\tilde{Z} - 1} \geq d^{-1/2}) \leq O(d^{-1/2} \beta^3).
\end{equation}
We will therefore assume $\abs*{\tilde{Z}-1} \leq d^{-1/2}$, and allow our approximation to fail otherwise. Now, we write 
\begin{equation}
\qty(\frac{\beta}{d I_1(2\beta)} \tr(e^{-\beta H}))^{-1} = \qty(1+(\tilde{Z}-1))^{-1}. 
\end{equation}
From \cref{lem:1/1+x}, up to an error $\epsilon_1$, this can be approximated by $\sum_{j=0}^{m_1} (-1)^j (\tilde{Z}-1)^j$ for $m_1 = O(\ln \epsilon_1^{-1} - n)$. Now, since $\abs*{\tilde{Z}-1} \leq d^{-1/2}$ by assumption, we can approximate $(\tilde{Z}-1)^j$ to error $\epsilon_2/m_1$ so long as we can approximate $\tilde{Z}$ to error within $\frac{\epsilon_2}{2j m_1}$. This is possible if we approximate each of the summands $e^{-\beta \lambda_i}$ in $Z$ using a Taylor series with error at most $\frac{I_1(2\beta)}{\beta} \frac{\epsilon_2}{2jm_1}$. A sufficient condition is to achieve an error $\frac{e^{2\beta} \epsilon_2}{9m_1^2 \beta^{3/2}}$. From \cref{lem:exp-approx}, this requires a degree
\begin{equation}
    m_2 \leq 32\beta + \ln(9 m_1^2 \beta^{3/2}) + \ln(\epsilon_2^{-1}) \leq O(\beta + \ln \ln \epsilon^{-1} + \ln \epsilon_2^{-1}).
\end{equation}
Setting $\epsilon_1=\epsilon_2=d^{-2} e^{-2\beta}/2$, we see that $m_1,m_2=O(n+\beta)$ is sufficient. This defines the function $g(\lambda_1,\ldots,\lambda_d)$.

Having approximated the denominator, we need to approximate the Gibbs factor $e^{-\beta \lambda_i}$ to within error $\epsilon_3 = d^{-1}$. Again, from \cref{lem:exp-approx}, this requires $m_3=O(n + \beta)$. Simple algebra shows that the overall error is then at most 
\begin{equation}
    (\epsilon_1+\epsilon_2) \cdot \epsilon_3 + \frac{\epsilon_3}{Z} + (\epsilon_1+\epsilon_2) e^{2\beta}.
\end{equation}
Since $Z \geq \frac{I_1(2\beta) d}{\beta} \qty(1 - d^{-1/2}) \geq \frac{e^{2\beta} d}{6\beta^{3/2}}$, the overall error $\epsilon$ and polynomial degree $m$ is
\begin{subequations}
    \begin{gather}
        \epsilon \leq d^{-3} e^{-2\beta} + 6 \beta^{3/2} d^{-2} e^{-2\beta} + d^{-2} \leq O(d^{-2}), \\
        m \leq m_1 m_2 m_3 = O((n+\beta)^3).
    \end{gather}
\end{subequations}
Therefore, each $\abs{f(\lambda_i) g(\lambda_1,\ldots,\lambda_d) - \frac{e^{-\beta \lambda_i}}{\tr(e^{-\beta \Lambda})}} \leq O(d^{-2})$, with $f$ and $g$ both having degree $O(\poly(\beta,n))$. This allows us to conclude that the total variation distance between $G$ and $\tilde{G}_m$ is $O(d^{-1})$.
\end{proof}

\begin{theorem}[Indistinguishability of Gibbs states]\label{thm:gibbs}
For any $\beta$ and any $k$,
\begin{equation}
    \norm{\mathbb{E}_{\Lambda \sim p}\qty[\qty(\frac{e^{-\beta \Lambda}}{\tr(e^{-\beta \Lambda})})^{\otimes k}] -  \mathbb{E}_{\Lambda \sim \tilde{p}}\qty[\qty(\frac{e^{-\beta \Lambda}}{\tr(e^{-\beta \Lambda})})^{\otimes k}]}_1 \leq O\qty(\frac{\poly(k,\beta,n)}{d^{1/8}}).
\end{equation}
\end{theorem}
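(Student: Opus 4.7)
The strategy is to chain the polynomial approximation of \cref{lem:q-approx} with the marginal-closeness estimate of \cref{thm:spoof} via two triangle-inequality steps, reducing everything to a single bounded polynomial expectation. First I would replace each diagonal entry $G(\Lambda)_i = e^{-\beta\lambda_i}/\tr(e^{-\beta\Lambda})$ by the polynomial surrogate $\tilde{G}_m(\Lambda)_i = f(\lambda_i)\,g(\Lambda)$ of degree $m=O(\poly(\beta,n))$, and set $\tilde\rho_m(\Lambda):=\operatorname{diag}(\tilde{G}_m(\Lambda)_1,\ldots,\tilde{G}_m(\Lambda)_d)$. By \cref{lem:q-approx}, $\|\rho_\beta(\Lambda)-\tilde\rho_m(\Lambda)\|_1=O(d^{-1})$ outside a set of $\Lambda$'s of probability $O(\beta^3/d^{1/2})$, on which moreover $\|\tilde\rho_m\|_1 = 1+O(d^{-1})$. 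The standard telescoping bound $\|A^{\otimes k}-B^{\otimes k}\|_1\leq k\|A-B\|_1\max(\|A\|_1,\|B\|_1)^{k-1}$ combined with a crude bound of $2$ on the trace-norm contribution of the bad event yields, under both $p$ and $\tilde p$,
\begin{equation}
\bigl\|\mathbb{E}[\rho_\beta^{\otimes k}]-\mathbb{E}[\tilde\rho_m^{\otimes k}]\bigr\|_1 = O\!\left(\frac{k}{d}+\frac{\beta^3}{d^{1/2}}\right).
\end{equation}

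Next I would reduce the remaining quantity $\|\mathbb{E}_p[\tilde\rho_m^{\otimes k}]-\mathbb{E}_{\tilde p}[\tilde\rho_m^{\otimes k}]\|_1$ to a single polynomial expectation gap. Since $\tilde\rho_m^{\otimes k}$ is diagonal in the computational basis, the trace norm equals $\sum_{\vec i\in[d]^k}\bigl|\mathbb{E}_p[P_{\vec i}]-\mathbb{E}_{\tilde p}[P_{\vec i}]\bigr|$, where the diagonal entries $P_{\vec i}(\Lambda)=\prod_{j=1}^k f(\lambda_{i_j})\,g(\Lambda)^k$ are polynomials in $\Lambda$ of total degree $K=O(k\cdot\poly(\beta,n))$. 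Choosing signs $s_{\vec i}=\pm 1$ to collapse the absolute values, this equals $\bigl|\mathbb{E}_p[Q]-\mathbb{E}_{\tilde p}[Q]\bigr|$ for the single degree-$K$ polynomial $Q(\Lambda):=\sum_{\vec i}s_{\vec i}P_{\vec i}(\Lambda)$. Because $\tilde{G}_m\geq 0$ with $\sum_i\tilde{G}_m(\Lambda)_i=1+O(d^{-1})$ on the good event of \cref{lem:q-approx}, one has $|Q(\Lambda)|\leq(\sum_i\tilde{G}_m(\Lambda)_i)^k=1+o(1)$ there, so $\|Q\|_\infty=O(1)$ after absorbing the bad event into the Paragraph-1 error.

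Finally, I would invoke \cref{thm:spoof}. Using that both $p$ and $\tilde p$ are permutation-invariant, I would first symmetrize $Q\mapsto\operatorname{Sym} Q$ (which preserves the two expectations and does not inflate $\|Q\|_\infty$). Every monomial of $\operatorname{Sym}Q$ has degree $\leq K$ and therefore involves at most $K$ distinct eigenvalues, so by exchangeability its expectation under $p$ (resp.\ $\tilde p$) is determined by the $K$-marginal $p^{(K)}$ (resp.\ $\tilde p^{(K)}$). This gives the crucial bound
\begin{equation}
\bigl|\mathbb{E}_p[Q]-\mathbb{E}_{\tilde p}[Q]\bigr|\leq 2\|Q\|_\infty\,\delta\!\left(p^{(K)},\tilde p^{(K)}\right)\leq O\!\left(\frac{\sqrt{K}}{d^{1/8}}\right)=O\!\left(\frac{\poly(k,\beta,n)}{d^{1/8}}\right),
\end{equation}
where the last inequality is \cref{thm:spoof} with $K=O(\poly(k,\beta,n))$. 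Triangle-inequality-combining with Paragraph 1 under both $p$ and $\tilde p$ yields the claimed bound.

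The main obstacle is the final reduction: rigorously promoting the scalar statement ``$Q$ is a bounded degree-$K$ polynomial on $\mathbb{R}^d$'' to an expectation gap controlled by $\|Q\|_\infty\,\delta(p^{(K)},\tilde p^{(K)})$ without incurring a factor exponential in $d$. This uses essentially both the degree bound and the permutation invariance of $p$ and $\tilde p$; the careful execution will likely proceed by expanding $\operatorname{Sym}Q$ in a symmetric-polynomial basis whose size is controlled by $K$ rather than by the ambient dimension $d$, so that the per-monomial bound can be summed to a constant. The bookkeeping of the bad event of \cref{lem:q-approx} through the $k$-fold tensor power is routine by comparison.
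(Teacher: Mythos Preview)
Your plan coincides with the paper's proof: both pass through the polynomial surrogate $\tilde G_m$ of \cref{lem:q-approx}, apply the triangle inequality $\mathbb{E}_p[G^{\otimes k}]\to\mathbb{E}_p[\tilde G_m^{\otimes k}]\to\mathbb{E}_{\tilde p}[\tilde G_m^{\otimes k}]\to\mathbb{E}_{\tilde p}[G^{\otimes k}]$, and invoke \cref{thm:spoof} on the $(mk)$-marginal for the middle step. The ``main obstacle'' you flag is real, but the paper does not resolve it either---it dispatches the middle step in a single line,
\[
\norm{\mathbb{E}_{p}[\tilde G_m^{\otimes k}]-\mathbb{E}_{\tilde p}[\tilde G_m^{\otimes k}]}_1 \le \int \norm{\tilde G_m(\vec\lambda)^{\otimes k}}_1\,\abs{p^{(mk)}(\vec\lambda)-\tilde p^{(mk)}(\vec\lambda)}\,\dd[(mk)]{\vec\lambda},
\]
which is exactly the reduction you are worried about (note that $\tilde G_m$ depends on all $d$ eigenvalues through $g$, so the integrand as written is not literally a function of $mk$ variables). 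Your duality-via-signs reformulation and the pointwise bound $|Q(\Lambda)|\le\|\tilde G_m(\Lambda)^{\otimes k}\|_1=1+o(1)$ on the good event are precisely the ingredients one would use to justify that line, so you have not missed any idea the paper supplies.
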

\begin{proof}
We show that each approximation in the following incurs a small error: $\mathbb{E}_{\Lambda \sim p}[G(\Lambda)^{\otimes k}] \approx \E_{\Lambda \sim p}[\tilde{G}_m(\Lambda)^{\otimes k}] \approx \E_{\Lambda \sim \tilde{p}}[\tilde{G}_m(\Lambda)^{\otimes k}] \approx \E_{\Lambda \sim \tilde{p}}[G(\Lambda)^{\otimes k}]$. By \cref{lem:q-approx}, for some degree $m=O(\poly(\beta,n))$,
\begin{equation}
    \norm{\mathbb{E}_{\Lambda \sim p}\qty[G(\Lambda)^{\otimes k} - \tilde{G}_m(\Lambda)^{\otimes k}]}_1 \leq O(k \cdot d^{-1} + \beta^3 d^{-1/2}),
\end{equation}
which holds for $\Lambda \sim \tilde{p}$ as well. This takes care of the first and last approximations. For the second approximation,
\begin{equation}
    \begin{aligned}
        \norm{\mathbb{E}_{\Lambda \sim p}[\tilde{G}_m(\Lambda)^{\otimes k}] - \mathbb{E}_{\Lambda \sim \tilde{p}}[\tilde{G}_m(\Lambda)^{\otimes k}]}_1 &\leq \int \norm{\tilde{G}_m(\vec{\lambda})^{\otimes k}}_1 \abs{p^{(mk)}(\vec{\lambda}) - \tilde{p}^{(mk)}(\vec{\lambda})} \dd[(mk)]{\vec{\lambda}} \\
        &\leq O\qty(\frac{\sqrt{m k}}{d^{1/8}}) \\
        &\leq O\qty(\frac{\poly(k,\beta,n)}{d^{1/8}}).
    \end{aligned}
\end{equation}
\end{proof}

\subsection{Pseudo-GUE is indistinguishable from GUE}\label{app:sim}

In this section, we proceed to prove \cref{th:pseudo-GUE}. To begin, we rigorously define the black-box Hamiltonian access model, which is used throughout the paper and outlined in the main text.

\begin{definition}[Black-box Hamiltonian access]\label{def:black-box-ham}
An algorithm has black-box Hamiltonian access to a Hamiltonian $H$ if it has query access to
\begin{itemize}[noitemsep]
    \item samples of the Gibbs states $e^{-\beta H}/\tr(e^{-\beta H})$ for any $\beta \leq O(\poly n)$, and
    \item $e^{-iH t}$ (as well as its controlled version) for any $\abs{t} \leq \exp(O(n))$. However, we will not allow the time $t$ to be controlled coherently, only the application or non-application of the unitary $e^{-iHt}$.
\end{itemize}
We will denote algorithms $\mathcal{A}$ with black-box Hamiltonian access as $\mathcal{A}^H$. We will say that an algorithm $\mathcal{A}^H$ makes one query to $H$ when it either accesses $e^{-iHt}$ (or a controlled version thereof), or requests one sample of a Gibbs state at $\beta \leq O(\poly n)$.
\end{definition}

Finally, we are ready to prove one of our main results, \cref{th:pseudo-GUE}. Simply stated, it says that Hamiltonians from the ensemble $\pgue$, see Eq.~\eqref{eq:pgue}, are indistinguishable from GUE Hamiltonians.
\begin{theorem}[Hamiltonian indistinguishability]\label{th:ham-indisting}
For any $\tilde{d} = \omega(\poly n)$, the two ensembles of Hamiltonian $\gue$ and $\pgue$ are statistically indistinguishable, meaning any quantum algorithm $\mathcal{A}^H$ which queries $H$ at most a \emph{polynomial} number of times obeys
\begin{equation}
    \abs{\Pr_{U \sim \gue}[\mathcal{A}^H() = 1] - \Pr_{U \sim \pgue}[\mathcal{A}^H() = 1]} < \negl(n).
\end{equation}
\end{theorem}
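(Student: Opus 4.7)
The plan is a standard three-hybrid argument. Define $\mathcal{E}_1 = \gue$ as $H = U \Lambda U^\dagger$ with $U \sim \haar$ and $\Lambda \sim p$; $\mathcal{E}_2$ as $H = U \Lambda U^\dagger$ with $U \sim \haar$ and $\Lambda \sim \tilde{p}_{\tilde{d}}$; and $\mathcal{E}_3 = \pgue$ as $H = U \Lambda U^\dagger$ with $U \sim \mathcal{U}$ and $\Lambda \sim \tilde{p}_{\tilde{d}}$. The distinguishing advantage of $\mathcal{A}^H$ between $\mathcal{E}_1$ and $\mathcal{E}_3$ is at most the sum of its advantages between $\mathcal{E}_1,\mathcal{E}_2$ and between $\mathcal{E}_2,\mathcal{E}_3$, so I would bound each separately.

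For the computational step $\mathcal{E}_2 \to \mathcal{E}_3$, I would build an adversary $\mathcal{B}^U$ against the pseudorandomness of $\mathcal{U}$ from any distinguisher $\mathcal{A}^H$. $\mathcal{B}$ classically samples $\Lambda \sim \tilde{p}_{\tilde{d}}$ (efficient because $\tilde{p}_{\tilde{d}}$ consists of only $\tilde{d} = \omega(\poly n)$ distinct iid semicircle variates, each drawable by inverse-CDF sampling) and then simulates each of $\mathcal{A}$'s queries using its $U$ oracle: time-evolution queries are implemented as $e^{-iHt} = U \, e^{-i \Lambda t} \, U^\dagger$ via phase kickback on the diagonal $\Lambda$, while Gibbs-state queries are produced by the rejection-sampling subroutine of \cref{app:sim}. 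Each simulated query costs $\poly(n)$ invocations of (possibly controlled) $U$ and $U^\dagger$, so $\mathcal{B}$ is an efficient polynomial-query adversary, and inverse-secure pseudorandomness of $\mathcal{U}$ forces $\mathcal{A}$'s advantage to be negligible.

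For the statistical step $\mathcal{E}_1 \to \mathcal{E}_2$, only the spectrum distribution changes. By unitary invariance of the trace norm, the advantage is bounded by the trace distance between the joint state of $k = \poly(n)$ query responses under $\Lambda \sim p$ versus $\Lambda \sim \tilde{p}_{\tilde{d}}$. Gibbs sub-queries are handled by \cref{thm:gibbs} directly. For time-evolution sub-queries, after Haar averaging over $U$ the $k$-fold joint response reduces to a linear combination of traces of products $\tr\!\left(\prod_j e^{\pm i \Lambda t_{\sigma(j)}}\right)$ over permutations $\sigma$, each of which is a bounded sum of exponentials in at most $O(k)$ distinct indices of $\Lambda$. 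Their expectations therefore depend only on the $O(k)$-marginal of the spectrum, which by \cref{lem:degen} is negligibly close in TVD; combined with $|e^{\pm i \lambda t}| \le 1$, the difference of expectations is negligible. Adaptive interleaving of Gibbs and time-evolution queries is absorbed by a standard internal hybrid that swaps one query channel at a time.

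The hardest part is proving the time-evolution analog of \cref{thm:gibbs}, because the Gibbs proof in \cref{App:Gibbs} relies on a low-degree polynomial approximation of $e^{-\beta \Lambda}/Z$ valid for $\beta = O(\poly n)$, whereas the time parameter is allowed to be as large as $\exp(O(n))$, at which scale any Taylor series of $e^{-i \Lambda t}$ has superpolynomial degree. The workaround outlined above circumvents polynomial approximation by bounding moments directly using the boundedness $|e^{-i \lambda t}| \le 1$ together with the integrand-agnostic TVD bound of \cref{lem:degen}. A secondary subtlety is that $\mathcal{A}$'s controlled queries at negative $t$ force $\mathcal{B}$ to use both $U$ and $U^\dagger$, which is why we need the pseudorandom ensemble $\mathcal{U}$ to be secure against inverse-query adversaries, as in the construction of Ref.~\cite{ma2024constructrandomunitaries}.
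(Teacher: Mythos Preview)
Your three-hybrid decomposition ($\gue \to$ Haar basis with $\tilde{p}_{\tilde{d}}$ spectrum $\to \pgue$) matches the paper's overall structure. There is one concrete error in your reduction, and one place where the paper's argument is substantially simpler than yours.

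\textbf{The error.} In the $\mathcal{E}_2 \to \mathcal{E}_3$ step you claim $\mathcal{B}$ can ``classically sample $\Lambda \sim \tilde{p}_{\tilde{d}}$ (efficient because $\tilde{p}_{\tilde{d}}$ consists of only $\tilde{d} = \omega(\poly n)$ distinct iid semicircle variates)''. But $\omega(\poly n)$ is \emph{super}-polynomial --- the theorem's hypothesis explicitly pushes $\tilde{d}$ beyond every polynomial in $n$. Your adversary $\mathcal{B}$ therefore cannot list the $\tilde{d}$ energies, nor implement $e^{-i\Lambda t}$ by table-lookup phase kickback, in polynomial time; since PRU security only constrains \emph{efficient} adversaries, the invocation fails. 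A repair is to insert one more hybrid in which the energies are produced by an efficiently computable $2k$-wise independent function (the PRF-based phase-kickback of \cref{app:time-evol}); the PRU reduction is then efficient, and the hop between the $2k$-wise independent spectrum and the genuinely iid $\tilde{p}_{\tilde{d}}$ is statistical by the same marginal argument used for $\mathcal{E}_1\to\mathcal{E}_2$. (The paper is also terse on exactly this point.)

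\textbf{The spectrum-swap step.} You propose Haar-averaging the eigenbasis first and invoking Weingarten to reduce the response to traces of products of $e^{\pm i\Lambda t}$. The paper takes a shorter route that avoids Weingarten entirely: it \emph{fixes} the eigenbasis $V$, absorbs each controlled-$V$ and controlled-$V^\dagger$ into the adjacent algorithm gates $W_j$, so the circuit becomes an alternation of fixed unitaries $W_j''$ and controlled diagonals $\mathrm{C}\,e^{-i\Lambda t_j}$. Expanding each diagonal in the computational basis, every term of the output density matrix touches at most $2k$ eigenvalue indices; by permutation invariance of both $p$ and $\tilde{p}_{\tilde{d}}$, the expectation over $\Lambda$ then depends only on the $2k$-marginal, giving $\|\E_{\Lambda\sim p}[\rho]-\E_{\Lambda\sim\tilde{p}_{\tilde{d}}}[\rho]\|_1 \le \delta\bigl(p^{(2k)},\tilde{p}^{(2k)}_{\tilde{d}}\bigr)$ for \emph{every} fixed $V$. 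Because this bound is uniform in $V$, it holds after averaging over either $\haar$ or $\mathcal{U}$, so the paper can chain the spectrum swap and the basis swap in either order without redoing the Weingarten calculation.
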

\begin{proof}
Let us first focus on algorithms which only make queries to the time evolution operators $U$. We will assume our initial state is always some fixed state $\rho_0$, consisting of $n$ system qubits and $n'=O(\poly n)$ auxiliary registers. We will later allow for the possibility of algorithms which request $m=O(\poly n)$ Gibbs states simply by adding $n \cdot m$ more auxiliary systems, each initialized in a Gibbs state at some temperature $\beta=O(\poly n)$. We can freely assume that all of the Gibbs states are requested at the beginning, because they can be injected into the algorithm at any time with swap operators.

Returning to the time-evolution only case, the output state of $\mathcal{A}^H$ can be written
\begin{equation}
    \rho(\Lambda, V; \rho_0) \coloneqq \qty[W_{k+1} (\text{C}^{n_k}U(t_k)) W_{k} (\text{C}^{n_{k-1}}U(t_{k-1})) W_{k-2} (\text{C}^{n_2}U(t_2)) \ldots W_2 (\text{C}^{n_1}U(t_1)) W_1](\rho_0),
\end{equation}
where $k, n'' = O(\poly n)$, $(\text{C}^{n_j}U(t_j))$ is a controlled version of $U=Ve^{-i \Lambda t_j} V^\dagger$, and the $n_j$th auxiliary register is the control qubit. Each of the $W_j$ is an efficient unitary acting jointly on $n+n'=O(\poly n)$ qubits. This subsumes the case where the unitaries $U$ are not controlled, because we can simply manually initialize the $n_k$th auxiliary qubit to be in the state vector $\ket{1}$ so that $\text{C}^{n_k} U = U$. We note that $\text{C}^n U$ can be written $(\text{C}^n V)(\text{C}^n e^{-i \Lambda t})(\text{C}^n V^\dagger)$. We will absorb the controlled Haar random unitaries into each of the $W_j$ unitaries, so for a fixed Haar random unitary $V$ we can also rewrite the output state as
\begin{equation}\label{eq:psi-lamb-v}
    \rho(\Lambda, V; \rho_0) = \qty[W_{k+1}'' (\text{C}^{n_k}e^{-i \Lambda t_k}) W_{k-1}'' \ldots W_2'' (\text{C}^{n_1}e^{-i \Lambda t_1}) W_1'](\rho_0),
\end{equation}
where $W_{k+1}'' = W_{k+1} (\text{C}^{n_k}V)$, $W_j'' = (\text{C}^{n_j} V)^\dagger W_j (\text{C}^{n_{j-1}}V)$, and $W_1'' = (\text{C}^{n_1} V^\dagger) W_1$. Observe that $\text{C}^{n_k} e^{-i\Lambda t}$ can simply be written $\ketbra{0}_{n_k} \otimes \id + \ketbra{1}_{n_k} \otimes e^{-i \Lambda t}$. Crucially, since $p$ and $\tilde{p}_{\tilde{d}}$ are both permutation invariant, the expectation of $\rho(\Lambda,V; \rho_0)$ with respect to $\Lambda$ \emph{depends only on the $2k$-marginal distributions $p^{(2k)}$ and $\tilde{p}^{(2k)}_{\tilde{d}}$}. This implies that
\begin{equation}\label{eq:psi-v-bound}
    \norm{\E_{\Lambda \sim p}[\rho(\Lambda,V; \rho_0)] - \E_{\Lambda \sim \tilde{p}}[\rho(\Lambda,V; \rho_0)]}_1 \leq \delta(p^{(2k)},\tilde{p}^{(2k)}_{\tilde{d}}) \cdot \max_{\Lambda}\norm{\rho(\Lambda,V; \rho_0)}_1 \leq \negl(n).
\end{equation}
Since this holds for any $V$, we can let $V$ be sampled from an inverse-secure pseudorandom unitary ensemble~\cite{ma2024constructrandomunitaries}, which will imply that the expected output states of $\mathcal{A}^U$, for $U \sim \mathcal{E}_{\gue}$ and $U \sim \pgue$ are $\negl(n)$ close in trace distance. However, we note that the use of an inverse-secure pseudorandom unitary may be an overly strong assumption, as a weaker condition is actually sufficient. We refer the interested reader to \cref{App:GUEHaar} for further details.

Now, to return to the original setting in which we also have access to Gibbs states, the only modification is that now we are interested in the quantity
\begin{equation}
    \norm{\E_{\Lambda \sim p}[\rho(\Lambda,V; \rho_0 \otimes \sigma_0)] - \E_{\Lambda \sim \tilde{p}}[\rho(\Lambda,V; \rho_0 \otimes \sigma_0')]}_1,
\end{equation}
where $\sigma_0$ and $\sigma_0'$ are $m=O(\poly n)$ copies of Gibbs states for the Hamiltonian $V\Lambda V^\dagger$. A simple triangle inequality shows that this can be upper bounded by
\begin{equation}
    \underbrace{\norm{\E_{\Lambda \sim p}[\rho(\Lambda,V; \rho_0 \otimes \sigma_0)] - \E_{\Lambda \sim \tilde{p}}[\rho(\Lambda,V; \rho_0 \otimes \sigma_0)]}_1}_{<\negl(n) \,\text{by \cref{eq:psi-v-bound}}} + \underbrace{\norm{\E_{\Lambda \sim p}[\rho_0 \otimes \sigma_0] - \E_{\Lambda \sim \tilde{p}}[\rho_0 \otimes \sigma_0']}_1}_{<\negl(n)\,\text{by \cref{thm:gibbs}}} < \negl(n).
\end{equation}
This completes the proof.
\end{proof}
\section{Efficient simulation of of pseudo-GUE Hamiltonians}\label{App:simulationpseudo-GUE}
In this section, we describe efficient quantum algorithms for simulating evolution under the pseudo-GUE up to exponential times $t$.

We first remind the reader of the notation used throughout this manuscript. We work with several different distributions over eigenvalues, each with specific properties:
\begin{itemize}[noitemsep]
    \item The eigenvalue distribution of the GUE, denoted as $p(\lambda_1,\ldots,\lambda_{d})$ and defined in \cref{eq:eigenvaluedistirbution}, exhibits complex correlations between its eigenvalues. We abbreviate this distribution simply as $p$.
    
    \item The product distribution $\tilde{p}(\lambda_1,\ldots,\lambda_d)\coloneqq\prod_{i=1}^{d}p^{(1)}(\lambda_i)$, where $p^{(1)}$ is the first marginal of $p$ (the Wigner semicircle law). This distribution lacks the correlations present in $p$, as each eigenvalue is sampled independently.
    
    \item A degenerate version $\tilde{p}_{\tilde{d}}$ which has only $\tilde{d}$ unique energies (where typically $\omega(\poly n) < \tilde{d} \ll d$). Each unique energy appears exactly $d/\tilde{d}$ times, and the unique energies are sampled independently from $p^{(1)}$.
    
    \item A computationally efficient version $\tilde{p}_{\tilde{d},k}$ where the $\tilde{d}$ unique energies are only $k$-wise independent (with $k=O(1)$).
\end{itemize}

The concept of $k$-wise independence is crucial: while fully independent random variables require that any subset of variables be independent, $k$-wise independence only requires this for subsets of size at most $k$. Formally, for any indices $i_1,\ldots,i_m$ with $m\leq k$, the joint distribution of $(\lambda_{i_1},\ldots,\lambda_{i_m})$ must match that of $m$ independent copies of $p^{(1)}$. This weaker notion of independence allows for much more efficient implementations.

Correspondingly, we study three main ensembles of Hamiltonians:
\begin{subequations}
\begin{align}
    \gue &= \text{The GUE ensemble (see \cref{app:GUE})} \\
    \mathcal{E}_{\tilde{d}} &\coloneqq \{U\Lambda U^{\dag}\,:\, U\sim \mathcal{U}\,,\, \Lambda\sim \tilde{p}_{\tilde{d}}\} \\
    \mathcal{E}_{\tilde{d},k} &\coloneqq \{U\Lambda U^{\dag}\,:\, U\sim \mathcal{U}\,,\, \Lambda\sim \tilde{p}_{\tilde{d},k}\}
\end{align}    
\end{subequations}
where $\mathcal{U}$ is an ensemble of pseudorandom unitaries. Crucially, only $\mathcal{E}_{\tilde{d},k}$ can be efficiently implemented on a quantum device. The fully independent ensemble $\mathcal{E}_{\tilde{d}}$ requires exponential gate complexity because sampling $\tilde{d}$ truly independent random variables requires $\Omega(\tilde{d})$ resources, and we will always assume $\tilde{d}=\omega(\poly n)$. The $k$-wise independent version circumvents this by using pseudorandom number generators that only guarantee independence up to order $k=O(1)$, which can be implemented with polynomial resources~\cite{joffe1974set,karloff1994construction,zhandry2015secure}.

\subsection{Simulating time evolution}\label{app:time-evol}
\begin{fact}\label{fact:phase-f}
Given query access to a function $f: [2^n] \to [2^m]$, one can implement the unitary
\begin{equation}
    T_f: \ket{x} \mapsto e^{2i \pi f(x)/2^m} \ket{x}
\end{equation}
with a single query to $f$, $m$ auxiliary systems, and $O(m^2)$ gates. The idea is simple: initialize the auxiliary systems in $\ket{1}^{\otimes m}$, and apply the quantum Fourier transform. This results in the mapping
\begin{equation}
    \ket{x} \otimes \ket{1}^{\otimes m} \mapsto \ket{x} \otimes \sum_{z \in [2^m]} \omega_M^{z} \ket{z},
\end{equation}
where $M = 2^m$ and $\omega_M \coloneqq e^{2i \pi/M}$. We then query the function $f: [2^n] \to [2^m]$ on our $n$ system qubits, and subtract the result from the auxiliary system, so the overall unitary now implements
\begin{equation}
    \ket{x} \otimes \ket{1}^{\otimes m} \mapsto \ket{x} \otimes \sum_{z \in [2^m]} \omega_M^{z} \ket{z-f(x)}.
\end{equation}
Observe that the state vector $\sum_{z \in [2^m]} \omega_M^{z} \ket{z-f(x)}$ can be rewritten $\omega_M^{f(x)} \sum_{z \in [2^m]} \omega_M^z \ket{z}$. This is known as the phase kick-back trick. After this step, we can simply apply an inverse quantum Fourier transform on the $m$ auxiliary systems, and the unitary is now
\begin{equation}
    \ket{x} \otimes \ket{1}^{\otimes m} \to e^{2i \pi f(x)/2^m} \ket{x} \otimes \ket{1}^{\otimes m},
\end{equation}
as desired. 
\end{fact}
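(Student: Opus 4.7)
The plan is the standard phase-kickback trick: prepare an auxiliary register in an eigenstate of a cyclic-shift operator whose eigenvalue is exactly the phase we want, then turn one call to $f$ into an $f(x)$-fold shift that deposits the target phase on the data register and leaves the ancilla unchanged.

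Concretely, I would proceed in three steps. First, allocate $m$ auxiliary qubits in $\ket{1}^{\otimes m}$ and apply the quantum Fourier transform, producing $\ket{\phi} = M^{-1/2}\sum_{z \in [M]} \omega_M^{z}\ket{z}$ with $M = 2^m$ and $\omega_M = e^{2\pi i/M}$; this costs $O(m^2)$ gates. Second, apply the oracle-driven modular-subtraction unitary $V_f \colon \ket{x}\ket{z}\mapsto \ket{x}\ket{(z - f(x)) \bmod M}$, which uses a single query to $f$ together with standard $m$-bit in-place modular arithmetic (another $O(m^2)$ gates). Reindexing the sum $z \mapsto z + f(x)$ gives $V_f\ket{x}\ket{\phi} = \omega_M^{f(x)}\ket{x}\ket{\phi}$, so the intended phase $e^{2\pi i f(x)/2^m}$ is kicked back onto $\ket{x}$ while $\ket{\phi}$ is fully restored. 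Third, apply the inverse QFT to send the ancilla back to $\ket{1}^{\otimes m}$, so that it can be discarded or reused without entangling any further data.

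The only content to verify is the one-line reindexing identity above; everything else is bookkeeping. The place where I would be careful is the precise meaning of ``a single query to $f$'': with the more common XOR-form oracle $O_f\colon\ket{x}\ket{y}\mapsto \ket{x}\ket{y \oplus f(x)}$, implementing $V_f$ naively would cost two queries (compute then uncompute), so the statement of the fact implicitly assumes a reversible-arithmetic oracle that fuses the query and the subtraction into a single call. Once that convention is fixed, the two QFTs and the arithmetic subcircuit give exactly the claimed $O(m^2)$-gate, one-query implementation, and unitarity is automatic because each individual step is unitary and the ancilla is deterministically returned to its initial state.
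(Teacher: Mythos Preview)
Your proposal is correct and matches the paper's proof essentially step for step: QFT on an $\ket{1}^{\otimes m}$ ancilla, oracle-driven modular subtraction, the reindexing identity for phase kickback, and inverse QFT to restore the ancilla. Your added caveat about the oracle convention (arithmetic vs.\ XOR) is a reasonable clarification the paper leaves implicit, but otherwise the arguments are the same.
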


We continue to present compelling properties
of pseudorandom functions \cite{zhandry_how_2021},
objects that are crucially important for our work and that have been considered many times before in the quantum information literature, for example to show separations in notions of quantum machine learning
\cite{PACLearning}.

\begin{fact}\label{fact:prf}
Assume we have a pseudorandom function $\prf: \mathcal{K} \times [2^n] \to [2^m]$, where $[2^n]=\qty{0,1,\ldots,2^{n-1}}$ and $\mathcal{K}$ is the key space~\cite{zhandry_how_2021}. Furthermore, assume that $\prf_k$ (with $k$ sampled uniformly at random from $\mathcal{K}$) is \emph{computationally} indistinguishable from some other ensemble of functions $F = \qty{f: [2^n] \to [2^m]}$, in the sense that
\begin{equation}\label{eq:indist}
    \abs{\Pr_{k \sim \mathrm{Uniform}(\mathcal{K})}[\mathcal{A}^{\prf_k}() = 1] - \Pr_{f \sim F}[\mathcal{A}^{f}() = 1]} = \negl(n)
\end{equation}
for any efficient quantum algorithm $\mathcal{A}^f$ with query access to the function $f$. For the remainder of this work, we will assume that $F$ is a uniform distribution over all possible functions mapping $[2^n] \to [2^m]$. 

By \cref{fact:phase-f}, there is an \emph{efficient} circuit which implements
\begin{equation}
    T_k: \ket{x} \mapsto e^{2i \pi \prf_k(x)/2^n},
\end{equation}
since $\prf_k$ is efficiently implementable as well. Therefore, $\qty{T_k \mid k \sim \text{Uniform}(\mathcal{K})}$ and $\qty{T_f \mid f \sim F}$ are computationally indistinguishable in the sense of \cref{eq:indist}, since if they were distinguishable, this would provide an efficient algorithm to distinguish between $\qty{\prf_k \mid k \sim \text{Uniform}(\mathcal{K})}$ and $F$. Finally, we can always assume the family $\qty{\prf_k \mid k \sim \text{Uniform}(\mathcal{K})}$ is $k$-wise independent for any $k=O(1)$ using well-known constructions of $k$-wise independent functions~\cite{joffe1974set,karloff1994construction,zhandry2015secure}.
\end{fact}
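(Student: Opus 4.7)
The plan is to establish the two claims of this fact in sequence: first, that the induced unitary ensemble $\{T_k\}$ is computationally indistinguishable from $\{T_f\}$, and second, that the PRF family can be assumed $k$-wise independent for any constant $k$. Both reduce to standard machinery in the pseudorandomness literature, with the former being a direct reduction and the latter a citation of known constructions.

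First I would prove indistinguishability of the unitary ensembles by contraposition. Suppose there exists an efficient quantum algorithm $\mathcal{A}^U$ with oracle access to a unitary $U$ that distinguishes $\{T_k\}$ from $\{T_f\}$ with non-negligible advantage. I would construct an efficient quantum algorithm $\mathcal{B}^g$ with oracle access to a function $g$ that simulates $\mathcal{A}$ step by step, replacing each call to the $U$-oracle by the phase-kickback circuit of \cref{fact:phase-f} applied to $g$. Each such replacement consumes exactly one query to $g$, $m$ auxiliary qubits, and $O(m^2)$ elementary gates, so $\mathcal{B}$ inherits polynomial runtime from $\mathcal{A}$. Because the phase-kickback circuit implements $T_g$ exactly, the output distribution of $\mathcal{B}^{\prf_k}$ matches that of $\mathcal{A}^{T_{\prf_k}}$ and $\mathcal{B}^f$ matches $\mathcal{A}^{T_f}$. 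Hence $\mathcal{B}$ distinguishes $\prf_k$ from a uniformly random $f$ with the same advantage, contradicting \cref{eq:indist}. For the $k$-wise independence claim, I would invoke standard constructions of quantum-secure PRFs whose outputs are simultaneously exactly $k$-wise independent for any constant $k$~\cite{joffe1974set,karloff1994construction,zhandry2015secure}; concretely, composing a quantum-secure PRF with a $k$-wise independent hash adds only $O(\log n)$ seed length when $k=O(1)$ and preserves both computational indistinguishability and exact $k$-wise independence on any subset of $k$ inputs.

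The main obstacle is that the reduction must operate in the fully quantum-query model: $\mathcal{A}$ may query $U$ on arbitrary superpositions of computational-basis states, and so $\mathcal{B}$ must answer with coherent quantum queries to $g$. This is precisely the quantum-secure PRF setting formalized by Zhandry~\cite{zhandry2015secure,zhandry_how_2021}, which the fact implicitly assumes. The phase-kickback circuit of \cref{fact:phase-f} is fully unitary and contains no intermediate measurements, so it acts correctly on superposition inputs and the simulation preserves coherence exactly; this ensures the reduction goes through without modification and no additional quantum-to-classical lifting is required.
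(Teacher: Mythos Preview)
Your proposal is correct and follows essentially the same approach as the paper: the paper's justification is the one-line contrapositive ``if they were distinguishable, this would provide an efficient algorithm to distinguish between $\{\prf_k\}$ and $F$,'' and your reduction via simulating each $U$-oracle call with the phase-kickback circuit of \cref{fact:phase-f} is exactly the detailed unpacking of that sentence. Your attention to the quantum-query model and coherence preservation is a useful elaboration that the paper leaves implicit.
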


\begin{fact}\label{fact:efficient-cdf}
The computational complexity of evaluating the inverse cumulative distribution function $F_{p^{(1)}}^{-1}$ is $O(\poly n)$. The reason is the following. The cumulative distribution function of $p^{(1)}(\lambda)= \frac{4-\lambda^2}{2\pi}$ is
\begin{equation}
    F_{p^{(1)}}(x) = \int_{-2}^x \frac{\sqrt{4-\lambda^2}}{2\pi} \dd{\lambda} = \frac{x\sqrt{4-x^2}+4\arcsin(x/2)}{4\pi} + \frac{1}{2}.
\end{equation}
This is an elementary function, which implies 
that it has $O(\poly n)$ computational complexity~\cite{borwein1987pi}. Furthermore, it is known that the complexity of the inverse for any elementary function is \emph{equivalent} to the complexity of the function itself~\cite{borwein1987pi}.

\end{fact}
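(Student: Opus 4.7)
The plan is to assemble an explicit polynomial-time algorithm for $F_{p^{(1)}}^{-1}$ in three steps: (i) derive the closed form of the CDF, (ii) invoke the classical complexity of elementary functions to evaluate $F_{p^{(1)}}$ to high bit-precision, and (iii) invert via a monotone search.

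First, I would verify the stated closed form by direct integration. Using the substitution $\lambda = 2\sin\theta$ in
\[
F_{p^{(1)}}(x) = \frac{1}{2\pi}\int_{-2}^{x}\sqrt{4-\lambda^{2}}\,\dd\lambda,
\]
the integrand becomes $4\cos^{2}\theta$, which integrates to $2\theta + \sin(2\theta)$. Back-substituting gives $F_{p^{(1)}}(x) = \tfrac{x\sqrt{4-x^{2}} + 4\arcsin(x/2)}{4\pi} + \tfrac{1}{2}$, which is continuous and strictly increasing on $[-2,2]$ and maps this interval bijectively onto $[0,1]$. Thus $F_{p^{(1)}}^{-1}$ is well-defined.

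Second, I would invoke the classical fact that each elementary operation appearing in $F_{p^{(1)}}$—multiplication, division, $\sqrt{\,\cdot\,}$, and $\arcsin$—admits an algorithm that returns an $m$-bit approximation in time $\poly(m)$ (e.g.\ AGM-based methods for $\arcsin$ and Newton iteration for $\sqrt{\,\cdot\,}$, as catalogued in the Borwein reference). Composing a constant number of such primitives with controlled error preserves the $\poly(m)$ bound, so $F_{p^{(1)}}(x)$ can be evaluated to precision $2^{-m}$ in time $\poly(m)$. For the downstream application we only need $m = O(\poly n)$.

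Third, to compute $F_{p^{(1)}}^{-1}(u)$ for $u\in[0,1]$, I would run bisection on $[-2,2]$: at each step, evaluate $F_{p^{(1)}}$ at the midpoint to precision $2^{-m'}$ for $m' = \Theta(m)$ and contract the bracketing interval. Since the target function is strictly monotone, bisection converges to an $x$-precision of $2^{-m}$ in $O(m)$ iterations, giving a total cost of $\poly(n)$. (Alternatively, Newton's method on $F_{p^{(1)}}$ converges quadratically and also fits inside $\poly(n)$.)

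The main obstacle I anticipate is the behavior near the edges $x=\pm 2$, where $F_{p^{(1)}}$ has an infinite derivative, so a small error in evaluating $F_{p^{(1)}}$ translates into a relatively larger error in the preimage, and conversely a given $u$ very close to $0$ or $1$ pins down $x$ only weakly. However, bisection only needs monotonicity and continuity, not well-conditioning, so it still contracts the interval at the usual geometric rate; and for $u$ within $2^{-\poly(n)}$ of the endpoints one can switch to the local expansion $F_{p^{(1)}}(\pm 2 \mp \varepsilon) = \tfrac{1\pm 1}{2} \mp \tfrac{2}{3\pi}\varepsilon^{3/2} + O(\varepsilon^{5/2})$, which has a closed-form $\poly(n)$-computable inverse in that regime. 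Combining the two regimes yields $\poly(n)$ complexity uniformly in $u$.
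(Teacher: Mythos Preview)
Your proposal is correct and, in spirit, follows the same outline as the paper: compute the closed-form CDF, note that it is built from elementary functions and hence evaluable in $\poly(n)$ time, and then argue that inversion costs no more.

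The one genuine difference is in step (iii). The paper simply invokes the general result from the Borwein reference that the complexity of inverting an elementary function is \emph{equivalent} to the complexity of evaluating it, and stops there. You instead construct the inverse explicitly via bisection (with a local expansion near the endpoints to handle the vanishing derivative). Your route is more self-contained and makes the $\poly(n)$ bound concrete without relying on the black-box equivalence theorem; the paper's route is shorter but leans entirely on the citation. Both are valid, and your explicit handling of the edge case near $x=\pm 2$ is a nice touch that the paper does not address at all.
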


\begin{algorithm}[H]
\caption{Implementing $\text{C}^{q_c} U(t)$\label{alg:ctime-evol}}
\KwData{$\pru_k$, $\prf_k: [2^n] \to [2^m]$, and $\tilde{d}$; these specify the eigenbasis, spectrum, and ``effective dimension'' of the pseudo-GUE Hamiltonian. We assume $m \geq n^2$.}
\KwIn{An time $\abs{t} = \exp(O(n))$ and a control qubit $q_c$.}
Apply $\pru_k^\dagger$. \Comment{$\pru_k$ and $\pru_k^\dagger$ do not need to be controlled; if $e^{-i\Lambda t}$ is not applied, the net effect is $(\pru_k)(\pru_k^\dagger) = \id$.}

\Begin(\Comment*[h]{Implement $\mathrm{C}^{q_c}e^{-i \Lambda t}$}){ 
Initialize $m=n^2$ auxiliary qubits in $\ket{1}$.

Apply $\ketbra{0}_{q_c} \otimes (H X)^{\otimes m} + \ketbra{1}_{q_c} \otimes (\text{QFT})$, where $\ketbra{0}_{q_c}$ and $\ketbra{1}_{q_c}$ act on the control qubit $q_c$, and the target is the $m$ auxiliary qubits. This prepares $\ket{+}^{\otimes m}$ if the control qubit is $\ket{0}$, and otherwise prepares $\sum_{z \in [2^{m}]} \omega_{M}^z \ket{z}$, where $\omega_{M} \coloneqq e^{2\pi i/M}$ and $M \coloneqq 2^{m}$. Define the function
\begin{equation}\label{eq:phix2}
    \phi(x) \coloneqq \frac{t}{2\pi} (M \cdot F_{p^{(1)}}^{-1})(\prf_k(x[:\tilde{n}])) \mod{M},
\end{equation}
where $F_{p^{(1)}}^{-1}$ is the inverse CDF of $p^{(1)}$, $\tilde{n} \coloneqq \log_2 \tilde{d}$, and $x[:\tilde{n}]$ denotes the first $\tilde{n} \coloneqq \log_2 \tilde{d}$ bits of $x$. 

Call $\phi(x)$ on the system qubits and subtract $\phi(x)$ from the auxiliary systems. When the control qubit in state $\ket{0}$, this implements the unitary
\begin{equation}
    \ket{x} \mapsto \ket{x} \otimes \sum_{z \in [2^m]} \ket{z-\phi(x)} = \ket{x} \otimes \ket{+}^{\otimes m}.
\end{equation}
Otherwise, by \cref{fact:phase-f}, it implements 
\begin{equation}
    \ket{x} \mapsto \omega_M^{\phi(x)} \ket{x} \otimes \sum_{z \in [2^m]} \omega_M^z \ket{z}.
\end{equation}
Apply $\ketbra{0}_C \otimes (X H)^{\otimes m} + \ketbra{1}_C \otimes (\text{QFT}^{-1})$ to uncompute the auxiliary system. 
The end result is the unitary
\begin{equation}
    \ket{x} \mapsto \begin{cases}
        \ket{x} &\qq{when the control qubit is in state vector $\ket{0}$, and} \\
        \omega_M^{\phi(x)} \ket{x} &\qq{otherwise.}
    \end{cases}
\end{equation}
}
Apply $\pru_k$.
\end{algorithm}
\smallskip

We elaborate further on the details of 
the function
$x\mapsto \phi(x) \,\eqref{eq:phix2}$.
First, we note that each component of $\phi(x)$ is efficient: $\prf_k$ is efficient by definition of quantum pseudorandom 
functions~\cite{zhandry_how_2021}, $F_{p^{(1)}}^{-1}$ is efficient by \cref{fact:efficient-cdf}, and the modular multiplication is efficient using well-known circuits~\cite{cho2020quantum,rines2018high}. The role of \cref{eq:phix2} is simple: the first step $F_{p^{(1)}}^{-1} \circ \prf_k$ samples a random energy $\lambda_x \sim p^{(1)}$, a standard technique for non-uniform random variable sampling~\cite{hormann2004automatic,steele1987uniform}. The second step, multiplication by $\frac{M \cdot t}{2\pi}$, ensures the right prefactor so that ultimately the phase $e^{i \lambda_x t}$ is applied, and the modulo by $M$ simply drops terms that contribute phases which are multiples of $2\pi$. Since $F_{p^{(1)}}^{-1} \circ \prf_k$ in reality samples from a ``discretized'' version of $p^{(1)}$ in intervals of size $M^{-1}$, we need to set $M = 2^{n^2}$ so that the effect of this discretization is still exponentially 
suppressed for exponentially long times. Indeed, the effects of the discretization can be suppressed for any time $t=2^{O(n^c)}$ simply by using $m=n^{c+1}$ auxiliary systems. 
Since the the computational complexity of $\prf_k$, $F_{p^{(1)}}^{-1}$, and modular multiplication by $\frac{M \cdot t}{2\pi}$ each scale polynomially in $m$, we see that the overall simulation cost (in $t$) is just $O(\poly \log t)$. Therefore, this construction only becomes 
inefficient if one aims to simulate for extremely long times which scale as $\exp(\omega(\poly n))$. Finally, we note that the implementation of $e^{-i \Lambda t}$ in \cref{alg:ctime-evol} is not permutation-invariant, as the first $\tilde{d}$ bit-strings will all have the same phase, as well as the next $\tilde{d}$, and so on. However, the application of a pseudorandom unitary $\pru_k^\dagger$ before and $\pru_k$ after remedies this problem, as this shuffles the phases around pseudorandomly (to see this, note that the 
phases can be manually shuffled using a pseudorandom permutation 
$\prp_k^\dagger$ before $e^{-i \Lambda t}$, and $\prp_k$ after -- but this 
can 
simply be absorbed into $\pru_k$).

\subsection{Simulating Gibbs states}\label{app:gibbs-sim}

Rejection sampling is a technique used to generate samples from a 
target distribution that may be difficult to sample from directly. In our 
case, the target distribution is the Gibbs state of our pseudo-GUE Hamiltonian. The method works as follows. We propose samples from a simpler distribution (the surrogate distribution), and then we accept or reject these samples based on a comparison with the target distribution.

For our pseudo-GUE Hamiltonian, $Q_i = d^{-1}$ represents our surrogate distribution, which is a uniform distribution over all states, while $G(\Lambda)_i = e^{-\beta \lambda_i}/\tr(e^{-\beta \Lambda})$ represents the actual Gibbs distribution we want to sample from. The key to efficient rejection sampling is choosing a constant $C$ such that $C \cdot Q_i$ always upper bounds $G(\Lambda)_i$. This ensures we never mistakenly reject a sample we should have accepted. The smaller we can make $C$, the more efficient our sampling becomes. The following lemma shows that we can, indeed, choose a relatively small $C$.

\begin{lemma}[Choice of small $C$]\label{lem:c-upper}
For all $\beta = O(\poly n)$, the distribution $(G(\Lambda))_i \coloneqq \frac{e^{-\beta \lambda_i}}{\tr(e^{-\beta \Lambda})}$ obeys
\begin{equation}
    \Pr_{\Lambda \sim p}\qty[\max_i (G(\Lambda))_i \geq \frac{1}{d} \max\qty(\frac{(2\beta)^{3/2}}{2},12)] \leq O(d^{-1/8}).
\end{equation}
\end{lemma}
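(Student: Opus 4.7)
The plan is to prove this by combining partition function concentration with a tail bound on the smallest eigenvalue, exploiting the identity $\max_i G_i = e^{-\beta\lambda_{\min}}/Z$. I would therefore upper-bound $e^{-\beta\lambda_{\min}}$ and lower-bound $Z$ separately, each with probability at least $1 - O(d^{-1/8})$, and then intersect the two events.

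First, I would apply Chebyshev's inequality to $Z = \tr(e^{-\beta\Lambda})$ using the variance estimate established in the previous lemma on moments of partition functions, which gives $\mathrm{var}(Z)/\E[Z]^2 \le O(\beta^3 d^{-1/8})$. This yields $Z \ge \E[Z]/2 = d\,I_1(2\beta)/(2\beta)$ with probability at least $1 - O(\beta^3 d^{-1/8})$, which becomes $1 - O(d^{-1/8})$ after absorbing the polynomial factor $\beta^3$ (since $\beta = O(\poly n)$ while $d^{-1/8}$ is exponentially small in $n$). Next, I would invoke the standard edge-concentration estimate for GUE eigenvalues in our normalization (semicircle supported on $[-2,2]$): $\Pr[\lambda_{\min} \le -2-\epsilon] \le \exp(-\Omega(d\epsilon^{3/2}))$. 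Choosing any $\epsilon = \omega(d^{-2/3}\log^{2/3}d)$ makes this probability $O(d^{-1/8})$ while simultaneously ensuring $\beta\epsilon = o(1)$ uniformly in $\beta = O(\poly n)$, so that $e^{-\beta\lambda_{\min}} \le (1+o(1))e^{2\beta}$ on this event.

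On the intersection of the two high-probability events one gets
\[
\max_i G_i \;\le\; \frac{(1+o(1))e^{2\beta}}{d\,I_1(2\beta)/(2\beta)} \;=\; \frac{(2+o(1))\beta e^{2\beta}}{d\,I_1(2\beta)}.
\]
For $\beta \ge e/2$, the Bessel asymptotic $I_1(2\beta) \sim e^{2\beta}/\sqrt{4\pi\beta}$ (together with the explicit lower bound $I_1(2\beta) \ge e^{2\beta}/(3\sqrt\beta)$ cited in the stated fact) converts the right-hand side into $O(\beta^{3/2}/d)$. Chasing constants with the refined Bessel asymptotic produces the claimed form $(2\beta)^{3/2}/(2d)$. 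For the complementary regime where $\beta$ is so small that $(2\beta)^{3/2}/2 < 12$ (roughly $\beta \lesssim 4$), the same combination yields only a constant multiple of $1/d$, which is absorbed by the $12/d$ branch of the max; this is precisely why the lemma states the bound as a maximum of the two expressions.

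The main obstacle is matching the constants exactly. The Chebyshev bound $Z \ge \E[Z]/2$ is somewhat lossy, so recovering the specific constant $1/2$ in $(2\beta)^{3/2}/2$ may require either sharpening the concentration of $Z$ via higher moments (for instance, through the $p$-norm inequality $\max_i G_i \le \tr(e^{-p\beta\Lambda})^{1/p}/Z$ with $p = \Theta(\log d)$, whose moments can be estimated by the same technique used for the prior lemma) or tightening the Bessel asymptotic. The invocation of GUE edge concentration is a standard input but not self-contained within the previous lemmas, so care is needed to cite or reproduce the precise tail bound used.
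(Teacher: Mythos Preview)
Your approach is essentially the paper's: apply Chebyshev to lower-bound $Z$, use $\max_i G_i \le e^{2\beta}/Z$, then finish with the Bessel lower bound and a separate case for small $\beta$. The only differences are that the paper simply asserts $\max_i G_i \le e^{2\beta}/Z$ without invoking edge concentration (you are being more careful there), and it sidesteps your constant-matching worry by taking the tighter Chebyshev threshold $Z \ge \E[Z](1-d^{-1/8})$ rather than $\E[Z]/2$, after which the stated bound $I_1(2\beta)\ge e^{2\beta}/(2\sqrt{2\beta})$ for $\beta\ge 2$ yields the constant directly---no higher-moment or $p$-norm machinery is needed.
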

\begin{proof}
    From \cref{lem:z-bound}, we apply a 
    Chebyshev inequality, to get
    \begin{equation}
        \Pr[Z \leq \frac{d I_1(2\beta)}{\beta} (1 - d^{-1/8})] \leq O(d^{-1/8}).
    \end{equation}
    Since $\max_i P(i) \leq \frac{e^{2\beta}}{Z}$, and $I_1(2\beta) \geq \frac{e^{2\beta}}{2 \sqrt{2\beta}}$ for any $\beta \geq 2$, we get that
    \begin{equation}
        \Pr[\max_i P(i) \geq \frac{(2\beta)^{3/2}}{2d}] \leq O(d^{-1/8})\qq{for $\beta \geq 2$.}
    \end{equation}
    For $\beta \leq 2$, we directly upper bound $\frac{e^{2\beta}}{d I_1(2\beta)/\beta (1-d^{-1/8})} \leq \frac{12}{d}$.
\end{proof}
This lemma essentially tells us that for the vast majority of our pseudo-GUE Hamiltonians, we can choose 
\begin{equation}
C = \max\qty(\frac{(2\beta)^{3/2}}{2}, 12). 
\end{equation}
This choice of $C$ leads to an efficient algorithm, as demonstrated in the following theorem.

\begin{theorem}[Efficiency and correctness of \cref{alg:gibbs2}]
For any $k=O(\poly n)$ and any $\beta=O(\poly n)$, \cref{alg:gibbs2} terminates in polynomial time with overwhelming probability. Furthermore, with overwhelming probability over the choice of pseudo-GUE Hamiltonian, the algorithm prepares the exact Gibbs state.
\end{theorem}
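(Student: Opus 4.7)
The plan is to analyze \cref{alg:gibbs2} as a standard rejection sampler against the uniform surrogate $Q_i = d^{-1}$ with envelope constant $C = \max((2\beta)^{3/2}/2,\,12) = O(\poly n)$, and to invoke \cref{lem:c-upper} to guarantee that the envelope condition $C \cdot Q_i \geq G(\Lambda)_i$ holds with overwhelming probability over the draw of the spectrum $\Lambda$. Once the envelope condition holds, both correctness (exact sampling) and efficiency (polynomial expected runtime) follow from textbook rejection-sampling facts, which I will verify in the quantum implementation.

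First, I will condition on the "good event" $\mathcal{G} \coloneqq \{\max_i G(\Lambda)_i \leq C/d\}$, which by \cref{lem:c-upper} satisfies $\Pr_{\Lambda \sim p}[\mathcal{G}] \geq 1 - O(d^{-1/8})$. On $\mathcal{G}$, the per-index acceptance probability $G(\Lambda)_i / (C \cdot Q_i) = d \cdot G(\Lambda)_i / C$ lies in $[0,1]$, so the accept/reject coin flip is well-defined. The per-round acceptance probability is then
\begin{equation}
p_{\mathrm{acc}} = \sum_{i=1}^d Q_i \cdot \frac{G(\Lambda)_i}{C \cdot Q_i} = \frac{1}{C} = \Omega(1/\poly n).
\end{equation}
Conditioned on acceptance, the output index is distributed exactly as $G(\Lambda)_i$ by the standard rejection-sampling identity $\Pr[\text{output } i \mid \text{accept}] = Q_i \cdot (G(\Lambda)_i/(CQ_i))/p_{\mathrm{acc}} = G(\Lambda)_i$; since the algorithm then coherently prepares the eigenstate $U\ket{i}$, this yields an exact sample from the Gibbs distribution (with basis rotated by the PRU $U$). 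This handles the correctness claim under $\mathcal{G}$.

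For the termination claim, the number of rounds $T$ until the first acceptance is geometric with parameter $1/C$, so $\Pr[T \geq r] = (1 - 1/C)^r \leq e^{-r/C}$. Taking $r = C \cdot n = O(\poly n)$ rounds gives failure probability $e^{-\Omega(n)} = \negl(n)$, and each round uses only a uniform basis-state preparation, a single invocation of the phase function $\phi$ from \cref{alg:ctime-evol} adapted to compute $\lambda_i$ (rather than $e^{-i\lambda_i t}$), an efficient classical/quantum comparison against the accept threshold, and a measurement — all $\poly(n,\beta)$ operations by \cref{fact:efficient-cdf} and \cref{fact:prf}. To produce $k = O(\poly n)$ independent Gibbs samples as requested by the theorem, I simply repeat the procedure $k$ times; a union bound over rounds keeps the total runtime polynomial with overwhelming probability.

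The main subtlety — and really the only obstacle — is the off-diagonal event $\neg\mathcal{G}$, where the envelope may fail. One option is to declare failure and note that this costs only $O(d^{-1/8}) = \negl(n)$ correctness probability, matching the theorem's "overwhelming probability over $\Lambda$" guarantee. A cleaner option, which I would adopt to separate the two guarantees, is to truncate the algorithm after $C \cdot n$ rounds: on $\mathcal{G}$ this truncation almost never triggers (so correctness is exact), and off $\mathcal{G}$ the truncation ensures polynomial termination at the cost of a negligible-probability deviation from $G(\Lambda)$. Combining the two negligible error sources — envelope failure and geometric tail — gives both claims of the theorem simultaneously.
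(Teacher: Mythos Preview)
Your proposal is correct and follows essentially the same rejection-sampling analysis as the paper: both invoke \cref{lem:c-upper} to ensure the envelope $G(\Lambda)_i \le C/d$ holds with probability $1-O(d^{-1/8})$, deduce $p_{\mathrm{acc}}=1/C$, and then bound the geometric waiting time. The only stylistic difference is that the paper controls the total runtime for $k$ samples in one shot via the moment generating function of a sum of geometrics (a Chernoff bound giving $\Pr[\sum_j N_j \ge kC^2] \le (C(1+C^{-1})^{1-C^2})^k$), whereas you bound a single geometric tail by $e^{-r/C}$ and union-bound over $k$ repetitions; both routes are elementary and reach the same conclusion.
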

\begin{proof}
The central idea is that the surrogate distribution $Q_i = d^{-1}$ will be a good proposal distribution in the sense that the rejection sampling criterion $(G(\Lambda))_i \leq C \cdot Q_i, \forall i$ will be satisfied with just $C = \max\qty(\frac{(2\beta)^{3/2}}{2},12)$. Fixing $C$ to be this constant will result in our algorithm failing to be correct for a vanishing fraction $O(d^{-1/8})$ of pseudo-GUE Hamiltonians, as shown in \cref{lem:c-upper}. However, those pathological Hamiltonians aside, the correctness of this algorithm follows directly from the correctness of rejection sampling in general.

Each iteration of the sampling will require a number of samples $N$ from the surrogate distribution (before one gets accepted) that follows a geometric distribution with parameter $1/C$~\cite{casella2004generalized,radford2003slice}. Let $N_j$ be the number of samples required in the $j$th iteration. We use the fact that $\E[e^{t N_j}]=\frac{e^t}{C-(C-1)e^t}$. Since each of the $N_j$ are independent of each other, we can use a Chernoff bound to arrive at
\begin{equation}
    \Pr[\sum_{j=1}^k N_j \geq kC^2] \leq \min_{t} \qty(\frac{e^{(1-C^2)t}}{C-(C-1)e^t})^k.
\end{equation}
This has a minimum at $t = \ln(1+C^{-1})$, which we substitute to find
\begin{equation}\label{eq:chernoff}
    \Pr[\sum_{j=1}^k N_j \geq kC^2] \leq \qty(\frac{(1+C^{-1})^{1-C^2}}{C-\frac{(C-1)(C+1)}{C}})^k = \qty(C(1+C^{-1})^{1-C^2})^k,
\end{equation}
and since $C > 12$, $C(1+C^{-1})^{1-C^2} < 10^{-3}$. We can assume WLOG that $k \geq n$, so \cref{eq:chernoff} is upper bounded by $10^{-3k}$ -- if $k<n$, we simply observe that 
\begin{equation}
\Pr[\sum_{j=1}^k N_j \geq nC^2] \leq \Pr[\sum_{j=1}^n N_j \geq nC^2] \leq 10^{-3n} 
\end{equation}
--- so the runtime will be at most $O(\max(k,n) C^2)$ with overwhelming probability in $n$. Therefore, the algorithm runs in time $O(\poly(n,\beta))$ for any $k = O(\poly n)$ with overwhelming probability.
\end{proof}

\begin{algorithm}[H]
\caption{Efficient Gibbs state preparation for the pseudo-GUE\label{alg:gibbs2}}
\KwData{$\pru_k$, $\prf_k: [2^n] \to [2^m]$, and $\tilde{d}$; these specify the eigenbasis, spectrum, and ``effective dimension'' of the pseudo-GUE Hamiltonian. We assume $m \geq n^2$.}
\KwIn{An inverse temperature $\beta = O(\poly n)$.}
$C \gets \max(\frac{(2\beta)^{3/2}}{2}, 12)$

\While{True}{
    $x \sim \text{Uniform}(\qty{0,1,\ldots,2^{n}-1})$ \Comment{Propose a random $x$}
    $\lambda_x \gets F_{p^{(1)}}^{-1}(\prf_k(x[:\tilde{n}])/2^m)$ \Comment{Get the energy $\lambda_x$ associated with $x$}
    
    $\alpha \sim \text{Uniform}([0,e^{2\beta}])$
    
    \If{$e^{-\beta \lambda_x}/C \geq \alpha$}{ 
        \Return $U_k \ketbra{x} U_k^\dagger$ \Comment{Accept $x$}
    }
}
\end{algorithm}

\subsection{Remarks on the Monte-Carlo sign problem}\label{App:SP}
In this section, we prove that GUE Hamiltonians suffer from a strong sign problem, a notorious obstacle in quantum Monte Carlo techniques to compute expectation values of their Gibbs states. While this gives evidence that GUE Gibbs states are hard to classically simulate, it does not rule out heuristic techniques to find basis rotations that ease the sign problem.

We consider the ``average sign" quantity $\nu_1(R)$ which was introduced in Ref.~\cite{Easing}. There it has been shown that to estimate expectation values of Gibbs states to precision $\epsilon$, the magnitude of the ``average sign" determines the variance of the estimator. That is, one must average over a number of samples scaling as
\begin{equation}\label{eq:samples}
s \geq \frac{1}{\E[\nu_1(R)]^2 \epsilon^2}.
\end{equation}

The Gaussian unitary ensemble $\mathcal{E}_{\gue}$
of $d\times d$ Hamiltonians is an ensemble of Hamiltonians for which the $d$ diagonal elements are iid distributed according to
$\mathcal{N}(0,d^{-1/2})$. Both the real and the imaginary parts of the off-diagonal elements will, in turn, be distributed as 
$\mathcal{N}(0,(2d)^{-1/2})$. Let us consider the real part $R=\Re(H)$ only and denote with $R_\urcorner$ the part of $R$ that consists of the strictly positive off diagonal elements of $R$. 

\begin{lemma}[Sign problem] For the real part $R$ of GUE Hamiltonian ensembles, we have 
\begin{equation}
\E[\nu_1(R)] = \mathbb{E}
\qty[d^{-1} \norm{R_\urcorner}_1]
= \frac{\sqrt{d}}{4\sqrt{\pi}} - o(1).
\end{equation}
\end{lemma}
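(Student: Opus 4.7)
The proof is essentially a one-shot linearity-of-expectation computation; the only substantive issue is parsing what $R_\urcorner$ actually is and plugging in the right Gaussian moments. My plan is to unpack the definition, reduce the claim to a sum of absolute moments of iid centered Gaussians, evaluate a single half-normal expectation, and track the lower-order term to confirm it is $o(1)$.

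First, I would fix the combinatorial content. The matrix $R = \Re(H)$ is real symmetric, with $d$ diagonal entries (irrelevant here) and $\binom{d}{2} = d(d-1)/2$ independent strictly-upper-triangular entries. The entries $R_{ij}$ with $i<j$ are the iid $\mathcal N(0,1/(2d))$ random variables described in the GUE definition above \eqref{eq:measurePH}. I read $R_\urcorner$ as the strictly upper triangular, nonnegative part of $R$, so that
\[
\|R_\urcorner\|_1 \;=\; \sum_{1\le i<j\le d} \max(R_{ij},0).
\]
By linearity, $\mathbb{E}\,\|R_\urcorner\|_1 = \binom{d}{2}\,\mathbb{E}\,[\max(R_{12},0)]$.

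Next I would compute the single-entry expectation. For $X\sim\mathcal N(0,\sigma^2)$, a standard calculation (splitting the integral and using the Gaussian density's antiderivative) gives $\mathbb{E}[\max(X,0)] = \sigma/\sqrt{2\pi}$. With $\sigma^2 = 1/(2d)$, this yields $\mathbb{E}[\max(R_{12},0)] = 1/(2\sqrt{\pi d})$.

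Combining the two steps,
\[
\mathbb{E}\bigl[d^{-1}\|R_\urcorner\|_1\bigr] \;=\; \frac{1}{d}\cdot\frac{d(d-1)}{2}\cdot\frac{1}{2\sqrt{\pi d}} \;=\; \frac{d-1}{4\sqrt{\pi d}} \;=\; \frac{\sqrt{d}}{4\sqrt{\pi}} \;-\; \frac{1}{4\sqrt{\pi d}},
\]
and the subtracted term is $o(1)$ as $d\to\infty$, which is the claimed identity. No concentration or random-matrix machinery is required since only the expectation is claimed; the entries are exactly Gaussian with known variance, and the identity is purely arithmetic. The only place I would pause to double-check is the interpretation of $R_\urcorner$ (positive upper-triangular part vs.\ all positive off-diagonal entries, which would differ by a factor of two); the factor $1/4$ in the target fixes this unambiguously to the upper-triangular convention given above. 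Finally, plugging into \eqref{eq:samples} would immediately yield a sample complexity lower bound of $\Omega(\epsilon^{-2}/d)$ for Monte Carlo estimation, justifying the ``strong sign problem'' claim, though that conclusion is drawn in the surrounding prose rather than in this lemma.
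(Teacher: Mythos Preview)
Your proof is correct and follows the same approach as the paper: linearity of expectation applied to the positive part of iid centered Gaussians, yielding $\mathbb{E}[d^{-1}\|R_\urcorner\|_1]=\frac{d-1}{4\sqrt{\pi d}}$. Your bookkeeping is in fact cleaner than the paper's, which states $d^2-d$ entries each with expectation $\frac{1}{4\sqrt{\pi d}}$ (a compensating pair of factor-of-two slips) rather than your $\binom{d}{2}$ entries each with expectation $\frac{1}{2\sqrt{\pi d}}$; your explicit flag on the interpretation of $R_\urcorner$ and how the target constant $1/4$ pins it down is exactly the right consistency check.
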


\begin{proof} This is easy to see. Since the ${1}$ norm is the sum of absolute values, 
the  $\norm{R_\urcorner}_{1}$
  are given, up to normalization, by a sum of iid half-normal distributed entries. The expectation of each entry is $\frac{1}{4\sqrt{\pi d}}$, and since there are $d^2-d$ such entries, we have
  \begin{equation}
      \E[\nu_1(R)] = \frac{1}{4\sqrt{\pi}}\frac{d-1}{\sqrt{d}} = \frac{\sqrt{d}}{4\sqrt{\pi}} - o(1).
  \end{equation}
  \end{proof}

\section{Behavior of probes to chaos of GUE versus pseudo-GUE}\label{App:behaviorprobestochaos}

\subsection{Preliminaries}
Let us recall the notation used throughout the manuscript. The eigenvalue distribution of the GUE is denoted as $p(\lambda_1,\ldots,\lambda_{d})$ (defined in \cref{eq:eigenvaluedistirbution}), and abbreviated simply as $p$. We denote as $\tilde{p}(\lambda_1,\ldots,\lambda_d)\coloneqq\prod_{i=1}^{d}p^{(1)}(\lambda_i)$ the independent and identically distributed (iid) distribution built from first marginals of $p$. We also define degenerate version of $\tilde{p}$, which we denote with $\tilde{p}_{\tilde{d}}$. This is a permutation-invariant distribution over $d$ energies which has only $\tilde{d}$ unique energies, each of which is iid according to $p^{(1)}(\lambda)$. Lastly, we denote as $\tilde{p}_{\tilde{d},k}$ the version of $\tilde{p}_{\tilde{d}}$ where the $\tilde{d}$ variables are merely $k$-wise independent (where $k$ is assumed to be $O(1)$). Correspondingly, we use three main ensembles of Hamiltonians: the $\gue$ ensemble reviewed in \cref{app:GUE}, and the two ensembles
\be
\mathcal{E}_{\tilde{d}}\coloneqq\{U\Lambda U^{\dag}\,:\, U\sim \mathcal{U}\,,\, \Lambda\sim \tilde{p}_{\tilde{d}}\}, \quad {\mathcal{E}}_{\tilde{d},k}\coloneqq\{U\Lambda U^{\dag}\,:\, U\sim \mathcal{U}\,,\, \Lambda\sim \tilde{p}_{\tilde{d},k}\},
\ee
where $\mathcal{U}$ is an ensemble of pseudorandom unitaries. Notice that only $\mathcal{E}_{\tilde{d},k}$ is efficiently implementable on a quantum device. Indeed, the ensemble $\mathcal{E}_{\tilde{d}}$ has exponential gate complexity due to the iid  distribution, see Ref.~\cite{kotowski2023extremal}.

\begin{figure}[ht]
    \centering
    \def\svgwidth{0.75\textwidth}
 \includegraphics[width=0.5\columnwidth]{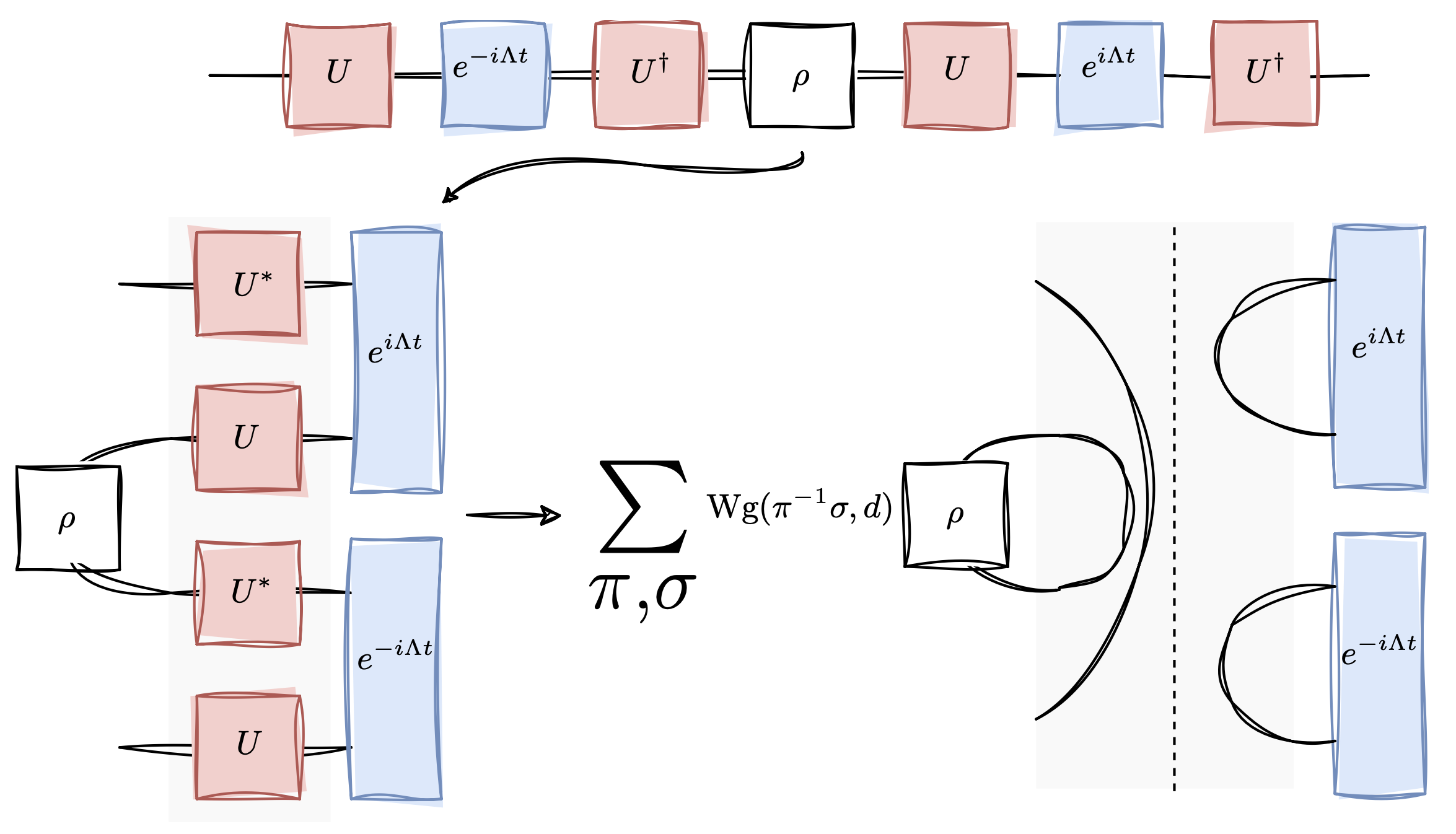}
    \caption{Calculating the Haar average over the eigenbasis for time evolution by a GUE Hamiltonian. The first step simply rewires the diagram into a format such that the ``vectorized form'' of the Haar average can be calculated~\cite{mele2024introduction}. In the last diagram, the sum runs over all permutations $\pi,\sigma \in \mathbb{S}_2$, and the wiring on the left half of the gray box corresponds to $\pi$, while the right side corresponds to $\sigma$. $\text{Wg}(\pi^{-1} \sigma,d)$ is the Weingarten function associated with the permutation $\pi^{-1} \sigma$. Note that the right side of the diagram simply results in coefficients which reweight the terms of the sum according to different spectral form factors; the one show in the figure corresponds to $\abs{\tr(e^{-i \Lambda t})}^2$.}\label{fig:vectorize}
\end{figure}

In this section, we reduce the typical entanglement, magic, and OTOCs of time-evolved states (or operators) to functions of the spectral form factor. For brevity, we will define 
\begin{equation}
Z(X) := \frac{\tr(e^{-i X})}{d} 
\end{equation}
for any operator $X$. We have the following preliminary lemma which allows us to reduce calculations of entanglement, magic, or OTOCs at time $t$ to the problem of reasoning about $Z(\Lambda t)$, where $\Lambda$ is the spectrum of a GUE Hamiltonian. Specifically, in the notation of \cref{lem:haar-avg} below, we will find that entanglement, magic, or OTOCs of time-evolved states can be expressed as $\E_{U \sim \haar}[(f^{(k)}(U e^{-i \Lambda t} U^\dagger))]$ for some function $f^{(k)}$, which then reduces to $\abs{Z(\Lambda t)}^{2k} + O(d^{-1})$.

\begin{lemma}[Haar averages]\label{lem:haar-avg}
Consider any degree-$k$ function $f^{(k)}$ mapping operators to $[0,1]$. Assume $f$ has Lipschitz constant $\eta=O(1)$. Consider any ensemble $\mathcal{E}$ of Hamiltonians for which the associated probability distribution factors into independent distributions over the eigenbases and eigenenergies. Let the distribution over the eigenbases form a relative $\epsilon$-approximate $4k$-design $\mathcal{E}_{4k,\epsilon}$ with $\epsilon=O(d^{-1})$. Then for any fixed diagonal matrix $D$,
\begin{equation}
    \Pr_{U \sim \mathcal{E}_{4k,\epsilon}}\qty[\abs{f^{(k)}(U D U^\dagger)-\E_{U \sim \haar}[(f^{(k)}(U D U^\dagger))]} \geq  d^{-1/12}] \leq O(d^{-1/6}).
\end{equation}
\end{lemma}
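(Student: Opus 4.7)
My plan is to use a Chebyshev-type argument: I would bound the Haar variance of $f^{(k)}(UDU^\dagger)$ via measure concentration on the unitary group, then transfer both the mean and variance bounds from the Haar measure to the $4k$-design at the cost of the small relative error $\epsilon=O(d^{-1})$, and finally conclude via Chebyshev's inequality. The reason $4k$-design arises is degree counting: since $UDU^\dagger$ is bilinear in $(U,U^\dagger)$, a degree-$k$ polynomial of $UDU^\dagger$ is degree $2k$ in $(U,U^\dagger)$, so $|f^{(k)}(UDU^\dagger)|^2$ has degree $4k$ in $(U,U^\dagger)$, which is precisely the moment controlled by a relative $4k$-design.

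\textbf{Step 1 (Haar variance via Levy).} First, view $g(U) := f^{(k)}(UDU^\dagger)$ as a function on $\mathbb{U}(d)$. The conjugation map $U\mapsto UDU^\dagger$ has Lipschitz constant $O(\|D\|_\infty)$ in the operator norm, and $f^{(k)}$ has Lipschitz constant $\eta=O(1)$, so $g$ is Lipschitz with constant $L=O(\|D\|_\infty)$. In the intended application $D$ is a spectrum drawn from the Wigner semicircle (supported on $[-2,2]$), giving $L=O(1)$. By Levy's lemma on $\mathbb{U}(d)$ (equivalently the log-Sobolev inequality of Meckes--Gromov), one then obtains a Gaussian-tail concentration that in particular yields $\mathrm{Var}_{U\sim\haar}[g(U)]=O(1/d)$.

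\textbf{Step 2 (Transfer to the design).} Because $g$ and $g^2$ are polynomials in $(U,U^\dagger)$ of degree $2k$ and $4k$ respectively, and $g\in[0,1]$, the relative $\epsilon$-approximate $4k$-design property gives
\begin{equation}
\bigl|\E_{U\sim\mathcal{E}_{4k,\epsilon}}[g]-\E_{U\sim\haar}[g]\bigr|=O(\epsilon), \qquad \bigl|\E_{U\sim\mathcal{E}_{4k,\epsilon}}[g^2]-\E_{U\sim\haar}[g^2]\bigr|=O(\epsilon).
\end{equation}
Combining these with Step 1 yields $\mathrm{Var}_{U\sim\mathcal{E}_{4k,\epsilon}}[g] \le \mathrm{Var}_{\haar}[g] + O(\epsilon) = O(1/d)+O(d^{-1}) = O(1/d)$, and the two means differ by $O(d^{-1})\ll d^{-1/12}$.

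\textbf{Step 3 (Chebyshev).} Chebyshev's inequality under the design then gives
\begin{equation}
\Pr_{U\sim\mathcal{E}_{4k,\epsilon}}\!\Bigl[\bigl|g(U)-\E_{\haar}[g]\bigr|\ge d^{-1/12}\Bigr]\le \frac{\mathrm{Var}_{\mathcal{E}_{4k,\epsilon}}[g]+O(d^{-2})}{d^{-1/6}}=O(d^{-5/6}),
\end{equation}
which is well within the claimed $O(d^{-1/6})$ bound, yielding some slack that can absorb hidden constants from the Lipschitz bound on $D$.

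\textbf{Main obstacle.} The delicate point is that Levy's concentration requires the Lipschitz constant of $U\mapsto f^{(k)}(UDU^\dagger)$ to be $O(1)$, and this in turn requires an implicit assumption $\|D\|_\infty=O(1)$. This is harmless in the paper's applications (the spectra are drawn from the Wigner semicircle) but needs to either be stated as a hypothesis or, more carefully, one can restrict to the good event $\{\|D\|_\infty\le C\}$ and handle the small-probability complement separately via a union bound using \cref{lem:e2}. A second subtlety is the precise sense of ``degree-$k$'' for $f^{(k)}$: I am interpreting it as a polynomial of degree $k$ in the entries of its argument, which is what makes the $4k$-design bookkeeping tight; if instead $f^{(k)}$ is only real-analytic with rapidly converging moments, one would first approximate by a truncated Taylor series before invoking the design property.
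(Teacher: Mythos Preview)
Your approach is essentially the paper's: set $g_D(U):=f^{(k)}(UDU^\dagger)$, use Levy's lemma on $\mathbb{U}(d)$ to control the Haar second moment of $g_D$, transfer to the relative $\epsilon$-approximate $4k$-design (which controls both $g_D$ and $g_D^2$ since these have degree $2k$ and $4k$), and finish with Chebyshev. The only difference is bookkeeping: the paper converts Levy's tail bound at the specific threshold $d^{-1/3}$ into the estimate $\E_{\mathcal{E}_{4k,\epsilon}}[g_D^2]\le(\E_{\haar}[g_D])^2+O(d^{-1/3})$, which gives exactly the stated $O(d^{-1/6})$, whereas your direct sub-Gaussian variance bound $\mathrm{Var}_{\haar}[g_D]=O(d^{-1})$ is tighter and yields $O(d^{-5/6})$; your observation that Levy implicitly requires $\|D\|_\infty=O(1)$ is correct and is automatic in all the paper's applications since there $D=e^{-i\Lambda t}$ is unitary.
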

\begin{proof}
For a fixed diagonal $D$, we can define a $2k$-degree function $g_D(U) \coloneqq f^{(k)}(U D U^\dagger)$.
Note that by Levy's lemma, $\Pr[g_D(U) \geq \E_U[g_D(U)] + d^{-1/3}] \leq \exp(-\frac{d^{1/3}}{200\eta^2})$.
\begin{equation}
\begin{aligned}
    \E_{U \sim \mathcal{E}_{4k,\epsilon}}[(g_D(U))^2] &\leq (1+\epsilon) \E_{U \sim \haar}[(g_D(U))^2] \\
    &\leq (1+\epsilon) \qty(\qty(\E_{U \sim \haar}[g_D(U)] + d^{-1/3})^2 + \exp(-\frac{d^{1/3}}{200\eta^2})) \\
    &\leq \qty(\E_{U \sim \haar}[g_D(U)])^2 + O(d^{-1/3}).
\end{aligned}
\end{equation}
By a simple Chebyshev inequality, for any fixed $D$,
\begin{equation}
    \Pr_{U \sim \mathcal{E}_{4k,\epsilon}}\qty[\abs{g_D(U)- \E_{U \sim \haar}[(g_D(U))]} \geq d^{-1/12}] \leq O(d^{-1/6}).
\end{equation}
\end{proof}

Now, we turn to calculating $\E_{U \sim \haar}[(f^{(k)}(U e^{-i \Lambda t} U^\dagger))]$ when $f^{(k)}$ represents entanglement, magic, local operator entanglement, and OTOCs of time-evolved states or operators. These Haar averages are shown in \cref{tab:expval-haar}. They are derived by brute forcing the Weingarten calculus calculation illustrated in  \cref{fig:vectorize} using \texttt{SymPy}, computer algebra system~\cite{sympy}. This is the only feasible rigorous approach, as the full expressions include up to $(8!)^2 \sim 10^9$ terms. The basic technique for the calculation of the first two quantities (purity and magic) is outlined below, and no new techniques were used for the calculation of the remaining two quantities.
\begin{itemize}
    \item For purity, we rewrite the subsystem swap operator as a sum $d_A^{-1} \sum_{P \in \mathbb{P}_{n_A}} P^{\otimes 2}$, after which the 
    problem reduces to evaluating $\E_{U \sim \haar}[\tr^2(P \rho(U,t))]$. Paulis in $\mathbb{P}^{(Z)} \coloneqq \expval{Z_1,Z_2,\ldots,Z_n}$ have different behavior than Paulis outside this subgroup, 
    because these are Paulis for which the initial state has non-zero expectation. We find
    \begin{equation}
        \E_{U}[\tr^2(P \rho(U,t))] = \begin{cases}
            1 &\qq{if $P = \id$,} \\
            \abs{Z(\Lambda t)}^4 + \frac{2\Re(Z(\Lambda t)^2 Z(-2\Lambda t)) - 3 \abs{Z(\Lambda t)}^4 + 1}{d} + O(d^{-2}) &\qq{if $P \in \mathbb{P}^{(Z)}$,} \\
            \frac{1-\abs{Z(\Lambda t)}^4}{d} + O(d^{-2}) &\qq{otherwise.}    
        \end{cases}
    \end{equation}
    \item For magic, the technique was similar. 
    We find
    \begin{equation}
        \E_{U}[\tr^4(P \rho(U,t))] = \begin{cases}
            1 &\qq{if $P= \id$,} \\
            \abs{Z(\Lambda t)}^8 + \frac{12\abs{Z(\Lambda t)}^4\Re(Z(\lambda t)^2 Z^*(2\Lambda t))) - 18 \abs{Z(\Lambda t)}^8 + 6\abs{Z(\Lambda t)}^4}{d} + O(d^{-2}) &\qq{if $P \in \mathbb{P}^{(Z)}$,} \\
            \frac{3 \qty(\abs{Z(\Lambda t)}^2 -1)^2}{d^{2}} + O(d^{-3}) &\qq{otherwise.}
        \end{cases}
    \end{equation}
\end{itemize}

\begin{table*}[ht]
    \centering
    \begin{tabular}{|c||c|c|}
    \hline
         \textbf{Quantity}&\textbf{Leading term}&\textbf{Subleading term} \\
    \hline \hline
    Subsystem purity $\tr((\rho(U,t)_A)^2)$; & \multirow{2}{*}{$\abs{Z(\Lambda t)}^4$} & \multirow{2}{*}{$(d_B^{-1}+d_A^{-1})(1-\abs{Z(\Lambda t)}^4)$} \\ 
    $\rho(U,t) \coloneqq U e^{-i \Lambda t}U^\dagger \ketbra{0} U e^{i \Lambda t} U^\dagger$ & & \\
    \hline
    Stabilizer purity $M_{\text{pur}}(\rho(U,t))$; & \multirow{2}{*}{$\abs{Z(\Lambda t)}^8$} & \multirow{2}{*}{$d^{-1}\qty(12\abs{Z(\Lambda t)}^4\Re(Z(\lambda t)^2 Z^*(2\Lambda t))) - 16 \abs{Z(\Lambda t)}^8 + 4)$} \\ 
    $M_{\text{pur}}(\rho) \coloneqq d^{-1}\sum_{P \in \mathbb{P}_n} \tr^4(P \rho)$ & & \\ \hline
    Operator purity $\tr((V(U,t)_{A\cup A'})^2)$; & \multirow{2}{*}{$\abs{Z(\Lambda t)}^8$} & $d^{-1}\Bigl(2\Re(Z(\Lambda t)^4 Z^*(2\Lambda t)^2) + 2 \abs{Z(\Lambda t)}^4 \abs{Z(2\Lambda t)}^2$ \\ 
    $\ket{V(U,t)} \coloneqq (U e^{-i \Lambda t}U^\dagger) V (U e^{i \Lambda t} U^\dagger) \otimes \id \ket{\phi^+}$ &  & $-\,2\abs{Z(\Lambda t)}^8 - 4\abs{Z(\lambda t)}^4 \Re(Z(\Lambda t)^2 Z^*(2\Lambda t)) + 2\Bigr)$\\ \hline
    4-point OTOC; & \multirow{2}{*}{$\abs{Z(\Lambda t)}^4$} & \multirow{2}{*}{$O(d^{-2})$} \\ 
    $\OTOC_4(U,t) = \tr(P_1(U,t) P_2 P_1(U,t) P_2)/d$ & & \\ \hline
    \end{tabular}
    \caption{Average values of four quantities for 
    Haar average eigenbases. For the operator purity, we have assumed $\dim \mathcal{H}_A = \dim \mathcal{H}_B$ for convenience, which is to say that the bipartition we evaluate the purity with respect to exactly divides the system in half. For the 4-point OTOC, we assume $P_1$ and $P_2$ are commuting non-identity Pauli operators.}
    \label{tab:expval-haar}
\end{table*}

Having established the centrality of $Z(\Lambda t)$ and its moments, we will now separate our analysis into two separate cases. First, we will aim to understand the late-time behavior of the $Z(\Lambda t)$ when $\Lambda$ is distributed according to either $\tilde{p}_{\tilde{d}}$ or $\tilde{p}_{\tilde{d},k}$. We will then turn to focus on the case of early time behavior. 

\subsection{Late-time Behavior}
Our approach is as follows. We first calculate the expectation of $\abs{Z(\Lambda t)}^{2k}$ at late times under a $2k$-wise independent distribution. By combining this with various concentration and anticoncentration (e.g., the Paley–Zygmund inequality) inequalities, we can find good bounds on the distribution of $\abs{Z(\Lambda t)}^{2k}$, which in turn allows us to make inferences about the quantities in \cref{tab:expval-haar}.

\begin{lemma}[Late-time expectation of $Z$]
For any integer $k=O(1)$, and any $t \geq \Omega(d)$,
\begin{equation}
    \E_{\Lambda \sim \tilde{p}_{d,2k}}\qty[\abs{Z(\Lambda t)}^{2k}] = \frac{k!}{d^k} + O(d^{-(k+1)}).
\end{equation}
\end{lemma}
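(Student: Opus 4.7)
The plan is to expand the $2k$-th moment as a sum over $2k$-tuples of indices and apply a Wick-type combinatorial analysis controlled by the decay of the characteristic function of the Wigner semicircle law. Writing $Z(\Lambda t) = d^{-1}\sum_{j} e^{-i\lambda_j t}$,
\begin{equation*}
|Z(\Lambda t)|^{2k} = d^{-2k}\!\!\!\sum_{j_1,\ldots,j_{2k}\in[d]}\!\!\exp\!\Bigl(-it\!\sum_{r=1}^k\!\lambda_{j_r}+it\!\sum_{r=k+1}^{2k}\!\lambda_{j_r}\Bigr).
\end{equation*}
For each tuple, call positions $1,\ldots,k$ ``positive'' and $k+1,\ldots,2k$ ``negative,'' and let $p_i,n_i$ be the number of times value $i$ appears in positive/negative positions. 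At most $2k$ distinct values appear per tuple, so $2k$-wise independence of $\tilde{p}_{d,2k}$ factorizes the expectation as $\prod_{i:\,p_i+n_i>0}\phi\bigl((p_i-n_i)t\bigr)$, where $\phi(s)=\int e^{-is\lambda}p^{(1)}(\lambda)\,d\lambda = J_1(2s)/s$ is the Bessel-form characteristic function of the semicircle. The asymptotic $J_1(x)\sim\sqrt{2/(\pi x)}\cos(x-3\pi/4)$ yields the decay $|\phi(s)|=O(|s|^{-3/2})$ as $|s|\to\infty$.

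The leading contribution comes from \emph{perfectly paired} tuples: $k$ distinct values, each appearing once positively and once negatively. Every factor is $\phi(0)=1$, and the count is $\frac{d!}{(d-k)!}\cdot k! = k!\,d^k\bigl(1+O(1/d)\bigr)$. Dividing by $d^{2k}$ yields exactly $k!/d^k + O(d^{-(k+1)})$.

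It remains to bound the other tuples. Let $m_b$ (resp.\ $m_u$) denote the number of distinct values that are \emph{balanced} ($p_i=n_i\geq 1$) or \emph{unbalanced} ($p_i\neq n_i$). For balanced tuples with $m_b<k$, some value must appear with multiplicity $\geq 4$, so there are only $O(d^{k-1})$ such tuples, contributing $O(d^{-(k+1)})$ after normalization. For unbalanced tuples, each of the $m_u$ unbalanced factors contributes $|\phi((p_i-n_i)t)|=O(t^{-3/2})=O(d^{-3/2})$ under $t\geq\Omega(d)$. The invariant $\sum_i(p_i-n_i)=0$ rules out $m_u=1$, forcing $m_u\geq 2$; combined with $2m_b+m_u\leq 2k$ and the crude combinatorial bound $O(d^{m_b+m_u})$ for tuples with the given distinct-value signature, the total unbalanced contribution is $O(d^{m_b+m_u-2k}\,t^{-3m_u/2})\leq O(d^{-k-2})$, safely inside the desired error.

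The main subtle point is the observation that $\sum_i(p_i-n_i)=0$ forbids $m_u=1$: this is what upgrades the naive $O(d^{-3/2})$ per unbalanced factor into the target $O(d^{-(k+1)})$ error. The rest is bookkeeping: an explicit enumeration over $(m_b,m_u)$ confirms that every remaining configuration produces an error at most $O(d^{-(k+1)})$ once $t\geq\Omega(d)$. One also tacitly uses only the $L^\infty$ bound on $\phi$ (rather than any fine cancellation from the cosine factor), so oscillatory cross-terms between factors pose no difficulty.
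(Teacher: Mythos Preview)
Your argument is correct and, in one respect, sharper than the paper's. Both proofs expand $|Z(\Lambda t)|^{2k}$ as a sum over $2k$-tuples and count the balanced (pair-off) tuples in the same way, obtaining the leading $k!/d^k$ and the subleading $O(d^{-(k+1)})$ from balanced configurations with fewer than $k$ distinct values. The difference lies in how the \emph{unbalanced} tuples are handled. The paper replaces the wrapped-semicircle phase distribution by the uniform distribution on $[0,2\pi)$, asserts a total-variation error of order $O(d^{-1})$, and concludes that unbalanced tuples contribute zero up to that additive $O(d^{-1})$ slack. Read literally, this only yields $\E[|Z|^{2k}] = k!/d^k + O(d^{-1})$, which for $k\ge 1$ is coarser than the stated $O(d^{-(k+1)})$ remainder (indeed for $k\ge 2$ it does not even isolate the leading term). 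Your route avoids this entirely: by bounding each unbalanced factor directly via the Bessel decay $|\phi(s)|=|J_1(2s)/s|=O(|s|^{-3/2})$ and using the parity constraint $\sum_i(p_i-n_i)=0$ to force $m_u\ge 2$, you obtain an $O(d^{-(k+2)})$ bound on the unbalanced contribution, comfortably inside the claimed error. In short, the paper's passage through the uniform distribution is conceptually clean but too lossy for the error term as stated; your characteristic-function estimate is the step that actually nails the $O(d^{-(k+1)})$.
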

\begin{proof}
For simplicity, let us denote $\mathbb{E}:=\E_{\Lambda \sim \tilde{p}_{d,2k}} $ for brevity
\begin{equation}
    \E\qty[\abs{Z(\Lambda t)}^{2k}] = \frac{1}{d^{2k}} \sum_{(p_1,q_1),\ldots,(p_k,q_k)} \E\qty[\exp(-i \sum_{m=1}^k (U_{p_m}-U_{q_m}))],\label{eq:z2k}
\end{equation}
where we define the random variables $U_i \coloneqq t\lambda_i \pmod{2\pi}$, with each of the $\lambda_i$ iid according to $p^{(1)}$. Observe that if $t \geq \Omega(d)$, $U_i$ is \emph{uniformly} distributed over the interval $[0,2\pi]$ up to an error $d^{-1}$ in total variation distance. Therefore, we evaluate \cref{eq:z2k} in the case where $U_i \sim \text{Uniform}(0,2\pi)$; the actual value will differ by at most $O(k/d)=O(d^{-1})$. 

When $U_i \sim \text{Uniform}(0,2\pi)$, then if the argument of the exponential in \cref{eq:z2k} is not zero (i.e., all the $U_i$s cancel out), the expectation is zero. Therefore, the expectation of $\abs{\tr(e^{-i \Lambda t})}^{2k}$ reduces to counting the number of tuples $(p_1,q_1),\ldots,(p_k,q_k)$ such that the they all ``pair off'', so that $\sum_{m=1}^k U_{p_m} - U_{q_m} = 0$. The number of ways to do this is
\begin{equation}
    \sum_{\lambda \vdash k} \binom{d}{\abs{\lambda}} \binom{k}{\lambda_1,\lambda_2,\ldots}^2,
\end{equation}
where $\binom{k}{\lambda_1,\lambda_2,\ldots}$ is a multinomial coefficient and $\abs{\lambda}$ is the number of terms in the partition $\lambda$. Here, the sum runs over all possible repeated indices in $(p_1,\ldots,p_k)$ -- so for instance of $k=3$ and $\lambda=(2,1)$, it means there are two repeated indices and one unique one. The $\binom{d}{\abs{\lambda}}$ term counts the number of ways to pick indices from $1,\ldots,d$ to assign to $p_1,\ldots,p_k$, and the multinomial coefficient counts for the number of ways to shuffle the $p$ and $q$ indices. The leading order term in this corresponds to $\lambda = (1,\ldots,1)$ and yields a contribution $d^k k! + O(d^{k-1})$, and remaining terms are also $O(d^{k-1})$. We can, therefore, conclude
that
\begin{equation}    \E\qty[\frac{\abs{\tr(e^{-i \Lambda t})}^{2k}}{d^{2k}}] = \frac{k!}{d^k} + O(d^{-(k+1)}).
\end{equation}
\end{proof}

\begin{remark}[Generalizing
results of \citet{cotler2017chaos}]
    This is a generalization of a result by \citet{cotler2017chaos}, which has shown that at times $t=\Omega(d)$,
    \begin{equation}\label{eq:sff-12}
    \E\qty[\frac{\abs{\tr(e^{-i \Lambda t})}^{2}}{d}] = 1 + O(d^{-1}) \qq{and} \E\qty[\frac{\abs{\tr(e^{-i \Lambda t})}^{4}}{d^2}] = 2 + O(d^{-1}).
    \end{equation}
    It has, therefore, been conjectured that 
    \begin{equation}
        \E\qty[\frac{\abs{\tr(e^{-i \Lambda t})}^{2k}}{d^k}] = k + O(d^{-1}).
    \end{equation}
    Instead, we have shown that the asymptotic value is $k!$, which is consistent with the known results in \cref{eq:sff-12}.
\end{remark}

\begin{corollary}[Concentration of $\abs{Z(\Lambda t)}$ under $4$-wise independence]\label{cor:z-bound}
Let $\tilde{p}_{\tilde{d},4}$ the $4$-wise independent distribution. Then, for any $t \geq \Omega(\tilde{d})$, it holds that
\begin{equation}
    \Pr_{\Lambda \sim \tilde{p}_{\tilde{d},4}}\qty[\abs{Z(\Lambda t)}^{2k} \geq \tilde{d}^{-2k}] \geq \frac{1}{2} - O(\tilde{d}^{-1}).
\end{equation}
\end{corollary}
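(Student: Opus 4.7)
The plan is to reduce the statement to a second-moment method applied to a single random variable. First, I would exploit the degenerate structure of $\tilde{p}_{\tilde{d},4}$: because each of the $d$ eigenenergies of $\Lambda$ takes one of $\tilde{d}$ distinct values with multiplicity $d/\tilde{d}$, we have
\[ Z(\Lambda t) = \frac{1}{d}\tr(e^{-i\Lambda t}) = \frac{1}{\tilde{d}}\sum_{j=1}^{\tilde{d}} e^{-i\lambda_j t}, \]
where $\lambda_1,\dots,\lambda_{\tilde{d}}$ are $4$-wise independent draws from the Wigner semicircle $p^{(1)}$. Consequently the event $\{\abs{Z(\Lambda t)}^{2k} \geq \tilde{d}^{-2k}\}$ coincides for every $k \geq 1$ with the single event $\{\abs{Z(\Lambda t)}^2 \geq \tilde{d}^{-2}\}$, so the parameter $k$ plays no role and it suffices to analyze the case $k=1$.

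Next, I would apply the previous lemma at the effective scale $\tilde{d}$ rather than $d$, taking $k=1$ and $k=2$. The only ingredients used in its proof are (i) $2k$-wise independence of the summands and (ii) approximate uniformity of $t\lambda_j \bmod 2\pi$ on $[0,2\pi]$, which by the hypothesis $t \geq \Omega(\tilde{d})$ and the decay of the characteristic function of $p^{(1)}$ holds up to $O(\tilde{d}^{-1})$ in total variation. Hence $4$-wise independence suffices to yield
\[ \E\bigl[\abs{Z(\Lambda t)}^2\bigr] = \tilde{d}^{-1} + O(\tilde{d}^{-2}), \qquad \E\bigl[\abs{Z(\Lambda t)}^4\bigr] = 2\tilde{d}^{-2} + O(\tilde{d}^{-3}). \]

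Finally, I would apply the Paley--Zygmund inequality to the nonnegative random variable $X \coloneqq \abs{Z(\Lambda t)}^2$. Setting $\theta \coloneqq \tilde{d}^{-2}/\E[X] = O(\tilde{d}^{-1})$, so that $\theta\E[X]$ matches the threshold $\tilde{d}^{-2}$ exactly, the inequality gives
\[ \Pr\bigl[X \geq \tilde{d}^{-2}\bigr] \geq (1-\theta)^2\,\frac{\E[X]^2}{\E[X^2]} = \bigl(1-O(\tilde{d}^{-1})\bigr)\bigl(\tfrac{1}{2} + O(\tilde{d}^{-1})\bigr) = \tfrac{1}{2} - O(\tilde{d}^{-1}), \]
which is the claim. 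The only delicate point -- and the one I would expect to spend the most time on -- is the bookkeeping of the $O(\tilde{d}^{-1})$ error terms when importing the previous lemma with $d$ replaced by $\tilde{d}$: one has to verify that both the ``pair-off'' combinatorial argument and the replacement of $t\lambda_j \bmod 2\pi$ by exactly uniform phases continue to contribute errors no larger than $O(\tilde{d}^{-1})$ at this rescaled dimension. Everything else is a direct application of tools already established in the paper.
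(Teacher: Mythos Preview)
Your proposal is correct and follows essentially the same approach as the paper: apply the Paley--Zygmund inequality to $X=\abs{Z(\Lambda t)}^2$, using the first and second moments supplied by the preceding lemma at the effective dimension $\tilde{d}$. Your write-up is in fact slightly more explicit than the paper's, which leaves the reduction from general $k$ to $k=1$ and the rescaling $d\to\tilde{d}$ implicit; otherwise the arguments are identical.
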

\begin{proof}
We make use of the Paley-Zygmud inequality, which reads
\begin{equation}
\Pr\qty[\abs{Z(\lambda t)}^2 \geq \tilde{d}^{-2}] \geq \qty(1 - \frac{1}{\tilde{d}^{2} \E[\abs{Z(\lambda t)}^2]}) \frac{\E[\abs{Z(\lambda t)}^2]^2}{\E[\abs{Z(\lambda t)}^4]} \geq \qty(1 - \frac{1}{\tilde{d}}) \frac{1}{2}-O(\tilde{d}^{-1}) = \frac{1}{2} - O(\tilde{d}^{-1}).
\end{equation}
\end{proof}

\begin{lemma}[Concentration of $\abs{Z(\Lambda t)}$ under complete independence]\label{lem:pd-indep}
For any $t \geq \Omega(\tilde{d})$,
\begin{equation}
    \Pr_{\Lambda \sim \tilde{p}_{\tilde{d}}}\qty[\tilde{d}^{-2k} \leq \abs{Z(\Lambda t)}^{2k} \leq \tilde{d}^{-k/2}] \geq 1-O(\tilde{d}^{-1/2}).
\end{equation}
\end{lemma}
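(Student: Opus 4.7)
The plan combines Markov's inequality for the upper tail with the two-dimensional Berry--Esseen theorem for the lower tail. I would first rewrite the problem: under $\tilde{p}_{\tilde{d}}$ there are exactly $\tilde{d}$ unique eigenvalues $\lambda_1,\ldots,\lambda_{\tilde{d}}$, each iid from the Wigner semicircle law $p^{(1)}$ and each with degeneracy $d/\tilde{d}$, so that
\begin{equation}
Z(\Lambda t) = \frac{1}{\tilde{d}} \sum_{j=1}^{\tilde{d}} e^{-it \lambda_j}.
\end{equation}
For $t = \Omega(\tilde{d})$, the phases $\theta_j \coloneqq t\lambda_j \bmod 2\pi$ are close to uniform on $[0,2\pi)$ with total-variation error $O(\tilde{d}^{-1})$, and the preceding lemma (applied with $d$ replaced by $\tilde{d}$, using that full independence implies $2k$-wise independence) gives $\E[|Z(\Lambda t)|^{2k}] = k!\,\tilde{d}^{-k} + O(\tilde{d}^{-(k+1)})$.

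For the upper tail, Markov's inequality applied directly to $|Z(\Lambda t)|^{2k}$ gives
\begin{equation}
\Pr\!\left[|Z(\Lambda t)|^{2k} \geq \tilde{d}^{-k/2}\right] \leq \frac{\E[|Z(\Lambda t)|^{2k}]}{\tilde{d}^{-k/2}} = O(\tilde{d}^{-k/2}),
\end{equation}
which is at most $O(\tilde{d}^{-1/2})$ for every integer $k \geq 1$.

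The lower tail $|Z|^2 \geq \tilde{d}^{-2}$ is the harder part: the Paley--Zygmund argument of \cref{cor:z-bound} caps at probability $1/2$, since $\E[|Z|^4]/\E[|Z|^2]^2 \to 2$, so here full independence must be genuinely exploited. Writing $S \coloneqq \sum_{j=1}^{\tilde{d}} (\cos\theta_j, -\sin\theta_j) \in \mathbb{R}^2$ as a sum of $\tilde{d}$ iid vectors with mean zero, covariance $\tfrac{1}{2}I_2$, and bounded third moment, the multivariate Berry--Esseen theorem on convex sets~\cite{rai2019multivariate} yields
\begin{equation}
\sup_{C \text{ convex}} \left|\Pr\!\left[\tilde{d}^{-1/2} S \in C\right] - \Pr[G \in C]\right| = O(\tilde{d}^{-1/2}),
\end{equation}
with $G \sim \mathcal{N}(0, \tfrac{1}{2} I_2)$. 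Since $|Z|^2 = \|S\|^2/\tilde{d}^2$, the event $|Z|^2 \leq \tilde{d}^{-2}$ corresponds to the convex event $\|\tilde{d}^{-1/2} S\| \leq \tilde{d}^{-1/2}$; and because $\|G\|^2 \sim \operatorname{Exp}(1)$, we get $\Pr[\|G\|^2 \leq \tilde{d}^{-1}] = 1 - e^{-\tilde{d}^{-1}} = O(\tilde{d}^{-1})$. Therefore $\Pr[|Z|^2 \leq \tilde{d}^{-2}] = O(\tilde{d}^{-1/2})$, and a union bound over the two failure events gives the claimed $1 - O(\tilde{d}^{-1/2})$.

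The main obstacle I anticipate is carefully tracking the constants in the Berry--Esseen step: (i) absorbing the $O(\tilde{d}^{-1})$ TV error from $\theta_j$ not being exactly uniform, which should propagate linearly through the relevant moments, and (ii) confirming that for the 2D case with bounded iid summands the convex-set Berry--Esseen rate really is $O(\tilde{d}^{-1/2})$ rather than polynomially worse. Both are technical bookkeeping rather than conceptual barriers, but they need to be handled with care to ensure the stated rate $\tilde{d}^{-1/2}$ is genuine.
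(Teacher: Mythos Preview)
Your approach is essentially the paper's: both arguments hinge on the multivariate Berry--Esseen theorem on convex sets to show that $\tilde{d}^{-1/2}S$ is $O(\tilde{d}^{-1/2})$-close to a centered Gaussian $G$ with covariance $\tfrac12 I_2$, then use $\|G\|^2\sim\operatorname{Exp}(1)$ to bound the small-ball probability. Two minor differences are worth flagging. First, for the upper tail you use Markov on $\E[|Z|^{2k}]$, which is slightly cleaner than the paper's route of repeating the Gaussian small-ball computation with the larger disk of radius $\tilde{d}^{-1/4}$. Second, and more relevant to the obstacle you identify in (i): rather than passing through the uniform approximation of $\theta_j$ and tracking how the TV error propagates, the paper simply computes the \emph{exact} mean $\mu=(J_1(2t)/t,\,0)$ and covariance $\Sigma$ of the summands $X_j=(\cos(t\lambda_j),-\sin(t\lambda_j))$ under $p^{(1)}$, applies Berry--Esseen to land at $D_1=\mathcal{N}(\mu,\Sigma/\tilde{d})$, and then bounds the KL divergence between $D_1$ and $D_2=\mathcal{N}(0,\tfrac12 I_2/\tilde{d})$ directly (using $|J_1(x)|\lesssim |x|^{-1/2}$ for large $x$). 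This sidesteps your bookkeeping concern entirely: you never need the phases to be uniform, only that $\mu$ and $\Sigma-\tfrac12 I_2$ are small enough, which the Bessel asymptotics deliver immediately for $t\ge\Omega(\tilde{d})$.
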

\begin{proof}
When $\tilde{p}_{\tilde{d}}$ is iid, observe that $Z(\Lambda t) = \frac{1}{\tilde{d}} \sum_{j=1}^{\tilde{d}} e^{-i \lambda_j t}$ becomes the average of $\tilde{d}$ iid random variables $X_j \coloneqq e^{-i \lambda_j t}$, where $\lambda_j \sim p^{(1)}$. We will treat $X_j$ as a vector-valued random variable, with components equal to the real and imaginary parts of $e^{-i \lambda_j t}$. We can analytically calculate
\begin{equation}
    \mu \coloneqq \E[X_j] = \mqty[\frac{J_1(2t)}{t} \\ 0],
\end{equation}
where $J_1$ is a Bessel function of the first kind. Also, 
\begin{equation}
    \Sigma \coloneqq \text{cov}(X_j) = \mqty[\frac{1}{2}\qty(1+\frac{J_1(4t)}{2t}) - \frac{J_1(2t)^2}{t^2} & 0 \\ 0 & \frac{1}{2}\qty(1-\frac{J_1(4t)}{2t})].
\end{equation}
Now, we have used the fact that the modified Bessel function of the first kind obeys~\cite{abramowitz1965handbook}
\begin{equation}
    \abs{J_1(x)} \leq \sqrt{\frac{2}{\pi \abs{x}}} + O(\abs{x}^{-3/2}).
\end{equation}
Therefore, for $t \geq \Omega(\tilde{d})$, $\Sigma = \frac{\id}{2} + O(\tilde{d}^{-3/2})$. We first show that the two normal distributions $D_1 = \mathcal{N}(\mu,\Sigma/\tilde{d})$ and $D_2 = \mathcal{N}(0,\id/(2\tilde{d})))$ are close in total variation distance. The KL divergence between two bivariate Gaussians is exactly $\frac{1}{2} \qty(\log \frac{\abs{\Sigma_2}}{\abs{\Sigma_1}} - 2 + \tr(\Sigma_2^{-1} \Sigma_1) + (\mu_2-\mu_1)^T \Sigma_2^{-1} (\mu_2-\mu_1))$. For our case, this means
\begin{equation}
    D_{KL}(D_2 \parallel D_1) = \frac{1}{2} \qty(\log \frac{\abs{\id/2}}{\abs{\Sigma}} - 2 + \tr(2\Sigma) + 2\tilde{d}\frac{J_1^2(2t)}{t^2}) \leq O(\tilde{d}^{-1}).
\end{equation}
This implies that the total variation distance between $D_1$ and $D_2$ is $O(\tilde{d}^{-1/2})$.

The multivariate Berry-Esseen theorem says that $Z(\Lambda t)$ converges to $D_1$ in the sense that for any convex region $U$, $\abs{\Pr[Z(\Lambda t) \in U] - \Pr[D_1 \in U]} \leq O(\tilde{d}^{-1/2})$~\cite{rai2019multivariate}. By a simple triangle inequality, we then also have that
\begin{equation}
    \abs{\Pr[Z(\Lambda t) \in U] - \Pr[D_2 \in U]} \leq O(\tilde{d}^{-1/2}).
\end{equation}
To make use of this result, let us set $U$ to be a disk of radius $\tilde{d}^{-1}$ around the origin and compute the probability
\begin{equation}
    \Pr[D_2 \in U] = \int_0^{\tilde{d}^{-1}} 2\pi r \cdot \frac{2\tilde{d}}{2\pi} \cdot \exp(-\frac{r^2}{2} \cdot 2 \tilde{d}) \dd{r} = 1 - e^{-(\tilde{d}^{-1})} \leq \tilde{d}^{-1}.
\end{equation}
Therefore, we conclude that $\Pr[\abs{Z(\Lambda t)} \leq \tilde{d}^{-1}] \leq O(\tilde{d}^{-1/2})$, so in general,
\begin{equation}
    \Pr\qty[\abs{Z(\Lambda t)}^{2k} \geq \tilde{d}^{-2k}] \geq 1-O(\tilde{d}^{-1/2}).
\end{equation}
Repeating the above calculation and setting $U$ to be a disk of radius $\tilde{d}^{-1/4}$ around the origin, we find
\begin{equation}
    \Pr\qty[\tilde{d}^{-2k} \leq \abs{Z(\Lambda t)}^{2k} \leq \tilde{d}^{-k/2}] \geq 1 - O(\tilde{d}^{-1/2})\,,
\end{equation}
which concludes the proof. 
\end{proof}
Finally, this positions us to prove the main theorem.
\begin{theorem}[Late-time behavior of GUE versus pseudo-GUE]\label{th:late-time}
Let $\rho(H,t) \coloneqq e^{-iHt} \ketbra{0} e^{iHt}$, $\ket{V(H,t)} = (e^{-iHt} V e^{iHt} \otimes \id) \ket{\phi^+}$, and $\OTOC_4(H,t)= \tr(P_1(t) P_2 P_1(t) P_2)$, with $P_1 = e^{-iHt} P_1 e^{iHt}$. Let $\tilde{n} \coloneqq \log_2 \tilde{d}$. We define $\loe(H,t) = S_2(V(H,t)_{A\cup A'})$. For the pseudo-GUE Hamiltonians, assume the eigenbases form a relative $\epsilon$-approximate $16$-design with $\epsilon=O(d^{-1})$, and $\tilde{p}_{\tilde{d},8}$ is $8$-wise independent. Let us then denote $\mathcal{E}_{\tilde{d},8}$ the corresponding pseudo-GUE ensemble. For any $t_1 \geq \Omega(d)$ and any $t_2 \geq \Omega(\tilde{d})$, the following holds.
\small\begin{eqnarray}
        \Pr_{H \sim \gue}\qty[S_2(\rho(H,t_1)_A) \geq \frac{n}{16}] \geq 1-\exp(-O(n)) &&\qq{while} \Pr_{H \sim \pgue}\qty[\tilde{n} \leq S_2(\rho(H,t_2)_A) \leq 5\tilde{n}] \geq \frac{1}{2} - O(\tilde{d}^{-1}), \\
        \Pr_{H \sim \gue}\qty[M_2(\rho(H,t_1)) \geq \frac{n}{32}] \geq 1-\exp(-O(n)) &&\qq{while} \Pr_{H \sim \pgue}\qty[\tilde{n} \leq M_2(\rho(H,t_2)_A) \leq 9\tilde{n}] \geq \frac{1}{2} - O(\tilde{d}^{-1}), \\
        \Pr_{H \sim \gue}\qty[\loe(H,t_1) \geq \frac{n}{16}] \geq 1-\exp(-O(n)) &&\qq{while} \Pr_{H \sim \pgue}\qty[\tilde{n} \leq \loe(H,t_2) \leq 9\tilde{n}] \geq \frac{1}{2} - O(\tilde{d}^{-1}), \\
        \Pr_{H \sim \gue}\qty[|\OTOC_4(H,t_1)| \leq d^{-1/16}] \geq 1-\exp(-O(n)) &&\qq{while} \Pr_{H \sim \pgue}\qty[\tilde{d}^{-5} \leq \OTOC_4(H,t_2) \leq \tilde{d}^{-1}] \geq \frac{1}{2} - O(\tilde{d}^{-1}).
\end{eqnarray}\normalsize
Moreover, if we focus on the ensemble $\mathcal{E}_{\tilde{d}}$, i.e., 
using the iid spectral distribution $\tilde{p}_{\tilde{d}}$
(not just $8$-wise independent), the bounds for the right hand side sharpen further to $1-O(\tilde{d}^{-1/2})$.
\end{theorem}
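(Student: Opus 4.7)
The whole theorem is driven by reducing each of the four physical quantities to a fixed low-degree polynomial in the moments $|Z(\Lambda t)|^{2k}$, where $Z(X) = \tr(e^{-iX})/d$, and then separately controlling those moments for the two spectral distributions. Concretely, the plan is: (i) invoke \cref{lem:haar-avg} to replace the eigenbasis-dependent quantity by its Haar expectation, up to concentration error $O(d^{-1/12})$; (ii) substitute the closed forms in \cref{tab:expval-haar}, which express each Haar expectation as $|Z(\Lambda t)|^{2k} + O(d^{-1})$ for $k \in \{2,4\}$ (with $k=2$ for subsystem purity and the 4-point OTOC, and $k=4$ for stabilizer purity and operator purity); and (iii) prove upper and lower tail bounds on $|Z(\Lambda t)|^{2k}$ separately for $\gue$ and for the pseudo-GUE, turning each of those into the claimed bounds on $S_2$, $M_2$, $\loe$, and $\OTOC_4$ by taking $-\log$.

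For the \textbf{$\gue$ side}, at times $t_1 = \Omega(d)$ the spectral form factor has saturated to its plateau, so $\E[|Z(\Lambda t_1)|^{2k}] = k!/d^k + O(d^{-k-1})$ (the late-time iid approximation for $\tilde{p}_{d,2k}$ used in the appendix applies to the true GUE at this scale because the correlation-dependent terms have decayed). This makes $|Z|^4 = O(d^{-2})$ typical. Since GUE eigenbases are exactly Haar, I can skip \cref{lem:haar-avg} and apply Levy's lemma directly to each of the four functions (all Lipschitz in $U$), obtaining exponential concentration $\exp(-\Omega(n))$ around the Haar mean. For subsystem purity on a balanced bipartition, the Haar mean is $|Z|^4 + (d_A^{-1}+d_B^{-1})(1-|Z|^4) \approx 2 d_A^{-1}$, so $S_2 \gtrsim n/2 \geq n/16$; the same routine substitution yields the claimed bounds for $M_2$, $\loe$, and $|\OTOC_4| \leq d^{-1/16}$ (where the exponent $1/16$ absorbs the Levy deviation).

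For the \textbf{pseudo-GUE side with $8$-wise independence}, the lower bound on $|Z(\Lambda t_2)|^{2k}$ comes directly from \cref{cor:z-bound} (a Paley–Zygmund argument on the second and fourth moments, which only needs $2k$-wise independence for $k \leq 4$); it gives $|Z|^{2k} \geq \tilde{d}^{-2k}$ with probability $\geq 1/2 - O(\tilde{d}^{-1})$. The matching upper bound follows from Markov's inequality applied to the moment $\E[|Z|^{2k}] = k!/\tilde{d}^{k} + O(\tilde{d}^{-(k+1)})$: one gets $|Z|^{2k} \leq \tilde{d}^{-k/2}$ except with probability $O(\tilde{d}^{-k/2})$. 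Intersecting these two events with the concentration event from \cref{lem:haar-avg} (which costs $O(d^{-1/6})$ in probability using the relative $16$-design assumption) and pushing through \cref{tab:expval-haar}, the subsystem purity lies in $[\tilde{d}^{-4}, \tilde{d}^{-1}] + O(d_A^{-1}+d_B^{-1}) + O(d^{-1/12})$, and since $\tilde{d} = \omega(\poly n)$ is much smaller than $d$, the log of this range yields $\tilde{n} \leq S_2 \leq 5\tilde{n}$ (the $5\tilde{n}$ absorbs the small additive slack). The analogous substitutions, using $k=4$ for $M_2$ and the operator purity and the coefficient table for the $\OTOC_4$, give the claims for $M_2$, $\loe$, and $\OTOC_4$ with probability at least $1/2 - O(\tilde{d}^{-1})$.

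For the \textbf{sharpened iid version} $\mathcal{E}_{\tilde{d}}$, I replace the Paley–Zygmund/Markov pair in the previous step by \cref{lem:pd-indep}, which provides the same two-sided sandwich $\tilde{d}^{-2k} \leq |Z(\Lambda t_2)|^{2k} \leq \tilde{d}^{-k/2}$ but now with probability $1 - O(\tilde{d}^{-1/2})$. Everything downstream is identical, so the failure probability improves from $1/2 - O(\tilde{d}^{-1})$ to $O(\tilde{d}^{-1/2})$. The \textbf{main obstacle} will be the bookkeeping in step (ii): after the Haar reduction each quantity has a leading term $|Z|^{2k}$ and several competing subleading terms of order $d^{-1}$ (and, for the OTOC, terms that could be as small as the target itself), and for the derived inequalities on $S_2$, $M_2$, $\loe$ to be simultaneously valid one must carefully check that the regime $d \gg \tilde{d}^{2k}$, which is forced by $\tilde{d} = \omega(\poly n)$, suppresses every subleading contribution below $\tilde{d}^{-k/2}$ so that taking $-\log$ yields the clean constants $\tilde{n}$, $5\tilde{n}$, $9\tilde{n}$ stated in the theorem.
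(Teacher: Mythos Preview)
Your pseudo-GUE argument is essentially the paper's: Haar reduction via \cref{lem:haar-avg}, substitute \cref{tab:expval-haar}, then Paley--Zygmund (\cref{cor:z-bound}) for the lower tail and Markov for the upper tail, and swap in \cref{lem:pd-indep} for the iid sharpening. That part is fine.

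The GUE side is where you diverge. The paper does \emph{not} use Levy's lemma directly. Instead it takes the full expectation over both $U$ and $\Lambda$, writes
\[
\E_{H\sim\gue}[\tr(\rho(H,t)_A^2)] \;=\; \E_{\Lambda\sim p}[|Z(\Lambda t)|^4] + O(d^{-1}) \;=\; \E_{\Lambda\sim \tilde{p}}[|Z(\Lambda t)|^4] + O(d^{-1/8}) \;=\; O(d^{-1/8}),
\]
and then a single Markov step gives $\Pr[\text{purity}>d^{-1/16}]\le O(d^{-1/16})=\exp(-\Omega(n))$. The key input you are missing is the spoofing theorem (\cref{thm:spoof}), which is what justifies the second equality and hence the claim that the late-time GUE moments agree with the iid ones. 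Your parenthetical ``the late-time iid approximation \ldots\ applies to the true GUE at this scale because the correlation-dependent terms have decayed'' is precisely the statement that needs proof, and the paper supplies it via \cref{thm:spoof} rather than by any direct spectral-form-factor computation. Your Levy route only controls the $U$-randomness; you still owe a bound on $|Z(\Lambda t)|^4$ under $\Lambda\sim p$, and the clean way to get it is exactly the spoofing transfer. Once you invoke \cref{thm:spoof}, the Markov argument is shorter than Levy and yields the stated $\exp(-O(n))$ directly.
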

\begin{proof}
We prove the statement for $S_2(\rho(H,t)_A)$. All other statements follow by similar proofs. 

First, we note that taking probabilities with respect to both $U \sim \mathcal{U}_{\text{PR}}$ and $\Lambda \sim \tilde{p}_{\tilde{d},8}$ amounts to taking a probability with respect to the entire Hamiltonian ensemble $\mathcal{E}_{\tilde{d},8}$. Combining the results of \cref{tab:expval-haar} with \cref{lem:haar-avg,cor:z-bound}, we have that for $t \geq \Omega(\tilde{d})$, one finds 

\begin{equation}\Pr_{H \sim \mathcal{E}_{\tilde{d},8}}\qty[\tr((\rho(H,t)_A)^2) \geq \tilde{d}^{-5}] \geq \frac{1}{2} - O(\tilde{d}^{-1}). 
\end{equation}
The use of the Markov inequality 
also implies that 
\begin{equation}
\Pr_{H \sim \mathcal{E}_{\tilde{d},8}}\qty[\tr((\rho(H,t)_A)^2) \geq \tilde{d}^{-1}] \leq O(\tilde{d}^{-1})
,
\end{equation}
and combining these two, we arrive at
\begin{equation}
    \Pr_{H \sim \mathcal{E}_{\tilde{d},8}}[\tilde{n} \leq S_2(\rho(H,t)_A) \leq 5\tilde{n}] \geq \frac{1}{2} - O(\tilde{d}^{-1}).
\end{equation}
When considering the spectral distribution $\tilde{p}_{\tilde{d}}$, i.e.,  the completely independent one, we can on the one hand plug in the bounds from \cref{lem:pd-indep} instead of the weaker bounds in \cref{cor:z-bound} to arrive at the stronger statement
\begin{equation}
    \Pr_{H \sim \pgue}[\tilde{n} \leq S_2(\rho(H,t)_A) \leq 5\tilde{n}] \geq 1 - O(\tilde{d}^{-1/2}).
\end{equation}
Conversely, for the GUE, we can use a simple Markov inequality, since
\begin{equation}
    \E_{H \sim \gue}[\tr((\rho(H,t)_A)^2)] = \E_{\Lambda \sim p}[\abs{Z(\Lambda t)}^4] + O(d^{-1}) = \E_{\Lambda \sim \tilde{p}}[\abs{Z(\Lambda t)}^4] + O(d^{-1/8}) = O(d^{-1/8}),
\end{equation}
where the second equality follows from \cref{thm:spoof}. Therefore,
\begin{equation}
    \Pr_{H \sim \gue}\qty[S_2(\rho(H,t)_A) \geq \frac{n}{16}] \leq O(d^{-1/16}) = \exp(-O(n)).
\end{equation}
\end{proof}

\subsection{Early-time Behavior}
The approach for the early-time behavior of the GUE and the pseudo-GUE is similar to the late-time approach; we calculate concentration inequalities for $\abs{Z(\Lambda t)}^{2k}$ under the distributions $\tilde{p}_{\tilde{d},4k}$ (\cref{cor:zlambt}) and $\tilde{p}_{\tilde{d}}$ (\cref{lem:zlambd-large}).
\begin{lemma}[Expectation of $Z$ for early times]\label{lem:zlambt}
Let $\tilde{p}_{\tilde{d},2k}$ the $2k$-wise independent degenerate distribution. Then, for any $k=O(1)$, we have
\begin{equation}
    \E_{\Lambda \sim \tilde{p}_{\tilde{d},2k}}\qty[\abs{Z(\lambda t)}^{2k}] = \qty(\frac{J_1(2t)}{t})^{2k} + O(d^{-1}).
\end{equation}
\end{lemma}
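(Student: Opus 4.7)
The plan is to expand $|Z(\Lambda t)|^{2k}$ as a sum over $2k$ eigenvalue indices and use the $2k$-wise independence of the $\tilde{d}$ unique energies to reduce the calculation to the characteristic function of the Wigner semicircle distribution. Since $\tilde{p}_{\tilde d, 2k}$ has exactly $\tilde d$ distinct values each repeated $d/\tilde d$ times, we may write $Z(\Lambda t) = \tilde{d}^{-1} \sum_{j=1}^{\tilde{d}} e^{-i \lambda_j t}$ with $\lambda_1, \ldots, \lambda_{\tilde d}$ being $2k$-wise independent draws from $p^{(1)}$. Expanding,
\begin{equation}
|Z(\Lambda t)|^{2k} = \frac{1}{\tilde d^{2k}} \sum_{p_1,\ldots,p_k,\,q_1,\ldots,q_k \in [\tilde d]} \exp\!\Bigl(-i t \sum_{m=1}^k (\lambda_{p_m} - \lambda_{q_m})\Bigr).
\end{equation}

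Next, I would split the sum into (i) tuples $(p_1,\ldots,p_k,q_1,\ldots,q_k)$ whose $2k$ indices are pairwise distinct, and (ii) all remaining tuples. For case (i), $2k$-wise independence suffices to factor the expectation into a product of $2k$ single-variable characteristic functions. Using the identity $\int_{-2}^{2}\frac{\sqrt{4-\lambda^2}}{2\pi}\, e^{-i\lambda t}\, d\lambda = J_1(2t)/t$ (which is real), each such term contributes exactly $(J_1(2t)/t)^{2k}$. The number of fully distinct tuples is $\tilde d(\tilde d - 1)\cdots(\tilde d - 2k+1) = \tilde d^{2k}(1 - O(\tilde d^{-1}))$, so case (i) yields $(J_1(2t)/t)^{2k}\bigl(1 + O(\tilde d^{-1})\bigr)$.

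For case (ii), the number of index tuples with at least one repetition is $O(\tilde d^{2k-1})$, and every summand has absolute value at most $1$, so the total contribution to $\mathbb{E}[|Z(\Lambda t)|^{2k}]$ is bounded in modulus by $O(\tilde d^{-1})$. Combining the two cases yields $\mathbb{E}[|Z(\Lambda t)|^{2k}] = (J_1(2t)/t)^{2k} + O(\tilde d^{-1})$, which (since $\tilde d \leq d$) gives the advertised bound (indeed a slightly stronger $O(\tilde d^{-1})$ bound when $\tilde d \ll d$).

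There is no real obstacle here; the argument is a straightforward moment expansion. The one thing to verify carefully is that only $2k$-wise independence is invoked: in case (i) the expectation factorizes over at most $2k$ distinct indices, and in case (ii) we use only the trivial bound $|e^{-it(\cdots)}| \leq 1$ without invoking any independence whatsoever. This confirms that the lemma holds under the weakened independence assumption, which is essential for the later efficient-implementation theorems.
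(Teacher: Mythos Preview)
Your argument is essentially identical to the paper's: expand $|Z(\Lambda t)|^{2k}$ as a sum over $2k$ indices, separate the tuples with all distinct indices from the rest, use $2k$-wise independence to factor the distinct-tuple contribution into $(J_1(2t)/t)^{2k}$, and bound the remaining tuples by counting. The paper does exactly this (working with $d$ in place of your $\tilde d$).

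One small slip: your closing parenthetical has the inequality backwards. Since $\tilde d \le d$, an $O(\tilde d^{-1})$ error is \emph{weaker}, not stronger, than the advertised $O(d^{-1})$. Your reduction $Z(\Lambda t)=\tilde d^{-1}\sum_{j=1}^{\tilde d}e^{-i\lambda_j t}$ is correct and the resulting $O(\tilde d^{-1})$ is the right error for the degenerate ensemble; the paper's own proof in fact appears to treat the non-degenerate case $\tilde d = d$ (note the ``$\tilde p_{d,2k}$'' in its first displayed line), so the stated $O(d^{-1})$ is really $O(\tilde d^{-1})$ in the general degenerate setting. This does not affect any downstream application, since only $\tilde d = \omega(\poly n)$ is ever assumed.
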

\begin{proof}
For the reminder of the proof, let us denote $\E:= \E_{\Lambda\sim \tilde{p}_{\tilde{d},2k}}$. We employ the expansion 
\begin{equation}
    \E\qty[\abs{Z(\Lambda t)}^{2k}] = \frac{1}{d^{2k}} \sum_{p_1,p_2,\ldots,p_{2k}} \E\qty[\exp(-i \sum_{m=1}^k (U_{p_{2m}}-U_{p_{2m+1}}))] .
\end{equation}
This sum can be partitioned into classes depending on how many unique indices there are in $\qty{p_1,\ldots,p_{2k}}$. For instance, one class corresponds to the case where each of the $p_m$ are unique, one class corresponds to the case where each of the $p_m$ are unique, except for one pair $p_{m_1}=p_{m_2}$ which is the same, and so on, until we get to the case where $p_1=p_2=\ldots=p_{2k}$. The first class contains 
\begin{equation}
\prod_{j=0}^{2k-1} (d-j) = \frac{d!}{(d-2k)!}
\end{equation}
many terms, the second class contains $\binom{2k}{2}\frac{d!}{(d-2k+1)!} = O(d^{2k-1})$, and the last class contains exactly $d$ terms. Simply by counting, we see that each class except the first contributes at most $O(d^{-1})$ to the sum. Therefore,
\begin{equation}
    \E\qty[\abs{Z(\Lambda t)}^{2k}] = \frac{d!}{d^{2k} (d-2k)!} \qty(\E[\exp(-i U_1)])^{k} \cdot \qty(\E[\exp(i U_1)])^{k} + O(d^{-1}).
\end{equation}
We can analytically calculate $\E_{U_1 \sim p^{(1)}}[\exp(\pm i U_1)] = \frac{J_1(2t)}{t}$, and also using Stirling's approximation, infer that $\frac{d!}{d^{2k} (d-2k)!} = 1 - O(k^2/d)$. Therefore, we can conclude
\begin{equation}
    \E_{\Lambda\sim \tilde{p}_{\tilde{d},2k}}\qty[\abs{Z(\lambda t)}^{2k}] = \qty(\frac{J_1(2t)}{t})^{2k} + O(d^{-1}),
\end{equation}
as desired.
\end{proof}

\begin{lemma}[Concentration of $Z$ under $4k$-wise independence]\label{cor:zlambt}
Let $\tilde{p}_{\tilde{d},4k}$ the $4k$-wise independent degenerate distribution. Then, for any $k=O(1)$, we have
\begin{equation}
    \Pr_{\Lambda \sim \tilde{p}_{\tilde{d},4k}}\qty[\abs{\abs{Z(\Lambda t)}^{2k} - \qty(\frac{J_1(2t)}{t})^{2k}} \geq d^{-1/4}] \leq O(d^{-1/2}),
\end{equation}
when $\tilde{p}$ is a $4k$-wise independent distribution.
\end{lemma}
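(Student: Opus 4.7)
The plan is to prove this concentration bound by a straightforward second-moment / Chebyshev argument, using the previous lemma (\cref{lem:zlambt}) to evaluate both the mean and the second moment of the random variable $X := \abs{Z(\Lambda t)}^{2k}$. The key observation is that the lemma computes $\E[\abs{Z(\Lambda t)}^{2m}]$ for any constant $m$, provided we have $2m$-wise independence; so to control the variance of $X$, which requires $\E[X^2] = \E[\abs{Z(\Lambda t)}^{4k}]$, we precisely need $4k$-wise independence, which is exactly the hypothesis of the present lemma.

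First, I would apply \cref{lem:zlambt} twice. Once with the lemma's parameter equal to $k$ (which only needs $2k$-wise independence, implied by the $4k$-wise assumption) to get
\[
\mu := \E[\abs{Z(\Lambda t)}^{2k}] = \qty(\tfrac{J_1(2t)}{t})^{2k} + O(d^{-1}),
\]
and once with the lemma's parameter equal to $2k$ (requiring exactly $4k$-wise independence) to get
\[
\E[\abs{Z(\Lambda t)}^{4k}] = \qty(\tfrac{J_1(2t)}{t})^{4k} + O(d^{-1}).
\]
Since $\mu^2 = \qty(\tfrac{J_1(2t)}{t})^{4k} + O(d^{-1})$ as well (here the $O(d^{-1})$ absorbs the cross term $2 \qty(\tfrac{J_1(2t)}{t})^{2k} \cdot O(d^{-1})$, using that $|J_1(2t)/t| \leq 1$), subtracting gives
\[
\Var[\abs{Z(\Lambda t)}^{2k}] = \E[\abs{Z(\Lambda t)}^{4k}] - \mu^2 = O(d^{-1}).
\]

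Next I would apply Chebyshev's inequality with threshold $\tfrac{1}{2} d^{-1/4}$:
\[
\Pr\qty[\abs{\abs{Z(\Lambda t)}^{2k} - \mu} \geq \tfrac{1}{2}d^{-1/4}] \leq \frac{\Var[\abs{Z(\Lambda t)}^{2k}]}{(d^{-1/4}/2)^2} = O(d^{-1/2}).
\]
Finally, a triangle inequality absorbs the gap between $\mu$ and the target value $(J_1(2t)/t)^{2k}$: since $|\mu - (J_1(2t)/t)^{2k}| = O(d^{-1}) \ll d^{-1/4}$ for sufficiently large $n$, the event $\abs{\abs{Z(\Lambda t)}^{2k} - (J_1(2t)/t)^{2k}} \geq d^{-1/4}$ is contained in the Chebyshev event above, yielding the claimed bound.

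There is no real obstacle here — the content is entirely in \cref{lem:zlambt} and its applicability at moment order $4k$, which is why the hypothesis was strengthened from $2k$-wise to $4k$-wise independence. The only mildly delicate point is bookkeeping the additive $O(d^{-1})$ errors carefully so that they cancel in the variance rather than blowing up; this works because both $\E[X^2]$ and $\mu^2$ share the same leading term $(J_1(2t)/t)^{4k}$ up to $O(d^{-1})$ corrections, and $|J_1(2t)/t| \leq 1$ keeps cross-term corrections bounded.
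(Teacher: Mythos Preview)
Your proposal is correct and matches the paper's proof essentially line for line: apply \cref{lem:zlambt} at orders $k$ and $2k$ to bound the variance by $O(d^{-1})$, then Chebyshev plus a triangle-inequality adjustment for the $O(d^{-1})$ shift between $\mu$ and $(J_1(2t)/t)^{2k}$. Your bookkeeping of the cross terms and the use of $|J_1(2t)/t|\le 1$ is, if anything, slightly more explicit than what the paper writes.
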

\begin{proof}
Simply applying \cref{lem:zlambt} twice, we can see that the variance of $\abs{Z(\Lambda t)}^{2k}$ is at most $O(d^{-1})$. Therefore, since 
\begin{equation}\abs{\abs{Z(\Lambda t)}^{2k} - \E_{\Lambda \sim \tilde{p}}[\abs{Z(\Lambda t)}^{2k}]} \leq \abs{\abs{Z(\Lambda t)}^{2k} - \qty(\frac{J_1(2t)}{t})^{2k}} + O(d^{-1}),
\end{equation}
we can conclude
\begin{equation}
    \Pr_{\Lambda \sim \tilde{p}_{\tilde{d},4k}}\qty[\abs{\abs{Z(\Lambda t)}^{2k} - \qty(\frac{J_1(2t)}{t})^{2k}} \geq d^{-1/4}] \leq O(d^{-1/2}).
\end{equation}
\end{proof}

\begin{lemma}[Concentration of $Z$ under full independence]\label{lem:zlambd-large}
Let $\tilde{p}_{\tilde{d}}$ be the iid degenerate distribution. Then, for any $k=O(1)$ and any $t$, it holds that
\begin{equation}
    \Pr_{\Lambda \sim \tilde{p}_{\tilde{d}}}[\abs{Z(\Lambda t)}^{2k} \geq \tilde{d}^{-2k}] \geq 1 - O(\tilde{d}^{-1/2}).
\end{equation}
\end{lemma}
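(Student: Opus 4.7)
The key observation is that $\abs{Z(\Lambda t)}^{2k} \geq \tilde{d}^{-2k}$ if and only if $\abs{Z(\Lambda t)} \geq \tilde{d}^{-1}$, so the exponent $k$ plays no role and it suffices to prove $\Pr[\abs{Z(\Lambda t)} < \tilde{d}^{-1}] = O(\tilde{d}^{-1/2})$ uniformly in $t$ (WLOG $t \geq 0$, by complex-conjugation symmetry of $p^{(1)}$). Writing $Z(\Lambda t) = \tilde{d}^{-1} \sum_{j=1}^{\tilde{d}} e^{-i \lambda_j t}$ and regarding each summand as an $\mathbb{R}^2$-valued random variable, I reuse the calculation from \cref{lem:pd-indep}: the mean is $\mu(t) = (J_1(2t)/t, 0)$ and the covariance $\Sigma(t)$ is diagonal, satisfying the key identity $\Sigma_{11}(t) + \Sigma_{22}(t) = 1 - \mu(t)^2$. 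This identity lets me trade off between ``$\mu$ large'' and ``$\Sigma$ spread out.''

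The strategy is to split into two cases based on $\abs{\mu(t)}$, taking the threshold $\tilde{d}^{-1/4}$. In \emph{Case A}, $\abs{\mu(t)} \geq \tilde{d}^{-1/4}$: since $\E[\abs{Z-\mu}^2] = (\Sigma_{11}+\Sigma_{22})/\tilde{d} \leq 1/\tilde{d}$, Markov's inequality applied to $\abs{Z-\mu}^2$ gives
\begin{equation}
\Pr[\abs{Z-\mu} \geq \abs{\mu}/2] \leq \frac{4(1-\mu^2)}{\tilde{d}\, \mu^2} \leq 4\tilde{d}^{-1/2},
\end{equation}
and on the complementary event $\abs{Z} \geq \abs{\mu}/2 \geq \tfrac{1}{2}\tilde{d}^{-1/4} \gg \tilde{d}^{-1}$ for $\tilde{d}$ large. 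In \emph{Case B}, $\abs{\mu(t)} < \tilde{d}^{-1/4}$: then $\Sigma_{11}+\Sigma_{22} > 1 - \tilde{d}^{-1/2}$, so the larger of $\Sigma_{11},\Sigma_{22}$ exceeds $1/4$; call $v \in \mathbb{R}^2$ the corresponding coordinate axis. The projected sum $v \cdot Z$ is an average of $\tilde{d}$ iid bounded real variables with variance $\sigma^2 = v^T \Sigma v/\tilde{d} \geq 1/(4\tilde{d})$ and $O(1)$ third absolute moment. The one-dimensional Berry--Esseen theorem then yields
\begin{equation}
\abs{\Pr[v\cdot Z \in I] - \Pr[N(v\cdot\mu, \sigma^2) \in I]} \leq O(\tilde{d}^{-1/2})
\end{equation}
for any interval $I$; take $I = [-\tilde{d}^{-1}, \tilde{d}^{-1}]$. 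The Gaussian probability of $I$ is at most $2\tilde{d}^{-1}$ times the maximum density $1/\sqrt{2\pi\sigma^2} = O(\tilde{d}^{1/2})$, so it is $O(\tilde{d}^{-1/2})$. Since $\{\abs{Z} < \tilde{d}^{-1}\} \subseteq \{v\cdot Z \in I\}$, this concludes Case B.

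Combining the two cases yields $\Pr[\abs{Z(\Lambda t)} < \tilde{d}^{-1}] \leq O(\tilde{d}^{-1/2})$ for all $t$, hence the lemma. The only mild subtlety is selecting the case-splitting threshold so that both cases yield $O(\tilde{d}^{-1/2})$: $\tilde{d}^{-1/4}$ works because $(\tilde{d}^{-1/4})^{-2}/\tilde{d} = \tilde{d}^{-1/2}$ on the Chebyshev side, matching the Berry--Esseen rate. No genuine obstacle is anticipated beyond keeping track of constants.
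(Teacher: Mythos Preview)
Your proposal is correct. Both your argument and the paper's hinge on the same dichotomy---``large mean'' versus ``large variance''---but you organise it differently and use lighter tools. The paper first invokes the multivariate Berry--Esseen theorem (Raič) to replace $Z(\Lambda t)$ by a bivariate Gaussian $(X,Y)\sim\mathcal{N}(\mu,\Sigma/\tilde d)$, then splits on a \emph{fixed} time threshold $t=7/5$: for $t<7/5$ it uses explicit Bessel estimates ($\mu\ge 1/4$, $\sigma_x\le 1/2$) and bounds $\Pr[X\le\tilde d^{-1}]$ via the Gaussian CDF; for $t\ge 7/5$ it uses $\sigma_x,\sigma_y\ge 1/4$ and a density-times-area bound on the Gaussian. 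You instead exploit the exact identity $\Sigma_{11}+\Sigma_{22}=1-\mu^2$ to set a $\tilde d$-dependent threshold $\abs{\mu}\gtrless\tilde d^{-1/4}$, which lets you (i) handle the large-mean case by a pure Chebyshev/Markov bound on $\abs{Z-\mu}^2$ with no normal approximation at all, and (ii) reduce the small-mean case to the \emph{one-dimensional} Berry--Esseen bound along the larger-variance coordinate. What this buys you: no multivariate CLT, no explicit Bessel asymptotics beyond the variance identity, and a case split that is self-calibrating rather than tied to numerical properties of $J_1$. What the paper's route buys: the Gaussian picture is more concrete and immediately reusable (indeed, the same approximation is used in \cref{lem:pd-indep} to obtain the matching upper bound on $\abs{Z}$).
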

\begin{proof}
Let $\mu = \frac{J_1(2t)}{t}$, $\sigma_x^2 = \frac{1}{2}\qty(1+\frac{J_1(4t)}{2t}) - \frac{J_1(2t)^2}{t^2}$, and $\sigma_y^2=\frac{1}{2}\qty(1-\frac{J_1(4t)}{2t})$. Setting $X \sim \mathcal{N}(\mu, \sigma_x^2/\tilde{d})$ and $Y \sim \mathcal{N}(0, \sigma_y^2/\tilde{d})$, from \cref{lem:pd-indep}, we have
\begin{equation}
    \abs{\Pr_{\Lambda \sim \tilde{p}_{\tilde{d}}}[\abs{Z(\Lambda t)}^{2k} \leq \tilde{d}^{-2k}] - \Pr[X^2 + Y^2 \leq \tilde{d}^{-2}]} \leq O(\tilde{d}^{-1}).
\end{equation}
We discuss two cases separately. 
\begin{itemize}
    \item For $t < 7/5$, we write 
    \begin{equation}
        \Pr[X^2 + Y^2 \leq \tilde{d}^{-2}] \leq \Pr[X \leq \tilde{d}^{-1}] \leq \Phi\qty(\frac{\tilde{d}^{-1/2} - \tilde{d}^{1/2} \mu}{\sigma_x}),
    \end{equation}
    where $\Phi$ is the CDF of the standard normal distribution. Since $\mu \geq \frac{1}{4}$ and $\sigma_x \leq \frac{1}{2}$ for $t < 7/5$, we can upper bound this with $\Phi\qty(2\tilde{d}^{-1/2}-\frac{\tilde{d}^{1/2}}{2})=O(\tilde{d}^{1/2})$.
    \item For $t \geq 7/5$, we have $\sigma_x,\sigma_y \geq \frac{1}{4}$. For this case, we explicitly spell out
    \begin{equation}
        \Pr[X^2 + Y^2 \leq \tilde{d}^{-2}] = \frac{\tilde{d}}{2\pi \sigma_x \sigma_y} \int_U \exp(-(\vec{r}_1 - \mu)^2/2\sigma_x^2-(\vec{r}_2)^2/2 \sigma_y^2) \dd{\vec{r}},\label{eq:d36}
    \end{equation}
    where $U$ is a disk of radius $\tilde{d}^{-1}$ centered at the origin. We simply upper bound the integrand by $1$. Then, \cref{eq:d36} is upper bounded by $\frac{16\tilde{d}}{2\pi} \cdot 2\pi \tilde{d}^{-2} = 16\tilde{d}^{-1}$.
\end{itemize}
\end{proof}

Combining these two concentration lemmas enables us to prove the main theorem below.
\begin{theorem}[Early time behavior of GUE versus pseudo-GUE, \cref{th:separations} in the main text]\label{App:separationproof}
Let $\rho(H,t) \coloneqq e^{-iHt} \ketbra{0} e^{iHt}$, $\ket{V(H,t)} = (e^{-iHt} V e^{iHt} \otimes \id) \ket{\phi^+}$, and $\OTOC_4(H,t)= \tr(P_1(t) P_2 P_1(t) P_2)$, with $P_1 = e^{-iHt} P_1 e^{iHt}$. Let $\tilde{n} \coloneqq \log_2 \tilde{d}$. We define $\loe(H,t) = S_2(V(H,t)_{A\cup A'})$. For the pseudo-GUE Hamiltonians, assume the eigenbases form a relative $\epsilon$-approximate $16$-design with $\epsilon=O(d^{-1})$, and consider the iid spectral distribution $\tilde{p}_{\tilde{d}}$ and the corresponding ensemble $\mathcal{E}_{\tilde{d}}$. Then, $\exists t_1 = O(1)$ and $\forall t_2$, the following holds.
\small\begin{eqnarray}
        \Pr_{H \sim \gue}\qty[S_2(\rho(H,t_1)_A) \geq \frac{n}{12}] \geq 1-\exp(-O(n)) &&\qq{while} \Pr_{H \sim \pgue}\qty[S_2(\rho(H,t_2)_A) \leq 4\tilde{n}] \geq 1-O(\tilde{d}^{-1/2}), \\
        \Pr_{H \sim \gue}\qty[M_2(\rho(H,t_1)) \geq \frac{n}{12}] \geq 1-\exp(-O(n)) &&\qq{while} \Pr_{H \sim \pgue}\qty[M_2(\rho(H,t_2)_A) \leq 8\tilde{n}] \geq 1-O(\tilde{d}^{-1/2}), \\
        \Pr_{H \sim \gue}\qty[\loe(H,t_1) \geq \frac{n}{12}] \geq 1-\exp(-O(n)) &&\qq{while} \Pr_{H \sim \pgue}\qty[\loe(H,t_2) \leq 8\tilde{n}] \geq 1-O(\tilde{d}^{-1/2}), \\
        \Pr_{H \sim \gue}\qty[|\OTOC_4(H,t_1)| \leq d^{-1/12}] \geq 1-\exp(-O(n)) &&\qq{while} \Pr_{H \sim \pgue}\qty[\tilde{d}^{-4} \leq \OTOC_4(H,t_2)] \geq 1-O(\tilde{d}^{-1/2}).
\end{eqnarray}\normalsize
\end{theorem}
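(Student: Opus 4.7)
The plan is to reduce each of the four probes to a power $|Z(\Lambda t)|^{2k}$ of the normalized trace $Z(\Lambda t) = \tr(e^{-i\Lambda t})/d$, using the Haar averages tabulated in \cref{tab:expval-haar}, and then control that moment using the spectral concentration lemmas of this section. The structure mirrors the proof of \cref{th:late-time}, but in the early-time regime the required small-$|Z|$ bound cannot come from the asymptotic decay of $J_1(2t)/t$ alone; it has to come from choosing $t_1$ carefully.

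\textbf{Pseudo-GUE side.} For any time $t_2$, I would apply \cref{lem:haar-avg} with the pseudorandom eigenbasis (an $\epsilon$-approximate $16$-design) to replace each probe by its Haar expectation from \cref{tab:expval-haar} up to additive error $O(d^{-1/12})$. This reduces each probe to a function of $|Z(\Lambda t_2)|^{2k}$ with $k \in \{2,4\}$. The key ingredient is then \cref{lem:zlambd-large}, which gives $|Z(\Lambda t_2)|^{2k} \geq \tilde{d}^{-2k}$ with probability $1-O(\tilde{d}^{-1/2})$ under the iid distribution $\tilde{p}_{\tilde{d}}$. This yields purity $\geq \tilde{d}^{-4}$, stabilizer and operator purities $\geq \tilde{d}^{-8}$, and $\OTOC_4 \geq \tilde{d}^{-4}$, translating after logs to the claimed upper bounds $4\tilde{n}$, $8\tilde{n}$, $8\tilde{n}$ on the entropic probes and the lower bound $\tilde{d}^{-4}$ on the OTOC.

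\textbf{GUE side.} The crucial trick is to choose $t_1$ at the first positive zero of the Bessel function, $J_1(2t_1)=0$, i.e., $t_1 \approx 1.92$. At this $O(1)$ time the leading term $(J_1(2t_1)/t_1)^{2k}$ in \cref{lem:zlambt} vanishes identically, so $\E_{\Lambda \sim \tilde{p}}[|Z(\Lambda t_1)|^{2k}] = O(d^{-1})$. The spoofing bound \cref{thm:spoof} then transfers this to $\E_{\Lambda \sim p}[|Z(\Lambda t_1)|^{2k}] = O(d^{-1/8})$ for any polynomial $k$, and combining with \cref{tab:expval-haar} gives $\E_{H \sim \gue}[\mathrm{purity}]$, $\E[M_{\mathrm{pur}}]$, $\E[\mathrm{operator\ purity}]$ and $\E[|\OTOC_4|]$ each of size $O(d^{-1/8})$.

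\textbf{Main obstacle.} The hardest step is upgrading these \emph{mean} bounds to the claimed $1 - e^{-\Omega(n)}$ concentration, since Markov combined with \cref{thm:spoof} only yields $1 - O(d^{-c})$ failure probabilities. I would bridge this gap via Gaussian concentration of measure for the GUE (a classical tool outside this section's lemmas): any $L$-Lipschitz function of $H \sim \gue$ has deviations of size $\epsilon$ suppressed as $\exp(-\Omega(d\epsilon^2/L^2))$. Each probe at fixed $t_1 = O(1)$ is $O(1)$-Lipschitz in $H$, so $\epsilon = d^{-1/12}$ gives exponential concentration $e^{-\Omega(d^{5/6})}$, which combined with the $O(d^{-1/8})$ expectation bound yields $\mathrm{purity} \leq O(d^{-1/12})$, hence $S_2 \geq n/12$, with probability $1 - e^{-\Omega(n)}$; the same argument handles $M_2$, $\loe$, and $\OTOC_4$. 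Verifying the precise $O(1)$ Lipschitz constants --- particularly for $\loe$ and $\OTOC_4$, which involve nested conjugations of local operators by $e^{\pm iHt_1}$ --- is the most delicate piece of technical bookkeeping.
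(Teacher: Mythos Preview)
Your proposal follows essentially the same route as the paper: reduce each probe to $|Z(\Lambda t)|^{2k}$ via \cref{lem:haar-avg} and \cref{tab:expval-haar}, handle the pseudo-GUE side with \cref{lem:zlambd-large}, and on the GUE side pick $t_1$ at a zero of $J_1(2t_1)$ so that the leading term $(J_1(2t_1)/t_1)^{2k}$ vanishes. All of this is correct and matches the paper.

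The one place you diverge is the ``main obstacle'' you identify, and this obstacle does not exist. You write that Markov together with \cref{thm:spoof} only yields failure probability $O(d^{-c})$, and you then invoke Gaussian concentration of measure for the GUE to upgrade this to $\exp(-\Omega(n))$. But recall that $d = 2^n$, so $O(d^{-c}) = \exp(-c n \ln 2) = \exp(-\Omega(n))$ already. There is no gap to bridge. The paper simply uses the Chebyshev-type bound of \cref{cor:zlambt} (together with \cref{lem:haar-avg}) to conclude
\[
\Pr_{H\sim\gue}\bigl[\tr((\rho(H,t_1)_A)^2)\le d^{-1/12}\bigr]\ge 1-O(d^{-1/6}),
\]
and $O(d^{-1/6})$ is exactly $\exp(-O(n))$, as claimed in the theorem statement. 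Equivalently, your own Markov bound on the expectation (which, as you correctly note, depends only on a finite marginal and so transfers via \cref{thm:spoof}) already gives $1-O(d^{-c})=1-\exp(-\Omega(n))$.

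Your Gaussian concentration argument would also work and would even give a quantitatively stronger tail, but it is extra machinery the paper does not use, and the Lipschitz verification you flag as ``the most delicate piece'' is entirely unnecessary for the stated result.
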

\begin{proof}
We prove the statement for $S_2(\rho(H,t)_A)$. All other statements follow by similar proofs. The proof follows in a nearly identical fashion to \cref{th:late-time}, except we plug in bounds from \cref{cor:zlambt,lem:zlambd-large}. We have that for pseudo-GUE $\mathcal{E}_{\tilde{d}}$
\begin{equation}
    \Pr_{H \sim \mathcal{E}_{\tilde{d}}}\qty[\tr((\rho(H,t_2)_A)^2) \geq \tilde{d}^{-4}] \geq 1 - O(\tilde{d}^{-1/2}),
\end{equation}
for all times $t_2$. Meanwhile,
\begin{equation}
    \Pr_{H \sim \gue}\qty[\abs{\tr((\rho(H,t)_A)^2)-\qty(\frac{J_1(2t)}{t})} \leq d^{-1/12}] \geq 1 - O(d^{-1/6}).
\end{equation}
Since $J_1(2t)$ has infinitely many roots $t_1 \approx 1.92, 3.51, \ldots$, we can choose any one of these roots to get 
\begin{equation}
    \Pr_{H \sim \gue}\qty[\tr((\rho(H,t_1)_A)^2) \leq d^{-1/12}] \geq 1 - O(d^{-1/6}).
\end{equation}
This concludes the proof.
\end{proof}

\section{Applications}

\subsection{Distillation bounds}\label{App:distillationbounds}
In this section, we discuss the proof of \cref{cor:distillationbounds}. As emphasized in the main text, the evolution generated by pseudochaotic Hamiltonians with $\tilde{d} = \Theta(\exp(\poly\log n))$ introduces a stronger concept: pseudoresourceful unitaries. Indeed, \cref{th:separations} shows that, at many instances in time, the behavior of quantum resources, such as entanglement and magic, differs significantly. More rigorously, given an extensive bipartition $A|B$, an ensemble of pseudoentangling unitary operators $\mathcal{E}_{\tilde{U}}$ produces states of the form $U\ket{0}$ for $U \sim \mathcal{E}_{\tilde{U}}$, with entanglement $\Theta(\poly\log n)$, and it is indistinguishable from an ensemble of unitaries producing highly entangled states, i.e., with entanglement $\Theta(n)$.

Thanks to \cref{th:separations}, there exists a time $t^{*}$ such that the ensemble of GUE Hamiltonians generates highly entangled states with overwhelming probability, while the ensemble of pseudo-GUE Hamiltonians $\mathcal{E}_{\tilde{d}}$, with $\tilde{d} = \Theta(\exp(\poly\log n))$, produces states with entanglement $O(\poly\log n)$ with high probability. As such, these pair of ensembles constitute an example of pseudoentangling unitaries. Moreover, the following lemma demonstrates that these two ensembles generate the maximal achievable gap, i.e., $\log n$ vs. $n$, for pseudoentangling unitaries.

\begin{lemma}[Lower bounds for pseudoentangling unitaries]\label{lem:lowerpseudoentanglingunitaries} Let $\mathcal{E}_{\tilde{U}}$ an ensemble of unitaries that (i) produce state vectors $U\ket{0}$ with entanglement upper bounded by $f(n)$ with high probability and (ii) is indistinguishable from an ensemble of unitaries producing high entangled states, i.e., $\Omega(n)$. Then
\be
f(n)=\omega(\log n).
\ee
\begin{proof}
The result follows from the observation that if $f(n)=O(\log n)$, then a simple swap test could distinguish between the ensembles. Hence one should necessarily have $f(n)=\omega(\log n)$. 
\end{proof}
\end{lemma}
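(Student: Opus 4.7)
The plan is to prove the lemma by contrapositive: assume $f(n) = O(\log n)$ and construct an efficient distinguisher between $\mathcal{E}_{\tilde{U}}$ and the high-entanglement ensemble using black-box access to the unitaries. The natural candidate for the distinguisher is a swap test applied to the reduced states on subsystem $A$.

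First, I would construct the distinguisher as follows. Given query access to $U$, apply $U$ to $\ket{0}$ twice independently to obtain two copies $U\ket{0}\otimes U\ket{0}$, discard the $B$ subsystems (equivalently, trace them out), and apply the swap test to the two $A$-registers. The acceptance probability of this protocol equals
\begin{equation}
    p_{\mathrm{acc}}(U) = \frac{1 + \tr(\rho_A^2)}{2},\qquad \rho_A \coloneqq \tr_B(U\ketbra{0}U^\dagger).
\end{equation}
This is efficient: it uses only $2$ queries per trial and a constant-depth swap test circuit.

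Next, I would quantify the gap in $p_{\mathrm{acc}}$ between the two ensembles. Under the hypothesis $f(n) = O(\log n)$, condition (i) guarantees $S_2(\rho_A) \le f(n) = O(\log n)$ with high probability over $U\sim\mathcal{E}_{\tilde{U}}$, hence $\tr(\rho_A^2) \ge 2^{-f(n)} = \Omega(1/\poly(n))$. On the other hand, for a unitary $V$ drawn from the high-entanglement ensemble, condition (ii) together with $S_2(V\ket{0}_A) = \Omega(n)$ gives $\tr(\sigma_A^2) \le 2^{-\Omega(n)}$ with high probability, where $\sigma_A = \tr_B(V\ketbra{0}V^\dagger)$. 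Consequently,
\begin{equation}
    \mathbb{E}_{U\sim\mathcal{E}_{\tilde{U}}}[p_{\mathrm{acc}}(U)] - \mathbb{E}_{V}[p_{\mathrm{acc}}(V)] \ge \Omega\!\left(\tfrac{1}{\poly(n)}\right) - 2^{-\Omega(n)} = \Omega\!\left(\tfrac{1}{\poly(n)}\right).
\end{equation}

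Finally, I would amplify this inverse-polynomial gap into a constant-advantage distinguisher by a standard Chernoff/Hoeffding argument: running the swap-test protocol $O(\poly(n))$ times and thresholding the empirical acceptance rate yields an efficient algorithm that distinguishes $\mathcal{E}_{\tilde{U}}$ from the high-entanglement ensemble with advantage $\Omega(1)$, contradicting the indistinguishability hypothesis (ii). This forces $f(n) = \omega(\log n)$. The main subtlety — not really an obstacle — is to carry the two ``with high probability'' events through the averaging: since both events fail with at most $\negl(n)$ (or at least $o(1)$) probability, the bound on $\mathbb{E}[p_{\mathrm{acc}}]$ in each ensemble is only perturbed by $o(1/\poly(n))$, preserving the gap. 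An analogous argument applies mutatis mutandis to pseudomagic unitaries by replacing the swap test with the Bell-difference sampling/stabilizer-purity estimator, yielding the same $\omega(\log n)$ lower bound on the entanglement/magic gap attainable by pseudoresourceful unitaries.
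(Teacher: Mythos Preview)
Your proposal is correct and follows exactly the same approach as the paper's proof, which simply observes that if $f(n)=O(\log n)$ then a swap test distinguishes the ensembles. You have merely filled in the quantitative details (the $\Omega(1/\poly(n))$ gap in subsystem purity and the Chernoff amplification) that the paper's one-line argument leaves implicit.
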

The above lemma shows that our pseudo-GUE Hamiltonians constitute an \textit{optimal} ensemble of pseudoentangling unitaries. We can now 
prove \cref{cor:distillationbounds}, which we restate for convenience:
\begin{corollary}[\cref{cor:distillationbounds} in the main text]\label{cor:distillationbounds2}
Consider a general unitary $U$, a reference state $\rho_0$ and let $\rho \coloneqq U \rho_0 U^\dagger$. Consider any efficient ``designer'' quantum algorithm $\mathcal{D}^U$ with query access to $U$, which outputs some efficient LOCC algorithm $\mathcal{A}_{\locc}$, between two parties $A|B$, for distilling $m$ 
copies of a target state per copy of $\rho$, then
\begin{equation}
    m = O(\log^{1+c} S_2(\rho_A))
\end{equation}
where $\rho_A=\tr_B\rho$. This holds for any constant $c>0$ and every extensive bipartition $A|B$. 
\begin{proof}
The proof simply follows from the fact that if such an efficient and agnostic LOCC protocol distilling $\Omega(\log^{1+c}S_2)$ existed, it would immediately distinguish between the ensemble of pseudoentangling unitaries introduced above. First note that for every $c$, we can construct a pseudo-GUE ensemble obeying $S_{2}=O(\log^{1+c/2}n)$ by opportunely tuning $\tilde{d}$. Therefore, after the application of such a hypothetical protocol, a distinguishing method would be provided by counting the output Bell pairs, which, in the case of pseudoentangling unitaries, would be $O(\log^{1+c/2}n)$ (by basic rules of resource conversion~\cite{RevModPhys.91.025001}), and in the case of highly entangling unitaries, would be $\Omega(\log^{1+c}n)$ by hypothesis. Hence, the result follows.
\end{proof}
\end{corollary}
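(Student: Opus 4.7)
The plan is to establish the bound by contradiction: if an efficient designer algorithm could distill more than $O(\log^{1+c} S_2(\rho_A))$ ebits per copy, then combined with a simple measurement on the output (e.g., counting Bell pairs via swap tests on pairs of distilled qubits), it would yield an efficient distinguisher between the pseudo-GUE and GUE ensembles, contradicting \cref{th:pseudo-GUE}.

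Concretely, I would proceed as follows. First, fix any constant $c>0$ and choose $\tilde{d} = \exp(\Theta(\log^{1+c/2} n))$ so that, by \cref{th:separations} (late-time regime, combined with $S_2 \le \loe$-type bounds), with overwhelming probability a unitary drawn from the pseudo-GUE evolution ensemble $\mathcal{E}_{\tilde{d}}$ at an appropriate time $t^\star$ generates, from any computational basis state $\rho_0 = \ketbra{0}$, a state $\rho$ with $S_2(\rho_A) = O(\log^{1+c/2} n)$ across the extensive bipartition $A|B$. By the same theorem, GUE-generated $\rho$ has $S_2(\rho_A) = \Theta(n)$ with overwhelming probability. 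Both ensembles provide the distiller $\mathcal{D}^U$ with the same kind of black-box access, so by \cref{th:pseudo-GUE} they are computationally indistinguishable.

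Now suppose for contradiction that there is an efficient designer $\mathcal{D}^U$ whose output LOCC protocol $\mathcal{A}_{\locc}$ distills $m = \Omega(\log^{1+c} S_2(\rho_A))$ Bell pairs per copy of $\rho$. Applied to the GUE sample, this yields $m_{\gue} = \Omega(\log^{1+c} n)$ clean Bell pairs with high probability. Applied to the pseudo-GUE sample, LOCC monotonicity of entanglement (in particular that no LOCC protocol can produce more ebits than the entanglement content of the input) forces the output to contain at most $O(S_2(\rho_A)) = O(\log^{1+c/2} n)$ Bell pairs with high probability. These two counts are separated by a polynomial factor in $\log n$, so applying Bell measurements to each pair of qubits in the output register and counting the number of successfully distilled $\ket{\phi^+}$ states produces, via a Chernoff-type bound on the count, a polynomial-time classical test that distinguishes the two ensembles with non-negligible advantage. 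Since this distinguisher is efficient and uses only black-box access to $U$ (wrapped by the efficient $\mathcal{D}^U$ and an efficient execution of $\mathcal{A}_{\locc}$), we contradict \cref{th:pseudo-GUE}.

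The step I expect to be the main obstacle is the careful quantification of the LOCC monotonicity bound in the pseudo-GUE case: one has to ensure that distilling $m$ Bell pairs across $A|B$ really does lower-bound $S_2(\rho_A)$ (or $S_1$) by $\Omega(m)$ up to negligible error, given that the distillation protocol is only approximate and operates on a single copy. The cleanest way is to invoke the fact that the hypothetical distillation rate $m$ already assumes a successful protocol producing near-perfect Bell pairs, so standard entanglement monotone inequalities (e.g., that entanglement of distillation is upper bounded by $S_2$ up to constants and logarithmic corrections) suffice; any slack introduced by this approximation can be absorbed into the constant $c$, which is why the bound is stated with an arbitrary $c > 0$ rather than $c = 0$. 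The analogous argument for magic distillation proceeds identically, replacing $S_2$ with $M_2$, the pseudo-GUE separation in $M_2$ from \cref{th:separations}, and LOCC with stabilizer protocols, whose monotonicity of stabilizer entropy under stabilizer operations plays the same role.
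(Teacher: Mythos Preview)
Your proposal is correct and follows essentially the same approach as the paper: argue by contradiction, tune $\tilde{d}$ so that the pseudo-GUE ensemble produces states with $S_2 = O(\log^{1+c/2} n)$, invoke LOCC monotonicity (``basic rules of resource conversion'') to cap the Bell-pair yield on the low-entanglement side, and observe that counting Bell pairs would then distinguish pseudo-GUE from GUE, contradicting \cref{th:pseudo-GUE}. Your treatment of the monotonicity step is in fact more explicit than the paper's, which simply cites resource-theory folklore; the remaining details you flag as obstacles are exactly those the paper leaves implicit.
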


\begin{remark}
In a similar fashion, the pseudo-GUE ensemble generates pseudomagic unitaries, i.e., unitaries that produce quantum states $\ket{\phi}$ with $M_{2}(\ket{\phi}) = O(\poly \log n)$, which is sufficient for magic state distillation bounds given its monotonicity~\cite{PhysRevA.110.L040403}. As such, a lemma identical to \cref{lem:lowerpseudoentanglingunitaries} and a corollary identical to \cref{cor:distillationbounds} follow directly. We omit the proof, as it straightforwardly follows from the techniques developed in Ref.~\cite{gu2024pseudomagic}.
\end{remark}

\subsection{Hamiltonian learning}\label{applemmahardnesslearning}
\begin{lemma}\label{lemlemmahardnesslearning}
    Given black-box query access to an \emph{efficiently implementable} pseudo-GUE Hamiltonian $H$, $\mathcal{E}_{\tilde{d},k}$ with $k=O(1)$, there is some $t^*=O(1)$ such that for any $t \geq t^*$, any \emph{computationally efficient} quantum algorithm which outputs a circuit description of a unitary $U \equiv A^H()$ satisfies
    \begin{equation}
        \Pr_{H \sim \mathcal{E}_{\tilde{d},k}}[\abs{\expval{U^\dagger e^{-iHt}}{0}}^2 \geq 1 - \Omega(1)] \leq o(1).
    \end{equation}
\end{lemma}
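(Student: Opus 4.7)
The plan is to prove this by contrapositive: if such an efficient learner existed, we could combine it with a swap-test (or amplitude-estimation) subroutine to build an efficient distinguisher between $\mathcal{E}_{\tilde{d},k}$ and $\mathcal{E}_{\gue}$, contradicting \cref{th:ham-indisting} via the extremal-complexity lower bound of \cref{lem:kotowski2023extremal}. Concretely, suppose for contradiction that there exist constants $\delta,\gamma>0$ and a polynomial-time quantum algorithm $\mathcal{A}^H$ such that, with probability at least $\gamma$ over $H\sim\mathcal{E}_{\tilde{d},k}$, the circuit $U=\mathcal{A}^H()$ satisfies $|\langle 0|U^\dagger e^{-iHt}|0\rangle|^2\ge 1-\delta$. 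Because $\mathcal{A}^H$ runs in polynomial time, the description of $U$ that it writes down has polynomial length, hence $U\in \mathcal{G}^{O(\mathrm{poly}\,n)}$.

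From $\mathcal{A}^H$ I would build a distinguisher $\mathcal{D}^H$ as follows. First, call $\mathcal{A}^H$ once to obtain the circuit $U$. Then, using black-box access to $H$, prepare the state $e^{-iHt}|0\rangle$ and apply $U^\dagger$ to it, measuring in the computational basis and recording whether the outcome is $0^n$. Repeat this $O(1/\delta^2)$ many times to estimate $p(H):=|\langle 0|U^\dagger e^{-iHt}|0\rangle|^2$ to additive error $\delta/4$ with high probability, and output $\mathbf{1}[\hat{p}(H)\ge 1-\delta/2]$. This distinguisher queries $H$ only polynomially many times and runs in polynomial time.

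On input $H\sim\mathcal{E}_{\tilde{d},k}$, by the assumption on $\mathcal{A}^H$ the output is $1$ with probability at least $\gamma-o(1)$. On input $H\sim\mathcal{E}_{\gue}$, the circuit $U$ has, by construction, gate complexity $O(\mathrm{poly}\,n)$, so $U|0\rangle$ is a state of complexity $C_{\sqrt{\delta}}(U|0\rangle)=O(\mathrm{poly}\,n)$. But \cref{lem:kotowski2023extremal} guarantees that for any $t\ge t^{*}=O(1)$, $C_{\sqrt{\delta}}(e^{-iHt}|0\rangle)=2^{\Omega(n)}$ with overwhelming probability over $H\sim\mathcal{E}_{\gue}$, so with overwhelming probability no polynomial-size $U$ can achieve fidelity $\ge 1-\delta$ with $e^{-iHt}|0\rangle$. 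Thus $\mathcal{D}^H$ outputs $1$ only with negligible probability. Hence $\mathcal{D}^H$ distinguishes $\mathcal{E}_{\tilde{d},k}$ from $\mathcal{E}_{\gue}$ with non-negligible advantage.

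To close the loop, I still need to rule out that $\mathcal{E}_{\tilde{d},k}$ (which uses only $k$-wise independent pseudorandomness in the spectrum, so that the ensemble is efficiently preparable) differs from the fully iid $\mathcal{E}_{\tilde{d}}$ in a way any polynomial-query algorithm can detect; the indistinguishability proof of \cref{th:ham-indisting} only ever inspects the $2q$-point marginals of the spectrum where $q$ is the number of queries, and the pseudorandom function $\mathsf{PRF}_k$ supplying the spectrum is (by choosing key length appropriately, see \cref{app:time-evol}) indistinguishable from a truly random function against polynomial-query quantum adversaries; combining these with the standard hybrid argument yields $\mathcal{E}_{\tilde{d},k}\approx_{\mathrm{comp}}\mathcal{E}_{\tilde{d}}\approx_{\mathrm{stat}}\mathcal{E}_{\gue}$, so that the distinguisher $\mathcal{D}^H$ contradicts \cref{th:ham-indisting}. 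The main obstacle I expect is bookkeeping the three-way hybrid carefully enough to preserve black-box Hamiltonian access and to handle the Gibbs-state oracle (which $\mathcal{D}^H$ may or may not use); once the hybrid is set up, the punchline is just that Kotowski--Oszmaniec--Horodecki's exponential circuit-complexity bound is incompatible with the polynomial output circuit that an efficient learner must produce.
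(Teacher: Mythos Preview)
Your proposal is correct and follows essentially the same approach as the paper: both arguments combine the Kotowski--Oszmaniec--Horodecki exponential-complexity lower bound (which forces any polynomial-size $U$ to have fidelity $\le 1-\Omega(1)$ with $e^{-iHt}|0\rangle$ under the true GUE) with the efficient measurability of $|\langle 0|U^\dagger e^{-iHt}|0\rangle|^2$ to transfer the conclusion to $\mathcal{E}_{\tilde{d},k}$ via indistinguishability. The paper phrases this as a direct implication while you phrase it as the contrapositive distinguisher construction; you are also more explicit than the paper about the intermediate hybrid $\mathcal{E}_{\tilde{d},k}\approx_{\mathrm{comp}}\mathcal{E}_{\tilde{d}}$, which the paper absorbs into its invocation of \cref{th:ham-indisting}.
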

\begin{proof}
Let us make use of the key result of Ref.~\cite{kotowski2023extremal}, summarized in \cref{lem:kotowski2023extremal}, i.e., for $H \sim \gue$, there exists some $t^*=O(1)$ such that for any $t \geq t^*$, the circuit complexity of $e^{-iHt} \ket{0}$ is $\omega(\poly n)$ with high probability over the choice of $H$. This implies that there is some non-zero constant $C = \Omega(1)$ such that \emph{any} polynomial size unitary $U$ satisfies
\begin{equation}
    \abs{\expval{U^\dagger e^{-iHt}}{0}}^2 \leq 1 - C
\end{equation}
with high probability. Since any \emph{computationally efficient} quantum algorithm $\mathcal{A}^H$ can output a circuit description of $U \equiv \mathcal{A}^H()$ with size at most $O(\poly n)$, we must have
\begin{equation}
    \Pr_{H \sim \gue}[\abs{\expval{U^\dagger e^{-iHt}}{0}}^2 \geq 1 - C] \leq o(1).
\end{equation}
We finally apply \cref{th:ham-indisting}, i.e.,  that $\mathcal{E}_{\tilde{d},k}$ is indistinguishable from $\gue$ given $\tilde{d}=\omega(\poly n)$. Since $\abs{\expval{U^\dagger e^{-iHt}}{0}}^2$ is efficiently measurable with accuracy up to $1/\poly(n)$, we must have
\begin{equation}
    \Pr_{H \sim \mathcal{E}_{\tilde{d},k}}[\abs{\expval{U^\dagger e^{-iHt}}{0}}^2 \geq 1 - C'] \leq o(1),
\end{equation}
for $C' = C-O(1/\poly n)$. If this were not the case, measuring $\abs{\expval{U^\dagger e^{-iHt}}{0}}^2$ would give an efficient distinguisher between $H \sim \gue$ and $H \sim \mathcal{E}_{\tilde{d},k}$, forbidden by \cref{th:ham-indisting}. This concludes the proof.
\end{proof}

\begin{theorem}[No general Hamiltonian learning. \cref{th:nohamiltonianlearning} in the main text] \label{th:hardnessdiamond}
Consider any efficient quantum algorithm $\mathcal{A}^H(t,\varepsilon)$ that, given black-box access to a Hamiltonian $H$, attempts to output a circuit description $U$ approximating $e^{-iHt}$ within $\varepsilon$ error in diamond norm. There exists a time $t=O(1)$ and error $\varepsilon=O(1)$, such that even when restricted to Hamiltonians $H$ that are $O(1)$-sparse in the computational basis and have efficiently implementable time evolutions $e^{-iHt}$, there exist Hamiltonians $H$ for which any efficient quantum algorithm $\mathcal{A}^H$ fails to produce a correct $\varepsilon$-approximation.
\begin{proof}
We make use of \cref{lemlemmahardnesslearning} which says that there exists a efficiently implementable Hamiltonian, specifically a pseudochaotic Hamiltonian belonging to $\mathcal{E}_{\tilde{d},k}$ (with $d=\omega(\poly n)$) such that 
\be
|\langle0|U^{\dag}e^{-iHt}|0\rangle|\le 1-\Omega(1)
\ee
for every $U$ ouput of a efficient quantum algorithm $\mathcal{A}^{H}$. Consider the diamond norm distance $\|e^{-iHt}-U\|_{\diamond}$, then the following inequalities hold
\be
\|e^{-iHt}-U\|_{\diamond}\ge \|e^{-iHt}\ket{0}-U\ket{0}\|=\sqrt{2-2|\langle0|U^{\dag}e^{-iHt}|0\rangle|}\ge\Omega(1)
\ee
which prove the statement. 

Let us now show that the result still holds for sparse Hamiltonians in the computational basis. Let us consider the following ensembles of Hamiltonians 
\be
\mathcal{E}_1\coloneqq\qty{\sum_{i}\lambda_i\ketbra{i}\,:\, \lambda_{1},\ldots,\lambda_{d}\sim \tilde{p}}
\ee
where $\tilde{p}=\prod_{i=1}^{d}p^{(1)}(\lambda)$ is the product of iid random variables distributed according to the Wigner semi-circle distribution. We then consider the ensemble of efficiently implementable Hamiltonians
\be
\mathcal{E}_2\coloneqq\qty{\sum_{i}\lambda_i\ketbra{i}\,:\, \lambda_{1},\ldots,\lambda_{d}\sim \tilde{p}_{d,k}}
\ee
where $\tilde{p}_{d,k}$ is both computationally indistinguishable from $\tilde{p}$, and is exactly $k$-wise independent. Time evolution under this ensemble is efficiently implementable for $k=O(1)$ (see \cref{fact:prf}). Let us consider the output $U=\mathcal{A}^{H}()$ of an efficient quantum algorithm and consider the following quantity
\be
\mathbb{E}_{\ket{\psi}}[\abs{\expval{U^\dagger e^{-iHt}}{\psi}}^2]=\frac{\abs{\tr(U^{\dag}e^{-iHt})}^2}{d^2}+O(d^{-1})
\label{eq:averageoverlap}
\ee
where the average is 
over any $2$-state design (e.g., random stabilizer states). 
In Ref.~\cite{kotowski2023extremal}, it has been proven that for any efficient algorithm $\mathcal{A}^H$,
\be
\Pr_{H\sim\mathcal{E}_{\text{gauss}}}[\abs{\tr(U^{\dag}e^{-iHt})}\ge \alpha' d]\le \exp(-\Omega(d))\qc \mathcal{E}_{\text{gauss}}\coloneqq\qty{\sum_{i}\lambda_i\ketbra{i}\,:\, \lambda_{1},\ldots,\lambda_{d}\sim \mathcal{N}(0,1)}
\ee
with $\alpha' = \Theta(1)$. Despite the fact that theorem uses energies $\lambda_i$ which are normally distributed, their proof can be replicated simply using the Wigner semi-circle law instead of the standard normal. That is, the following holds as well:
\be
\Pr_{H\sim\mathcal{E}_{1}}[|\tr(U^{\dag}e^{-iHt})|< \alpha d]>1- \exp(-\Omega(d))
\ee
with $\alpha = \Theta(1)$. 
The latter thus implies that the average overlap in \cref{eq:averageoverlap} obeys
\be
\Pr_{H\sim \mathcal{E}_1}[\mathbb{E}_{\ket{\psi}}[|\langle\psi|U^{\dag}e^{-iHt}|\psi\rangle|^2]\le 1-\Omega(1)]\ge 1-\exp(-\Omega(d)).
\ee
However, since the average overlap can be efficiently measured (up to inverse polynomial precision, with failure probability exponentially small in $n$) by randomly drawing a stabilizer state and then estimating the overlap with a standard swap test, this also implies that for the ensemble of efficiently implementable (yet computationally indistinguishable) Hamiltonians belonging to $\mathcal{E}_2$, we have
\be
\Pr_{H\sim \mathcal{E}_2}[\mathbb{E}_{\ket{\psi}}[|\langle\psi|U^{\dag}e^{-iHt}|\psi\rangle|^2]\le 1-\Omega(1)]\ge 1-\negl(n).
\ee
Since $\abs{\expval{U^\dagger e^{-iHt}}{\psi}} \leq \norm{e^{-iHt} - U}_\diamond$ for any $\ket{\psi}$, we conclude that there exists a Hamiltonian in $\mathcal{E}_2$ such that
\be
\|e^{-iHt}-U\|_{\diamond}\ge \Omega(1)
\ee
which concludes the proof. 
\end{proof}
\end{theorem}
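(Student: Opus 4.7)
My plan has two parts: first prove the general (non-sparse) version, then extend to the sparse case.

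For the general version, I would first establish a supporting lemma: for $H$ drawn from the efficiently implementable pseudochaotic ensemble $\mathcal{E}_{\tilde{d},k}$ with $k=O(1)$ and $\tilde{d}=\omega(\poly n)$, there is a constant time $t^{*}=O(1)$ such that any efficient quantum algorithm $\mathcal{A}^H$ producing a circuit $U=\mathcal{A}^H()$ must satisfy
\begin{equation}
\Pr_{H\sim\mathcal{E}_{\tilde{d},k}}\bigl[|\langle 0|U^{\dagger}e^{-iHt^*}|0\rangle|^2 \ge 1-\Omega(1)\bigr]\le o(1).
\end{equation}
The argument combines three ingredients: (i) Kotowski et al.\ (\cref{lem:kotowski2023extremal}) guarantees that for $H\sim\gue$ and $t\ge t^*$, the state $e^{-iHt}|0\rangle$ has state complexity $2^{\Omega(n)}$, so any poly-size $U$ has overlap bounded away from $1$; (ii) the overlap $|\langle 0|U^{\dagger}e^{-iHt}|0\rangle|^2$ is efficiently measurable (to inverse-polynomial precision) via a swap test; (iii) therefore a violation for $\mathcal{E}_{\tilde{d},k}$ would yield an efficient distinguisher against $\gue$, contradicting \cref{th:pseudo-GUE}. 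To convert this to a diamond-norm statement, I use the trivial bound $\|e^{-iHt}-U\|_{\diamond}\ge \|(e^{-iHt}-U)|0\rangle\|=\sqrt{2-2|\langle 0|U^{\dagger}e^{-iHt}|0\rangle|}=\Omega(1)$, giving the theorem in its general form for $\varepsilon=O(1)$.

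For the sparse extension, I would specialize to diagonal Hamiltonians $H=\sum_i \lambda_i |i\rangle\langle i|$ (which are trivially $1$-sparse in the computational basis) with eigenvalues drawn from two distributions: the fully iid Wigner semicircle distribution $\tilde{p}$, and an efficiently samplable $k$-wise independent version $\tilde{p}_{d,k}$. Call these ensembles $\mathcal{E}_1$ and $\mathcal{E}_2$; note $\mathcal{E}_2$ has efficient time evolution via the phase-kickback trick described in \cref{alg:ctime-evol}. For $H\sim\mathcal{E}_1$, I would adapt the Kotowski--Oszmaniec--Horodecki argument (originally stated for standard Gaussian eigenvalues) to the Wigner semicircle distribution; since that proof is essentially a net-counting/anticoncentration argument depending only on mild moment properties, the same conclusion should yield $|\tr(U^{\dagger}e^{-iHt})|\le \alpha d$ with probability $1-\exp(-\Omega(d))$ for some constant $\alpha<1$ and any fixed polynomial-size $U$. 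This controls the average overlap $\mathbb{E}_{|\psi\rangle}[|\langle\psi|U^{\dagger}e^{-iHt}|\psi\rangle|^2]=|\tr(U^{\dagger}e^{-iHt})|^2/d^2+O(d^{-1})$ where the average runs over any state $2$-design.

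The key observation is that the average overlap over a $2$-design can be efficiently estimated (sample a random stabilizer state, run a swap test, repeat); hence computational indistinguishability of $\mathcal{E}_1$ from $\mathcal{E}_2$ forces the same lower bound to hold (up to negligible slack) when $H\sim\mathcal{E}_2$. Since the existence of even a single $|\psi\rangle$ with $|\langle\psi|U^{\dagger}e^{-iHt}|\psi\rangle|\le 1-\Omega(1)$ suffices to lower-bound the diamond norm by the same state-norm inequality as above, this completes the sparse case. The main obstacle I anticipate is technical: verifying that the Kotowski et al.\ proof goes through verbatim with semicircle instead of Gaussian eigenvalues (the semicircle is compactly supported and sub-Gaussian, which should only help) and checking that $k$-wise independence for constant $k$ is sufficient to preserve the expectation-over-$2$-design estimator used for the indistinguishability reduction, rather than needing full independence.
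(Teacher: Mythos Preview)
Your proposal is correct and follows essentially the same approach as the paper: the same overlap lemma built from Kotowski et al.\ plus indistinguishability (\cref{th:pseudo-GUE}), the same diamond-norm lower bound via $\|(e^{-iHt}-U)\ket{0}\|$, and for the sparse case the same diagonal-Hamiltonian ensembles with iid versus $k$-wise-independent semicircle eigenvalues, the same average-overlap-over-a-$2$-design estimator, and the same transfer via efficient swap-test estimation. Even the technical caveat you flag---porting the Kotowski et al.\ argument from Gaussian to semicircle eigenvalues---is handled in the paper exactly as you anticipate (by asserting the proof replicates verbatim).
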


\subsection{Scrambling property testing}\label{App:scramblingtesting}
In this section, we lower bound the query access to a unitary $U$ to solve the following decision problem, defined as \textit{scrambling property testing}.
\begin{definition}[Scrambling property tester] An algorithm $\mathcal{A}$ is a scrambling property tester if, given $M$ copies to the unitary generated by a Hamiltonian $H$ or, more simply, a unitary $U$ and two threshold values $k,K\in [0,1]$ with $k<K$, obeys the following conditions.
\begin{itemize}
    \item If $\abs{\OTOC_4(U)}\le k$ then $\mathcal{A}$ accepts with probability $1/3$.
    \item If $\abs{\OTOC_4(U)}> K$ then $\mathcal{A}$ accepts with probability $2/3$.
\end{itemize}
\end{definition}

\begin{theorem}[Sample-complexity lower bounds for scrambling tester. \cref{cor:scramblingtesting} in the main text.]
    Given $k>2^{-\Omega(n)}$, any scrambling tester, with query access to possibly controlled version of $U$ and its conjugate, requires $\Omega(K^{-\frac{1}{2+c}})$ (for $c>0$) many samples. 
\begin{proof}
Thanks to \cref{th:separations}, we know that there exists a $t^{*}$ such that
\be
\Pr_{H\sim {\gue}}[\abs{\OTOC(H, t^{*})}\le 2^{-\Omega(n)}]\ge 1-2^{-\Omega(n)}.\label{Eq:11}
\ee
Let us define the following two ensembles of Hamiltonians 
\begin{equation}
    \begin{aligned}
        \mathcal{E}^{t^*}_{\le k}&\coloneqq\qty{H\,:\, \abs{\OTOC_4(H, t^*)}\le k }, \\
\mathcal{E}_{> k}^{ t^*}&\coloneqq \qty{H\,:\, \abs{\OTOC_4(H, t^*)}> k }.        
    \end{aligned}
\end{equation}
The ensemble of Hamiltonians are distinguished by the value of the OTOC attained by the time evolution $e^{-iH t^{*}}$ of at most $k$ or at least $k$. 
Let us define $\pi\coloneqq\Pr_{H\sim \gue}[H\in \mathcal{E}_{\ge k}^{t^*}]$ and notice that \cref{th:separations} implies that if $k>2^{-\Omega(n)}$, there exists $t^{*}$ such that $\pi<2^{-\Omega(n)}$. This implies that
\be
\abs{\Pr_{U\sim {\gue}}[\mathcal{A}^{H}()=1]-\Pr_{U\sim \mathcal{E}_{<k}^{t^*}}[\mathcal{A}^{H}()=1]}<2^{-\Omega(n)}.
\ee
Similarly, for the pseudo-GUE ensemble of Hamiltonians $\mathcal{E}_{\tilde{d}}$, we can define
\be
{\mathcal{E}}_{\tilde{d},\ge}\coloneqq\{H\in \mathcal{E}_{\tilde{d}}\,:\, \OTOC(H,t^{*})\ge \Omega(\tilde{d}^{-1})\}.
\ee
By virtue of \cref{th:separations}, we have that
\be
\left|\Pr_{U\sim {\mathcal{E}}_{\tilde{d}}}[\mathcal{A}^{H}()=1]-\Pr_{U\sim {\mathcal{E}}_{\tilde{d},\ge}}[\mathcal{A}^{H}()=1]\right|<O(\tilde{d}^{-1}).
\ee
Thanks to \cref{th:pseudo-GUE}, we have that the pseudo-GUE ensemble is indistinguishable from the true GUE ensemble, which means that for every algorithm $\mathcal{A}^{H}$ which makes at most $M$ queries to (possibly controlled versions) of $e^{-iHt^{*}}$ and its conjugate $e^{iHt^{*}}$  cannot distinguish the two ensembles unless $M\ge\Omega(\tilde{d})$. More specifically, employing the result of \cref{lem:degen}, we can write
\be
\abs{\Pr_{U\sim {\gue}}[\mathcal{A}^{H}()=1]-\Pr_{U\sim {\mathcal{E}}_{\tilde{d}}}[\mathcal{A}^{H}()=1]}<O\left(\frac{M^2}{\tilde{d}}\right).
\ee
Hence, by a simple application of triangle inequalities, we find that
\be
\abs{\Pr_{U\sim \mathcal{E}_{\tilde{d},\ge}}[\mathcal{A}^{H}()=1]-\Pr_{U\sim \mathcal{E}_{<k}^{t^*}}[\mathcal{A}^{H}()=1]}<O\left(\frac{M^2}{\tilde{d}}\right).
\ee
Therefore, to solve the scrambling property testing problem with probability $\Omega(1)$ one needs a number of samples 
\be
M=\Omega(\tilde{d}^{1/2})\label{eq212}
\ee 
to distinguish between the ensemble $\mathcal{E}_{<k}^{t^*}$ and $\mathcal{E}_{\tilde{d},\ge}$. However, since the ensemble $\mathcal{E}_{<k}^{t^*}$ is defined by $\abs{\OTOC(H,t^{*})}<k$, while  the ensemble $\mathcal{E}_{\tilde{d},\ge}$ has $\abs{\OTOC(H, t^{*})}\ge \Omega(\tilde{d}^{-1})$, the result follows. To see this, let us suppose for the sake of contradiction that we need $O(K^{-1/(2+c)})$ for some constant $c>0$ to solve the scrambling property testing problem. We can then construct a pseudo-GUE ensemble with $\tilde{d}=K^{-1}$ and distinguish it from $\mathcal{E}_{<k}^{(t^*)}$ using only $O(K^{-1/(2+c)})$. However, by virtue of \cref{eq212}, we need $M=\Omega(K^{-1/2})$ and this is a contradiction. This concludes the proof. 
\end{proof}
\end{theorem}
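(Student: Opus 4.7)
The plan is to proceed via a reduction argument. Suppose we had a scrambling tester $\mathcal{A}$ that resolves the promise problem in fewer queries than claimed; I would use $\mathcal{A}$ to distinguish between the GUE ensemble and the pseudo-GUE ensemble $\mathcal{E}_{\tilde{d}}$, thereby contradicting \cref{th:pseudo-GUE}. Concretely, by \cref{th:separations}, there exists a time $t^*$ at which $\Pr_{H\sim\gue}[|\OTOC_4(H,t^*)| \le 2^{-\Omega(n)}] \ge 1 - 2^{-\Omega(n)}$, and simultaneously $\Pr_{H\sim \mathcal{E}_{\tilde{d}}}[|\OTOC_4(H,t^*)| \ge \Omega(\tilde{d}^{-1})]$ is at least $1-O(\tilde{d}^{-1/2})$. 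So the sub-ensembles $\mathcal{E}^{t^*}_{<k} := \{H : |\OTOC_4(H,t^*)|\le k\}$ and $\mathcal{E}_{\tilde{d},\ge} := \{H\in \mathcal{E}_{\tilde{d}} : |\OTOC_4(H,t^*)|\ge \Omega(\tilde{d}^{-1})\}$ exhaust nearly all the weight of $\gue$ and $\mathcal{E}_{\tilde{d}}$ respectively, provided $k > 2^{-\Omega(n)}$.

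Next, I would invoke the quantitative distinguishing bound underlying \cref{th:pseudo-GUE}: for any algorithm making at most $M$ queries to $e^{\pm iHt^*}$ (and its controlled versions), the distinguishing advantage between $H\sim\gue$ and $H\sim \mathcal{E}_{\tilde{d}}$ is at most $O(M^2/\tilde{d})$. This bound comes from the $2M$-marginal closeness proved in \cref{lem:degen}, combined with the fact that an $M$-query algorithm depends only on these marginals of the spectrum. Chaining three triangle inequalities — (i) $\gue$ versus $\mathcal{E}^{t^*}_{<k}$ (negligible by \cref{th:separations}), (ii) $\gue$ versus $\mathcal{E}_{\tilde{d}}$ (bounded by $O(M^2/\tilde{d})$), (iii) $\mathcal{E}_{\tilde{d}}$ versus $\mathcal{E}_{\tilde{d},\ge}$ (negligible again) — shows that any distinguisher between $\mathcal{E}^{t^*}_{<k}$ and $\mathcal{E}_{\tilde{d},\ge}$ has advantage at most $O(M^2/\tilde{d}) + \negl(n)$.

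Since a successful scrambling tester achieves $\Omega(1)$ distinguishing advantage between these two sub-ensembles (their OTOCs are on opposite sides of any threshold placed between $k$ and $\Omega(\tilde{d}^{-1})$), we obtain $M = \Omega(\tilde{d}^{1/2})$. Finally I would optimize: set $\tilde{d} = \lceil c'/K \rceil$ for a suitable constant $c'$ so that $\Omega(\tilde{d}^{-1}) \ge K$, which places $\mathcal{E}_{\tilde{d},\ge}$ into the ``YES'' side of the promise. This yields $M = \Omega(K^{-1/2})$. The extra $c>0$ slack in the stated bound $\Omega(K^{-1/(2+c)})$ simply absorbs the implicit constants in the $\Omega(\tilde{d}^{-1})$ lower bound from \cref{th:separations} and the constant $c'$ above, so the contradiction is immediate: any algorithm using $o(K^{-1/(2+c)})$ queries would contradict the $\Omega(\tilde{d}^{1/2}) = \Omega(K^{-1/2})$ lower bound.

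The main obstacle I anticipate is bookkeeping: one must be careful that the separation constants in \cref{th:separations} (which express the pseudo-GUE OTOC as $\Omega(\tilde{d}^{-1})$ rather than exactly $\tilde{d}^{-1}$) are compatible with the threshold $K$, which is why the theorem is phrased with the slack parameter $c>0$ rather than the tighter $\Omega(K^{-1/2})$. A secondary subtlety is verifying that the $O(M^2/\tilde{d})$ advantage bound indeed holds when the algorithm has access to controlled versions of $e^{\pm iHt}$ rather than uncontrolled calls; this follows because the argument in the proof of \cref{th:ham-indisting} already handles controlled time-evolution queries by absorbing the control into the adjacent unitaries $W_j''$.
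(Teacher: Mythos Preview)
Your proposal is correct and follows essentially the same approach as the paper's proof: both set up the reduction by restricting $\gue$ and $\mathcal{E}_{\tilde{d}}$ to the high-probability sub-ensembles $\mathcal{E}^{t^*}_{<k}$ and $\mathcal{E}_{\tilde{d},\ge}$ via \cref{th:separations}, invoke the quantitative $O(M^2/\tilde{d})$ distinguishing bound from \cref{lem:degen}, chain the same three triangle inequalities, and then set $\tilde{d}\sim K^{-1}$ to derive the contradiction. Your discussion of the slack parameter $c$ and of the controlled-query subtlety matches the paper's treatment as well.
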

\begin{remark}[Features of entanglement and stabilizer
Rényi entropies]
Analogous results for the local operator entanglement $\loe(H,t)$, the entanglement entropy $S_{2}$, or the 2-stabilizer Rényi entropy $M_2$ can be derived using a similar approach. The only ingredient required are \cref{th:separations} and \cref{th:pseudo-GUE}.
\end{remark}

\begin{open}[Preventing scrambling property testers]
The above no-go result prevents a scrambling property tester for values of $k^{-1} = \omega(\poly(n))$; however, it leaves open the possibility of efficient testing for $k^{-1} = O(\poly(n))$. A naive algorithm, which measures the 4-point OTOC by applying $U, U^{\dag}, P, Q$ in the correct order to a uniformly randomly selected bitstring $\ket{x}$, and then measuring the overlap with $\ket{x}$ itself, achieves a sample complexity of $O(k^{-2})$. An interesting open question is whether it is possible to design a more efficient property tester, possibly matching the lower bound $\Omega(k^{-1/2})$.
\end{open}

\subsection{Testing spectral statistics}\label{App:testingESS}
In this section, we give a proof of \cref{cor:ESSth}. Most of our previous results focused on properties of individual Hamiltonians sampled from an ensemble. This ``single-shot'' setting is natural when studying physical systems: we typically have access to a single realization of a quantum system and want to determine its properties through measurements. However, certain physical properties are inherently statistical in nature and can only be meaningfully defined for ensembles of systems. A prime example is level repulsion --- while we can measure the energy gaps of a single Hamiltonian, determining whether an ensemble exhibits level repulsion requires examining the statistical distribution of these gaps across many samples.

This ensemble testing scenario connects to the rich literature of distribution testing in classical computer science~\cite{goldreich1998testing,batu2000testing,goldreich2011complexity}. In the classical setting, given sample access to an unknown distribution, one aims to determine whether it has certain properties (e.g., uniformity, independence) or is ``far'' from having these properties. The quantum setting introduces several new challenges: while classical distribution testing typically assumes independent samples from the distribution, quantum measurements on copies of the same state are constrained by no-cloning. Moreover, in our setting, we don't have direct access to the eigenvalue distribution --- we can only access it through measurements of time evolution and thermal states. Our results on pseudochaotic Hamiltonians have direct implications for quantum ensemble testing. If two ensembles are computationally indistinguishable yet have drastically different statistical properties (like level statistics), this establishes fundamental limits on our ability to test these properties efficiently.

First, we prove the fact that the spectral statistics of the spoofing GUE are Poissonian.
\begin{lemma}[Poissonian statistics of $\tilde{p}$]\label{lem:poissonian}
We define the normalized gap $\hat{s}$ to be $d \cdot s$, where $s$ is the difference between any given energy $\lambda_i$ and the next largest energy $\lambda_j$. For the iid distribution $\tilde{p}_{d}$, $P(\hat{s}=0) = \frac{8}{3\pi^2}$.
\end{lemma}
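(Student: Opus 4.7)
My strategy is to derive the limiting (normalized) spacing density for $d$ i.i.d.\ samples from the Wigner semicircle $f(\lambda)=\frac{\sqrt{4-\lambda^2}}{2\pi}$ and then evaluate it at $\hat s=0$. The physical picture is local Poissonianity: in a microscopic window around a point $\lambda$ of macroscopic density $f(\lambda)$, the rescaled locations of the $\lambda_j$ look like a homogeneous Poisson process of rate $d\,f(\lambda)$, so locally the gap distribution is exponential with mean $1/(d\,f(\lambda))$. Averaging this \emph{local} exponential over the position of a typical eigenvalue then gives the global spacing density.

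\textbf{Step 1: local Poisson limit.} I would condition on one of the eigenvalues being at position $\lambda$ (which, by exchangeability, occurs with weight $f(\lambda)\,d\lambda$) and compute the probability that the next-largest eigenvalue lies at $\lambda+s$. For the remaining $d-1$ i.i.d.\ samples, standard order statistics give the conditional density of the gap as
\begin{equation}
g(s\mid \lambda) \;=\; (d-1)\,f(\lambda+s)\Bigl(1-\!\!\int_\lambda^{\lambda+s}\! f(y)\,dy\Bigr)^{d-2}.
\end{equation}
Now rescale $s=\hat s/d$. For any fixed $\hat s$ and any bulk point $\lambda\in(-2,2)$, $\int_\lambda^{\lambda+\hat s/d} f\,dy = \hat s f(\lambda)/d + O(d^{-2})$ and $f(\lambda+\hat s/d)=f(\lambda)+O(d^{-1})$, so
\begin{equation}
\tfrac{1}{d}\,g(\hat s/d\mid \lambda)\;\xrightarrow{d\to\infty}\;f(\lambda)\,e^{-\hat s\,f(\lambda)},
\end{equation}
i.e.\ the rescaled gap seen from an eigenvalue at $\lambda$ is Exp$(f(\lambda))$. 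This is the Poisson-limit theorem applied to rare-event counting in $[\lambda,\lambda+\hat s/d]$ and is exactly the statement that i.i.d.\ spectra are locally Poisson.

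\textbf{Step 2: average over bulk position.} Since the probability of selecting an eigenvalue in $[\lambda,\lambda+d\lambda]$ as our reference is $f(\lambda)\,d\lambda$, the unconditional density of $\hat s$ is
\begin{equation}
P(\hat s)\;=\;\int_{-2}^{2} f(\lambda)\cdot f(\lambda)\,e^{-\hat s\,f(\lambda)}\,d\lambda.
\end{equation}
The contribution of the spectral edges $\lambda\to\pm 2$, where $f(\lambda)\to 0$ and the local Poisson scale breaks down, is $O(d^{-1/4})$ by the same edge-exclusion argument used in \cref{lem:e2}, so it does not affect the bulk limit. Setting $\hat s=0$ collapses this to $P(\hat s=0)=\int_{-2}^{2} f(\lambda)^2\,d\lambda$.

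\textbf{Step 3: evaluate the integral.} This is the only concrete computation:
\begin{equation}
\int_{-2}^{2}\!\Bigl(\tfrac{\sqrt{4-\lambda^2}}{2\pi}\Bigr)^{2}\!d\lambda = \tfrac{1}{4\pi^2}\int_{-2}^{2}(4-\lambda^2)\,d\lambda = \tfrac{1}{4\pi^2}\Bigl[4\lambda-\tfrac{\lambda^3}{3}\Bigr]_{-2}^{2} = \tfrac{1}{4\pi^2}\cdot\tfrac{32}{3} = \tfrac{8}{3\pi^2}.
\end{equation}

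The monotone decrease of $P(\hat s)$ in $\hat s$ (stated alongside in \cref{thm:spectrum}) is immediate from the representation above, since $e^{-\hat s f(\lambda)}$ is decreasing in $\hat s$ for every $\lambda$. The one subtle step is Step 1 — making the local Poisson limit rigorous uniformly enough in $\lambda$ to justify interchanging limit and bulk integral in Step 2; this is where edge effects and the $\tilde d/d$ degeneracy structure of $\tilde p_{\tilde d}$ (not needed here, since we are analyzing the non-degenerate iid law $\tilde p_d$) could in principle cause trouble, but the exclusion bounds from \cref{lem:e1} and \cref{lem:e2} control exactly those regions.
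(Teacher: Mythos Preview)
Your proof is correct and uses the same underlying order-statistics formula as the paper, but you take an unnecessary detour. The paper writes the exact finite-$d$ gap density
\[
P(s)\;=\;d\int_{-2}^{2}p^{(1)}(\lambda)\,p^{(1)}(\lambda+s)\,\bigl(1-F(\lambda+s)+F(\lambda)\bigr)^{d-2}\,d\lambda
\]
and simply plugs in $s=0$: the exclusion factor becomes $1$, and one reads off $P(s=0)=d\int f(\lambda)^2\,d\lambda=\tfrac{8}{3\pi^2}d$, hence $P(\hat s=0)=\tfrac{8}{3\pi^2}$ after the Jacobian. No limit, no interchange of limit and integral, no edge-exclusion argument needed --- the ``subtle step'' you flagged in Step~1 is entirely sidestepped because at $s=0$ the finite-$d$ formula already collapses to the final integral.

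What your route buys is the full limiting spacing density $P(\hat s)=\int f(\lambda)^2 e^{-\hat s f(\lambda)}\,d\lambda$, from which monotonicity in $\hat s$ is transparent. The paper instead obtains monotonicity by differentiating the exact $P(s)$ in $s$, which is less illuminating but more elementary. So: same core formula, same integral; the paper's path is shorter for the specific claim $P(\hat s=0)=\tfrac{8}{3\pi^2}$, while yours gives a cleaner global picture at the cost of a limit argument that is not actually required.
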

\begin{proof}
The PDF of the gap is
\begin{equation}\label{eq:ps}
    P(s) = d \int_{-2}^{-2+s} p^{(1)}(\lambda) p^{(1)}(\lambda+s) (1+F(\lambda)-F(\lambda+s))^{d-2} \dd{\lambda},
\end{equation}
where $F(\lambda) := \int_{-2}^\lambda p^{(1)}(x) \dd{x}$ is the CDF of the Wigner semi-circle distribution. The justification of this equation is that in order for the gap to be $s$, if one energy is $\lambda$ we require another energy to be at $\lambda+s$, and we require all other energies to be outside the window $[\lambda,\lambda+s]$, which has probability $(1+F(\lambda)-F(\lambda+s))^{d-2}$; a similar derivation is presented in Ref.~\cite[Sec. 2.3]{livan2018introduction}. Evaluating at $s=0$, we have
\begin{equation}
    P(s=0) = d \int_{-2}^2 f(\lambda)^2 \dd{\lambda} = d \int_{-2}^2 \frac{4-\lambda^2}{4\pi^2} \dd{\lambda} = \frac{8}{3\pi^2} d.
\end{equation}
To show that the distribution is monotonically decreasing, it suffices to differentiate \cref{eq:ps} with respect to $s$ and observe that it is everywhere non-positive.
\end{proof}

To formally define what it means to test for a property of an \emph{ensemble} of Hamiltonians, we first define black-box query access to an ensemble of Hamiltonians.
\begin{definition}[Black-box ensemble access]\label{def:black-box-ensemble}
We say an algorithm has black-box access to an ensemble of Hamiltonians $\mathcal{E}$ if it can make $K$ queries to the ensemble. This means it can request polynomially many Hamiltonians $H_i$ randomly sampled from $\mathcal{E}$, for $i=1,\ldots,k$, and it has access to
\begin{itemize}[noitemsep]
    \item $g_i$ samples of the Gibbs states $e^{-\beta H_i}/\tr(e^{-\beta H_i})$ for any $\beta \leq O(\poly n)$, and
    \item $q_i$ applications of $e^{-iH_i t}$ (as well as its controlled version) for any $\abs{t} \leq \exp(O(n))$,
\end{itemize}
where $K = \sum_{i=1}^k g_i + q_i$. We will denote algorithms $\mathcal{A}$ with black-box access to an ensemble of Hamiltonians as $\mathcal{A}^{\mathcal{E}}$.
\end{definition}

\begin{corollary}[No efficient level repulsion testing. \cref{cor:ESSth} in the main text]
There exists two classes of Hamiltonians $\mathcal{E}_1$ and $\mathcal{E}_2$ that satisfy the following:
\begin{itemize}[noitemsep]
    \item they are $O(1)$-sparse in the computational basis;
    \item the eigenvalue spectral statistics (ESS) for $\mathcal{E}_1$ exhibit level repulsion, while the ESS for $\mathcal{E}_2$ does not;
    \item yet, any algorithm with black-box ensemble access to the ensembles $\mathcal{E}_1$ or $\mathcal{E}_2$ that makes at most $K=O(\poly n)$ queries to either ensemble satisfies
    \begin{equation}\label{eq:e27}
        \abs{\Pr[\mathcal{A}^{\mathcal{E}_1}()]-\Pr[\mathcal{A}^{\mathcal{E}_2}()]} \leq \negl(n).
    \end{equation}
\end{itemize}
\begin{proof}
We first define the two ensembles of Hamiltonians. We select both ensembles to consist of Hamiltonians that are diagonal in the computational basis. The first ensemble follows the correlated eigenvalue statistics of the GUE, and the second has entirely uncorrelated energies distributed according to $\tilde{p}_{\tilde{d}}$ with $\tilde{d} = \omega(\poly n)$:
\begin{equation}
    \begin{aligned}
    \mathcal{E}_1&\coloneqq\qty{\sum_{i}\lambda_i\ketbra{i}\,:\, \lambda_{1},\ldots,\lambda_{d}\sim p },\\
    \mathcal{E}_2&\coloneqq\qty{\sum_{i}\lambda_i\ketbra{i}\,:\, \lambda_{1},\ldots,\lambda_{d}\sim \tilde{p}_{\tilde{d}}}.
    \end{aligned}
\end{equation}
These ensembles trivially satisfy the first claim of the corollary, and the absence of level repulsion for the second ensemble follows from \cref{lem:poissonian}. 

The remainder of the proof consists in showing \cref{eq:e27}. This follows essentially from applying a triangle inequality (i.e., a standard `hybrid argument'~\cite{fischlin2021hybrid}) to accesses of individual Hamiltonians. To make it explicit, observe that the output of an algorithm $\mathcal{A}^{\mathcal{E}_1}$ which makes $K=O(\poly n)$ queries to the ensemble $\mathcal{E}_1$ can be written
\begin{equation}
    \E_{H_1 \sim \mathcal{E}_1} \E_{H_2 \sim \mathcal{E}_1} \ldots \E_{H_k \sim \mathcal{E}_1} \qty[\mathcal{A}^{H_1,\ldots,H_k}()],
\end{equation}
where $\mathcal{A}^{H_1,\ldots,H_k}$ denotes an algorithm which has black-box query access to each of the $H_1,\ldots,H_k$ Hamiltonians. By assumption that $K=O(\poly n)$, observe that $k=O(\poly n)$, and $\mathcal{A}^{H_1,\ldots,H_k}$ can make at most $\ell=O(\poly n)$ queries to each individual Hamiltonian. Now, holding each of the $H_1,\ldots,H_{k-1}$ fixed, observe \cref{th:ham-indisting} implies that
\begin{equation}
    \norm{\E_{H_k \sim \mathcal{E}_1} \qty[\mathcal{A}^{H_1,\ldots,H_k}()]-\E_{H_k \sim \mathcal{E}_2} \qty[\mathcal{A}^{H_1,\ldots,H_k}()]}_1 \leq \negl(n).
\end{equation}
Applying the triangle inequality $k=O(\poly n)$ times, we conclude that 
\begin{equation}
    \norm{\E_{H_1,\ldots,H_k \sim \mathcal{E}_1}\qty[\mathcal{A}^{H_1,\ldots,H_k}()] - \E_{H_1,\ldots,H_k \sim \mathcal{E}_2}\qty[\mathcal{A}^{H_1,\ldots,H_k}()]} \leq k \cdot \negl(n) = \negl(n),
\end{equation}
and since $\E_{H_1,\ldots,H_k \sim \mathcal{E}_1}\qty[\mathcal{A}^{H_1,\ldots,H_k}()] = \E[\mathcal{A}^{\mathcal{E}_1}]$ (with an identical relation for $\mathcal{E}_2$), this shows the desired statistical indistiguishability of the two ensembles, which proves \cref{eq:e27}.
\end{proof}
\end{corollary}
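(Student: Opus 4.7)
I would take both ensembles to consist of diagonal Hamiltonians in the computational basis, thereby ensuring $1$-sparsity (hence $O(1)$-sparsity) automatically. Concretely, let $\mathcal{E}_1 = \{\sum_i \lambda_i \ketbra{i} : (\lambda_1,\ldots,\lambda_d) \sim p\}$ with $p$ the joint GUE eigenvalue density of \cref{eq:eigenvaluedistirbution}, and $\mathcal{E}_2 = \{\sum_i \lambda_i \ketbra{i} : (\lambda_1,\ldots,\lambda_d) \sim \tilde{p}_{\tilde{d}}\}$ with $\tilde{d} = \omega(\poly n)$. The spectral dichotomy is then inherited directly from \cref{thm:spectrum}: $\mathcal{E}_1$ displays Wigner--Dyson statistics with $P(\hat{s}{=}0)=0$ (level repulsion), while $\mathcal{E}_2$ has the monotonically decreasing spacing density with $P(\hat{s}{=}0)=8/(3\pi^2)$, and in particular no level repulsion.

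\textbf{Single-Hamiltonian indistinguishability.} The key observation is that a ``stripped down'' version of \cref{th:pseudo-GUE} already applies to diagonal Hamiltonians, i.e., we do not actually need the pseudorandom eigenbasis. Inspecting the proof of \cref{th:ham-indisting}, the bound in \cref{eq:psi-v-bound} holds for any fixed eigenbasis $V$, because after absorbing $V$ into the surrounding algorithm unitaries the expectation of the output state over $\Lambda$ depends only on the $2k$-marginal of the spectral law; \cref{lem:degen} then ensures the $2k$-marginals of $p$ and $\tilde{p}_{\tilde{d}}$ are within $\negl(n)$ in total variation for any $k=O(\poly n)$. Specializing to $V=I$ gives time-evolution indistinguishability for our diagonal ensembles. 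For Gibbs queries, the analogous statement follows directly from \cref{thm:gibbs}, which bounds the trace distance between the averaged Gibbs states produced by $p$ and by $\tilde{p}_{\tilde{d}}$ at any $\beta=O(\poly n)$. Combining the two bounds, any efficient $\mathcal{A}^H$ with $\ell=O(\poly n)$ queries to a \emph{single} Hamiltonian cannot distinguish a draw from $\mathcal{E}_1$ from a draw from $\mathcal{E}_2$ with more than $\negl(n)$ advantage.

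\textbf{Hybrid argument for many samples.} The ensemble access model of \cref{def:black-box-ensemble} allows at most $k=O(\poly n)$ independent samples with a total of $K=O(\poly n)$ queries spread across them. I would introduce the hybrid ensembles $\mathcal{H}_j$ for $j=0,1,\ldots,k$, in which the first $j$ Hamiltonians are drawn from $\mathcal{E}_2$ and the remaining $k-j$ from $\mathcal{E}_1$, so that $\mathcal{H}_0=\mathcal{E}_1^{\otimes k}$ and $\mathcal{H}_k=\mathcal{E}_2^{\otimes k}$. Conditioning on all other Hamiltonians, the gap between $\mathcal{H}_{j-1}$ and $\mathcal{H}_j$ reduces exactly to the single-Hamiltonian problem above, so each pair of neighbors is $\negl(n)$-close in distinguishing advantage. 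A triangle inequality over the $k=O(\poly n)$ hybrid steps preserves negligibility and yields $\lvert \Pr[\mathcal{A}^{\mathcal{E}_1}()=1] - \Pr[\mathcal{A}^{\mathcal{E}_2}()=1]\rvert \le \negl(n)$.

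\textbf{Main obstacle.} The only real subtlety I expect is bookkeeping: the per-step error in \cref{lem:degen} and \cref{thm:gibbs} depends polynomially on $k$, on the per-Hamiltonian query count $\ell_i$, and on the inverse temperature $\beta=O(\poly n)$, and one must verify that summing a $\poly(n)$ number of terms of size $O(\poly(k,\beta,n)/d^{1/8}) + O((\sum_i \ell_i)^2/\tilde{d})$ remains negligible. Since we are free to pick $\tilde{d}=\omega(\poly n)$, any fixed polynomial in $n$ is absorbed and the total is still $\negl(n)$; so nothing conceptually new beyond the machinery of \cref{App:pseudochaoticconstruction} is required.
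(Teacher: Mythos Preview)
Your proposal is correct and follows essentially the same approach as the paper: diagonal Hamiltonians with spectra drawn from $p$ versus $\tilde{p}_{\tilde{d}}$, the spectral dichotomy from \cref{thm:spectrum}/\cref{lem:poissonian}, single-Hamiltonian indistinguishability from the machinery behind \cref{th:ham-indisting}, and a hybrid argument to handle the $k$ independent samples. You are in fact slightly more careful than the paper in one place: the paper simply invokes \cref{th:ham-indisting} for the single-Hamiltonian step, whereas you correctly point out that the relevant bound \cref{eq:psi-v-bound} holds for any fixed eigenbasis $V$ and specialize to $V=I$, which is exactly what is needed for diagonal ensembles.
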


\section{Indistinguishability of late-time GUE from Haar random unitaries}\label{App:GUEHaar}
In this section, we rigorously demonstrate an intuitive result: late-time evolution generated by the GUE ensemble is indistinguishable from Haar-random unitaries for any computationally bounded observer. Here, late time is defined as $t = \omega(\poly n)$. Despite being intuitive, this result has never been rigorously proven. In Ref.~\cite{cotler2017black}, the authors conjectured that there should exist some late time at which evolution under the GUE would resemble Haar-random unitaries, but they also claimed that GUE evolution should deviate from Haar-random unitaries for extremely long times. This conjecture was based on the intuition that while the spectral distribution of Haar-random unitaries exhibits level repulsion, the long-time evolution under GUE erases this repulsion, leading to eigenphases uniformly distributed on the circle. While this intuition is accurate, it is precisely the reason why late-time GUE evolution resembles Haar-random unitaries. We will first show that the spectral distribution of Haar-random unitaries is `spoofed' by a uniform independent and identically distributed (iid) distribution, which is reached by the late-time evolution under the GUE. Thus, not only do we rigorously prove this result, but we also refute the reasoning about the deviation from Haar-random behavior: for any $t = \omega(\poly n)$, the GUE evolution is indistinguishable from Haar-random unitaries for any computationally bounded observer.

\subsection{Proof of indistinguishability}

Haar random unitaries are also known as the Circular Unitary Ensemble (CUE), introduced by Dyson in Ref.~\cite{dysonCUE1962}. As every matrix basis-invariant ensemble, we can write it as
\be
\haar=\{U\Phi U^{\dag}\,:\, U\sim\haar,\,\Phi\sim q(\phi_1,\ldots,\phi_{d})\}\label{eq:tautdefinitionCUE}
\ee
where is a diagonal unitary with eigenphases $\phi_i\in(0,2\pi]$ distributed according to the CUE spectral distribution~\cite{mehta2004random}, given below
\be
q(\phi_1,\ldots, \phi_d)=\frac{1}{d!}\prod_{i,j}|e^{i\phi_i}-e^{i\phi_j}|^2
\ee
i.e., exhibiting level repulsion. While seemingly tautological, the expression in \cref{eq:tautdefinitionCUE} is the key to the simple proof of indistinguishability. Indeed, intuitively, it is sufficient to show that the unitaries $\Phi \sim q$ and $e^{-i\Lambda t}$ with $\Lambda \sim p$ are indistinguishable, as both cases feature a random eigenbasis. In the following, we will primarily focus on proving the statement regarding diagonal unitaries, employing the techniques and ideas developed in \cref{App:pseudochaoticconstruction}. Finally, in \cref{th:GUE=Haarlatetimes}, we present the rigorous result establishing the indistinguishability.

Thanks to the result of Ref.~\cite{diaconis1980finite} and the fact that $q(\phi_1,\ldots,\phi_d)$ is permutation-invariant, there exists an iid probability distribution that approximates every $k$-marginal of $q$, with an error scaling as $O(k^2/d)$. To find the single eigenvalue distribution $\tilde{q}(\lambda)$, it is useful to express the marginals of $q$ as shown in Ref.~\cite{mehta2004random}.
\be
q^{(k)}(\phi_1,\ldots,\phi_k)=\frac{(d-k)!}{d!}\frac{1}{(2\pi)^k}\det(d\mathbb{I}+C)
\ee
where $C$ is a $k\times k$ matrix with components
\be
C_{ij}=\begin{cases}
    0 &\qq{if $i=j$}\\
    \frac{\sin(d(\phi_i-\phi_j))}{\sin(\phi_i-\phi_j)} &\qq{otherwise.} \label{def:matrixC}
\end{cases}
\ee
Let us define $\tilde{q}(\phi_1,\ldots,\phi_d)=\frac{1}{(2\pi)^d}$ the uniform distribution over all the eigenphases. Let us show that it well-approximates the $k$-marginals of $q$ up to $O(k^2/d)$ error in total variation distance. 
\begin{lemma}[Upper bound to probabilities of events]
Let $E$ be the event where $\abs{\sin(\phi_i-\phi_j)}<\frac{1}{\sqrt{d}}$, then 
\be
\Pr_{q^{(k)}}[E],\Pr_{\tilde{q}^{(k)}}[E]\le \frac{1}{\sqrt{d}}
\ee
\begin{proof}
Let us start from $q^{(k)}$, to get
\ba
\Pr_{q^{(k)}}[E]&=&\sum_{i=1}^{k}\sum_{j=i+1}^{k}\Pr_{q^{(k)}}\left[|\sin(\phi_i-\phi_j)|< \frac{1}{\sqrt{d}}\right]\\
\nonumber
&=&\frac{k(k-1)}{2}\Pr_{q^{(2)}}\left[|\sin(\phi_1-\phi_2)|< \frac{1}{\sqrt{d}}\right]\\
\nonumber
&\le&\frac{k(k-1)}{2}\Pr_{q^{(2)}}\left[|\phi_1-\phi_2|< \frac{1}{\sqrt{d}}\right]\\
\nonumber
&=&\frac{k(k-1)}{2(1-1/d)}\frac{1}{4\pi^2}\int_{0}^{2\pi}\de\phi_1\int_{\phi_1-\frac{1}{\sqrt{d}}}^{\phi_1+\frac{1}{\sqrt{d}}}\de\phi_2\left[1-\left(\frac{\sin[d(\phi_1-\phi_2)]}{d \sin(\phi_1-\phi_2)}\right)^2\right].
\ea
\nonumber
Since $0 \leq 1-\left(\frac{\sin[d(\phi_1-\phi_2)]}{d \sin(\phi_1-\phi_2)}\right)^2 \leq 1$, we can immediately upper bound the value of the inner integral with $2/\sqrt{d}$, and find
\ba
\Pr_{q^{(k)}}[E]\le O\left(\frac{k^2}{\sqrt{d}}\right).
\ea
The analogous statement for $\tilde{q}^{(k)}$ follows from
\be
\Pr_{\tilde{q}^{(k)}}[E]\le \frac{k(k-1)}{2}\frac{1}{(2\pi)^2}\int_{0}^{2\pi}\de\phi_1\int_{\phi_1-\frac{1}{\sqrt{d}}}^{\phi_1+\frac{1}{\sqrt{d}}}\de\phi_2=O\left(\frac{k^2}{\sqrt{d}}\right).
\ee
\end{proof}
\end{lemma}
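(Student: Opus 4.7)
The plan is to reduce the bound on $\Pr[E]$ to a single-pair estimate via a union bound and then handle each distribution by a pointwise comparison between $q^{(2)}$ and $\tilde q^{(2)}$. Writing $E = \bigcup_{i<j}\{|\sin(\phi_i-\phi_j)| < 1/\sqrt{d}\}$, the union bound combined with the permutation invariance of both $q^{(k)}$ and $\tilde q^{(k)}$ collapses the estimate to $\binom{k}{2}$ identical copies of $\Pr[|\sin(\phi_1-\phi_2)| < 1/\sqrt{d}]$, evaluated under $q^{(2)}$ and $\tilde q^{(2)}$ respectively.

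For the uniform marginal $\tilde q^{(2)} = 1/(2\pi)^2$, I would use translation invariance in $\phi_1$ to reduce the two-dimensional integral to a one-dimensional integral over $\phi = \phi_1-\phi_2 \in (-\pi,\pi]$. The set $\{|\sin\phi| < 1/\sqrt{d}\}$ consists of small neighborhoods of $\phi = 0$ and $\phi = \pm\pi$; since $\arcsin(1/\sqrt{d}) = O(1/\sqrt{d})$, each such neighborhood has length $O(1/\sqrt{d})$, so the single-pair probability under $\tilde q^{(2)}$ is $O(1/\sqrt{d})$.

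For the CUE marginal, I would plug into the determinantal formula $q^{(2)}(\phi_1,\phi_2) = \frac{1}{(2\pi)^2 d(d-1)} \det(dI + C)$ with $C$ the $2\times 2$ matrix defined in \cref{def:matrixC}. Since $C$ is symmetric with vanishing diagonal, $\det(dI + C) = d^2 - C_{12}^2 \le d^2$ (by the Dirichlet-kernel bound $|\sin(d\theta)/\sin\theta| \le d$). Hence $q^{(2)}(\phi_1,\phi_2) \le \tfrac{d}{d-1}\,\tilde q^{(2)}(\phi_1,\phi_2)$ pointwise, and the uniform single-pair estimate transfers to the CUE marginal up to a harmless factor $1 + O(1/d)$. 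Combining with the union bound then produces the claimed bound, with the natural $k$-dependence being $O(k^2/\sqrt{d})$.

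The main (mild) obstacle is the pointwise comparison $q^{(2)} \lesssim \tilde q^{(2)}$: without it one is forced to integrate the sinc-like function $C_{12}^2 = \sin^2(d(\phi_1-\phi_2))/\sin^2(\phi_1-\phi_2)$ directly across the event $E$, which is delicate because $C_{12} \to d$ precisely where $\phi_1 \to \phi_2$, i.e., in the region we are trying to exclude. The determinantal identity sidesteps this entirely by turning a potential cancellation into the trivial bound $d^2 - C_{12}^2 \le d^2$, which is the key observation that makes the argument a one-liner once the union bound has been set up.
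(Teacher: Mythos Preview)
Your proposal is correct and essentially matches the paper's argument: both use a union bound plus permutation invariance to reduce to a single-pair estimate, and both control $q^{(2)}$ via the same inequality $\det(dI+C)=d^2-C_{12}^2\le d^2$ (the paper phrases this as bounding the integrand $1-(\sin(d\theta)/(d\sin\theta))^2\le 1$). Your pointwise comparison $q^{(2)}\le \tfrac{d}{d-1}\tilde q^{(2)}$ is simply a clean repackaging of that step, and your treatment of the region $\{|\sin\phi|<1/\sqrt{d}\}$ (accounting for neighborhoods of both $0$ and $\pm\pi$) is in fact slightly more careful than the paper's.
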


\begin{lemma}[Closeness of marginals]\label{Applem:closenessmarginalsCUE}
The marginals $q^{(k)}$ and $\tilde{q}^{(k)}$ are close in total variation distance, i.e.,  \be
\delta(q^{(k)},\tilde{q}^{(k)})=O\left(\frac{k}{d^{1/4}}\right).
\ee
\begin{proof}
The proof will be, as in \cref{app:spoof}, divided into three steps. First, we define
\begin{equation}
    \begin{aligned}
    q_{\cut}^{(k)}(\phi_1,\ldots,\phi_k)&\coloneqq\begin{cases}
    0 &\qq{if $E$} \\
    \frac{q(\phi_1,\ldots,\phi_k)}{1-\Pr_{q^{(k)}}[E]} &\qq{otherwise}
    \end{cases} \\
        \tilde{q}^{(k)}_{\cut}(\phi_1,\ldots,\phi_d)&\coloneqq \begin{cases} 0 &\qq{if $E$} \\
    \frac{\tilde{q}(\phi_1,\ldots,\phi_d)}{1-\Pr_{\tilde{q}^{(k)}}[E]} &\qq{otherwise}
    \end{cases}
    \end{aligned}
\end{equation}
We bound $D_{KL}(q^{(k)} \parallel q_{\cut}^{(k)})$ as
\be
D_{KL}(q_{\cut}^{(k)} \parallel q^{(k)})=\int \dd{\phi_1} \cdots \dd{\phi_k} q_{\cut}^{(k)}(\phi_{1},\ldots, \phi_k)\ln\left(\frac{q^{(k)}(\phi_1,\ldots,\phi_k)}{q_{\cut}^{(k)}(\phi_1,\ldots,\phi_k)}\right)=-\ln(1-\Pr[E])=O\left(\frac{k^2}{\sqrt{d}}\right).
\ee
Similarly, it holds that
\be
D_{KL}(\tilde{q}_{\cut}^{(k)}\parallel\tilde{q}^{(k)})=O\left(\frac{k^2}{\sqrt{d}}\right)
\ee
Finally, let us bound $D_{KL}(\tilde{q}_{\cut} \parallel q_{\cut})$. Similarly to \cref{lem:tilde-rho}, let us use the fact that $|\ln\det(\mathbb{I}+d^{-1}C)|\le(d^{-1}\tr(C)+d^{-2}\|C\|_{2}^{2}/2)$. The matrix $C$, defined in \cref{def:matrixC}, has null diagonal. Hence, we find that $\tr(C)=0$. Then, over the domain of $\tilde{q}^{(k)}_{\cut}$, we have that $|C_{ij}|\le \sqrt{d}$, hence
\be
\norm{C}_2^2 =\sum_{i,j} \abs{C_{ij}}^2\le k^2d,
\ee
from which it follows that $\abs{\ln\det(\mathbb{I}+d^{-1}C)}\le \frac{k^2}{2d}$. Therefore,
\begin{equation}
    \begin{aligned}
        D_{KL}(\tilde{q}^{(k)}_{\cut} \parallel q^{(k)}_{\cut})&=\int \dd{\phi_1} \cdots \dd{\phi_k} \tilde{q}_{\cut}\ln\left(\frac{{q}^{(k)}_{\cut}}{\tilde{q}^{(k)}_{\cut}}\right)\\
&=\int\de\phi_1\cdots\de\phi_k \tilde{q}_{\cut}\ln\left(\frac{(d-k)!d^k}{d!}\det(\mathbb{I}+d^{-1}C)\right)\\
&\leq \ln \left(\frac{(d-k)!d^k}{d!}\right)+\frac{k^2}{2d}\\
&=\frac{k(k-1)}{d}+\frac{k^2}{2d}=O\left(\frac{k^2}{d}\right).
    \end{aligned}
\end{equation}
where we used that $\ln \left(\frac{(d-k)!d^k}{d!}\right)\le \frac{k(k-1)}{d}$ for $k\le d/2$, see also \cref{lem:tilde-rho}. Therefore, we can finally bound the TV distance between $\tilde{q}^{(k)}$ and $q^{(k)}$. Since we have bounds on the KL-distance, we use the general inequality
\be
\delta(P,Q)\le\sqrt{\frac{1}{2}D_{KL}(P\|Q)}
\ee
Hence:
\ba
\delta(\tilde{q}^{(k)},q^{(k)})&\le& \delta(\tilde{q}^{(k)},\tilde{q}^{(k)}_{\cut}) +\delta({q}^{(k)}_{\cut},q^{(k)})+\delta(\tilde{q}^{(k)}_{\cut},q^{(k)}_{\cut})\\
&\le& \sqrt{\frac{1}{2}D_{KL}(\tilde{q}^{(k)}\|\tilde{q}^{(k)}_{\cut})} +\sqrt{\frac{1}{2}D_{KL}({q}^{(k)}_{\cut}\|q^{(k)})}+\sqrt{\frac{1}{2}D_{KL}(\tilde{q}^{(k)}_{\cut}\|q^{(k)}_{\cut})}\\
&=&O\left(\frac{k}{d^{1/4}}+\frac{k}{d^{1/2}}\right)=O\left(\frac{k}{d^{1/4}}\right)
\nonumber
\ea
which concludes the proof.
\end{proof}
\end{lemma}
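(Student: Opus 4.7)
The plan is to mirror the strategy used for spoofing the GUE spectrum in Theorem~\ref{thm:spoof}, adapted to the circular case. The explicit form $q^{(k)}(\phi_1,\ldots,\phi_k)\propto\det(dI+C)$ with $C$ defined in \cref{def:matrixC} shows that the ratio $q^{(k)}/\tilde{q}^{(k)}$ is proportional to $\det(I+d^{-1}C)$, so bounding the KL divergence reduces to controlling how far $d^{-1}C$ is from the zero matrix. The off-diagonal entries of $C$ involve the kernel $\sin(d(\phi_i-\phi_j))/\sin(\phi_i-\phi_j)$, which is well behaved only when the eigenphases are not too close; thus the first step is to excise a small-probability ``bad'' event and reason on its complement.

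Concretely, I would define the bad event $E$ as the occurrence of any pair $i\neq j$ with $|\sin(\phi_i-\phi_j)|<1/\sqrt{d}$. The single-pair probability can be estimated in exactly the same way as the corresponding GUE gap bound (\cref{lem:e1}): for $q^{(k)}$ the integrand is $1-(\sin(d\Delta)/(d\sin\Delta))^2\in[0,1]$ over a window of width $O(1/\sqrt{d})$, and for $\tilde{q}^{(k)}$ the integrand is simply constant, so in both cases a union bound over the $\binom{k}{2}$ pairs gives $\Pr[E]=O(k^2/\sqrt{d})$. Cutting $E$ out of each distribution costs only $O(k^2/\sqrt{d})$ in KL, by the same computation as in \cref{lem:cut}.

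The heart of the argument is to bound $D_{KL}(\tilde{q}^{(k)}_{\cut}\parallel q^{(k)}_{\cut})$. On the complement of $E$ we have $|C_{ij}|\le\sqrt{d}$, so $\|d^{-1}C\|_2^2\le k^2/d$, and crucially $\mathrm{tr}(C)=0$ since the diagonal of $C$ vanishes. Applying the identity $|\det(I+X)|\le\exp(\mathrm{tr}(X)+\|X\|_2^2/2)$ used in \cref{lem:tilde-rho} then yields $|\ln\det(I+d^{-1}C)|\le k^2/(2d)$, and together with the Stirling-type bound $\ln((d-k)!d^k/d!)\le k(k-1)/d$ we obtain $D_{KL}(\tilde{q}^{(k)}_{\cut}\parallel q^{(k)}_{\cut})=O(k^2/d)$. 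A threefold triangle inequality through $\tilde{q}^{(k)}\to\tilde{q}^{(k)}_{\cut}\to q^{(k)}_{\cut}\to q^{(k)}$, combined with Pinsker, converts these KL bounds into the claimed total-variation bound $O(k/d^{1/4})$, where the limiting contribution is $\sqrt{\Pr[E]}$ from the cutting step.

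The main technical obstacle I anticipate is the determinantal perturbation step: one must verify that on $\lnot E$ the matrix $d^{-1}C$ is genuinely a small perturbation in the $\|\cdot\|_2$ sense, and that the vanishing diagonal of $C$ is preserved so that the dominant linear term $\mathrm{tr}(d^{-1}C)$ does not appear in the exponent. The sinusoidal kernel is slightly more delicate than the sinc kernel of the GUE argument because one can fail to have $|\sin(\phi_i-\phi_j)|\ge|\phi_i-\phi_j|/\pi$ uniformly on $[0,2\pi)$ unless one restricts to the principal branch; handling this properly is why the bad event is phrased in terms of $\sin(\phi_i-\phi_j)$ rather than the raw phase differences, and it is what makes the rest of the bookkeeping go through without change.
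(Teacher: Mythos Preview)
Your proposal is correct and follows essentially the same approach as the paper: define the bad event $E=\{|\sin(\phi_i-\phi_j)|<1/\sqrt{d}\text{ for some }i\neq j\}$, cut it out at KL cost $O(k^2/\sqrt{d})$, bound $D_{KL}(\tilde{q}^{(k)}_{\cut}\parallel q^{(k)}_{\cut})=O(k^2/d)$ via $\tr(C)=0$ and $\|d^{-1}C\|_2^2\le k^2/d$ on $\lnot E$, and then combine Pinsker with a triangle inequality. The paper's proof matches your outline step for step, including the phrasing of $E$ in terms of $\sin(\phi_i-\phi_j)$ rather than raw phase differences.
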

\begin{lemma}[Wigner semi-circle law approaches uniform distribution]\label{lem:wignerdysontouniform} Let $t\in[0,\infty)$. Let $p^{(1)}(\lambda)=\frac{1}{2\pi}\sqrt{4-\lambda^2}$ be the marginal of the GUE eigenvalue distribution and $q^{(1)}=\frac{1}{2\pi}$ be the uniform distribution. Let $p^{(1)}_{+}(\phi)$ be the probability density induced by $p^{(1)}(\lambda)$ via $\phi=\lambda t$ for $\phi\in[0,2\pi)$, then
\be
\delta\left(q^{(1)}(\phi), p^{(1)}_{+}(\phi)\right)\le O(t^{-1/8}).
\ee
\end{lemma}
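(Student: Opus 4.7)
The plan is to use Fourier analysis on the circle, exploiting the known characteristic function of the Wigner semicircle law. Since the uniform density $q^{(1)}(\phi) = 1/(2\pi)$ has all non-zero Fourier coefficients vanishing, the problem reduces to controlling the Fourier coefficients $\hat p^{(1)}_+(k)$ for $k \neq 0$.

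First I would write out the pushforward density explicitly,
\begin{equation*}
p^{(1)}_+(\phi) \;=\; \frac{1}{t}\sum_{k\in K_\phi} p^{(1)}\!\left(\frac{\phi + 2\pi k}{t}\right), \qquad K_\phi \;=\; \{k\in \mathbb{Z} : (\phi+2\pi k)/t \in [-2,2]\},
\end{equation*}
and then compute the Fourier coefficients of $p^{(1)}_+$ on $[0,2\pi)$ by undoing the quotient and applying the pushforward definition,
\begin{equation*}
\hat p^{(1)}_+(k) \;=\; \int_0^{2\pi}\! e^{-ik\phi}\, p^{(1)}_+(\phi)\, d\phi \;=\; \int_{-2}^{2}\! e^{-ikt\lambda}\, p^{(1)}(\lambda)\, d\lambda \;=\; \frac{J_1(2kt)}{kt},
\end{equation*}
the last equality being the same Bessel-function identity already used in \cref{lem:zlambt,lem:zlambd-large}. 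Since $\hat q^{(1)}(k)=\delta_{k,0}$, the two densities differ by the pure oscillation
\begin{equation*}
p^{(1)}_+(\phi) - \frac{1}{2\pi} \;=\; \frac{1}{2\pi}\sum_{k\neq 0}\frac{J_1(2kt)}{kt}\,e^{ik\phi}.
\end{equation*}

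Next I would invoke the standard asymptotic $|J_1(x)| \leq C/\sqrt{|x|}$ for $|x|\gtrsim 1$ (see Ref.~\cite{abramowitz1965handbook}) to obtain $|\hat p^{(1)}_+(k)| \leq C(|k|t)^{-3/2}$ whenever $t\gtrsim 1$ and $|k|\geq 1$. Summing the absolutely convergent series gives
\begin{equation*}
\bigl\|p^{(1)}_+ - \tfrac{1}{2\pi}\bigr\|_\infty \;\leq\; \frac{1}{\pi}\sum_{k\geq 1} \frac{C}{(kt)^{3/2}} \;=\; O(t^{-3/2}),
\end{equation*}
and hence $\delta(q^{(1)}, p^{(1)}_+) \leq \pi\,\|p^{(1)}_+ - 1/(2\pi)\|_\infty = O(t^{-3/2})$, well within the $O(t^{-1/8})$ claim.

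The only point requiring care is the small-$t$ regime, where the Bessel asymptotic does not apply; however, for $t = O(1)$ the target rate is trivial since the total variation distance is always at most $1$. A cleaner but slightly weaker alternative, if one wishes to avoid Bessel asymptotics entirely, is to recognize $p^{(1)}_+(\phi)$ as $\tfrac{1}{2\pi}$ times a Riemann sum of step $2\pi/t$ for $\int_{-2}^{2} p^{(1)}(\lambda)\,d\lambda = 1$; since $p^{(1)}$ has total variation $2/\pi$ (it rises monotonically from $0$ to $1/\pi$ on $[-2,0]$ and falls back to $0$ on $[0,2]$), the standard bounded-variation Riemann-sum estimate immediately yields the pointwise bound $|p^{(1)}_+(\phi) - 1/(2\pi)| = O(1/t)$ and hence TV distance $O(1/t)$, which again comfortably exceeds the required rate.
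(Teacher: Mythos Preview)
Your argument is correct and in fact delivers a much sharper rate than the paper claims. The paper's own proof takes a different, more hands-on route: it first excises the edge region $\{\lambda:p^{(1)}(\lambda)<1/\sqrt{t}\}$ (where the derivative of $p^{(1)}$ diverges), controls the resulting TV loss as $O(t^{-1/8})$, and then on the truncated density uses a Riemann-sum estimate with step $2\pi/t$ and the derivative bound $\max|\partial_\lambda p^{(1)}_{\cut}|=O(\sqrt t)$ to get an additional $O(t^{-1/2})$; the final $t^{-1/8}$ comes entirely from the cutting step. Your Fourier route bypasses the cutting because the characteristic function $J_1(2kt)/(kt)$ already encodes the edge behavior, and summing the $|k|^{-3/2}t^{-3/2}$ decay gives $\|p^{(1)}_+-1/(2\pi)\|_\infty=O(t^{-3/2})$. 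Your alternative bounded-variation Riemann-sum argument is the cleaner cousin of the paper's: it exploits that $p^{(1)}$ has total variation $2/\pi$ \emph{despite} having unbounded derivative at $\pm2$, so no truncation is needed and one lands directly at $O(t^{-1})$. Either of your approaches is shorter and yields a stronger bound; the paper's approach has the minor advantage of avoiding any Fourier or variation machinery, at the cost of the extra cutting step and the weaker exponent.
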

\begin{proof}
Let $U=[\lambda_{\min},\lambda_{\max}] \coloneqq \qty{\lambda \in [-2,2]: p^{(1)}(\lambda) \geq 1/\sqrt{t}}$. Define
\begin{eqnarray}
    p^{(1)}_{\cut}(\lambda) = \begin{cases}
        \frac{p^{(1)}(\lambda)}{1-\int_U p^{(1)}(\lambda) \dd{\lambda}} &\qq{for $\lambda \in U$} \\
        0 &\qq{otherwise.}
    \end{cases}
\end{eqnarray}
Note that \cref{lem:e2} proved that $\int_U p^{(1)}(\lambda) \dd{\lambda} \leq 2t^{-1/4}$. By identical reasoning to \cref{thm:spoof}, this implies that $\delta(p^{(1)}(\lambda), p^{(1)}_{\cut}(\lambda)) \leq O(t^{-1/8})$. Also, since the mapping $\phi = \lambda t \pmod{2\pi}$ is deterministic, if we define $p_{\cut}(\phi)$ to be the distribution induced by $\phi = \lambda t \pmod{2\pi}$ for $\lambda \sim p^{(1)}_{\cut}$, we have $\delta(p^{(1)}_{+}(\phi), p^{(1)}_{\cut}(\phi)) \leq O(t^{-1/8})$ as well. Now, it remains to upper bound $\delta(p_{\cut}^{(1)}(\phi), q^{(1)}(\phi))$. 

Let us define 
\begin{equation}
    \begin{gathered}
    \lambda_{1}(\phi,t) \coloneqq \operatorname{argmin}\qty{\lambda \in U: \lambda t := \phi \pmod{2\pi}} \\
    \lambda_{2}(\phi,t) \coloneqq \operatorname{argmax}\qty{\lambda \in U: \lambda t := \phi \pmod{2\pi}}.
    \end{gathered}
\end{equation}
Now, observe that
\begin{equation}\label{eq:riemann}
    p^{(1)}_{\cut}(\phi) = \frac{1}{t} \qty(p^{(1)}_{\cut}(\lambda_1(\phi,t)) + p^{(1)}_{\cut}(\lambda_1(\phi,t) + 2\pi/t) + p^{(1)}_{\cut}(\lambda_1(\phi,t) + 4\pi/t) + \ldots + p^{(1)}_{\cut}(\lambda_2(\phi,t))).
\end{equation}
Note that the right side of this equation is simply a Riemann sum which approximates $\frac{1}{2\pi} \int_{\lambda_1(\phi,t)}^{\lambda_2(\phi,t)} p^{(1)}_{\cut}(\lambda) \dd{\lambda}$. Furthermore, it is well-known that the error of this approximation is upper bounded by $\frac{(\lambda_2(\phi,t) - \lambda_1(\phi,t))^2}{m} \cdot \max_{\lambda \in U} \abs{\dv{\lambda} p^{(1)}_{\cut}(\lambda)}$, where $m$ is the number of summands in \cref{eq:riemann}. Since $m=\Theta(t)$, and $\max_{\lambda \in U} \abs{\dv{\lambda} p^{(1)}_{\cut}(\lambda)} \leq \frac{\sqrt{t}}{1-\int_U p^{(1)}(\lambda) \dd{\lambda}} \leq O(\sqrt{t})$, we have that
\begin{equation}
    \abs{p_{\cut}^{(1)}(\phi) - \frac{1}{2\pi} \int_{\lambda_1(\phi,t)}^{\lambda_2(\phi,t)} p^{(1)}_{\cut}(\lambda) \dd{\lambda}} \leq O(t^{-1/2}).
\end{equation}
Finally, we observe that since $\int_{\lambda_{\min}}^{\lambda_{\max}} p^{(1)}_{\cut}(\lambda) \dd{\lambda} = 1$,
\begin{equation}
\int_{\lambda_1(\phi,t)}^{\lambda_2(\phi,t)} p^{(1)}_{\cut}(\lambda) \dd{\lambda} = 1 - \int_{\lambda_{\min}}^{\lambda_1(\phi,t)} p^{(1)}(\lambda) \dd{\lambda} - \int_{\lambda_{2}(\phi,t)}^{\lambda_{\max}} p^{(1)}(\lambda) \dd{\lambda} \geq 1 - O(t^{-1/2}),
\end{equation}
because $\lambda_1(\phi,t) - \lambda_{\min} \leq O(t^{-1/2})$ and $\lambda_{\max} - \lambda_2(\phi,t) \leq O(t^{-1/2})$. In summary,
\begin{equation}
    \abs{p^{(1)}_{\cut}(\phi) - \frac{1}{2\pi}} \leq O(t^{-1/2})
\end{equation}
for all $\phi$. This implies $\delta(p^{(1)}_{\cut}(\phi), q^{(1)}(\phi)) \leq O(t^{-1/2})$, hence $\delta(p^{(1)}_{+}(\phi), q^{(1)}(\phi)) \leq O(t^{-1/8})$.
\end{proof}
\begin{corollary}
 Let $t\in(-\infty,0]$. Let $p^{(1)}_{-}(\phi)$ be the probability density induced by $p^{(1)}(\lambda)$ via $\phi=\lambda t$ for $\phi\in[0,2\pi)$, then
\be
\delta\left(q^{(1)}(\phi), p^{(1)}_{-}(\phi)\right)\le O(t^{-1/8}).
\ee
\begin{proof}
The proof is identical to the one from \cref{lem:wignerdysontouniform} by replacing $t\mapsto |t|$.
\end{proof}
\end{corollary}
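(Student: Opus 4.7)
The plan is to reduce the negative-time case directly to the positive-time case already handled in \cref{lem:wignerdysontouniform} by exploiting the symmetry of the Wigner semicircle law $p^{(1)}(\lambda) = \frac{1}{2\pi}\sqrt{4-\lambda^2}$ under $\lambda \mapsto -\lambda$. First, I would observe that for $t \le 0$, writing $s \coloneqq |t| \ge 0$, the random variable $\phi = \lambda t \bmod 2\pi$ is equal in distribution to $-\lambda s \bmod 2\pi$.

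Next, let $\lambda \sim p^{(1)}$ and set $\mu \coloneqq -\lambda$. Because $p^{(1)}$ is an even function supported on $[-2,2]$, $\mu$ is also distributed according to $p^{(1)}$. Consequently, the law of $-\lambda s \bmod 2\pi$ coincides with the law of $\mu s \bmod 2\pi$ for $\mu \sim p^{(1)}$, which by definition is $p^{(1)}_{+}(\phi)$ with parameter $s = |t|$. Thus $p^{(1)}_{-}(\phi) = p^{(1)}_{+}(\phi)\big|_{t \mapsto |t|}$ as densities on $[0,2\pi)$.

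Finally, since the uniform density $q^{(1)}(\phi) = \frac{1}{2\pi}$ is invariant under the substitution $t \mapsto |t|$, applying \cref{lem:wignerdysontouniform} yields
\begin{equation}
\delta\bigl(q^{(1)}(\phi), p^{(1)}_{-}(\phi)\bigr) = \delta\bigl(q^{(1)}(\phi), p^{(1)}_{+}(\phi)\big|_{t \mapsto |t|}\bigr) \le O(|t|^{-1/8}),
\end{equation}
which is the claimed bound. There is no real obstacle here, the entire argument is a one-line symmetry reduction; the only thing to be careful about is that the substitution $\lambda \mapsto -\lambda$ preserves $p^{(1)}$, which is immediate from evenness of $\sqrt{4-\lambda^2}$.
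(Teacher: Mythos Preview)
Your proposal is correct and takes essentially the same approach as the paper: both reduce the negative-time case to the positive-time lemma via the substitution $t \mapsto |t|$. Your version supplies a bit more justification (using the evenness of $p^{(1)}$ to identify $p^{(1)}_{-}$ with $p^{(1)}_{+}\big|_{t\mapsto |t|}$), whereas the paper simply says to re-run the proof with $|t|$ in place of $t$; the content is the same.
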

\begin{corollary}
Define $\bar{p}_{\pm t}(\phi)=\prod_{i=1}^{d}p^{(1)}_{\pm}(\phi)$ where $p^{(1)}_{\pm}(\phi)$ is the probability density induced by $p^{(1)}(\lambda)$ via $\phi=\pm\lambda t$ for $t\in[0,\infty)$ (see \cref{lem:wignerdysontouniform}). Then, the total variation distance with $k$-marginals $\tilde{q}^{(k)}$ of the uniform distribution $\tilde{q}(\phi_1,\ldots,\phi_d)=(2\pi)^{-d}$ is also bounded
\be
\delta(\tilde{q}^{(k)},\bar{p}_{\pm t}^{(k)})=O\left(\frac{k}{t^{1/8}}\right)
\ee
\begin{proof}
Since both distributions, $\bar{p}_{\pm}$ and $\tilde{q}$ are product distributions, their marginals are product as well. Using $\Theta(k)$ triangle inequalities, the result follows.
\end{proof}
\end{corollary}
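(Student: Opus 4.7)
The plan is to exploit the product structure of both distributions and reduce the statement to a hybrid argument that invokes the preceding single-coordinate bound $\delta(q^{(1)}, p^{(1)}_\pm) = O(t^{-1/8})$ a total of $k$ times.

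First I would observe that since $\bar p_{\pm t}(\phi_1,\dots,\phi_d) = \prod_{i=1}^d p^{(1)}_\pm(\phi_i)$ and $\tilde q(\phi_1,\dots,\phi_d) = (2\pi)^{-d} = \prod_{i=1}^d q^{(1)}(\phi_i)$ are both product distributions over coordinates that are exchangeable, the corresponding $k$-marginals are obtained by simply integrating out $d-k$ coordinates, which yields
\begin{equation}
\bar p_{\pm t}^{(k)}(\phi_1,\dots,\phi_k) = \prod_{i=1}^k p^{(1)}_\pm(\phi_i), \qquad \tilde q^{(k)}(\phi_1,\dots,\phi_k) = \prod_{i=1}^k q^{(1)}(\phi_i).
\end{equation}
Thus the statement reduces to controlling the total variation distance between two $k$-fold product distributions whose factors each differ by $O(t^{-1/8})$ by the preceding corollary.

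Next I would invoke the standard subadditivity of TV distance under tensorization. Define the sequence of hybrids $r_j(\phi_1,\dots,\phi_k) \coloneqq \prod_{i=1}^{j} q^{(1)}(\phi_i) \cdot \prod_{i=j+1}^{k} p^{(1)}_\pm(\phi_i)$ for $j=0,\dots,k$, so that $r_0 = \bar p_{\pm t}^{(k)}$ and $r_k = \tilde q^{(k)}$. Each consecutive pair $r_{j-1}, r_j$ differs only in the $j$th factor, and by the coupling characterization (or direct computation) $\delta(r_{j-1}, r_j) = \delta(p^{(1)}_\pm, q^{(1)})$. Applying the triangle inequality $k$ times and using the single-coordinate bound $\delta(q^{(1)}, p^{(1)}_\pm) = O(t^{-1/8})$ from \cref{lem:wignerdysontouniform} and its corollary gives
\begin{equation}
\delta(\tilde q^{(k)}, \bar p_{\pm t}^{(k)}) \leq \sum_{j=1}^k \delta(r_{j-1}, r_j) = k \cdot \delta(p^{(1)}_\pm, q^{(1)}) = O\!\left(\frac{k}{t^{1/8}}\right),
\end{equation}
which is the claimed bound.

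There is no real obstacle here: both pieces — the product form of the marginals and the tensorization of TV distance — are routine, and the nontrivial analytic work (the Riemann-sum-style argument showing the pushforward of the Wigner semicircle under $\lambda \mapsto \lambda t \bmod 2\pi$ equidistributes at rate $t^{-1/8}$) was already done in \cref{lem:wignerdysontouniform}. The only thing to be mildly careful about is verifying that the $t^{-1/8}$ bound from that lemma is used exactly once per coordinate in the hybrid chain, so that the final prefactor is linear in $k$ rather than, say, quadratic.
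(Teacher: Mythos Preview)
The proposal is correct and follows essentially the same approach as the paper: both note that the $k$-marginals of product distributions are themselves products, and then apply $\Theta(k)$ triangle inequalities to reduce to the single-coordinate bound from \cref{lem:wignerdysontouniform}. Your write-up simply makes the hybrid chain explicit, which the paper leaves implicit.
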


We are finally ready to prove the main result of this section: late-time evolution under the GUE is indistinguishable from Haar-random unitaries for any $t = \omega(\poly n)$. To do this, we combine the following key facts: (i) the polynomial marginals of the spectral CUE distribution are uniformly distributed, (ii) the marginals of the spectral distribution of the GUE are approximated by independent variables following the semicircle law, and (iii) as shown in \cref{lem:wignerdysontouniform}, at late times the Wigner semicircle distribution converges to a uniform distribution over the unit circle.

\begin{theorem}[Late-time GUE is indistinguishable from Haar random]\label{th:GUE=Haarlatetimes}  Consider the ensemble of unitaries generated by GUE, as $\mathcal{U}_{t}=\{e^{-iHt}\,:\, H\sim \gue\}$. Then the ensemble $\mathcal{U}_t$ is adaptively statistically indistinguishable from Haar random unitaries for any $t=\omega(\poly n)$, even if one has access to the adjoint evolution. In particular, any quantum algorithm requires $k=\Omega(\min(t^{1/16}, d^{1/4}))$ queries to the time evolution and/or its inverse to distinguish the two ensembles.
\begin{proof}
Similarly to \cref{thm:spoof}, the most general algorithm aiming at processing $U\sim \mathcal{E}$ for $\mathcal{E}$ a general unitary ensemble will output the following quantum state
\be
\rho(U,U^{\dag}; \rho_0)\coloneqq[W_{k+1}(C^{n_k}U)W_{k}(C^{n_k-1}U^{\dag})\cdots W_{2}(C^{n_1}U)W_1](\rho_0),\quad U\sim\mathcal{E}
\ee
where $\rho_0$ is a $n+n'$ qubit state with $n'=O(\poly(n))$ and $C^{n_j}U_j(t_j)$ is a control version of $U,U^{\dag}$ which acts on the first $n$ qubits, and the $n_j$-th auxiliary system is the control qubit. Notice that the order, as well as the presence, of the adjoint is completely general. First of all, notice that we can write the Haar ensemble as $\haar=\{Ue^{i\Phi} U^{\dag}\,:\, \Phi\sim q\}$ and the ensemble of GUE unitaries as $\mathcal{U}_t=\{U e^{i\Lambda t} U^{\dag}\,:\, \Lambda\sim p\}$. We have already shown in \cref{thm:spoof} that the ensemble $\mathcal{U}_t$ is statistically indistinguishable from $\tilde{\mathcal{U}}_t=\{U e^{i\Lambda t} U^{\dag}\,:\, \Lambda\sim \tilde{p}\}$ where $\tilde{p}=\prod_{i=1}^{d}\tilde{p}(\lambda_i)$. In particular, we have shown
\be\label{App:eqfordesigns}
\norm{\E_{U\sim\mathcal{U}_t}\rho(U,U^{\dag};\rho_0)-\E_{U\sim\tilde{\mathcal{U}}_t}\rho(U,U^{\dag};\rho_0)}_{1}=O(\delta(p^{(k)},\tilde{p}^{(k)}))=O\left(\frac{\sqrt{k}}{d^{1/8}}\right)
\ee
In particular, without loss of generality we can consider $t\in[0,\infty)$ because of the presence of $U^{\dag}$ with reverse the time. Moreover, defining $\widetilde{\haar}\coloneqq\{Ue^{i\Phi} U^{\dag}\,:\, \Phi\sim \tilde{q}\}$ with $\tilde{q}=\prod_i{q}^{(1)}(\phi_i)$, thanks to the permutational invariance (similarly to \cref{thm:spoof}), we can write:
\be
\|\E_{U\sim\widetilde{\haar}}\rho(U,U^{\dag};\rho_0)-\E_{U\sim\haar}\rho(U,U^{\dag};\rho_0)\|_{1}=O(\delta(q^{(k)},\tilde{q}^{(k)}))=\frac{k}{d^{1/4}}
\ee
where the latter equality follows from \cref{Applem:closenessmarginalsCUE}. Finally, let us note that we can rewrite the set $\tilde{\mathcal{U}}_t$ as
\be
\tilde{\mathcal{U}}_{t}\coloneqq\{Ue^{i\Phi}U^{\dag}\,: \Phi\sim \bar{p}_{+t}\}
\ee
as we can evaluate the following TV distance 
\be
\|\E_{U\sim\tilde{\mathcal{U}}_t}\rho(U,U^{\dag};\rho_0)-\E_{U\sim\widetilde{\haar}}\rho(U,U^{\dag};\rho_0)\|_{1}\le O(k\delta(\bar{p}^{(k)}_{\pm t},\tilde{q}^{(k)}))=O\left(\frac{k^2}{t^{1/8}}\right)
\ee
where the first equality follows from the fact that both ensemble consist of unitaries written in a Haar random basis; hence, thanks to the permutational invariance, the distance depends only on the $k$-marginal, as usual. Moreover, notice that we used $\Theta(k)$ many triangle inequality to always compare $\bar{p}_{+}$ or $\bar{p}_{-}$ with the uniform distribution $\tilde{q}$.

We finally can evaluate the statistical distance between unitaries generated by the GUE ensemble and Haar random unitaries: we can just then use triangle inequalities:
\ba
\|\E_{U\sim\mathcal{U}_t}\rho(U,U^{\dag};\rho_0)-\E_{U\sim\haar}\rho(U,U^{\dag};\rho_0)\|_{1}&\le& \|\E_{U\sim\widetilde{\haar}}\rho(U,U^{\dag};\rho_0)-\E_{U\sim{\haar}}\rho(U,U^{\dag};\rho_0)\|_{1}\\
&+& \|\E_{U\sim\tilde{\mathcal{U}}_t}\rho(U,U^{\dag};\rho_0)-\E_{U\sim {\mathcal{U}_t}}\rho(U,U^{\dag};\rho_0)\|_{1}\\
&+&\|\E_{U\sim\tilde{\mathcal{U}}_t}\rho(U,U^{\dag};\rho_0)-\E_{U\sim\widetilde{\haar}}\rho(U,U^{\dag};\rho_0)\|_{1}\\
&=&O\left(\frac{\sqrt{k}}{d^{1/8}}\right)+O\left(\frac{k}{d^{1/4}}\right)+O\left(\frac{k^2}{t^{1/8}}\right)\\&=&O\left(\frac{k}{d^{1/4}}\right)+O\left(\frac{k^2}{t^{1/8}}\right)
\ea
where we used the fact that $t=o(d^{4})$. This concludes the proof. 
\end{proof}
\end{theorem}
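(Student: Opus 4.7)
The plan is to exploit the fact that both ensembles admit a common matrix-basis-invariant decomposition: $\haar = \{U e^{i\Phi} U^\dagger : U \sim \haar,\, \Phi \sim q\}$ and $\mathcal{U}_t = \{U e^{-i\Lambda t} U^\dagger : U \sim \haar,\, \Lambda \sim p\}$. For any adaptive algorithm making $k$ (possibly controlled) queries to $U$ and/or $U^\dagger$, I would first absorb the common Haar random eigenbasis into the surrounding gates — exactly the trick used in the proof of \cref{th:ham-indisting}. After this absorption the expected output state depends only on the diagonal unitary sampled in each query, and since both spectral distributions are permutation invariant over $d$ slots, it depends only on the $\ell$-marginal of the eigenphase distribution with $\ell = O(k)$ (we get a constant blow-up because each $U^\dagger$ query consumes a negated-time copy of the same eigenphase tuple).

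The next step is a three-hop hybrid argument between the $\ell$-marginal eigenphase distributions, combining the three building blocks already established in the appendix. First, $q^{(\ell)}$ (the CUE spectral marginal, carrying level repulsion) is $O(\ell/d^{1/4})$-close in TV to $\tilde{q}^{(\ell)} := (2\pi)^{-\ell}$, the iid uniform-on-circle distribution, by \cref{Applem:closenessmarginalsCUE}. Second, $\tilde{q}^{(\ell)}$ is $O(\ell/t^{1/8})$-close to the product distribution $\bar{p}_{\pm t}^{(\ell)}$ obtained by pushing iid Wigner semi-circle samples through $\phi = \pm\lambda t \bmod 2\pi$; this follows by applying \cref{lem:wignerdysontouniform} coordinatewise with $\ell$ triangle inequalities (handling the sign of $t$ independently on each axis accounts for the interleaving of $U$ and $U^\dagger$ queries, with only a constant-factor loss since both $\bar{p}_{+t}$ and $\bar{p}_{-t}$ are equally close to uniform). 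Third, $\bar{p}_{\pm t}^{(\ell)}$ is $O(\sqrt{\ell}/d^{1/8})$-close to the true correlated GUE spectral marginal $p^{(\ell)}$ reinterpreted through $\lambda \mapsto \lambda t$, by \cref{thm:spoof}.

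Combining these three pieces by the triangle inequality on the trace-distance between the expected output states, the total distinguishing advantage of any adaptive $k$-query algorithm is bounded by
\begin{equation}
O\!\left(\frac{\ell}{d^{1/4}} + \frac{\ell^2}{t^{1/8}} + \frac{\sqrt{\ell}}{d^{1/8}}\right).
\end{equation}
Requiring this quantity to be $\Omega(1)$ for distinguishing success gives $\ell = \Omega(\min(t^{1/16}, d^{1/4}))$, and since $\ell = \Theta(k)$ this yields the claimed query lower bound. In particular, whenever $t = \omega(\poly n)$, we have $d^{1/4} = \omega(\poly n)$ and $t^{1/16} = \omega(\poly n)$, so no polynomial-time algorithm can distinguish the two ensembles, proving statistical indistinguishability.

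The main obstacle — and the only genuinely new technical content beyond reusing \cref{thm:spoof} and \cref{Applem:closenessmarginalsCUE} — is the middle hybrid: showing that the Wigner semi-circle pushed forward by $\phi = \lambda t \bmod 2\pi$ converges to uniform on $[0, 2\pi)$ at a quantitative rate of $O(t^{-1/8})$ in TV. This requires two things that are delicate: cutting out the edge regions of the semi-circle where the density vanishes (so that the Riemann-sum approximation to the wrapped density is well behaved), and then controlling the Riemann-sum discretization error uniformly in $\phi$. Both are handled by the $[\lambda_{\min}, \lambda_{\max}]$ truncation and derivative bound of \cref{lem:wignerdysontouniform}, so the plan reduces to assembling these pieces carefully. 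A secondary subtlety is keeping track of sign conventions (so that both $+t$ and $-t$ contributions from the two kinds of queries are treated symmetrically), which only costs a constant factor in the hybrid argument.
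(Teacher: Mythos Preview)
Your proposal is correct and follows essentially the same route as the paper's proof: the same absorption of the Haar eigenbasis into the interleaving gates, the same three-hop hybrid $\mathcal{U}_t \to \tilde{\mathcal{U}}_t \to \widetilde{\haar} \to \haar$ via \cref{thm:spoof}, \cref{lem:wignerdysontouniform}, and \cref{Applem:closenessmarginalsCUE}, and the same final bound $O(k/d^{1/4} + k^2/t^{1/8})$. The only minor slip is an internal inconsistency where you state the middle hop costs $O(\ell/t^{1/8})$ but then write $\ell^2/t^{1/8}$ in the displayed sum; the paper also uses the $k^2$ form (picking up an extra $k$ from interleaving $\pm t$ queries), so your final bound and the resulting $\Omega(\min(t^{1/16},d^{1/4}))$ lower bound agree with the paper.
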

As a simple corollary, we can prove that late-time GUE forms an approximate unitary $k$-design.
\begin{corollary}[Late-time GUE is a unitary $k$ design]\label{Appcor:unitarykdesign}
Consider the ensemble of unitaries generated by GUE, as $\mathcal{U}_{t}=\{e^{-iHt}\,:\, H\sim \gue\}$. Then the ensemble $\mathcal{U}_t$ is a $O\left(\frac{k^2}{t^{1/8}}\right)$-approximate unitary $k$-design. 
\be
\norm{\mathbb{E}_{U\sim \mathcal{U}_t}U^{\otimes k}(\cdot) U^{\dag\otimes k}-\mathbb{E}_{U\sim \haar}U^{\otimes k}(\cdot) U^{\dag\otimes k}}_{\diamond}\le O\left(\frac{k^2}{t^{1/8}}\right)
\ee
In particular for any $t=\omega(\poly n)$, it is a $o\left(\frac{1}{\poly n}\right)$-approximate unitary $k$-design for any $k=O(\poly n)$. 
\begin{proof}
The proof comes directly from the definition of the diamond norm 
\begin{equation}
    \begin{gathered}
        \norm{\mathbb{E}_{U\sim \mathcal{U}_t}U^{\otimes k}(\cdot) U^{\dag\otimes k}-\mathbb{E}_{U\sim \haar}U^{\otimes k}(\cdot) U^{\dag\otimes k}}_{\diamond} \\
= \max_{n'} \max_{\rho}\norm{\mathbb{E}_{U\sim \mathcal{U}_t}[U^{\otimes k}\otimes \id_{n'}(\rho) U^{\dag\otimes k}\otimes \id_{n'}]-\mathbb{E}_{U\sim \haar}[U^{\otimes k}\otimes \id_{n'}(\rho) U^{\dag\otimes k}\otimes \id_{n'}]}_1\label{App:eqfordesigns2}
    \end{gathered}
\end{equation}
where $\rho$ is $(n+n')$-qubit state with $n'$ ancilla. It is then sufficient to notice that \cref{App:eqfordesigns2} is a particular case of \cref{App:eqfordesigns}. 
\end{proof}
\end{corollary}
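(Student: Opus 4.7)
The plan is to reduce the design statement to a direct corollary of \cref{th:GUE=Haarlatetimes} by recognizing that the diamond-norm distance between two $k$-fold twirling channels is, by definition, a special case of the adaptive distinguishing setup already analyzed there. Concretely, I would first unfold the diamond norm as
\begin{equation}
\max_{n'} \max_{\rho} \norm{\mathbb{E}_{U \sim \mathcal{U}_t}[(U^{\otimes k} \otimes \id_{n'}) \rho (U^{\dagger \otimes k} \otimes \id_{n'})] - \mathbb{E}_{U \sim \haar}[(U^{\otimes k} \otimes \id_{n'}) \rho (U^{\dagger \otimes k} \otimes \id_{n'})]}_1,
\end{equation}
where the maximum is over ancilla dimension $n'$ and input states $\rho$ on $kn+n'$ qubits.

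Next, I would observe that this is an instance of the quantity $\norm{\E_{U \sim \mathcal{U}_t} \rho(U,U^\dagger;\rho_0) - \E_{U \sim \haar} \rho(U,U^\dagger;\rho_0)}_1$ appearing in the proof of \cref{th:GUE=Haarlatetimes}: take $\rho_0 = \rho$, and choose the algorithm that uses $k$ ancilla flag qubits (set to $\ket{1}$) to force $k$ non-trivial applications of $U$, each acting on a different block of $n$ system qubits, interleaved with trivial $W_j = \id$ and no applications of $U^\dagger$. Because the theorem bounds the distinguishing advantage uniformly over all adaptive query algorithms that may also call $U^\dagger$, this non-adaptive $k$-query configuration is strictly contained in the set of algorithms it covers, so its bound applies verbatim.

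Finally, substituting into \cref{th:GUE=Haarlatetimes} yields a distance of $O(k/d^{1/4}) + O(k^2/t^{1/8})$, and the first term is dominated by the second whenever $t = o(d^4)$, which is the relevant regime since we have assumed $t = \omega(\poly n)$ while $d = 2^n$. This gives the stated $O(k^2/t^{1/8})$ bound, and setting $k = O(\poly n)$ with $t = \omega(\poly n)$ immediately produces an $o(1/\poly n)$ approximate unitary $k$-design.

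The only mildly subtle point, and the main ``obstacle'' in this otherwise nearly-tautological reduction, is confirming that the proof of \cref{th:GUE=Haarlatetimes} really was insensitive to whether queries to $U$ are adaptive or are just a plain $k$-fold tensor product, and that controlled versions (which the theorem uses) are general enough to simulate unconditional applications. Both hold because taking the control qubits to be prepared deterministically in $\ket{1}$ turns each $C^{n_j} U$ into an ordinary $U$, and the bound in Eq.~\eqref{App:eqfordesigns} depends only on the polynomial marginals of the eigenvalue distributions, not on the structure of the $W_j$.
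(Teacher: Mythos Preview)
Your proposal is correct and follows essentially the same approach as the paper: both unfold the diamond norm and observe that the resulting expression is a special case of the general adaptive-query bound already established in the proof of \cref{th:GUE=Haarlatetimes} (the paper phrases this as ``\cref{App:eqfordesigns2} is a particular case of \cref{App:eqfordesigns}''). Your write-up is in fact more explicit than the paper's one-line reduction, spelling out how the parallel $k$-fold application embeds into the controlled adaptive framework via control qubits fixed to $\ket{1}$ and trivial interleaving unitaries, and noting that the $O(k/d^{1/4})$ term is absorbed into $O(k^2/t^{1/8})$ in the relevant regime.
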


\subsection{Discussion: Constructing pseudorandom unitaries from pseudo-GUE}

Let us now comment on the above result and outline its consequences, merging it with the other findings of this paper. Specifically, we have shown that there exists an ensemble of efficiently implementable unitaries that spoofs the GUE ensemble for any computationally bounded observer. Additionally, as demonstrated in \cref{App:pseudochaoticconstruction}, the time complexity of the algorithm is only polylogarithmic in the time $t$ of evolution under the GUE, meaning our ensemble of pseudochaotic Hamiltonians can be exponentially fast-forwarded. 

By combining (1) the indistinguishability of late-time GUE and Haar-random unitaries, (2) the existence of efficiently implementable pseudo-GUE Hamiltonians, and (3) the fact that these pseudo-GUE Hamiltonians are exponentially fast-forwardable, we conclude that our ensemble of pseudo-GUE Hamiltonians generates (for superpolynomially large times) an ensemble of pseudorandom unitaries, i.e., efficiently implementable unitaries that are indistinguishable from Haar-random unitaries. Moreover, the ensemble generated by pseudo-GUE Hamiltonians is also ``inverse-secure," meaning that it remains indistinguishable even if the algorithm has access to adjoint (inverse) operations.
Two remarks are in order: first, constructing pseudorandom unitaries via GUE evolution has a distinct advantage. Indeed, we have shown that we can tune the effective dimension of the pseudo-GUE ensemble, denoted as $\tilde{d}$ throughout the manuscript, which in turn allows us to adjust the entanglement, magic, and scrambling properties of the generated unitaries. In particular, by taking $\tilde{d} = \exp(\poly\log n)$, the ensemble of unitaries generated remains a pseudorandom ensemble, but it exhibits only polylogarithmic entanglement and polylogarithmic magic. Hence, this ensemble qualifies not only as a pseudorandom ensemble of unitaries, but also as a pseudoentangled and pseudomagic ensemble of unitaries. This is a novel contribution to the literature, as the known constructions of inverse-secure pseudorandom unitaries do not exhibit these properties.
Second, while constructing pseudorandom unitaries via GUE evolutions is certainly possible, our construction so far relies on the existence of (inverse-secure) pseudorandom unitaries. In \cref{App:pseudochaoticconstruction}, we use pseudorandom unitaries to spoof the Haar-random basis featured by GUE Hamiltonians. Therefore, the current method demonstrates the existence of pseudoentangled and pseudomagic ensembles of unitaries that are also pseudorandom unitaries, but it is not sufficient by itself to show the existence of pseudorandom unitaries, proven only recently~\cite{ma2024constructrandomunitaries}. 

However, a closer look at our methodology reveals that while pseudorandom unitaries are sufficient to spoof the Haar-random basis of the GUE, they may not be strictly necessary. In fact, we require a weaker condition. Specifically, we need an ensemble $\mathcal{E}$ of unitaries such that for any fixed diagonal unitary $\Phi$ (in the computational basis), the following holds:
\be
\left|\Pr_{U\sim \mathcal{E}}\mathcal{A}^{U\Phi U^{\dag}}(\cdot) - \Pr_{U\sim \haar}\mathcal{A}^{U\Phi U^{\dag}}(\cdot)\right| = \negl(n)\,,\label{Eq:weaker?}
\ee
for any quantum algorithm $\mathcal{A}$, with query accesses to $U\Phi U^{\dag}$, running in polynomial time.

While it is clear that \cref{Eq:weaker?} represents a weaker condition compared to the usual pseudorandom unitary condition, the question of whether it is strictly weaker remains open. 

\begin{open}
Does there exist an ensemble of unitaries satisfying \cref{Eq:weaker?}, which is not computationally indistinguishable from Haar-random unitaries (i.e., not qualifying as pseudorandom unitaries)? In other words, is \cref{Eq:weaker?} strictly weaker than the standard definition of pseudorandom unitaries?
\end{open}

A positive answer to the above question would provide an alternative and independent proof of the existence of pseudorandom unitaries through our method of spoofing the GUE ensemble.

\end{document}